%% file: main.tex
\documentclass[11pt]{article}
\usepackage[margin=1in]{geometry}
\usepackage[T1]{fontenc}
\usepackage[utf8]{inputenc}
\usepackage{microtype}
\usepackage{amssymb,amsthm}
\usepackage{mathtools,bbm,authblk}
\usepackage{xcolor}
\usepackage{comment}
\usepackage[inline]{enumitem}
\usepackage{hyperref}
\usepackage{framed}

\usepackage{caption}
\hypersetup{
colorlinks=true,
allcolors=blue,
pdfauthor={Chapman et al.},
pdftitle={SPSE},
}
\usepackage{tikz}
\usetikzlibrary{quantikz}
\usepackage[capitalize,nameinlink]{cleveref}
\usepackage{soul}

\newcommand{\C}{\mathbb C}
\newcommand{\Log}{\operatorname{Log}}
\newcommand{\ii}{\mathrm i}
\newcommand{\dd}{\,\mathrm d}
\newcommand{\e}{\mathrm e}
\newcommand{\mnorm}{m_\Omega^{(k)}}
\newcommand{\mnormhat}{\widehat{m}_\Omega^{(k)}}
\newcommand{\Mell}{\mathcal{M}_\ell}
\newcommand{\Tr}{\operatorname{Tr}}
\newcommand{\ntr}{\operatorname{ntr}}
\newcommand{\rank}{\mathrm{rank}}
\newcommand{\sinc}{\operatorname{sinc}}
\newcommand{\I}{\mathbb I}
\newcommand{\ketbra}[2]{|#1\rangle\!\langle#2|}
\theoremstyle{plain}
\newtheorem{theorem}{Theorem}
\newtheorem{proposition}[theorem]{Proposition}
\newtheorem{lemma}[theorem]{Lemma}
\newtheorem{corollary}[theorem]{Corollary}
\newtheorem*{theorem*}{Theorem}
\newtheorem*{corollary*}{Corollary}
\theoremstyle{definition}
\newtheorem{definition}[theorem]{Definition}
\theoremstyle{remark}
\newtheorem*{remark}{Remark}

\title{Single-Particle Spectral Estimation}
\author[1]{Adrian Chapman}
\author[1]{Charles Derby}
\author[1,2]{Steven T. Flammia}
\author[1,3]{Yeongwoo Hwang}
\author[1]{Joel Klassen}
\author[1,4]{Calum McCartney}
\affil[1]{Phasecraft}
\affil[2]{Virginia Tech}
\affil[3]{Harvard University}
\affil[4]{University College London}
\date{October 1, 2026}

\renewcommand{\Re}{\operatorname{Re}}
\newcommand{\supp}{\operatorname{supp}}
\newcommand{\reg}[1]{\mathsf{#1}}

\newcommand{\ProofOf}[1]{\par\medskip\noindent\textbf{\Cref{#1}.}\par\nopagebreak\smallskip}

\begin{document}

\maketitle

\begin{abstract}
    We study Hamiltonian learning in a novel setting, where we are promised a physical structure that is obfuscated by a global unitary. 
    In particular, we consider the task of learning the weights \(E_i\) of a Hamiltonian \(H=U^{\dagger}(\sum_{i \in I} E_i n_i + H_O)U\) comprised of a set of non-interacting modes \(I\) and an arbitrary ``spectator'' \(H_O\) acting on a disjoint set of modes \(O\). 
    The non-interacting structure is hidden by an unknown global unitary \(U\), and our goal is to recover the \(E_i\) without reconstructing \(U\).
    Such Hamiltonians are well-motivated in the setting of materials modeling, where they find application for classical methods such as mean-field theory, Green’s functions, and density functional theory. 
    We show that this problem is DQC1-hard, despite the apparent simplification given by the non-interacting structure. 
    We then introduce SPICES (single-particle inference via contour estimators), which combines a Hadamard test with a novel classical postprocessing procedure to efficiently recover the spectral distribution in Wasserstein distance under certain natural assumptions.
    The method exploits the correspondence between single-particle energies of the individual quasiparticles and the Fourier frequencies governing zeros of the partition function at complex inverse temperature.
    As quasiparticle spectra feature ubiquitously in downstream materials calculations, and with SPICES being simple and practical to execute, we expect that SPICES is a promising route to quantum applications in materials simulation.
\end{abstract}

\section{Introduction}
In this work we consider the notion of a \textit{free spectrum} -- wherein the eigenvalues of a Hamiltonian are decomposable into distinct sums of a polynomial number of ``single particle'' energies \(\{E_i\}\).
A motivating example of such systems are non-interacting fermionic systems and spin systems. 
Our focus is on the setting where this spectral freedom is not necessarily manifest in the form of the Hamiltonian -- for example when it is obscured under a unitary transformation or holds either approximately or exactly in some interacting system -- and furthermore when it may be embedded in a more general non-free system. 
In this setting we are given a number-conserving \(M\)-mode (either fermionic or spin) many-body Hamiltonian \(H\) in a fixed, classically describable basis with a promise that there exists a global unitary \(U\) and a partition of modes
\(\{1,\dots,M\}=I\sqcup O\) such that
\begin{align}
    U H U^{\dagger} = \sum_{i \in I} E_i n_i + H_O,
    \qquad U N U^{\dagger} = \sum_{j=1}^{M} n_j ,
    \label{eq:ffansatz}
\end{align}
where \(N\) is the total occupation observable and \(H_O\) acts only on the modes \(O\). 
We call the modes \(I\) \emph{protected} and the modes \(O\) the \emph{spectator sector}.
The task is then to learn the effective single-particle energies \(E_i\) associated with the submodes \(I\) \emph{without knowing \(U\) a priori} given query access to \(H\) and \(N\) and some guiding input state such as a thermal state. 
We call the problem of inferring the energies \(\{E_i\}\) within a given energy window \(\Omega\) the \textit{Single-Particle Spectral Estimation} (SPSE) problem, which we formulate as retrieving a spectral distribution that is close in Wasserstein distance to a thermally weighted Dirac comb with peaks situated at the energies \(E_i\).

\begin{definition}[Single-Particle Spectral Estimation, SPSE (informal version of \cref{def:spse-formal})]
    Given a Hamiltonian \(H\) and number operator \(N\) described by \cref{eq:ffansatz} such that \([H, N] = 0\), recover a distribution over the single-particle \(\{E_i\}\) to accuracy \(\epsilon\) in Wasserstein-1 distance with high probability.
\end{definition}

We formalize this definition, with the aim of operationalizing concepts from many-body physics from a learning perspective -- and introduce an algorithm which targets this formal notion of the problem. 
We call this algorithm SPICES: Single-Particle Inference via Contour Estimators. 
The SPICES protocol makes use of the technical insight that in the case of this effectively free model the \textit{complex grand canonical partition function} \(Z(s,z)\) -- where here \(z\) is the complex fugacity \(z = \e^{\beta \mu + \ii \vartheta}\) and \(s\) is the complex time -- admits a factorization into a free part \(Z_I\) and a spectator part \(Z_O\)
\begin{equation}
    Z(s,z)
    \coloneqq 
    \Tr\bigl\{z^{N} \exp(-s H)\bigr\} =Z_I Z_O =  \prod_{i\in I} \bigl(1+z \e^{-s E_i}\bigr) Z_O ,
\end{equation}
where \(Z_O(s,z)=\Tr_O\bigl\{z^{N_O}\e^{-sH_O}\bigr\}\) with \(N_O=\sum_{j\in O}n_j\); the roots of \(Z_I\) correspond precisely to the free energies. 
The SPICES protocol proceeds in the spirit of Lee--Yang and Fisher zero analysis~\cite{Yang1952, Lee1952, Fisher1965}. 
Namely, it estimates a pair of circular logarithmic Stieltjes contour integrals over the inverse fugacity \(z^{-1}\) so that the first root moment \(\sum_{i\in I} \e^{-s E_i}\) can be isolated and retrieved. 
Samples along the points of these contour integrals can be produced by Hadamard test circuits acting on a Gibbs state
\begin{align}
    \rho_{\beta, \mu} = \e^{-\beta(H - \mu N)}/Z(\beta, \e^{\beta\mu}).
\end{align}
Applying a time-domain Fourier transform to the root moment retrieves the thermally weighted peaks at the energies \(E_i\). We emphasize here that although in this work we present the SPICES algorithm in a formal capacity, we have tested this technique extensively numerically and find that it works well as a practical algorithm. 

The energy window \(\Omega\) enters SPICES geometrically.
With \(w=-1/z\) and \(P(s,w)=w^{M}Z(s,-1/w)\), each protected mode contributes a root \(w=\e^{-sE_i}\) of modulus \(\e^{-\beta E_i}\), so higher-energy modes lie closer to the origin.
The circles \(|w|=\e^{-\beta\Omega_{\max}}\) and \(|w|=\e^{-\beta\Omega_{\min}}\) are probed by Gibbs states with chemical potentials \(\mu=\Omega_{\max}\) and \(\mu=\Omega_{\min}\), respectively, and bound the annulus
\begin{equation*}
    A_\Omega=\bigl\{w\in\mathbb C:\ \e^{-\beta\Omega_{\max}}<|w|<\e^{-\beta\Omega_{\min}}\bigr\},
\end{equation*}
and the difference of the two contour integrals is the root moment of the roots inside it (\cref{fig:spices-contour-certificates}).
This is the root moment of exactly the protected modes provided the spectator polynomial \(P_O(s,w)=w^{|O|}Z_O(s,-1/w)\) has no roots in \(A_\Omega\) at every point \(s=\beta+\ii t\) with \(t\in[0,t_{\max}]\), where \(t_{\max}=O(1/\epsilon)\) is the largest time sampled by SPICES (\cref{thm:spices-recovery}).
Elsewhere the spectator roots may lie anywhere in the plane, provided they stay radially separated from the two circles (\cref{def:sc}).
Since the roots of \(Z\) depend only on \((H,N)\), the target \(\nu_\Omega\) is the same for every decomposition satisfying the promise, so the non-uniqueness of \cref{eq:ffansatz} is harmless.

\begin{definition}[Stability Criteria] \label{def:sc}
    Let \(R_\mu=\e^{-\beta\mu}\) for \(\mu\in\{\Omega_{\min},\Omega_{\max}\}\) be the radii of the two contours, and let \(\theta_j=2\pi j/S\) (with \(S\) the number of quadrature points on each circle) and \(t_\ell=\pi\ell/L\) be the quadrature angles and sample times used by SPICES (\cref{thm:spices-recovery}), with \(L=\Omega_{\max}-\Omega_{\min}\) and \(s_\ell=\beta+\ii t_\ell\).
    We say that an SPSE instance satisfies the \emph{stability criteria} (the ``contour certificates'' of \cref{thm:spices-recovery}) with constants \(0<q<1\) and \(0<b\le 1\) if, at every sample point \(s_\ell\):
    \begin{enumerate}[label=\arabic*.]
        \item \textbf{Radial separation:} every root \(\zeta\) of \(P(s_\ell,\cdot)\) is bounded away from both contours, 
        \begin{align}
            \label{eq:radial_separation}
            \min\{|\zeta|/R_\mu,\ R_\mu/|\zeta|\}\le q
        \end{align} for both \(\mu\in\{\Omega_{\min},\Omega_{\max}\}\);
        \item \textbf{Signal floor:} for both \(\mu\), every quadrature angle \(\theta_j\), and every sample time \(t_{\ell}\)
        \begin{align}
            \left|\Tr\!\left(\rho_{\beta, \mu} \e^{-\ii t_\ell H + \ii(\pi - \theta_j)N}\right)\right| \geq b .
        \end{align}
    \end{enumerate}
\end{definition}
In this work, an SPSE instance satisfying the above stability criteria is called \emph{stable SPSE} \footnote{While the stability criteria are fairly natural physical assumptions, we do not rule out the possibility that other natural notions of stability for this problem exist.}.
We prove that SPICES can efficiently and robustly solve stable SPSE given: query access (either by an operator expansion or sparse access) to the Hamiltonian; the total occupation observable; and a sufficiently low temperature quantum input state.

\begin{figure}[!t]
    \centering
    \resizebox{\textwidth}{!}{\input{annulus_recovery.tex}}
    \caption{Contour geometry for \cref{thm:spices-recovery}. 
    The annulus \(A_\Omega=\{R_{\mathrm{in}}<|w|<R_{\mathrm{out}}\}\), where \(R_{\mathrm{in}}=\e^{-\beta \Omega_{\max}}\) and \(R_{\mathrm{out}}=\e^{-\beta \Omega_{\min}}\), contains exactly the three protected roots \(\e^{-s_\ell E_a}\); other roots lie outside it.
    The radial separation condition \eqref{eq:spices-radial-separation} excludes every root \(\zeta\) within the regions \(qR<|w|<R/q\) around both boundaries of the annulus; this is depicted at a logarithmic scale in the inset on the right. 
    Squares mark five equally spaced quadrature endpoints on each circle, as used in \eqref{eq:midpoint-quadrature-rule}. 
    The choice \(S=5\) is for illustration only.}
    \label{fig:spices-contour-certificates}
\end{figure}

\begin{theorem}[SPICES recovery from contour certificates (informal version of \cref{thm:spices-recovery}) ]
    \label{thm:spices-informal}
    Let \((H,N,\Omega,\beta,\epsilon,\delta)\) be an instance of stable SPSE with window width \(L=\Omega_{\max}-\Omega_{\min}\), accuracy \(0<\epsilon<L/2\), and inverse-polynomial failure probability \(0<\delta<1\).
    Suppose that the Gibbs preparations \(\rho_{\beta,\mu}\) and controlled joint evolutions \(\e^{-\ii tH+\ii\vartheta N}\) are available to any required inverse-polynomial accuracy with polynomial overhead; and that \(L\), \(\epsilon^{-1}\), \(b^{-1}\), \((1-q)^{-1}\) and \(\e^{2\beta L}\) are bounded by polynomials in \(M\).
    Then, with probability at least \(1-\delta\), SPICES returns a probability measure \(\widehat\nu\) with \(W_1(\widehat\nu,\nu_\Omega)\le\epsilon\), using polynomially many gates and samples and polynomial classical postprocessing.
\end{theorem}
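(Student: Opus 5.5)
The plan has four stages: turn each Hadamard-test sample into a point value of \(P(s,w)\) on the two circles, extract the root power sum inside the annulus by quadrature of \(\partial_w \log P\), convert that time signal into a measure by a windowed Fourier transform, and then propagate errors and count resources.

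\textbf{Hadamard tests give \(P\) on the contours.} Since \(\rho_{\beta,\mu}\propto \e^{-\beta(H-\mu N)}\), we have
\[
\Tr\!\bigl(\rho_{\beta,\mu}\,\e^{-\ii tH+\ii(\pi-\theta)N}\bigr)=\frac{Z(\beta+\ii t,\,-\e^{\beta\mu-\ii\theta})}{Z(\beta,\e^{\beta\mu})}.
\]
With \(w=-1/z=R_\mu\e^{\ii\theta}\), this is \(P(s,w)\,w^{-M}/Z(\beta,\e^{\beta\mu})\). So each Hadamard test returns \(P(s_\ell,R_\mu\e^{\ii\theta_j})\) up to a known factor \(w^{-M}\) and an unknown constant that does not depend on \(\theta\).

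\textbf{Contour quadrature for the root moment.} By the argument principle,
\[
\frac{1}{2\pi\ii}\oint_{|w|=R} w\,\partial_w\log P(s,w)\,\dd w=\sum_{|\zeta|<R}\zeta .
\]
Taking the difference between \(R_{\mathrm{out}}\) and \(R_{\mathrm{in}}\) picks out the roots in \(A_\Omega\). By the promise on \(P_O\), these are exactly \(\e^{-s_\ell E_i}\) for \(E_i\in\Omega\), so the difference equals \(m(t_\ell)=\sum_{E_i\in\Omega}\e^{-\beta E_i}\e^{-\ii t_\ell E_i}\). Dividing by the constant-in-\(\theta\) normalisation only adds a constant to \(\log P\) and leaves the derivative unchanged.

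Using the derivative directly is inconvenient, so I would integrate by parts. This writes the first moment as the Fourier coefficient \(\frac{R}{2\pi}\int \e^{\ii\theta}\log P(s,R\e^{\ii\theta})\,\dd\theta\), i.e.\ the first Laurent/Fourier mode of \(\log P\) on the circle. Radial separation (parameter \(q\)) lets us write
\[
\log P=\sum_{|\zeta|<R}\log(w-\zeta)+\sum_{|\zeta|>R}\log(w-\zeta)+\text{const},
\]
with each factor's series converging geometrically at rate \(q\). The midpoint rule with \(S\) points then aliases Fourier mode \(1\) with modes \(1\pm kS\). This gives a quadrature error of order \(M R\,q^{S}/(1-q^S)\), which is negligible once \(S=O(\log(M/\epsilon)/(1-q))\).

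The branch of \(\log P\) along the discrete circle has to be tracked by phase unwrapping between neighbouring points. Alternatively, one can use finite differences of \(\log P\), which only need the ratios \(P(\theta_{j+1})/P(\theta_j)\). The signal floor \(b\) ensures these ratios are well defined and that the estimation error in \(\log\) is amplified by at most \(1/b\). The winding \(w^{-M}\) and the winding number of the inner roots contribute known terms that can be subtracted.

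\textbf{Fourier inversion to a measure.} From estimates \(\widehat m(t_\ell)\) for \(t_\ell=\pi\ell/L\), \(\ell\le \ell_{\max}=O(L/\epsilon)\), I would form a Fejér/Jackson-kernel smoothed trigonometric polynomial on the period-\(2L\) interval containing \(\Omega\). This gives a smoothed density for the thermally weighted comb; its negative part is clipped and it is normalised to obtain \(\widehat\nu\). Standard Jackson-kernel bounds give \(W_1\le O(L/\ell_{\max})\) from truncation. The coefficient error adds \(L\sum_\ell|\widehat m-m|/\ell\)-type terms, so a per-coefficient accuracy of \(\epsilon\,\e^{-\beta\Omega_{\max}}/\mathrm{polylog}\) suffices.

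Normalising by the total mass \(m(0)\), which is itself estimated and is at least \(\e^{-\beta\Omega_{\max}}\) when \(\Omega\) is nonempty, introduces the \(\e^{\beta L}\) factors. These come from relative scales \(R_{\mathrm{out}}/R_{\mathrm{in}}\), which is why \(\e^{2\beta L}\) must be polynomial.

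\textbf{Error budget and the main obstacle.} Hoeffding bounds with a union bound over the \(2S\ell_{\max}\) points give polynomially many samples for additive accuracy \(\eta\) with confidence \(1-\delta\). Gibbs-state and simulation errors are absorbed into \(\eta\).

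The main obstacle is propagating the additive sampling error through \(\log P\) to the root moment. The factor \(R_{\mathrm{out}}\) multiplies the log-errors, the errors are relative to \(b\), and the branch consistency of the unwrapped phase must survive noise. I would handle this by requiring \(\eta\ll b\,\sin(\pi/S)\) times the minimal gap, so that the estimated increments of \(\arg P\) between neighbouring points stay below \(\pi\). That gap follows from radial separation, which bounds \(|\partial_\theta\log P|\le M\,q\)-dependent constants. The total log error is then \(O(S\eta/b)\), which is polynomial.
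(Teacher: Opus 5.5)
Your pipeline coincides with the paper's proof of \cref{thm:spices-recovery} except at the inversion step, and it proves the statement once the points below are made explicit.

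The shared part runs as follows. Hadamard tests give $\Xi=w^{-M}P/Z(\beta,\e^{\beta\mu})$ on the two circles. Your first-Fourier-mode rule for the lifted logarithm, once the sign is fixed and the boundary term $RK$ from integrating by parts is kept (i.e.\ the linear winding part $\ii K\theta$ is handled exactly, as you indicate), is exactly the paper's sinc-corrected increment rule \eqref{eq:midpoint-quadrature-rule} after summation by parts. Aliasing decays geometrically in $q$, $b$ converts additive into relative error, and the mass lower bound produces the $\e^{\beta L}$ factors.

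The genuine difference is that you invert by clipping a Jackson-smoothed truncated Fourier series, whereas the paper solves the LP of Chen et al.\ and invokes their stability theorem (\cref{lem:chen-stability}). Your route is self-contained: it needs only $W_1(\rho,\rho*J_n)\le1/n$ and $|\widehat\phi(\ell)|\le1/(2\pi|\ell|)$ for $1$-Lipschitz $\phi$. Clipping, however, is not free. The noise polynomial $g$ is only controlled by $\|g\|_{L^1}\le\|g\|_{L^2}\lesssim\sqrt n\,\kappa$, and clipping then renormalising moves the measure by that much in total variation. So coefficient accuracy $\epsilon/(L\,\mathrm{polylog})$ does not suffice as you claim; you need roughly $\kappa\lesssim(\epsilon/L)^{3/2}$, which is still polynomial. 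The LP buys the near-optimal $\kappa\sim\epsilon/(L\log(4L/\epsilon))$ and returns a probability measure directly.

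Two further points need to be explicit:
\begin{enumerate}
\item Correct unwrapping requires $S\gtrsim M/(1-q)$. The paper takes $S\ge4M/(1-q)$, because up to $K\le M$ roots inside a circle each advance the phase by about $2\pi/S$ per step. Your aliasing count $O(\log(M/\epsilon)/(1-q))$ does not provide this.
\item The torus output must be folded back to $\mathbb R$ by a Lipschitz map such as $\psi$ in \cref{prop:wasserstein-window-recovery}.
\end{enumerate}

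Your coarser noise bookkeeping (error $O(S\eta/b)$, condition $\eta\ll b\sin(\pi/S)$) costs only polynomial factors. The paper removes them by summation by parts (\cref{lem:shot-noise-quadrature-perturbation}) and by observing that a constant relative error $\varsigma<\sin(\chi_S/2)$ already preserves the branches (\cref{lem:endpoint-noise-branch-stability}).
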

Notably we do not require knowledge of the basis in which this free spectrum is manifest. We give some natural sufficient conditions for these criteria to hold, motivating the utility of SPICES. 
These include: a promise that when the energies of modes \(I\) are bounded away from the contour by energy \(\Delta\) and the spectator roots are not inside the contour, then a temperature of \(\beta \geq \Delta^{-1} \log(3M)\) suffices to satisfy the signal-floor criterion of \cref{def:sc} (\cref{cor:log-inv-temp-suffices}).
Furthermore, that the spectator roots remain outside of the contour up to a maximum time provided the interactions between spectator modes are sufficiently weak (\cref{thm:finite-time-spectator-stability,cor:log-gap-perturbative-stability}). 

Finally, we show that SPSE remains DQC1-hard even under the stability criteria -- evidence that quantum resources are necessary for this task.

\begin{theorem}[SPICES solves DQC1-hard instances (informal version of \cref{thm:ff-hardness,cor:ff-spices-hard-family})]
    Work in the input model of \Cref{thm:spices-informal}.
    There exists a family of instances in this model such that for each instance defined by a Hamiltonian $H$ and unitary $W$,
    \begin{enumerate}[label=(\arabic*)]
        \item in the basis of \(W\) (that is, with \(U=W^\dagger\)) the Hamiltonian has the form of \cref{eq:ffansatz} with a single protected mode, whose energy \(E_\star\in\{16,17\}\) is the only protected energy in \(\Omega\), and all remaining modes form an interacting spectator sector;
        \item determining \(E_\star\) to within \(1/2\) is DQC1-hard;
        \item there are parameters $\beta, q, b$ such that the stability criteria of \cref{def:sc} hold and for \(\epsilon, \delta \in 1/\textsf{poly}(M)\), SPICES returns with probability at least \(1-\delta\) a probability measure \(\widehat\nu\) with \(W_1(\widehat\nu,\delta_{E_\star})\le\epsilon\), using polynomially many gates and samples and polynomial classical postprocessing.
    \end{enumerate}
\end{theorem}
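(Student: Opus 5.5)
Our plan is to reduce from the DQC1-complete problem of estimating the normalized trace of a polynomial-size circuit \(C\) acting on \(n\) qubits. Specifically, we use the promise version: decide whether \(\Re\,\ntr(C)\ge 2/3\) or \(\le 1/3\). We then encode the answer bit into a single protected energy \(E_\star\in\{16,17\}\), so that the remaining modes form an interacting but well-behaved spectator sector. The construction has three steps, and we carry them out in order.

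First, we build a Feynman--Kitaev or phase-estimation style Hamiltonian. The unitary \(W\) conjugates a manifestly free form into a Hamiltonian \(H\) that is classically described, e.g.\ a sparse operator built from controlled-\(C\) gates. The decoupled form of \(W H W^\dagger\) is
\[
E_\star n_\star + H_O,
\]
where \(H_O\) acts on the \(n\) register qubits, viewed as hard-core modes with \(N_O\) counting occupations. The key design choice is that the scalar \(E_\star\) must depend on \(\ntr(C)\) only through a thresholded bit. To achieve this, we would realize the protected mode as an ancilla whose energy is a function of a DQC1 quantity amplified by repetition, for instance by taking \(E_\star\) to be the spectral gap induced by \(\Tr\) of a tensor power of \(C\) rounded via a projector. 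Item (2) then follows directly: if \(E_\star\) could be resolved within \(1/2\), the thresholded bit would be decided, so a single sample of SPICES' output distinguishes the two cases.

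Second, we verify the promise \cref{eq:ffansatz} and the window condition. We choose \(\Omega=[15.5,17.5]\), so \(L=2\) and \(\e^{2\beta L}=\e^{4\beta}\) stays polynomial provided \(\beta=O(\log M)\). We then arrange \(H_O\) to have spectrum whose single-excitation structure places all spectator roots of \(P_O(s,w)\) either deep inside \(|w|<\e^{-\beta\Omega_{\max}}q\) or outside \(\e^{-\beta\Omega_{\min}}/q\). A natural way to do this is to shift the spectator energies so that they lie at \(0\) or \(\ge 40\) per particle, scaled so that interactions are a weak perturbation of such a free shifted spectrum. We then invoke \cref{thm:finite-time-spectator-stability} and \cref{cor:log-gap-perturbative-stability} to keep the spectator roots out of the annulus up to \(t_{\max}=O(1/\epsilon)\). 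This gives radial separation with some constant \(q\).

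Third, we check the signal floor. Since \(E_\star\) is at distance at least \(1/2\) from both contour radii, i.e.\ \(\Delta=1/2\), \cref{cor:log-inv-temp-suffices} with \(\beta\ge 2\log(3M)\) yields an inverse-polynomial \(b\). With all hypotheses verified, item (3) follows from \cref{thm:spices-recovery}, and \(\nu_\Omega=\delta_{E_\star}\) because only one protected root lies in the annulus.

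We expect the main obstacle to be the first step: making \(E_\star\) depend discretely on a DQC1-hard quantity while keeping \(H\) efficiently describable and the spectator sector stable. The naive encoding gives an energy proportional to \(\Re\,\ntr(C)\), which is continuous. Rounding this to \(\{16,17\}\) requires amplification, and it must be done without introducing spectator roots near the contours. We would first try a construction in which \(H_O\) is diagonal in the computational basis of the register and all the hardness sits in \(W\) itself, so that stability is trivial. Hardness would then come from the fact that evaluating \(Z\) along the contour requires Hadamard tests that implement the DQC1 circuit.
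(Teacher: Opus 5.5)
\textbf{There is a genuine gap.} Items (1) and (2) are existence claims, and your proposal never constructs the family. Step~1 lists possible ingredients: a Feynman--Kitaev Hamiltonian, amplification by repetition, and a tensor power of $C$ ``rounded via a projector''. None of them is specified, and none obviously yields a hidden free form in which $E_\star$ is \emph{exactly} $16$ or $17$, as \cref{eq:ffansatz} requires. A tensor power only replaces $\ntr(C)$ by $\ntr(C)^k$, which is still a continuous quantity.

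The missing idea is to put all of the hardness into the number operator and keep the Hamiltonian trivial and instance-independent. The paper uses $H=16n_R+17n_S-\frac{17}{2}\I$ for every instance and sets $N=WN_{\mathrm{std}}W^\dagger$. The thresholding you worry about is then exact and needs no amplification:
\begin{itemize}
\item Apply the two-projector lemma (\cref{lem:ff-two-projector}) to the rank-one $\Pi_0$ and the Hadamard-test projector $\Pi_1$. The walk $\mathcal G_{\mathcal U}=(2\Pi_1-\I)(2\Pi_0-\I)$ has a single nontrivial eigenphase $\omega_{\mathcal U}$ with $\cos\omega_{\mathcal U}=\frac12\Re\ntr(\mathcal U)$.
\item This phase lies on one side of $\pi/2$, at distance at least $g/2$. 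So the exact arc projectors satisfy $(\Pi_\geq,\Pi_\leq)\in\{(\Pi_+,0),(0,\Pi_+)\}$ (\cref{cor:ff-trace-switch}).
\item The unitary $W=F_RF_SU_{\mathrm{rot}}$ then makes $R$ or $S$ the free mode and turns the other into the spectator $(\I-n)\otimes\Pi_0+n\otimes(\I-\Pi_0)$.
\item Approximation enters only when compiling $W$ by phase estimation, with $O(1/g)$ queries (\cref{lem:ff-phase-filter,lem:ff-W-compiler}).
\end{itemize}
Your closing remark that the hardness should sit in $W$ points in this direction. Without such an exact switch, however, (1) and (2) are not established.

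\textbf{Steps~2 and~3 would also not go through as planned.} \Cref{cor:log-inv-temp-suffices} concerns free Hamiltonians only. \Cref{cor:log-gap-perturbative-stability} requires $|\lambda|\sqrt{\beta^2+t_{\max}^2}<\log\frac32$, so the spectator coupling must be $O(\epsilon)$; in particular $|\lambda|<\log(3/2)/\beta<1/4$ at your $\beta$. Such a weak spectator is incompatible with hardness when $H$ is public as a Pauli sum. Suppose $H_O=\sum_{j\in O}E_jn_j+\lambda V_O$ with instance-independent free energies. Then the identity coefficient
\[
2^{-M}\Tr H=\frac{E_\star}{2}+\frac12\sum_{j\in O}E_j+\lambda\,2^{-M_O}\Tr V_O
\]
already determines $E_\star$ classically once $|\lambda|<1/4$.

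In the paper, both $16$ and $17$ always appear in $H$, so its spectrum is branch-independent. The non-protected energy $\bar E$ is expelled from the annulus by an order-one ($\lambda=17/2$) rank-one interaction: $\mathcal R_{\bar E}$ stays within $1/4$ of $-1$ on $|w|\le\e^{-14\beta}$ (\cref{lem:ff-Re-bound}). Because this is far outside the perturbative regime, the certificates $q=1/(3M)$ and $b=9/40$ are proved directly from the explicit factorization (\cref{prop:ff-inverse-factorization,prop:ff-signal,cor:ff-root-separation}).

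\textbf{The access model is missing.} Item~(3) needs it, and it is what makes (2) a meaningful hardness statement. Gibbs states $\rho_{\beta,\mu}$ at $\beta=\Theta(\log M)$ are not generally available for a Feynman--Kitaev-type $H$. In the paper, $\e^{\beta\mu N}/Z$ is $W$ applied to a product state, and $H$ is diagonal on two qubits. Filtering therefore succeeds with probability at least $\e^{-33\beta}$ (\cref{lem:ff-gibbs-preparation}), and all compilers are produced without deciding the branch.
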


The SPSE problem has strong motivations both from a physics and computer science perspective. We briefly discuss both here before proceeding with an overview of the rest of the paper.

\paragraph{Computer Science Motivation}
Learning a Hamiltonian model for an unknown quantum system is a fundamental task in quantum information processing.
Protocols restricted to single-copy measurements from the dynamics of a local many-body Hamiltonian can achieve the standard quantum limit of \(O(\epsilon^{-2})\) total time evolution, whereas protocols utilizing entanglement with an ancilla can reach the Heisenberg limit of \(O(\epsilon^{-1})\) total time evolution to reach a corresponding accuracy \(\epsilon\) \cite{huang2023learning}.
Related results establish that only a polynomial number of measurements are needed to learn a Hamiltonian from access to its Gibbs state~\cite{anshu2021sample}, and this sample complexity is provably optimal at sufficiently high temperature~\cite{haah2024learning}.
Naturally, methods for Hamiltonian learning are directly applicable to the certification, benchmarking, and control of quantum devices \cite{wiebe2014hamiltonian}. Notably, the SPICES protocol constitutes a rare case of a practical learning algorithm. 

An important special case of Hamiltonian learning concerns systems with a special structure, such as geometric locality \cite{arunachalam2025testing}, sparsity \cite{ma2024learning}, or symmetry \cite{zhou2024learning}.
In addition to providing an advantage for Hamiltonian learning, these families of structure, and the certification thereof, can provide valuable insight into the physics of the systems we wish to model.
The SPSE problem constitutes a hitherto unexplored form of Hamiltonian learning.  It is worth noting that the inclusion of an obfuscating basis via the unitary \(U\) makes SPSE substantially different from the conventional setting of Hamiltonian tomography.
Rather, SPSE and its algorithm, SPICES, aim to utilize the particular spectral structure of \(H\) as a collection of subset sums from a polynomially large base set of single-particle energies.

Free fermion systems have found a surprising degree of application to computer science.
Matchgate quantum circuits were originally proposed by Leslie Valiant as an example of a nontrivial classically simulable family of quantum circuits \cite{valiant2002quantum}, and these were later discovered to constitute the quantum circuit family of fermionic linear optics \cite{knill2001fermionic,terhal2002classical,bravyi2002fermionic,bravyi2005lagrangian,jozsa2008matchgates}.
More recently, mappings to free fermions were discovered to have played a role in the proof of the sensitivity conjecture \cite{huang2019induced,gu2019majorana}.
We expect that our characterization, by revealing hidden structure in these settings, may find application in all of these areas.

Free models are also known to be computationally hard in a different access model, where the number of modes is exponentially large and the system is specified through efficient oracles.
Estimating the kinetic energy of \(2^n\) coupled classical oscillators, whose masses and spring constants can be queried efficiently, is BQP-complete, and any classical algorithm needs \(2^{\Omega(n)}\) queries~\cite{babbush2023exponential}.
An analogous result holds for non-interacting fermions: deciding whether a given mode is appreciably occupied after a system of \(2^n\) modes has evolved under a sparse quadratic Hamiltonian, given by oracles for its sparse description, is BQP-complete~\cite{stroeks2024solving}.
These access models differ substantially from ours, but they show that a free structure on its own does not make a problem easy.
Here we show that hardness persists in a far more modest access model.
The number of modes is polynomial, the Hamiltonian is a short explicit Pauli sum, and the only source of difficulty is that the free structure is hidden by a change of basis that is not given to the algorithm; recovering a single protected energy is then DQC1-hard (\cref{thm:ff-hardness}).

\paragraph{Physics Motivation for SPSE}
The particle is among the oldest and most successful paradigms in physics: the energy of a system is accounted for by the energies of its constituents.
Interacting quantum systems are not atomistic in this sense -- their eigenenergies do not generically decompose into sums of single-particle contributions.
It is striking that the particle picture remains central to our understanding of quantum matter, from orbital energies in Hartree--Fock and Kohn--Sham theory~\cite{kohn1965self} to band structures and Landau Fermi-liquid theory (FLT)~\cite{landau1957theory,landau1957oscillations,landau1959theory,baym1991landau}.
One reason for this success is that the non-interacting limit, where the particle picture is well defined, serves as a natural starting point from which to perturbatively express the effect of interactions, so for weakly interacting systems a particle picture suffices.
What is remarkable is that this picture often survives strong interactions as well: at low energies, many strongly interacting systems behave as collections of long-lived \emph{quasiparticles} whose parameters are renormalized by the interactions. For example, in heavy-fermion compounds quasiparticle masses can reach hundreds of times the bare electron mass~\cite{stewart1984heavy}.

In FLT, adiabatically switching on interactions maps the low-lying eigenstates of a free system onto those of the interacting one, and their energies are expanded in the deviations \(\delta n_i\) of the dressed-particle occupations from the ground state~\cite{interacting_electrons},
\begin{equation}\label{eq:flt_energy_draft}
    \lambda(\delta n) = \lambda_0 + \sum_i E_i\,\delta n_i + \sum_{ij} f_{ij}\,\delta n_i\,\delta n_j + O(\delta n^3).
\end{equation}
To first order the spectrum is free, with quasiparticle energies \(E_i\).

The structural promise~\eqref{eq:ffansatz} is designed to target the dominant features of this picture.
First, the adiabatic dressing map is an unknown unitary that is, crucially, not a single-particle basis rotation: dressed operators \(U^\dagger c_j^\dagger U\) acquire multi-particle components, so no choice of bare single-particle operators captures them.
This is the role of the obfuscating basis.
While physical dressing maps are typically quasi-local~\cite{hastings2005quasiadiabatic,bachmann2012automorphic}, SPICES makes no use of this, and we allow \(U\) to be arbitrary.
Second, quasiparticles are well defined only in a low-energy window about the chemical potential and at low temperature; this is the role of the window \(\Omega\), the protected modes \(I\), and the thermal input \(\rho_{\beta,\mu}\), while the remaining degrees of freedom form the spectator sector \(O\).
Third, physical quasiparticles have finite lifetimes; correspondingly, our stability guarantees for interacting spectators hold only up to a finite time horizon (\cref{thm:finite-time-spectator-stability}). A two-mode example shows that when the spectator modes are only weakly separated from the window, some such restriction cannot be avoided for general interacting spectators (\cref{subsec:interacting-long-time-obstruction}); we do not claim that it is necessary at larger gaps.
We note that for simplicity our structural promise~\eqref{eq:ffansatz} keeps only the leading term of~\eqref{eq:flt_energy_draft}: the Landau interaction \(f_{ij}\) couples quasiparticles to one another, whereas we require the protected modes to decouple -- we aim to address generalizations of this in future work.
Finally, quasiparticles are a property of the low-energy states at finite density, not of the few-particle sectors of \(H\); the dressing \(U\) and the decoupling in~\eqref{eq:ffansatz} should be understood as holding only near the relevant filling, with the global promise an idealization.
SPICES respects this,  querying \(H\) only through Hadamard tests on thermal states \(\rho_{\beta,\mu}\) with \(\mu\) at the edges of \(\Omega\), so its signal is weighted toward exactly this low-energy, finite-density subspace.

The standard probe of quasiparticle energies is the single-particle Green's function \(G_{ij}(\omega) = \langle c_i(\omega - H + \lambda_0 + \ii\eta)^{-1}c_j^\dagger\rangle\), whose poles sit at excitation energies.
Peak strengths are set by the overlap of the bare excitation with the dressed one, and are reduced by strong dressing.
Resolving quasiparticles from \(G\) therefore requires approximate knowledge of the dressed quasi-particle operator.
\(GW\) methods supplement this through a mean-field reference, which works well when the choice of reference is good but offers no a priori guarantees otherwise~\cite{hedin1965new,the_GW_method,GW_starting_points,GW_starting_points_II}. SPICES avoids this dependence entirely. SPICES employs similar quantum resources to quantum Green's function algorithms~\cite{somma2002} -- Hadamard test circuits, time evolution and low-energy input states. For these reasons we view SPICES as complementary to quantum Green's function methods, particularly in regimes where quasiparticle weights may be suppressed by dressing.

\paragraph{Overview of the remainder of the paper.} We give a brief conclusion in \cref{sec:conclusion}.
In \Cref{sec:prelim} we overview the SPICES protocol and establish definitions which will be useful for the remainder of the paper, and in \Cref{sec:spse} we state the SPSE problem formally. We establish that this protocol is robust to noise (\Cref{sec:noise}), can be implemented efficiently (\Cref{sec:free-gapped}), and that it works outside the perfectly-free setting, notably in the setting of ansatz established in \Cref{eq:ffansatz} (\Cref{sec:perturbative-stability}). Finally, we establish in \Cref{sec:ff-hardness} that SPICES solves a DQC1-hard problem, providing evidence that it offers an advantage over entirely classical methods.

\section{Discussion}\label{sec:conclusion}

We suspect that the rigorous promises that SPICES requires can likely be relaxed in practice.
We use logarithmic inverse temperature dependence to guarantee these, but it is not clear \emph{a priori} that this requirement is ultimately necessary.
In experiments one is often satisfied with coarse-grained features of the single-particle energy density, such as bandgaps, rather than individual energies.
We would find it plausible that a more forgiving scaling may suffice in this relaxed setting.
Additionally, our algorithmic results assume controlled evolution and Gibbs preparation at inverse temperature \(O(\log M)\), and the hardness result shows that the number operator cannot be straightforwardly removed from the access model.
It would be interesting to identify if certain physical systems allow for a weaker form of access to \(N\) for the same task.

There are a number of directions we intend to elaborate on in near term future work. 

Firstly, we have been engaged in testing numerically the SPICES method and have seen that it performs well. The present paper is meant to present the theoretical foundations of the method with a practical survey to follow.

Secondly, the SPICES protocol may be generalized to a larger family of good quantum numbers beyond total charge $N$. 
For example these techniques may be extended to total momentum in order to extract single particle momenta, i.e. band structure. 
From this perspective we may think of the SPICES protocol as a technique for identifying where global symmetries decompose into local symmetries. 

Thirdly, the SPICES estimator can be used as a primitive in a swept estimator, wherein the radial values of the contours are swept across an energy domain, yielding a regularized energy distribution. We have seen that this works well numerically -- which suggests that the condition that roots be bounded away from the contours may potentially be relaxed in this setting. 

Fourthly, we would like to understand better where in practice we can expect the benefits of the SPICES approach to be most impactful and how the performance of the method compares to and can supplement Green's functions for practical applications.
In line with this, we would like to understand in detail how the higher-order parameters in the Landau Fermi Liquid hypothesis given in equation \ref{eq:flt_energy_draft} impact the estimator produced by SPICES.

Finally, we would like to show that SPICES can also be employed to witness the \emph{absence} of a free model, improving its usefulness as a diagnostic method.

\paragraph{Acknowledgements} We thank Raul Santos, Toby Cubitt, Evan Sheridan, and Glenn Jones for helpful discussions throughout this project.

\paragraph{Author Contributions} JK formulated the research direction. SF and JK provided technical leadership. AC, CD, SF and JK developed the SPICES algorithm. AC, JK, CD, SF, YH, CM developed the proofs of correctness and noise robustness. YH and AC developed the proofs of DQC1 hardness. 

\paragraph{AI Disclosure.} The formulation of the research direction, problem statement, algorithm and hardness construction was done by the authors. Proof sketches were formulated by the authors with AI used to assist in proof completion. AI was used in optimizing the bounds and resource estimates for the algorithm. AI was used to assist with paper drafting, editing and the design of figures and diagrams.

\paragraph{Note added.} During the final stages of drafting of this document we became aware of contemporaneous work employing similar algorithmic techniques to the ones used here in Ref.~\cite{mcintyre2026one}. We note that our setting is distinct from this one in that we allow for the presence of an interacting spectator Hamiltonian and includes models which are free-fermion in an unknown basis. 

\tableofcontents

\section{The SPICES Protocol}\label{sec:prelim}

In this section we describe the SPICES protocol. To motivate the protocol we consider a purely free system \begin{equation}
    H_0 = \sum_{j=1}^{M} E_j n_j,
    \label{eq:free-model}
\end{equation} and introduce the inverse-fugacity polynomial $P(s,w)$ of the complex time $s=\beta+it$ and the inverse fugacity $w$, both defined in \Cref{sec:free-model}. We show in \Cref{sec:free_hamiltonians} that in the case of a free system the polynomial factorizes as,
\[
    P(s, w) = \prod_{j=1}^M (w - \zeta_j)\,,\qquad \zeta_j = e^{-s E_j}\,,
\]
and the roots of this polynomial correspond precisely to exponentials $e^{-s E_j}$ of the single-particle energies. 
Retrieving roots directly from a polynomial is a notoriously unstable problem. 
Instead, SPICES will obtain information by sampling a signal along chosen contours. 
We introduce \emph{root moments} to connect contour integrals to polynomial roots. Formally, root moments are defined by taking a Stieltjes integral of the log of the inverse-fugacity polynomial over an annulus \(A\)
\begin{equation}
        \mathcal{M}_A^{(k)} \coloneqq \frac{1}{2\pi \ii} \oint_{\partial A} w^k \dd\bigl(\log P(w)\bigr) =  \sum_{\zeta_j \in A} \zeta_j^k\,. \label{eq:stieljes}
    \end{equation} 
In the case of an exactly free system, taking a time-domain Fourier transform of the root moment at constant inverse-temperature \(\beta\) constitutes the signal we wish to retrieve, a thermally weighted distribution over energies $E_j$.

It remains to estimate the root moments.
They can be estimated from a normalized proxy \(\Xi\), 
\[\Xi(s,z)
    \coloneqq
    \Tr\{\rho(\beta,|z|)\exp(-\ii t H + \ii \vartheta N)\}\]
the expectation value of \(\e^{-\ii tH+\ii\vartheta N}\) in a thermal state, which can be measured by a Hadamard test circuit. On circular contours the logarithm of \(\Xi\) yields the same root moments as the logarithm of \(P\). 

Thus the SPICES protocol is given by the following:
\begin{framed}
\textbf{SPICES Protocol} Fix an energy annulus $A$ (a difference of two discs) and an inverse temperature $\beta$, then estimate $\Xi(s,z)$ at a series of points $z$ along the perimeter of the annulus. Use these points to approximate $\mathcal{M}_A^{(k)}$ for a fixed value of $t$. Repeat over a sequence of $t$ values to retrieve a time series signal $\mathcal{M}^{(k)}_A(t)$. Apply a time domain Fourier transform to retrieve a spectral distribution. 
\end{framed}
We discuss this in more detail in the remaining portions of this section. 

In our formal problem statement of SPSE \cref{def:spse-formal} the ansatz Hamiltonian is only free on a subset of modes and allows for a spectator Hamiltonian $H_O$. The contour integration used by SPICES can exclude roots outside of the chosen contour, including roots introduced by spectator modes. We discuss in \Cref{sec:perturbative-stability} how the SPICES protocol behaves in this setting, and when it can robustly retrieve the free roots. 

\subsection{Grand canonical model}\label{sec:free-model}

Motivated by Landau Fermi liquid theory, we are formalizing what follows in terms of fermionic modes.
However, nothing in our paper depends on the canonical fermionic anticommutation relations, and all of what follows could be formulated using qubits with the identification \(n_j = (\I+Z_j)/2\), where \(Z_j\) is the Pauli \(Z\) operator acting on the \(j\)th qubit.

Consider \(M\) fermionic modes with occupation operators \(n_j\) and total occupation \(N\):
\begin{equation}
    n_j^2=n_j, \qquad
    [n_j,n_k]=0, \qquad
    N=\sum_{j=1}^{M}n_j
    \label{eq:number-operator-relations}
\end{equation}
Given a Hamiltonian \(H\) which is number conserving, \([H,N] = 0\), define the \textit{complex grand canonical partition function}
\begin{equation}
    Z(s,z) \coloneqq \Tr\bigl\{z^{N} \exp(-s H)\bigr\}
    \label{eq:general-partition}
\end{equation}
This is implicitly a function of the Hamiltonian \(H\), and is explicitly a map from \(\C \times \C \mapsto \C\).

In the usual physical case, we have \(s = \beta\) and \(z = e^{\beta \mu}\) for inverse temperature \(\beta\) and chemical potential \(\mu\), yielding the standard grand canonical partition function
\begin{equation}
    Z(\beta,e^{\beta \mu}) = \Tr\bigl\{ \exp(-\beta (H-\mu N )\bigr\}.
    \label{eq:free-partition}
\end{equation}
Here \(z\) is known as the \textit{fugacity} and effectively measures the cost of adding an additional particle to the system.
\cref{eq:general-partition} can be understood as the analytic continuation of this physical case to complex temperature and complex chemical potential.

We can write the complex fugacity as
\(z = \e^{\beta \mu + \ii \vartheta}\),
where \(|z| = \e^{\beta \mu}\) is the standard fugacity and \(\vartheta\) is the physical fugacity angle or imaginary chemical potential.
For our results, it is frequently more natural to use the \textit{(signed) inverse-fugacity} coordinate
\begin{equation}
    w \coloneqq -\frac{1}{z}
    \label{eq:inverse-fugacity-coordinate}
\end{equation}
with
\begin{equation}
    P(s,w)\coloneqq w^M Z\bigl(s,-\tfrac1w\bigr).
    \label{eq:inv-partition-poly}
\end{equation}
This is the (signed) reciprocal polynomial of the grand canonical partition function in the fugacity variable.
The minus sign is only a convention to remove a minus sign from the roots of the polynomial used below.
We treat \(w\) as merely a change of variables, but one physical intuition for \(-w\) is that it is the fugacity of a hole instead of a particle; large \(|w|\) (small \(|z|\)) means adding \textit{holes} is thermodynamically favored, and small \(|w|\) (large \(|z|\)) means adding holes is suppressed.
In the \(w\) coordinate we write
\begin{equation}
    R = \e^{-\beta\mu}, \quad w = R\e^{\ii\theta}.
\end{equation}
Thus \(|w|=\e^{-\beta\mu}\) is a Gibbs-weight scale, while
\(\theta=\pi-\vartheta\) modulo \(2\pi\).
We also explicitly split the complex inverse-temperature \(s\) into a real inverse temperature \(\beta\) and treat the imaginary part as a time coordinate \(t\),
\begin{equation}
    s \coloneqq \beta + \ii t.
\end{equation}
We then treat the complex grand canonical partition function \(Z\) as a function of two \textit{independent} variables, \(Z(s,z)\) (or equivalently \(Z(s,-1/w)\)) with \(\Re(s) > 0\).
This is justified since the physical chemical potential can always be recovered from a nonzero fugacity for any finite temperature via \(\mu = \beta^{-1} \log |z|\).

\subsection{Free Hamiltonians}
\label{sec:free_hamiltonians}

From now, we almost exclusively consider a \textit{free} Hamiltonian \(H_0 = \sum_{j=1}^{M} E_j n_j\) (\cref{eq:free-model}), where the \(E_j\) are called the \textit{single-particle energies}.
We return to the ansatz of \cref{eq:ffansatz}, for which \(H_O \neq 0\), in \cref{sec:perturbative-stability}.

As we now prove, in the free model \cref{eq:free-model}, \(P(s, w)\) is monic, with roots
\begin{equation}
    \zeta_j \coloneqq \e^{-s E_j}
    \label{eq:zeta-zeros}
\end{equation}
which directly encode the thermally weighted single-particle energies.

\begin{proposition}[Free Factorization]
    \label[proposition]{prop:free-factorization}
    The free grand canonical partition function \(Z(s,z)\) is a degree-\(M\) polynomial in the fugacity \(z\) with factorization
    \begin{equation}
        Z(s,z) = \prod_{j=1}^{M} \bigl(1+z \e^{-s E_j}\bigr) = \prod_{j=1}^{M} \bigl(1 + z \zeta_j\bigr).
        \label{eq:Z-factorization}
    \end{equation}
    Equivalently,
    \begin{equation}
        P(s,w) = w^M Z\!\left(s,-\frac1w\right) =
        \prod_{j=1}^{M}(w-\zeta_j).
        \label{eq:inverse-fugacity-polynomial}
    \end{equation}
    Thus the zeros of \(P\) in the inverse fugacity \(w\) are \(\zeta_j=\e^{-sE_j}\), with
\(|\zeta_j|=\e^{-\beta E_j}\).
\end{proposition}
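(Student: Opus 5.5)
The plan is to compute the trace in the joint eigenbasis of the commuting occupation operators. By \cref{eq:number-operator-relations}, the \(n_j\) are commuting projectors. I will therefore work in the basis of occupation configurations \(x=(x_1,\dots,x_M)\in\{0,1\}^M\), on which \(n_j\) acts as multiplication by \(x_j\). The Hilbert space is the tensor product of \(M\) two-dimensional mode spaces; in the qubit picture \(n_j=(\I+Z_j)/2\), and in the fermionic Fock space the configuration basis is the same. On the basis vector \(|x\rangle\) both \(N\) and \(H_0\) are diagonal, with eigenvalues \(N|x\rangle=(\sum_j x_j)|x\rangle\) and \(H_0|x\rangle=(\sum_j E_jx_j)|x\rangle\). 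Hence \(z^N\e^{-sH_0}\) is diagonal with entries \(\prod_j (z\e^{-sE_j})^{x_j}\).

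Summing over \(x\in\{0,1\}^M\), the sum factorizes mode by mode:
\begin{equation*}
Z(s,z)=\sum_{x\in\{0,1\}^M}\prod_{j=1}^M\bigl(z\e^{-sE_j}\bigr)^{x_j}=\prod_{j=1}^M\sum_{x_j\in\{0,1\}}\bigl(z\zeta_j\bigr)^{x_j}=\prod_{j=1}^M(1+z\zeta_j).
\end{equation*}
Equivalently, \(z^N\e^{-sH_0}=\bigotimes_j (z\e^{-sE_j})^{n_j}\), and the trace of a tensor product is the product of the traces. This gives \cref{eq:Z-factorization}. Since \(\zeta_j\neq 0\), the product is visibly a polynomial of degree exactly \(M\) in \(z\), with leading coefficient \(\prod_j\zeta_j\).

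For \cref{eq:inverse-fugacity-polynomial}, I will substitute \(z=-1/w\) and distribute one factor of \(w\) into each of the \(M\) terms:
\begin{equation*}
w^M\prod_{j=1}^M\Bigl(1-\frac{\zeta_j}{w}\Bigr)=\prod_{j=1}^M(w-\zeta_j).
\end{equation*}
This is valid for \(w\neq0\) and extends to \(w=0\) as a polynomial identity. It shows that \(P\) is monic with roots \(\zeta_j\). Finally, \(|\zeta_j|=|\e^{-(\beta+\ii t)E_j}|=\e^{-\beta E_j}\), since the \(E_j\) are real, which holds because \(H_0\) is Hermitian.

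There is essentially no obstacle here. The only point needing care is the tensor-product (or configuration-basis) factorization of the trace, which requires that the \(n_j\) be simultaneously diagonalizable with independent \(0/1\) eigenvalues spanning all \(2^M\) configurations. I will state this explicitly using the qubit identification, so that it does not rest on the fermionic anticommutation relations.
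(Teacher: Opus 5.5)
Your proposal is correct and follows the paper's proof essentially line for line. Like the paper, you diagonalize $z^N\e^{-sH_0}$ in the joint occupation basis indexed by $\{0,1\}^M$, factor the sum mode by mode, and then substitute $z=-1/w$ and multiply by $w^M$. Your explicit remark that each of the $2^M$ configurations must occur exactly once is a small point the paper takes for granted: the bare relations in \cref{eq:number-operator-relations} do not force this, but the tensor-product mode structure does.
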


\subsection{The normalized signal and the single-particle gap}

Write the complex fugacity in polar coordinates with \(|z|=\e^{\beta \mu}\) being the standard physical fugacity, so that \(z=|z|\e^{\ii\vartheta}\).
For a general number-conserving Hamiltonian \(H\), the Gibbs state of this ensemble can be written as
\begin{equation}
    \rho(\beta,|z|) \coloneqq \frac{|z|^N \exp(-\beta H)}{Z(\beta,|z|)}.
    \label{eq:gibbs-state}
\end{equation}
This is the state \(\rho_{\beta,\mu}\) of the introduction, with \(|z|=\e^{\beta\mu}\).
The real time dynamics of the Hamiltonian and number operator is given by the unitary operator
\begin{equation}
    \mathcal{J}(t,\vartheta) \coloneqq \exp(-\ii t H + \ii \vartheta N).
    \label{eq:unitary}
\end{equation}
Using \([H,N]=0\), the trace inner product between the state and the unitary is
\begin{equation}
    \Xi(s,z)
    \coloneqq
    \Tr\{\rho(\beta,|z|)\mathcal{J}(t,\vartheta)\}
    =
    \frac{Z(\beta+\ii t,|z|\e^{\ii\vartheta})}{Z(\beta,|z|)}
    =
    \frac{Z(s,z)}{Z\bigl(\Re(s),|z|\bigr)}.
    \label{eq:normalized-signal}
\end{equation}

On a circle \(|w|=R\) in the inverse-fugacity coordinate, the corresponding physical fugacity has modulus \(|z|=R^{-1}\) and angle \(\vartheta=\pi-\theta\).

We call \(\Xi\) the \textit{normalized} partition function.
Because it can be written in the form \(\Tr(\rho \mathcal{J})\),
the Hadamard test~\cite{NielsenChuang2010} can be used to estimate the original
analytically continued partition function \(Z\)
up to the strictly positive normalization \(Z(\beta,|z|)\).
Since \(\lvert\Xi\rvert = \lvert \Tr(\rho \mathcal{J})\rvert\) does not have a nontrivial lower bound in general, we need some assumptions about our parameter choices to make more refined complexity statements.
For a fixed free Hamiltonian \(H_0\), define the
\textit{single-particle gap} \(\Delta\) as a function of the chemical potential \(\mu\) as
\begin{equation}
    \Delta(\mu) \coloneqq \min_k |E_k - \mu| \,.
    \label{eq:gap-def}
\end{equation}
Note that this is the gap between the \(M\) \textit{single-particle} energies and the chemical potential.
We caution the reader that the terminology ``gap'' is often used to refer to the smallest energy gap of the full many-body spectrum, possibly with respect to a reference state like the ground state.
Our notion of gap has no relation in general to the many-body energy gap; it can be either larger or smaller depending on the Hamiltonian and chemical potential.
\subsection{Root moments}
\label{sec:root_moments}

In this subsection we fix a value of \(s\) and consider the dependence of the partition function on the inverse-fugacity coordinate
\[
w=R\e^{\ii\theta},
\]
which the reader will recall is related to the physical fugacity by \(z=-1/w\).
Because \(s\) is fixed, we suppress the \(s\) dependence and write the inverse-fugacity polynomial as
\begin{equation}
    P(w) \coloneqq P(s,w) = w^M Z\!\left(s,-\frac1w\right) = w^M Z\!\left(\beta,\tfrac{1}{|w|}\right) \Xi\!\left(s,-\frac1w\right) .
    \label{eq:root-polynomial-w}
\end{equation}
By \cref{eq:inverse-fugacity-polynomial}, the roots of \(P\) are exactly \(\zeta_j=\e^{-sE_j}\) from \cref{eq:zeta-zeros}.
Recall the root moments introduced in \cref{sec:prelim}; we now define them for a general polynomial and region.

\begin{definition}[Root Moments]
    \label[definition]{def:root-moments}
    Let \(A\) be a compact region in the complex plane whose piecewise smooth boundary \(\partial A\) is positively oriented (outer boundary counterclockwise, inner boundaries clockwise, so that \(A\) lies to the left). 
    Let \(P(w)\) be a complex polynomial in the variable \(w\), and assume that \(P\) has no zeros on \(\partial A\). 
    For \(k\in\mathbb Z_{\ge0}\), the \textit{\(k\)th root moment} of \(P(w)\) with respect to the region \(A\) is 
    \begin{equation}
        \mathcal{M}_A^{(k)} \coloneqq \frac{1}{2\pi \ii} \oint_{\partial A} w^k \dd\bigl(\log P(w)\bigr) \,.
        \label{eq:root-moment}
    \end{equation}
\end{definition}

    The expression \(\dd(\log P(w))\) in \cref{eq:root-moment} should not be read as requiring a single-valued logarithm of \(P\) on all of \(A\).
    We only need \(P\) to be nonzero on the contour \(\partial A\).
    The integral should be interpreted as a Stieltjes integral along the parametrized boundary.
    Concretely, if \(\gamma:[0,1]\to\partial A\) is one oriented boundary component and \(L\) is a continuous lift of \(\log P(\gamma(x))\), then the contribution to the integral of that component is
\[
    \int_0^1 \gamma(x)^k\,\dd L(x).
\]
    The differential of this lifted logarithm is independent of the additive constant in the branch.
    We use this interpretation since our estimator for the root moments (\cref{eq:midpoint-quadrature-rule} below) follows this structure.
    Furthermore, the lifted logarithm on a closed contour need not return to the same value after one circuit.
    Its endpoint change is \(2\pi\ii\) times the winding number of \(P(\partial A)\) about the origin, and this is precisely what detects zeros inside the contour.
    For a fixed \(s\), this only requires phase unwrapping along the inverse-fugacity contour.
    If one later varies \(s\), a consistent logarithm for finite differences in \(s\) requires the same kind of phase unwrapping in the \(s\)-direction as well.
    We will discuss sufficient conditions for phase unwrapping below in \cref{sec:noise}.
    
    When \(P\) is nonzero on \(\partial A\) and the boundary is piecewise smooth, this agrees with the more familiar expression as an ordinary contour integral
    \begin{equation}
        \mathcal{M}_A^{(k)}
        =
        \frac{1}{2\pi \ii} \oint_{\partial A} w^k \dd\bigl(\log P(w)\bigr)
        =
        \frac{1}{2\pi \ii} \oint_{\partial A} \frac{w^k}{P(w)} \frac{\partial P}{\partial w} \dd w \,.
        \label{eq:alt-root-moment}
    \end{equation}

\begin{remark}[Relation to the signal \(\Xi\)]
    On a fixed circle \(|w|=R\), the normalizing factor \(Z\!\left(\beta,\tfrac{1}{R}\right)\) in \cref{eq:root-polynomial-w} is independent of \(\theta\), while the monomial factor \(w^M\) in \cref{eq:root-polynomial-w} is known explicitly.
    In particular, on regions \(A\) whose boundaries are oriented circles we have, for every \(s\) and for \(k>0\)
    \begin{equation}
        \mathcal{M}_A^{(k)} = \frac{1}{2\pi \ii} \oint_{\partial A} w^k \dd\bigl(\log P(w)\bigr) = \frac{1}{2\pi \ii} \oint_{\partial A} w^k \dd\bigl(\log \Xi(s,-w^{-1})\bigr).
        \label{eq:moment-from-signal}
    \end{equation}
    This provides the connection between root moments and the measurable signal \(\Xi\) for discs and annuli.
\end{remark}

\begin{proposition}[Power sum formula]
    \label[proposition]{prop:power-sum-formula}
    Let \(A\) be a compact region in the complex plane with positively oriented piecewise smooth boundary, and suppose that the complex polynomial \(P(w)\) has zeros \(\zeta_j\), listed with multiplicity, none of which lie on \(\partial A\). 
    Then
    \begin{equation}
        \mathcal{M}_A^{(k)} = \sum_{\zeta_j \in A} \zeta_j^k.
        \label{eq:power-sum}
    \end{equation}
\end{proposition}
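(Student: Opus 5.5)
The plan is to reduce the statement to the argument principle, applied to the meromorphic function \(w^k P'(w)/P(w)\), using the agreement \eqref{eq:alt-root-moment} of the Stieltjes form with the ordinary contour integral. First I would factor \(P(w)=c\prod_{j}(w-\zeta_j)\) with \(c\neq0\), listing roots with multiplicity; the zero polynomial is excluded, since otherwise the hypothesis that no zero lies on \(\partial A\) fails. On the boundary, where \(P\neq0\), the logarithmic derivative then splits as
\begin{equation*}
    \frac{P'(w)}{P(w)}=\sum_j \frac{1}{w-\zeta_j}.
\end{equation*}
Equivalently, a continuous lift of \(\log P(\gamma(x))\) along each boundary component is, up to an additive constant, a sum of continuous lifts of \(\log(\gamma(x)-\zeta_j)\). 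This makes the Stieltjes integral additive over roots. The constant \(\log c\) contributes nothing, because only the differential of the lift enters.

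The problem then reduces to a single root: I need \(\frac{1}{2\pi\ii}\oint_{\partial A}\frac{w^k}{w-\zeta}\,\dd w\), which should equal \(\zeta^k\) if \(\zeta\in A\) and \(0\) if \(\zeta\notin A\), for \(\zeta\notin\partial A\). To get this I would write \(w^k=\zeta^k+(w-\zeta)q(w)\) with \(q\) a polynomial. The integrand becomes \(\zeta^k/(w-\zeta)+q(w)\). The polynomial part \(q\) integrates to zero over the positively oriented boundary of a compact region, by Cauchy's theorem, or by Green's theorem since \(q\) is entire. The first part is \(\zeta^k\) times the winding-number integral \(\frac{1}{2\pi\ii}\oint_{\partial A}\frac{\dd w}{w-\zeta}\).

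The remaining step is the generalized Cauchy theorem for regions with several boundary components: this winding-number integral equals \(1\) for \(\zeta\) in the interior of \(A\) and \(0\) for \(\zeta\notin A\). I would prove it with Green's theorem on \(A\) with a small disc around \(\zeta\) removed. The function \(1/(w-\zeta)\) is holomorphic on that punctured region, so its boundary integral vanishes, which leaves the small counterclockwise circle contributing \(2\pi\ii\). When \(\zeta\notin A\), the function is holomorphic on a neighborhood of \(A\) and the integral is \(0\). Since no root lies on \(\partial A\), "\(\zeta_j\in A\)" means the interior, and summing over \(j\) gives \eqref{eq:power-sum}.

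I expect the only delicate point to be making the orientation and multi-component boundary rigorous in the piecewise smooth setting. For the case actually used, circles and annuli, this is explicit, and I would simply compute the winding number of each circle directly if full generality proves awkward.
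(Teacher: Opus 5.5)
Your proposal is correct and follows essentially the same route as the paper. The paper applies the residue theorem to \(w^kP'(w)/P(w)\) via \eqref{eq:alt-root-moment}: near each distinct zero \(\alpha\) of multiplicity \(m_\alpha\) one has \(P'/P=m_\alpha/(w-\alpha)+h_\alpha(w)\) with \(h_\alpha\) holomorphic, so the residue there is \(m_\alpha\alpha^k\). Your global partial-fraction split, the division \(w^k=\zeta^k+(w-\zeta)q(w)\), and the Green's-theorem winding-number computation carry out that residue-theorem step by hand for this integrand, which is more self-contained but not a different idea.
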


\subsection{Discs, annuli and energy windows}

To see the importance of the root moments, let us plug back in the expression for \(\zeta_j\) in terms of the single-particle energies.
We reintroduce an explicit dependence on \(s = \beta + \ii t\).
From equations~\cref{eq:power-sum} and \cref{eq:zeta-zeros}, we have
\begin{equation}
    \mathcal{M}_A^{(k)}(s) = \sum_{\zeta_j(s) \in A} \zeta_j^k(s) = \sum_{\e^{-sE_j} \in A} \e^{-skE_j} = \sum_{\e^{-sE_j} \in A} \e^{-\beta k E_j} \e^{-\ii t k E_j}.
    \label{eq:root-moment-explicit}
\end{equation}

This equation suggests taking a Fourier transform in \(t\) in order to estimate the energies whose roots lie in \(A\) from the peaks in the spectrum.
This idea provides the correct intuition, but we need to address some technical caveats to make this work.

Foremost, if we know the \(k\)th root moment for times \(t \in [0,t_{\max}]\), we can only learn the single-particle energies to precision about \(k^{-1}/t_{\max}\), or about \(1/t_{\max}\) for fixed \(k > 0\).
Thus, if our goal is an algorithm that learns \(E_j\) to \textit{arbitrary} precision, we need to be able to observe root moments at arbitrarily long times.

The next caveat is that the boundary of \(A\) must be a union of centered circles.
Otherwise, \(A\) will not generally contain the same zeros for all times.
Indeed,
\[
|\e^{-(\beta+\ii t)E_j}|=\e^{-\beta E_j}
\]
is independent of \(t\), while the phase rotates at angular velocity \(-E_j\).
For a general region, a zero \(\e^{-sE_j}\) will fluctuate inside and outside of the region as time evolves, essentially convolving its contribution to the spectrum with a window function.
This will set a lower bound on the precision with which one can infer \(E_j\) due to spectral broadening (even if shot noise were no concern).

The third and worst caveat is that a zero fluctuating through the boundary necessarily implies that there is a time \(t\) when the zero is \textit{on} the boundary, which is expressly forbidden in the definition of the root moments and destroys the ability of any shot-based estimator to achieve good relative accuracy at these crossing times.

Because of these considerations we restrict ourselves to regions bounded by centered circles.
We will use discs and annuli, with annuli represented as differences of two discs with opposite boundary orientations.
These regions are merely sufficient for our purposes, not necessary.
However, we do not know of any good reason to consider more general regions, and this choice greatly simplifies our subsequent calculations.

It is convenient to overload the notation for root moments:
if \(D(R) = \{ w \in \C : |w| \le R\}\) is a closed disc of radius \(R\) in the complex plane, then we write
\[
\mathcal{M}_{R}^{(k)}(s) \coloneqq \mathcal{M}_{D(R)}^{(k)}(s).
\]
When \(R = \e^{-\beta \mu}\) is itself an inverse-fugacity radius and no \(E_j\) equals \(\mu\), inclusion of a zero in the sum now simplifies to
\[
\mathcal{M}_{R}^{(k)}(s) = \sum_{\e^{-\beta E_j}
< R} \e^{-\beta k E_j} \e^{-\ii t k E_j}
= \sum_{E_j > \mu} \e^{-\beta k E_j} \e^{-\ii t k E_j}.
\]
For disc root moments in the inverse-fugacity plane, the chemical potential thus acts like a hard high-pass filter, and only energies above it appear in the power sum.

Similarly, for an annulus
\begin{equation}
    A(R_{\mathrm{in}},R_{\mathrm{out}})
    =
    \{w\in\C:R_{\mathrm{in}}< |w|< R_{\mathrm{out}}\},
    \qquad
    R_{\mathrm{in}}=\e^{-\beta\mu_+},
    \qquad
    R_{\mathrm{out}}=\e^{-\beta\mu_-},
    \label{eq:annulus-radii-conditions}
\end{equation}
with \(\mu_-<\mu_+\) and no energy on either boundary,
\[
\mathcal{M}_{A(R_{\mathrm{in}},R_{\mathrm{out}})}^{(k)}(s)
=
\mathcal{M}_{R_{\mathrm{out}}}^{(k)}(s)-\mathcal{M}_{R_{\mathrm{in}}}^{(k)}(s)
=
\sum_{\mu_-<E_j<\mu_+}
\e^{-\beta k E_j}\e^{-\ii t k E_j}.
\]
Thus root moments on annular regions act as hard band-pass filters.

Fix an energy window \(\Omega\) and a bandwidth \(L\)
\[
\Omega=[\Omega_{\min},\Omega_{\max}],
\qquad
L=\Omega_{\max}-\Omega_{\min}>0.
\]
Assume that no \(E_j\) is an endpoint of \(\Omega\), and use the annulus
\begin{equation}
R_{\mathrm{out}}=\e^{-\beta \Omega_{\min}},
\qquad
R_{\mathrm{in}}=\e^{-\beta \Omega_{\max}},
\qquad
A_\Omega=A(R_{\mathrm{in}},R_{\mathrm{out}}).
\label{eq:window-annulus}
\end{equation}
Its contour gap is
\[
\Delta_\Omega
\coloneqq
\min\{\Delta(\Omega_{\min}),\Delta(\Omega_{\max})\}.
\]
Thus \(\Delta_\Omega>0\) is precisely the condition that both contour circles avoid the roots.
Recall that the definition of gap includes all energies, not just those inside the band.

A word on conventions: \(A_\Omega\) is an open annulus, whereas root moments are defined for compact regions.
The root moment \(\mathcal M_{A_\Omega}^{(k)}\) always means the one over the closed annulus \(\overline{A_\Omega}\), whose boundary is the two contour circles with the orientation above.
Statements that a polynomial has no roots in \(A_\Omega\) are always accompanied by control of the two circles (the gap \(\Delta_\Omega>0\), radial separation, or the Rouch\'e condition), so no root lies on either circle and the open and closed annulus contain the same roots.

\subsection{Circular quadrature}

In practice, the root moments have to be estimated from finitely many
values of the partition function on the contour.

For a circle of radius \(R>0\), choose an integer \(S>2\) and set
\[
\theta_j=\frac{2\pi j}{S},
\qquad
w_j=R\e^{\ii\theta_j},
\qquad
w_{j+1/2}=R\e^{\ii\theta_{j+1/2}},
\qquad
\theta_{j+1/2}=\frac{2\pi(j+1/2)}{S}.
\]
Define
\begin{equation}
    \sinc(x)
    \coloneqq
    \begin{cases}
        \frac{\sin(\pi x)}{\pi x} & \text{for } x\neq 0, \\
        1 & \text{for } x = 0.
    \end{cases}
    \label{eq:sinc-factor}
\end{equation}
The sinc-corrected logarithmic-increment midpoint rule for the \(k\)th root
moment (for integers \(0<k<S\)) on the circle is
\begin{equation}
    Q_{R,S}^{(k)}(s)
    \coloneqq
    \frac{1}{\sinc(k/S)}
    \frac{1}{2\pi\ii}
    \sum_{j=0}^{S-1}
    w_{j+1/2}^{\,k}
    \Log
    \left(
    \frac{P(w_{j+1})}{P(w_j)}
    \right).
    \label{eq:midpoint-quadrature-rule}
\end{equation}
Here \(\Log\) means the logarithmic increment obtained
from a continuous lift of \(\log P(R\e^{\ii\theta})\) along the circle.
Equivalently, if \(L(\theta)\) is such a lift, then the \(j\)th increment is
\[
L(\theta_{j+1})-L(\theta_j).
\]
The value is independent of the additive constant in the lift.
Since
\[
P(w)=w^M Z\biggl(s,-\frac1w\biggr),
\]
the same increments can be formed from normalized values on the
circle by using \cref{eq:root-polynomial-w} to obtain
\begin{equation}
    \label{eq:exchange-P-Xi-quadrature}
    \Log\frac{P(w_{j+1})}{P(w_j)}
    =
    M\Log\frac{w_{j+1}}{w_j}
    +
    \Log
    \frac{\Xi(s,-1/w_{j+1})}{\Xi(s,-1/w_j)}.
\end{equation}
The normalizing factor \(Z(\beta,R^{-1})\) is constant on the circle and
cancels, while the monomial term cancels inside the quadrature sum for \(0<k<S\), and contributes only a fixed known offset when \(k=0\).
Thus for practical purposes, \(P\) and the observable signal \(\Xi\) are interchangeable inputs for the quadrature rule on a circle.

\begin{remark}[Zeroth moment]
    There is no error due to quadrature aliasing in the \(k=0\) case:
\[
    Q_{R,S}^{(0)}(s)
    =
    \mathcal{M}_R^{(0)}(s).
\]
    Indeed, the sum of the unwrapped logarithmic increments is the total change
    of \(\log P\) around the circle, hence \(2\pi\ii\) times the number of
    zeros inside.
    The same statement holds for annuli by taking oriented differences of the
    two boundary circles.
    The deterministic quadrature estimates below therefore focus on \(0<k<S\).
\end{remark}

For \(0<k<S\), the factor \(\sinc(k/S)^{-1}\) is included in the definition (and is finite, since \(\sinc(k/S)=0\) exactly when \(S\) divides \(k\)) because
midpoint quadrature retains the \(k\)th Fourier mode
with the sinc factor \(\sinc(k/S)\).
Dividing by this factor removes an otherwise harmless \(O(k^2/S^2)\) bias
and leaves only the exponentially small aliasing terms controlled below.
\begin{lemma}[Single-root aliasing]
    \label[lemma]{lem:single-root-aliasing}
    Let \(k\) be a positive integer with \(k<S\), let \(\zeta\ne0\),
\(\alpha=\frac{\zeta}{R}\), and 
    assume \(|\alpha|\ne1\).
    For the single-root factor \((w-\zeta)\), let
\(q_{R,S}^{(k)}(\zeta)\) denote its contribution to
\(Q_{R,S}^{(k)}\).
    Finally, set
\[
    a
    \coloneqq
    \min\{|\alpha|,|\alpha|^{-1}\}<1.
\]
    If \(S\ge 2k\), then the sinc-corrected single-root error
    satisfies
    \begin{equation}
        \left|
        \zeta^k\mathbf{1}_{\{|\zeta|<R\}}
        -
        q_{R,S}^{(k)}(\zeta)
        \right|
        \le
        \frac{2k}{S}
        |\zeta|^k
        \frac{a^S}{1-a^S}.
        \label{eq:single-root-disc-bound}
    \end{equation}
\end{lemma}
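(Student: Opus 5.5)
The plan is to compute the contribution \(q_{R,S}^{(k)}(\zeta)\) exactly as a Fourier-type series and read the error off from the aliased modes. Parametrize the circle by \(w=R\e^{\ii\theta}\). There are two cases, according to whether the root lies inside or outside the circle.

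\emph{Case \(|\zeta|<R\).} Here \(|\alpha|<1\), and the continuous lift of \(\log(w-\zeta)\) can be taken as
\[
\ii\theta+\log R-\sum_{m\ge1}\frac{\alpha^m}{m}\e^{-\ii m\theta}.
\]
The series converges uniformly on the circle, so this is a legitimate continuous lift. The increments are independent of the additive constant, as noted after \cref{eq:midpoint-quadrature-rule}.

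\emph{Case \(|\zeta|>R\).} Here the lift can be taken as
\[
\log(-\zeta)-\sum_{m\ge1}\frac{\alpha^{-m}}{m}\e^{\ii m\theta},
\]
with no winding term.

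In both cases I substitute the lift into \cref{eq:midpoint-quadrature-rule} term by term. The winding term \(\ii\theta\) contributes increments \(2\pi\ii/S\). Since \(\sum_j w_{j+1/2}^k=0\) for \(0<k<S\), this term drops out. For a mode \(\e^{\mp\ii m\theta}\), the increment factorizes as
\[
\e^{\mp\ii m\theta_{j+1}}-\e^{\mp\ii m\theta_j}=\mp 2\ii\sin(\pi m/S)\,\e^{\mp\ii m\theta_{j+1/2}}.
\]
After multiplying by \(w_{j+1/2}^k=R^k\e^{\ii k\theta_{j+1/2}}\), the sum over \(j\) of \(\e^{\ii(k\mp m)\theta_{j+1/2}}\) equals \(S\,\e^{\ii\pi(k\mp m)/S}\) when \(S\mid (k\mp m)\), and vanishes otherwise. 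The phase \(\e^{\ii\pi(k\mp m)/S}\) is a sign \(\pm1\). So only aliased modes survive.

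\emph{Inside root.} The surviving modes are \(m=k+pS\) with \(p\ge0\). I will check that the \(p=0\) term reproduces \(\zeta^k\) exactly; this is precisely where the factor \(1/\sinc(k/S)\) is designed to cancel \(\sin(\pi k/S)\cdot S/(\pi k)\). Each alias \(p\ge1\) then has modulus
\[
|\zeta|^k a^{pS}\,\frac{S|\sin(\pi m/S)|}{\pi m\,\sinc(k/S)}=|\zeta|^k a^{pS}\,\frac{k}{m}\le\frac{k}{pS}\,|\zeta|^k a^{pS},
\]
using \(|\sin(\pi m/S)|=|\sin(\pi k/S)|\).

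\emph{Outside root.} The target is \(0\), and the surviving modes are \(m=pS-k\) with \(p\ge1\); these satisfy \(m\ge1\) because \(k<S\). Each has modulus \(R^k a^{m}\,k/m\). Since \(a=R/|\zeta|\), we have \(R^k a^{-k}=|\zeta|^k\), so the modulus equals \(|\zeta|^k a^{pS}\,k/(pS-k)\). This is the only place the hypothesis \(S\ge 2k\) enters: it gives \(pS-k\ge pS/2\), hence each term is at most \(\frac{2k}{pS}|\zeta|^k a^{pS}\).

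\emph{Conclusion.} In both cases, bounding \(1/p\le1\) and summing the geometric series \(\sum_{p\ge1}a^{pS}=a^S/(1-a^S)\) yields \eqref{eq:single-root-disc-bound}.

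The main obstacle is bookkeeping rather than analysis. The delicate points are the following.
\begin{enumerate*}[label=(\roman*)]
\item Justifying the termwise interchange of the quadrature sum with the infinite series. This follows from absolute convergence, since \(a<1\).
\item Confirming that the chosen expressions really are continuous lifts, so that \(\Log\) in \cref{eq:midpoint-quadrature-rule} agrees with them.
\item Tracking the signs \((-1)^{(k\mp m)/S}\), which disappear once we take moduli.
\item Verifying that the \(m=k\) term is exactly \(\zeta^k\) with no residual bias.
\end{enumerate*}
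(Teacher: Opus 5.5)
Your proposal is correct and takes essentially the same route as the paper. You Taylor-expand a continuous lift of the logarithm on the circle, drop the winding term using \(\sum_j w_{j+1/2}^k=0\), keep only the aliased modes \(m=k+pS\) (inside) or \(m=pS-k\) (outside), cancel the sinc factor to get exact weights \(k/m\), and sum the geometric tails, using \(S\ge 2k\) only in the outside case. The one difference is cosmetic: you lift \(\log(w-\zeta)\) directly instead of first dividing by \(-\zeta\), which changes only the additive constant of the lift.
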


\section{The SPSE Problem}\label{sec:spse}
In the previous section we introduced the SPICES algorithm and discussed how it retrieves single-particle energies in a purely free system. However, motivated by the physics of interacting systems, we would like for SPICES to be able to solve the more general problem wherein the free model is embedded in a larger system that may not be globally free.

Here we formalize the general problem of estimating free spectra of a free Hamiltonian embedded in an interacting system. 

\begin{definition}[Single-Particle Spectral Estimation]
    \label[definition]{def:spse-formal}
    \emph{Input.} Access, as in \cref{def:ff-operational-access}, to a Hamiltonian \(H\) and an occupation observable \(N\) on \(M\) modes; an energy window \(\Omega=[\Omega_{\min},\Omega_{\max}]\) of width \(L=\Omega_{\max}-\Omega_{\min}\); an inverse temperature \(\beta>0\); a time horizon \(t_{\max}\); an accuracy \(\epsilon>0\); and a failure probability \(\delta>0\).

    \emph{Promise.} There exist a unitary \(U\) and a partition of the modes \(\{1,\dots,M\}=I\sqcup O\) with \(I\neq\emptyset\) such that
    \begin{equation*}
        UHU^\dagger=\sum_{i\in I}E_i\,n_i+H_O,
        \qquad
        UNU^\dagger=\sum_{j=1}^{M}n_j ,
    \end{equation*}
    where \(H_O\) is supported on the modes \(O\) and commutes with \(N_O=\sum_{j\in O}n_j\) (so that \([H,N]=0\)). Moreover:
    \begin{enumerate}[label=(\roman*)]
        \item every protected energy lies in the window, \(E_i\in(\Omega_{\min},\Omega_{\max})\) for all \(i\in I\); and
        \item the spectator sector has no roots in the window: writing \(Z_O(s,z)=\Tr_O\{z^{N_O}\e^{-sH_O}\}\) and \(P_O(s,w)=w^{|O|}Z_O(s,-1/w)\), the polynomial \(P_O(\beta+\ii t,\cdot)\) has no root in the annulus \(A_\Omega\) of \cref{eq:window-annulus} for every \(t\in[0,t_{\max}]\).\end{enumerate}

    \emph{Output.} With probability at least \(1-\delta\), a finitely represented probability measure \(\widehat\nu\) with \(W_1(\widehat\nu,\nu_\Omega)\le\epsilon\), where
    \begin{equation*}
        \nu_\Omega=\frac{\sum_{i\in I}\e^{-\beta E_i}\,\delta_{E_i}}
        {\sum_{i\in I}\e^{-\beta E_i}} ,
    \end{equation*}
    counted with multiplicity, and \(W_1\) is the Wasserstein-1 distance.
\end{definition}
The requirement that the spectator roots lie outside of the energy annulus is included so that the problem has a unique solution, and it is also motivated by the feature of SPICES that roots outside of the annulus are not included in the signal.
When \(H_O\) is itself free, condition (ii) says exactly that no single-particle energy of \(H_O\) lies in \(\Omega\), and in this case it holds for all times \(t\).
When \(H_O\) is interacting there is no notion of single-particle energies for the spectator sector, and (ii) serves as the natural generalization of this free condition.

\paragraph{Input model.}
The input to SPSE is given in the following operational form, which is the form used by the algorithmic results and by the hardness results of \cref{sec:ff-hardness}.
The symbolic description of \(N\) matches the promise: since \(UNU^\dagger=\sum_{j}n_j\), we have \(N=WN_{\mathrm{std}}W^\dagger\) with \(W=U^\dagger\).
Fix a global polynomial \(p_\text{enc}\). Write \(\ell_{\mathrm{enc}}\) for the total
bit length of the instance description and \(\ell_{\mathrm{qry}}\) for the total bit length of the query parameters.

\begin{definition}[Public operational access to a number operator]
    \label[definition]{def:ff-operational-access}
    An operational description of \((M, H,N)\) consists of:
    \begin{enumerate}
        \item a parameter \(M\), specifying the total dimension \(2^M\);
        \item an explicit Pauli expansion of an \(M\)-qubit Hamiltonian \(H\);
        \item a finite symbolic definition of an exact Hermitian operator \(N\),
        \[
        N=W N_{\mathrm{std}}W^\dagger,
        \qquad
        N_{\mathrm{std}}=\sum_{j=1}^{M}n_j;
        \]
        \item a uniform classical compiler which, given a positive integer precision parameter \(r\) in unary and binary-encoded rational query parameters, outputs circuits for:
        \[
        \text{measurement of }N,\qquad \text{controlled }\e^{-\ii tH+\ii\vartheta N}, \qquad
        \rho_{\beta,\mu} = \frac{\e^{-\beta(H-\mu N)}} {\Tr\e^{-\beta(H-\mu N)}}.
        \]
    \end{enumerate}
    Here \(\beta\ge0\), \(\mu,t,\vartheta\in\mathbb R\); efficiently computable real parameters
    may instead be supplied to sufficient precision, with their approximation cost included.
    Preparation circuits may use measurements and classical control, with the stated bound applying
    to every execution. The state-preparation error is measured in trace distance and the channel error
    in diamond distance. For error at most \(2^{-r}\), the compiler's classical running time,
    ancilla count, and worst-case gate count are bounded by
\[
    \begin{cases}
        p_\text{enc}(M+\ell_{\mathrm{enc}}+r+\ell_{\mathrm{qry}}),
        &\text{for number measurement and joint evolution},\\
        p_\text{enc}(M+\ell_{\mathrm{enc}}+r+\ell_{\mathrm{qry}}+\lceil\e^\beta\rceil),
        &\text{for Gibbs preparation}.
    \end{cases}
\]
    Thus preparation may have exponential dependence on \(\beta\), but has
    polynomial cost whenever \(\beta=O(\log(M+\ell_{\mathrm{enc}}))\) and the
    remaining parameters are polynomially bounded.
\end{definition}

We first record that the promise factorizes the partition function.

\begin{lemma}[Factorization under the promise]
    \label[lemma]{lem:spse-factorization}
    Let \(U\) be a unitary and \(\{1,\dots,M\}=I\sqcup O\) a partition of the modes such that \(UHU^\dagger=\sum_{i\in I}E_in_i+H_O\) and \(UNU^\dagger=\sum_{j=1}^Mn_j\), where \(H_O\) is supported on the modes \(O\) and commutes with \(N_O=\sum_{j\in O}n_j\).
    Then for all \(s,z\in\mathbb C\),
    \begin{equation*}
        Z(s,z)=Z_I(s,z)Z_O(s,z),
        \qquad
        Z_I(s,z)=\prod_{i\in I}\bigl(1+z\e^{-sE_i}\bigr),
        \qquad
        Z_O(s,z)=\Tr_O\bigl\{z^{N_O}\e^{-sH_O}\bigr\},
    \end{equation*}
    and consequently \(P(s,w)=w^MZ(s,-1/w)=P_I(s,w)P_O(s,w)\) with \(P_I(s,w)=\prod_{i\in I}(w-\e^{-sE_i})\) and \(P_O(s,w)=w^{|O|}Z_O(s,-1/w)\).
\end{lemma}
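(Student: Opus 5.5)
The plan is to conjugate by \(U\), which turns the trace into one computed on the standard tensor-product decomposition of the modes into \(I\) and \(O\). After that, the partition function factorizes mode by mode on the protected part. By cyclicity and unitary invariance of the trace, and because \(z^N\) and \(\e^{-sH}\) are functions of \(N\) and \(H\) (so that \(Uz^NU^\dagger=z^{UNU^\dagger}\) and \(U\e^{-sH}U^\dagger=\e^{-sUHU^\dagger}\)), we get
\[
Z(s,z)=\Tr\bigl\{z^{N_I+N_O}\,\e^{-s(H_I+H_O)}\bigr\},
\qquad N_I=\sum_{i\in I}n_i,\quad H_I=\sum_{i\in I}E_in_i .
\]
Here \(z^N\) is defined via the spectral decomposition of \(N\), whose eigenvalues are integers. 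For \(z=0\) we use the convention \(0^0=1\), which is consistent with \(Z\) being a polynomial in \(z\). The polynomial viewpoint gives a clean alternative: write \(z^{N}=\sum_n z^n\Pi_n\) and note that \(\Pi_n\) is conjugated to the projector onto total occupation \(n\).

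Next comes the commutation bookkeeping. \(H_I\) and \(N_I\) act only on the modes in \(I\), while \(H_O\) and \(N_O\) act only on \(O\). Operators on disjoint tensor factors commute, and \([H_O,N_O]=0\) by hypothesis, so all four operators pairwise commute. Hence
\[
z^{N}\e^{-sH}\ \longmapsto\ \bigl(z^{N_I}\e^{-sH_I}\bigr)\otimes\bigl(z^{N_O}\e^{-sH_O}\bigr).
\]
The trace of a tensor product is the product of the partial traces, which gives \(Z=Z_IZ_O\) with \(Z_O\) exactly as defined. For \(Z_I\), the \(n_i\) for \(i\in I\) are commuting projectors on distinct qubit factors, so \(z^{N_I}\e^{-sH_I}=\bigotimes_{i\in I}\bigl(z^{n_i}\e^{-sE_in_i}\bigr)\). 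Each factor has trace \(1+z\e^{-sE_i}\), as in \cref{prop:free-factorization}. If one prefers fermionic language, the same computation holds in the occupation-number basis, since we only use \(n_j^2=n_j\) and the commutation relations in \cref{eq:number-operator-relations}.

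Finally, to obtain the statement about \(P\), substitute \(z=-1/w\) for \(w\neq 0\). Write \(w^M=w^{|I|}w^{|O|}\) and observe that \(w^{|I|}\prod_{i\in I}(1-w^{-1}\e^{-sE_i})=\prod_{i\in I}(w-\e^{-sE_i})=P_I\). The remaining factor is \(w^{|O|}Z_O(s,-1/w)=P_O\). Since \(Z_O\) is a polynomial of degree at most \(|O|\) in \(z\), the function \(P_O\) is a polynomial in \(w\). The identity \(P=P_IP_O\) therefore extends to \(w=0\) by continuity.

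The only step needing real care is the first one: justifying that conjugation by \(U\) commutes with forming the functions \(z^N\) and \(\e^{-sH}\) for complex \(z,s\). This follows because both are polynomials or entire power series in \(N\) and \(H\). Everything else is a routine tensor-product computation.
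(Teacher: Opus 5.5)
Your proof is correct and follows the paper's route: conjugate by \(U\), write \(z^{N}\e^{-sH}\) as a tensor product over \(\mathcal F_I\otimes\mathcal F_O\), and take the product of partial traces, with \(Z_I\) given by the free factorization. The statement about \(P\) then follows by substituting \(z=-1/w\). Your extra remarks on \(0^0=1\), the extension to \(w=0\), and why conjugation commutes with \(z^{N}\) and \(\e^{-sH}\) are valid refinements of the same argument.
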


\begin{proof}
    The partition function is invariant under unitary conjugation, so
    \[
        Z(s,z)=\Tr\bigl\{z^{UNU^\dagger}\e^{-sUHU^\dagger}\bigr\}.
    \]
    Write \(H_I=\sum_{i\in I}E_in_i\), supported on the modes \(I\), and \(N_I=\sum_{i\in I}n_i\).
    Since \(H_I\) and \(H_O\) act on different modes they commute, and \(N=N_I+N_O\) in the same way, so on \(\mathcal F_I\otimes\mathcal F_O\)
    \[
        z^{UNU^\dagger}\e^{-sUHU^\dagger}=\bigl(z^{N_I}\e^{-sH_I}\bigr)\otimes\bigl(z^{N_O}\e^{-sH_O}\bigr).
    \]
    Taking the trace gives \(Z=Z_IZ_O\), where \(Z_I=\Tr_I\{z^{N_I}\e^{-sH_I}\}\) is the free partition function of \cref{prop:free-factorization} for the modes in \(I\).
    The statement for \(P\) follows by setting \(z=-1/w\) and multiplying by \(w^M\).
\end{proof}

\begin{proposition}[The promise determines the protected roots]
    \label[proposition]{prop:promise-protected-roots}
    Suppose \((H,N,\Omega,\beta,t_{\max})\) satisfies the promise of \cref{def:spse-formal} with protected energies \(\{E_i\}_{i\in I}\), and write \(P(s,w)=w^MZ(s,-1/w)\).
    Then for every \(t\in[0,t_{\max}]\) and \(s=\beta+\ii t\), the roots of \(P(s,\cdot)\) in \(A_\Omega\) are precisely \(\zeta_i=\e^{-sE_i}\), \(i\in I\), counted with multiplicity.
    In particular, if \(t_{\max}\ge\pi T/L\) (the largest time sampled in \cref{thm:spices-recovery}), the first hypothesis of that theorem holds with \(d=|I|\), and the measure \(\nu_\Omega\) of \cref{def:spse-formal} is the measure \(\nu_\Omega\) of \cref{thm:spices-recovery}.
\end{proposition}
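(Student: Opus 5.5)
The plan is to combine the factorization of \cref{lem:spse-factorization} with the two clauses of the promise. Fix \(t\in[0,t_{\max}]\) and set \(s=\beta+\ii t\). By \cref{lem:spse-factorization}, \(P(s,w)=P_I(s,w)P_O(s,w)\) with \(P_I(s,w)=\prod_{i\in I}(w-\e^{-sE_i})\). Since both factors are polynomials in \(w\), the multiset of roots of \(P(s,\cdot)\) is the union, with multiplicity, of the roots of \(P_I(s,\cdot)\) and those of \(P_O(s,\cdot)\). It therefore suffices to locate each family relative to \(A_\Omega\).

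\textbf{Protected roots.} We have \(|\e^{-sE_i}|=\e^{-\beta E_i}\). By clause (i), \(\Omega_{\min}<E_i<\Omega_{\max}\), and since \(\beta>0\) this gives \(\e^{-\beta\Omega_{\max}}<\e^{-\beta E_i}<\e^{-\beta\Omega_{\min}}\). Hence every \(\zeta_i=\e^{-sE_i}\) lies in the open annulus \(A_\Omega\), strictly away from both boundary circles. This holds for every \(t\), not only for \(t\le t_{\max}\).

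\textbf{Spectator roots.} Clause (ii) says that \(P_O(s,\cdot)\) has no root in \(A_\Omega\) for \(t\in[0,t_{\max}]\). Two small points need checking.

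\emph{Degree.} \(P_O\) is a polynomial of degree at most \(|O|\), since \(Z_O\) is a polynomial of degree \(|O|\) in \(z\) because \(N_O\) has spectrum \(\{0,\dots,|O|\}\). Its leading coefficient is \(\Tr\{\Pi_{N_O=0}\e^{-sH_O}\}\), which is a scalar exponential, since the vacuum sector of the modes \(O\) is one-dimensional. So \(P_O\) is not identically zero and its root set is well defined.

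\emph{Boundary circles.} The convention after \cref{eq:window-annulus} applies root moments to the closed annulus, so one might worry about spectator roots on the two circles. However, the proposition only claims the roots in the open annulus \(A_\Omega\), and clause (ii) is stated for \(A_\Omega\). The open-set statement is thus immediate. Absence of roots on the circles comes separately from radial separation when feeding into \cref{thm:spices-recovery}.

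Combining the two families, the roots of \(P(s,\cdot)\) in \(A_\Omega\) are exactly \(\{\zeta_i\}_{i\in I}\) with multiplicity.

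\textbf{The ``in particular'' clause.} This is bookkeeping against the hypotheses of \cref{thm:spices-recovery}, whose formal statement appears later in the paper. The sampled times are \(t_\ell=\pi\ell/L\) with \(\ell\le T\), so \(t_{\max}\ge\pi T/L\) places every sample point \(s_\ell\) in the range where the root identification above holds. The first hypothesis, that \(P(s_\ell,\cdot)\) has exactly \(d\) roots in \(A_\Omega\) of the form \(\e^{-s_\ell E_a}\) with fixed real energies \(E_a\in\Omega\), is then satisfied with \(d=|I|\) and \(E_a=E_i\).

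The measure \(\nu_\Omega\) in that theorem is presumably built from these in-annulus roots via their weights \(|\zeta_a|=\e^{-\beta E_a}\). Substituting the identified roots reproduces \(\sum_{i\in I}\e^{-\beta E_i}\delta_{E_i}/\sum_{i\in I}\e^{-\beta E_i}\).

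\textbf{Main obstacle.} The step most likely to need care is matching the exact form of the first hypothesis of \cref{thm:spices-recovery}. In particular, the \(E_a\) must be time-independent across all \(s_\ell\), not merely roots at each separate time. Here this is automatic, because the protected roots are explicitly \(\e^{-sE_i}\) with \(E_i\) independent of \(t\). I would also remark that the set \(\{E_i\}_{i\in I}\) is intrinsic, being determined by \(P(s_\ell,\cdot)\) and hence by \((H,N)\), so the identification of \(\nu_\Omega\) is independent of the chosen decomposition.
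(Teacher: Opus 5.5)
Your proposal is correct and follows the paper's proof: you factor \(P=P_IP_O\) via \cref{lem:spse-factorization}, place each \(\e^{-sE_i}\) strictly inside \(A_\Omega\) using clause (i), and exclude spectator roots using clause (ii); your check that the leading coefficient of \(P_O\) is nonzero is a harmless extra. The step you left as ``presumably'' is closed in the paper by observing that \cref{thm:spices-recovery} uses the shifted weights \(\e^{-\beta(E_a-\Omega_{\min})}\), which differ from \(\e^{-\beta E_i}\) by the common factor \(\e^{\beta\Omega_{\min}}\) that cancels under normalization.
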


\begin{proof}
    By \cref{lem:spse-factorization}, the roots of \(P(s,\cdot)\) are the \(\zeta_i\) together with the roots of \(P_O(s,\cdot)\).
    By condition (i), \(|\zeta_i|=\e^{-\beta E_i}\in(\e^{-\beta\Omega_{\max}},\e^{-\beta\Omega_{\min}})\), so every \(\zeta_i\) lies in \(A_\Omega\), and by condition (ii) \(P_O(s,\cdot)\) has no root there.
    Finally, the weights \(\e^{-\beta(E_i-\Omega_{\min})}\) of \cref{thm:spices-recovery} differ from the weights \(\e^{-\beta E_i}\) of \cref{def:spse-formal} by the common factor \(\e^{\beta\Omega_{\min}}\), which cancels in the normalization.
\end{proof}

The promise does not imply the remaining four hypotheses of \cref{thm:spices-recovery} (radial separation, the signal floor, operational access and polynomial parameters).
The first two of these are the stability criteria of \cref{def:sc}; the last two concern the input model and the resource scaling.
\Cref{sec:perturbative-stability} verifies them for a class of spectator perturbations, and \cref{sec:ff-hardness} for a family of hard instances.

\section{Robustness to noise}\label{sec:noise}
In this section, we clarify the conditions under which SPSE is formally solved by SPICES.
This is captured in the following theorem.

\begin{theorem}[SPICES recovery from contour certificates]
    \label[theorem]{thm:spices-recovery}
    Let \([H, N]=0\), fix \(\beta > 0\), and let \(P(s,w) = w^M Z(s,-1/w)\) have degree \(M\) at the
    sampled times below. Fix a window \(\Omega = [\Omega_{\min},\Omega_{\max}]\) of width \(L > 0\), a requested
    accuracy \(0 < \epsilon < L/2\), and an inverse-polynomial failure probability \(0 <
    \delta < 1\). Suppose that known bounds \(0 < q < 1\) and \(0 < b \leq 1\) satisfy the following
    five conditions at every sampled time \(s_\ell = \beta + \ii t_\ell\):
    \begin{enumerate}
        \item There is a fixed, nonempty multiset of protected energies \(E_1, \dots, E_d \in (\Omega_{\min},
        \Omega_{\max})\) such that the roots of \(P(s_\ell,\cdot)\) in the annulus \(A_\Omega\) of
        \cref{eq:window-annulus} are precisely \(\e^{-s_\ell E_a}\), \(1\le a\le d\), counted with
        multiplicity.
        \item For either boundary energy \(\mu \in \{\Omega_{\min}, \Omega_{\max}\}\), set \(R_\mu =
        \e^{-\beta\mu}\). Every root \(\zeta\) of \(P(s_\ell,\cdot)\) satisfies
        \begin{equation}
            \label{eq:spices-radial-separation}
            \min \left\{\frac{|\zeta|}{R_\mu},\frac{R_\mu}{|\zeta|}\right\} \leq q < 1.
        \end{equation}
        \item Set \(\rho_{\beta, \mu} = \e^{-\beta(H - \mu N)}/Z(\beta, \e^{\beta\mu})\). At every
        quadrature node \(w_j = R_\mu\e^{\ii\theta_j}\), with \(\theta_j = 2\pi j/S\), the
        normalized endpoint signal satisfies
        \[
        \left|\Xi(s_\ell, -1/w_j)\right| =\left| \frac{Z(s_\ell, \e^{\beta \mu + \ii(\pi -
        \theta_j)})} {Z(\beta, \e^{\beta\mu})}\right| =\left|\Tr\!\left(\rho_{\beta, \mu}
        \e^{-\ii t_\ell H + \ii(\pi - \theta_j)N}\right)\right| \geq b.
        \]
        \item The Gibbs preparations \(\rho_{\beta, \mu}\) and controlled joint evolutions \(\e^{-\ii
        t_\ell H + \ii(\pi - \theta_j)N}\) are available to any required inverse-polynomial
        operational accuracy with polynomial overhead.
        \item The quantities \(L\), \(\epsilon^{-1}\), \(b^{-1}\), \((1-q)^{-1}\), and
        \(\e^{2\beta L}\) are bounded by polynomials in \(M\).
    \end{enumerate}
    Then SPICES, using root-moment order \(k=1\), returns a finitely represented
    probability measure \(\widehat\nu\) with
\[
    W_1(\widehat\nu, \nu_\Omega) \leq \epsilon, \qquad \nu_\Omega = \frac{\sum_{a =
    1}^d\e^{-\beta(E_a - \Omega_{\min})} \delta_{E_a}} {\sum_{a = 1}^d \e^{-\beta(E_a - \Omega_{\min})}},
\]
    with probability at least \(1-\delta\), using polynomially many gates and
    samples and polynomial classical postprocessing.
    
    Explicitly, use the LP reconstruction (\cref{def:chen-lp-estimator}) with
\[
    n = \left\lceil\frac{4L}{\epsilon}\right\rceil, \qquad
    T = \left\lceil\frac{24L}{\epsilon}\right\rceil, \qquad
    \kappa = c\frac{\epsilon} {L\log(4L/\epsilon)},
\]
    where \(c>0\) is a sufficiently small universal constant, and sample at \(t_\ell = \pi\ell/L\), \(0
    \leq \ell \leq T\), so that the largest sampled time is \(t_{\max}=\pi T/L=O(\epsilon^{-1})\). Set
\[
    \tau=\min\left\{\frac14, \frac{\kappa}{32(1+\e^{\beta L})}\right\}, \qquad
    S = \max\left\{
    \left\lceil\frac{4M}{1-q}\right\rceil,\, 2,\, 1+\left\lceil \frac{\log(1 +
    4M/\tau)}{\log(1/q)}\right\rceil
    \right\}.
\]
    Estimating each normalized endpoint signal to relative error at most \(\tau/4\) suffices. The total
    number of Hadamard-test shots is
    \begin{equation}
        \label{eq:spices-certified-shot-count}
        Q = O\!\left((T + 1)S\, b^{-2} \tau^{-2} \log\frac{S(T+1)}{\delta} \right).
    \end{equation}
\end{theorem}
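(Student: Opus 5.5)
The argument splits the error into three sources and propagates each through a stable pipeline: quadrature aliasing on the two circles, shot noise in the Hadamard-test estimates of \(\Xi\), and the Fourier/LP reconstruction of the measure from the noisy first-moment time series. The target intermediate object is the time series
\[
m(\ell)\coloneqq \e^{\beta\Omega_{\min}}\mathcal M^{(1)}_{A_\Omega}(s_\ell)=\sum_{a=1}^d \e^{-\beta(E_a-\Omega_{\min})}\e^{-\ii t_\ell E_a},\qquad 0\le \ell\le T.
\]
By hypothesis 1, \cref{prop:power-sum-formula} and the annulus difference of disc moments, this is exactly the (unnormalized) characteristic function of \(\nu_\Omega\) at the times \(t_\ell=\pi\ell/L\). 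The goal is to produce estimates \(\widehat m(\ell)\) with \(|\widehat m(\ell)-m(\ell)|\le \kappa\, m(0)/2\) uniformly in \(\ell\), in a form the LP estimator of \cref{def:chen-lp-estimator} accepts. I would then invoke the guarantee of that estimator, namely that uniform Fourier error \(\kappa\) on \(n,T\) as chosen yields \(W_1\le\epsilon\). This is essentially a Jackson-kernel argument on an interval of length \(L\) sampled at the Nyquist spacing \(\pi/L\).

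\textbf{Aliasing.} Apply \cref{lem:single-root-aliasing} to each of the \(M\) roots (by hypothesis the degree is \(M\)) on each circle with \(k=1\). Radial separation (hypothesis 2) gives \(a\le q\). Each root contributes at most \((2/S)|\zeta|\,q^S/(1-q^S)\). To normalize against the signal \(m(0)\ge \e^{-\beta L}\cdot\) (largest weight), I would bound \(|\zeta|\) relative to \(R_\mu\). Roots far inside contribute \(|\zeta|\le qR_\mu\). Roots far outside are dangerous because \(|\zeta|\) is unbounded, but the factor \(a^S=(R_\mu/|\zeta|)^S\) kills them, since \(|\zeta|a^S\le R_\mu q^{S-1}\). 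With \(S\ge 1+\log(1+4M/\tau)/\log(1/q)\), the total aliasing is at most \(R_{\mathrm{out}}\tau\)-order. Dividing by \(\e^{-\beta\Omega_{\min}}\) and comparing with \(m(0)\ge \e^{-\beta L}\) uses the \((1+\e^{\beta L})\) factor in \(\tau\).

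\textbf{Shot noise and phase unwrapping.} Each Hadamard test gives an unbiased bounded estimate of \(\Xi(s_\ell,-1/w_j)\). By Hoeffding with \(O(b^{-2}\tau^{-2}\log(S(T+1)/\delta))\) shots and a union bound over the \(2S(T+1)\) nodes, every node has relative error at most \(\tau/4\), using the signal floor (hypothesis 3). The main obstacle is converting relative error in \(\Xi\) into error in the logarithmic increments \(\Log(P(w_{j+1})/P(w_j))\). The difficulty is that the true increment must be identified without \(2\pi\) ambiguity, and this is where \(S\ge 4M/(1-q)\) enters. I would bound the true phase increment of each root factor \((w-\zeta)\) across an arc of angle \(2\pi/S\). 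For a root at ratio \(a\le q\) the increment is at most \(2\pi/(S(1-q))\), giving a total of at most \(\pi/2\) over all \(M\) roots. The \(w^M\) term is known, or alternatively handled by the \(\Xi\) form \cref{eq:exchange-P-Xi-quadrature}, where the increment of \(\Xi\) is under \(\pi/2\) and the principal branch is correct. A relative error \(\tau/4\le 1/16\) then perturbs each principal log by at most about \(\tau/2\), which keeps it in the principal branch. Summing over the \(S\) terms with weights \(|w_{j+1/2}|/(2\pi\sinc(1/S))\) gives an \(O(R\tau)\) moment error, again comparable after normalization to \(\kappa m(0)/32\)-scale, matching the choice of \(\tau\).

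\textbf{Assembly.} Hypothesis 4 contributes operational errors absorbed into the \(\tau/4\) budget with polynomial overhead. Hypothesis 5 makes \(T,S,\tau^{-1},b^{-1}\) and hence \(Q\) polynomial. The LP has \(\mathrm{poly}(n,T)\) size and is solved in polynomial time. I would double-check the constant bookkeeping linking \(\tau\), the factor \(32(1+\e^{\beta L})\), and the LP's tolerance \(\kappa\), which I expect is where the argument is tightest.
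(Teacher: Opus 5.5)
Your decomposition is the same as the paper's. It uses aliasing via \cref{lem:single-root-aliasing} with $|\zeta|a^S\le Rq^{S-1}$, a phase margin from $S\ge 4M/(1-q)$, Hoeffding with the floor $b$, normalization by $\e^{-\beta L}$, and then the LP. Your branch argument is sound: each root factor moves the phase by at most $2\pi/(S(1-q))$. So every increment of $\Xi$ lies in $[-\pi/2,\pi/2]$, and relative error $\tau/4$ cannot cross the cut.

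The gap is the final step of your shot-noise paragraph. Write $b_j=\Log(1+\varsigma_j)$. The noisy $j$th increment then differs from the exact one by $b_{j+1}-b_j$, which can be as large as about $\tau/2$. If you sum the $S$ such errors with weights $|w_{j+1/2}|/(2\pi\sinc(1/S))$ using the triangle inequality, you get about $R\tau/(4\sin(\pi/S))\approx SR\tau/(4\pi)$, not $O(R\tau)$. Since $S\ge 4M/(1-q)$, this loses a factor of order $M$. With the stated $\tau$, the moment error is then no longer below $\tfrac{\kappa}{32}\e^{-\beta L}$. Consequently, the explicit claims that endpoint relative error $\tau/4$ suffices and that the shot count is \cref{eq:spices-certified-shot-count} do not follow. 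You would need endpoint accuracy of order $\tau/S$, costing roughly an extra factor $S^2$ in shots.

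The missing idea is the cancellation behind \cref{lem:shot-noise-quadrature-perturbation}. Each endpoint error $b_j$ enters two adjacent increments with opposite signs, so cyclic summation by parts gives
\[
\sum_{j=0}^{S-1}w_{j+1/2}(b_{j+1}-b_j)=\sum_{j=0}^{S-1}\bigl(w_{j-1/2}-w_{j+1/2}\bigr)b_j,
\qquad
|w_{j-1/2}-w_{j+1/2}|=2R\sin(\pi/S).
\]
The total $S\cdot 2R\sin(\pi/S)$ cancels the normalization $2\pi\sinc(1/S)=2S\sin(\pi/S)$ up to the factor $R$. This leaves error at most $R\max_j|b_j|\le 2R\varsigma=R\tau/2$, independent of $S$ and $M$.

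Separately, tighten the bookkeeping you flagged yourself. First, the target must be $|\widehat m(\ell)-m(\ell)|\le\tfrac{\kappa}{32}\e^{-\beta L}$ rather than $\kappa m(0)/2$. Then \cref{lem:normalized-fourier-data} gives normalized data error $\kappa/8$, and it is this bound (not error $\kappa$) that makes the LP with tolerance $\kappa/4$ feasible before \cref{lem:chen-stability} is applied. Second, the inner circle must be charged $R_{\mathrm{in}}\tau$, not $R_{\mathrm{out}}\tau$. The choice of $\tau$ is exactly the one that gives $(1+\e^{-\beta L})\tau\le\tfrac{\kappa}{32}\e^{-\beta L}$, and charging $R_{\mathrm{out}}\tau$ twice would violate it.
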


The remainder of this section establishes \cref{thm:spices-recovery}.
The theorem converts five conditions into a Wasserstein guarantee with an explicit shot count: the protected roots in the annulus, radial separation of all roots from the two contour circles (the parameter \(q\)), a floor on the normalized signal (the parameter \(b\)), operational access, and polynomially bounded parameters.
The supporting statements address the three places where noise can break the pipeline.
First, phase unwrapping: the logarithmic increments of \(P\) are recovered from endpoint ratios only if these stay away from the branch cut (\cref{lem:one-factor-phase-bound,prop:centered-product-phase-bound,prop:phase-stability-sampling}).
Second, shot noise: relative error in the endpoint signals must neither change the branch assignment nor spoil the quadrature (\cref{lem:endpoint-noise-branch-stability,lem:shot-noise-quadrature-perturbation,lem:hadamard-test}).
Third, inversion: noisy Fourier data from the root moments must be turned into a measure that is close in Wasserstein distance (\cref{lem:normalized-fourier-data,prop:wasserstein-window-recovery}).
The proof of \cref{thm:spices-recovery} at the end of the section assembles these pieces.
Proofs of the supporting statements are collected in \cref{app:proofs}.

The quadrature rule in \cref{eq:midpoint-quadrature-rule} assumes access to the logarithmic increments of \(P\) along the circle.
The exact contour integral only requires that no zero of \(P\) lie on the contour.
To recover the same increments from sampled endpoint values, however, the endpoint ratios must stay separated from the branch cut of the principal logarithm.
The statements below give simple sufficient conditions for that separation.
Throughout, \(\operatorname{Arg}\) denotes the principal argument in
\((-\pi,\pi]\).

For \(h=2\pi/S\), define the exact endpoint ratio
\begin{equation}
    \Gamma_j(R,s)
    \coloneqq
    \frac{P(w_{j+1})}{P(w_j)}
    =
    \prod_{\ell=1}^{M}
    \frac{\e^{\ii h}-\e^{-\ii\theta_j}\zeta_\ell/R}
    {1-\e^{-\ii\theta_j}\zeta_\ell/R},
    \qquad
    j=0,\ldots,S-1,
    \label{eq:endpoint-ratio-gamma}
\end{equation}
where \(w_j=R\e^{\ii\theta_j}\).
The relevant margin from the logarithm branch cut is the \textit{phase margin}
\begin{equation}
    \chi_S(R,s)
    \coloneqq
    \min_{0\le j<S}
    \left\{
    \pi-\left|\operatorname{Arg}\Gamma_j(R,s)\right|
    \right\}.
    \label{eq:branch-cut-margin}
\end{equation}
Thus \(\chi_S(R,s)>0\) means that every exact endpoint ratio avoids the cut
at argument \(\pm\pi\) by a uniform margin.

For a nonvanishing piecewise \(C^1\) path \(u:[0,h]\to\C \setminus \{0\}\), define its lifted argument increment by
\[
\Delta\arg u
\coloneqq
\Im
\int_0^h \frac{u'(\theta)}{u(\theta)}\,\dd\theta .
\]
Then \(u(h)/u(0)\) has argument \(\Delta\arg u\) modulo \(2\pi\).
Indeed, if \(u(\theta)=r(\theta)\e^{\ii\varphi(\theta)}\) with \(\varphi\) a continuous lift of the phase, then
\[
\frac{u'(\theta)}{u(\theta)}
=
\frac{r'(\theta)}{r(\theta)}+\ii\varphi'(\theta),
\]
so the imaginary part of the logarithmic derivative integrates to \(\varphi(h)-\varphi(0)\).
The endpoint ratio only determines this increment modulo \(2\pi\), while the integral records the continuous representative along the path.

We will apply the following estimates with
\[
r_\ell=\frac{|\zeta_\ell|}{R},
\qquad
\phi_{\ell,j}=\arg(\zeta_\ell)-\theta_j,
\]
so that each factor in \(\Gamma_j(R,s)\) has the form
\begin{equation} \label{eq:ratio_factor}
    \frac{\e^{\ii h}-\e^{-\ii\theta_j}\zeta_\ell/R}
    {1-\e^{-\ii\theta_j}\zeta_\ell/R} = \frac{\e^{\ii h}-r_\ell\e^{\ii\phi_{\ell,j}}}
    {1-r_\ell\e^{\ii\phi_{\ell,j}}}.
\end{equation}
The bounds are uniform in the phases, and hence uniform in the sample-point angle \(\theta_j\) and in the arguments of the roots.

\begin{lemma}[One-factor bound]
    \label[lemma]{lem:one-factor-phase-bound}
    Let \(0<h<\pi\), \(r\ge0\), \(r\ne1\), and \(\phi\in\mathbb R\).
    Let
\[
    u(\theta)=\e^{\ii\theta}-r\e^{\ii\phi},
    \qquad
    0\le\theta\le h.
\]
    Let \(\rho = \min\bigl\{r,r^{-1}\bigr\}\).
    Then
    \begin{equation}
        \left|\Delta\arg u-h\mathbf 1_{\{r<1\}}\right|
        \le
        h\,\frac{\rho}{1-\rho}.
        \label{eq:centered-single-factor-bound}
    \end{equation}
    Moreover,
\[
    \frac{\e^{\ii h}-r\e^{\ii\phi}}{1-r\e^{\ii\phi}}
\]
    has argument \(\Delta\arg u\) modulo \(2\pi\).
\end{lemma}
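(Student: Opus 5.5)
We compute the logarithmic derivative of \(u\) explicitly and expand it as a geometric series, treating the cases \(r<1\) and \(r>1\) separately. Since \(u'(\theta)=\ii\e^{\ii\theta}\), we have
\[
\frac{u'(\theta)}{u(\theta)}=\frac{\ii}{1-r\e^{\ii(\phi-\theta)}} .
\]
Write \(x(\theta)=r\e^{\ii(\phi-\theta)}\).

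\emph{Case \(r<1\).} Here \(\rho=r\), and we expand
\[
\frac{1}{1-x}=1+\frac{x}{1-x}.
\]
This gives
\[
\Im\frac{u'}{u}=1+\Re\frac{x}{1-x}.
\]
The correction term is bounded by \(|x|/(1-|x|)=\rho/(1-\rho)\) in modulus. Integrating over \([0,h]\) then yields
\[
\bigl|\Delta\arg u-h\bigr|\le h\rho/(1-\rho).
\]

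\emph{Case \(r>1\).} Here \(\rho=1/r\), and we write
\[
\frac{1}{1-x}=-\frac{x^{-1}}{1-x^{-1}}
\]
with \(|x^{-1}|=\rho\). Then
\[
\bigl|\Im(u'/u)\bigr|\le\rho/(1-\rho),
\]
and integrating gives \(|\Delta\arg u|\le h\rho/(1-\rho)\), which is the claim with the indicator equal to zero.

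In both cases \(u\) never vanishes on \([0,h]\) because \(r\neq1\), so \(\Delta\arg u\) is well defined. The bound is uniform in \(\phi\), as advertised.

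For the ``moreover'' clause, note that \(u(h)=\e^{\ii h}-r\e^{\ii\phi}\) and \(u(0)=1-r\e^{\ii\phi}\). The displayed ratio is therefore exactly \(u(h)/u(0)\). By the remark preceding the lemma, writing \(u=r(\theta)\e^{\ii\varphi(\theta)}\) with a continuous phase lift shows that \(u(h)/u(0)\) has argument \(\varphi(h)-\varphi(0)=\Delta\arg u\) modulo \(2\pi\).

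There is no real obstacle here. The only care needed is in choosing the correct expansion direction in each case, so that the dominant term \(h\mathbf 1_{\{r<1\}}\) is isolated and the remainder is controlled by \(\rho/(1-\rho)\) rather than by some cruder bound.
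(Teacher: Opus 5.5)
Your proof is correct and is essentially the paper's argument: the identities $1/(1-x)=1+x/(1-x)$ and $1/(1-x)=-x^{-1}/(1-x^{-1})$ are the logarithmic-derivative form of the paper's factorizations $u=\e^{\ii\theta}\bigl(1-r\e^{\ii(\phi-\theta)}\bigr)$ for $r<1$ and $u=-r\e^{\ii\phi}\bigl(1-r^{-1}\e^{\ii(\theta-\phi)}\bigr)$ for $r>1$. The remainder gets the same $\rho/(1-\rho)$ bound, and the endpoint-ratio clause is handled the same way; the only quibble is that nothing is actually expanded as a geometric series, since the identities you use are exact.
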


The lemma treats a single root.
A root inside the contour (\(r<1\)) advances the phase by nearly the full step \(h\), a root outside (\(r>1\)) barely moves it, and in both cases the deviation from these two values is at most \(h\rho/(1-\rho)\), which is small when the root is radially far from the circle.
The next statement multiplies the factors for all roots, so that only two numbers survive: the count \(K\) of roots inside the contour and the accumulated deviation \(B\).

\begin{proposition}[Product bound]
    \label[proposition]{prop:centered-product-phase-bound}
    Let \(0<h<\pi\), let \(r_1,\ldots,r_m\ge0\) with \(r_j\ne1\), and define
\(\rho_j = \min\bigl\{r^{\vphantom{-1}}_j,r^{-1}_j\bigr\}\). 
    Define also
    \begin{align} \label{eq:kbdef}
        K=\#\{j:r_j<1\},
        \qquad
        B=\sum_{j=1}^{m}\frac{\rho_j}{1-\rho_j}.
    \end{align}
    Then, uniformly over all phases \(\phi_1,\ldots,\phi_m\in\mathbb R\),
    \begin{equation}
        \pi-
        \left|
        \operatorname{Arg}
        \prod_{j=1}^{m}
        \frac{\e^{\ii h}-r_j\e^{\ii\phi_j}}
        {1-r_j\e^{\ii\phi_j}}
        \right|
        \ge
        \operatorname{dist}\!\left(Kh,(2\mathbb Z+1)\pi\right)-hB .
        \label{eq:centered-product-phase-bound}
    \end{equation}
    Here \(\operatorname{dist}(x,(2\mathbb Z+1)\pi)\) denotes the distance from \(x\) to any odd multiple of \(\pi\).
\end{proposition}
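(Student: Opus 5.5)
The plan is to apply \cref{lem:one-factor-phase-bound} factor by factor and then add up the lifted increments. For each \(j\) let \(u_j(\theta)=\e^{\ii\theta}-r_j\e^{\ii\phi_j}\) on \([0,h]\). This path never vanishes, because \(r_j\ne1\). Write \(\delta_j\coloneqq\Delta\arg u_j\). By the lemma, the \(j\)th factor \((\e^{\ii h}-r_j\e^{\ii\phi_j})/(1-r_j\e^{\ii\phi_j})\) has argument \(\delta_j\) modulo \(2\pi\), and
\[
\bigl|\delta_j-h\mathbf 1_{\{r_j<1\}}\bigr|\le h\frac{\rho_j}{1-\rho_j}.
\]
Arguments add modulo \(2\pi\) under multiplication, so the full product \(\Pi\) has argument \(\Phi\coloneqq\sum_j\delta_j\) modulo \(2\pi\). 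Summing the per-factor bounds with the triangle inequality gives \(|\Phi-Kh|\le hB\), where \(K\) and \(B\) are as in \cref{eq:kbdef}.

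Next I convert this into a statement about the principal argument. Set \(\operatorname{Arg}\Pi=\Phi+2\pi n\) for some integer \(n\). The quantity \(\pi-|\operatorname{Arg}\Pi|\) is exactly the distance from \(\operatorname{Arg}\Pi\in(-\pi,\pi]\) to the set \(\{-\pi,\pi\}\). This equals \(\operatorname{dist}(\operatorname{Arg}\Pi,(2\mathbb Z+1)\pi)\), because every other odd multiple of \(\pi\) is farther away than \(\pm\pi\) for points of \((-\pi,\pi]\). The set \((2\mathbb Z+1)\pi\) is invariant under shifts by \(2\pi\), so this distance equals \(\operatorname{dist}(\Phi,(2\mathbb Z+1)\pi)\). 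The function \(x\mapsto\operatorname{dist}(x,(2\mathbb Z+1)\pi)\) is \(1\)-Lipschitz, hence
\[
\operatorname{dist}(\Phi,(2\mathbb Z+1)\pi)\ge\operatorname{dist}(Kh,(2\mathbb Z+1)\pi)-|\Phi-Kh|\ge\operatorname{dist}(Kh,(2\mathbb Z+1)\pi)-hB,
\]
which is \cref{eq:centered-product-phase-bound}. The bound does not depend on the phases \(\phi_j\), because the lemma's estimate is uniform in \(\phi\).

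There is no real obstacle here. The one point needing care is the identification of \(\pi-|\operatorname{Arg}\Pi|\) with the distance to odd multiples of \(\pi\), including the endpoint case \(\operatorname{Arg}\Pi=\pi\), where both sides are \(0\). I will also note that the bound is vacuous whenever the right-hand side is negative, so no extra hypothesis relating \(B\) and \(K\) is needed.
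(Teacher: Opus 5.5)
Your proposal is correct and follows essentially the same route as the paper's proof. Both apply \cref{lem:one-factor-phase-bound} to each factor and sum the lifted increments to get \(|\Delta\arg u-Kh|\le hB\). Both then identify \(\pi-|\operatorname{Arg}\e^{\ii x}|\) with \(\operatorname{dist}(x,(2\mathbb Z+1)\pi)\) and conclude by the \(1\)-Lipschitz property of the distance function.
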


The bound \cref{eq:centered-product-phase-bound} is informative when \(Kh\) stays away from the odd multiples of \(\pi\) and \(hB\) is small.
For the free roots \(\zeta_j=\e^{-sE_j}\) and a contour of radius \(R=\e^{-\beta\mu}\) we have \(r_j=\e^{-\beta(E_j-\mu)}\), so \(K\) counts the energies above \(\mu\) and \(B\) is a sum of Boltzmann-type ratios that does not depend on \(t\).
The next proposition turns this into a sampling condition: taking enough quadrature points \(S\) (that is, a small enough step \(h=2\pi/S\)) secures a constant phase margin uniformly in time.

\begin{proposition}[Sufficient sampling for noiseless phase stability]
    \label[proposition]{prop:phase-stability-sampling}
    Let \(E_1,\ldots,E_M\in\mathbb R\), fix \(\mu\in\mathbb R\), and let \(\Delta(\mu)=\min_j |E_j-\mu|>0\).
    Let \(\beta>0\), set \(R=\e^{-\beta\mu}\), and let \(S>2\) be an integer.
    Set
\[
    K=\#\{j:E_j>\mu\},
    \qquad
    B
    =
    \sum_{j=1}^{M}
    \frac{\e^{-\beta|E_j-\mu|}}{
    1-\e^{-\beta|E_j-\mu|}
    }.
\]
    If
\[
    S
    \ge
    4(K+B),
\]
    then the phase margin \(\chi_S(R,s)\) satisfies \(\chi_S\ge\pi/2\), uniformly in \(t\).
\end{proposition}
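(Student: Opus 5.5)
The plan is to apply \cref{prop:centered-product-phase-bound} to each endpoint ratio \(\Gamma_j(R,s)\) of \cref{eq:endpoint-ratio-gamma}. By \cref{prop:free-factorization} the roots are \(\zeta_\ell=\e^{-sE_\ell}\), so in the factor form \cref{eq:ratio_factor} we have \(r_\ell=|\zeta_\ell|/R=\e^{-\beta(E_\ell-\mu)}\). Since \(\Delta(\mu)>0\), every \(r_\ell\ne1\). Moreover \(\rho_\ell=\min\{r_\ell,r_\ell^{-1}\}=\e^{-\beta|E_\ell-\mu|}\), so the quantities \(K\) and \(B\) of \cref{eq:kbdef} are exactly the \(K\) and \(B\) in the statement. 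Crucially, neither depends on \(t\): the time \(t\) and the node angle \(\theta_j\) enter only through the phases \(\phi_{\ell,j}\). The bound of \cref{prop:centered-product-phase-bound} is uniform in these phases, so it will hold simultaneously for all \(j\) and all \(t\).

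With \(h=2\pi/S\), the proposition gives
\[
\pi-\bigl|\operatorname{Arg}\Gamma_j\bigr|\ \ge\ \operatorname{dist}\bigl(Kh,(2\mathbb Z+1)\pi\bigr)-hB .
\]
The hypothesis \(S\ge4(K+B)\) yields \(Kh\le\pi/2\), and \(Kh\ge0\) since \(K\) is a count. Hence \(Kh\in[0,\pi/2]\), and its distance to the nearest odd multiple of \(\pi\) is \(\pi-Kh\). The bound therefore becomes
\[
\pi-h(K+B)=\pi-\frac{2\pi(K+B)}{S}\ \ge\ \pi-\frac{\pi}{2}=\frac{\pi}{2}.
\]
Taking the minimum over \(j\) gives \(\chi_S(R,s)\ge\pi/2\) for every \(t\).

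Two small checks remain. First, \(0<h<\pi\) requires \(S>2\), which is assumed. Second, \(\operatorname{dist}(Kh,(2\mathbb Z+1)\pi)=\pi-Kh\) needs \(Kh\le\pi\), which holds with room to spare.

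No serious obstacle is expected. The only point needing care is confirming that the distance term evaluates to \(\pi-Kh\), rather than possibly being smaller for a larger \(K\). This is exactly what the factor \(4\) in the sampling condition guarantees, since it forces \(Kh\le\pi/2\).
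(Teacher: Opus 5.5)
Your proof is correct and matches the paper's argument: apply \cref{prop:centered-product-phase-bound} to each \(\Gamma_j(R,s)\) with \(r_\ell=\e^{-\beta(E_\ell-\mu)}\), and note that \(K\) and \(B\) do not depend on \(t\) or \(\theta_j\). The hypothesis \(S\ge4(K+B)\) then gives \(Kh\le\pi/2\), so the distance term equals \(\pi-Kh\) and the margin is at least \(\pi-h(K+B)\ge\pi/2\).
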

The statements so far concern exact endpoint ratios.
The next two lemmas show that the margin survives, and that the quadrature remains accurate, when each endpoint signal is only known up to a relative shot-noise error \(\varsigma\).

\begin{lemma}[Phase margin stability under shot noise]
    \label[lemma]{lem:endpoint-noise-branch-stability}
    Suppose the exact endpoint ratios \(\Gamma_j(R,s)\) have branch-cut margin
    at least \(\chi_S>0\).
    Let \(\Xi_j=\Xi(s,-1/w_j)\), and suppose the endpoint estimates obey
\[
    \widehat \Xi_j=(1+\varsigma_j)\Xi_j,
    \qquad |\varsigma_j|\le\varsigma .
\]
    If
\[
    \varsigma<\sin\left(\frac{\chi_S}{2}\right),
\]
    then the noisy endpoint ratios preserve the same logarithm branch assignment
    as the exact endpoint ratios.
\end{lemma}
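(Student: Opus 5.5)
The plan is to reduce the statement to a purely geometric fact about the unit circle. The noisy endpoint ratio factors as
\[
\widehat\Gamma_j=\frac{\widehat\Xi_{j+1}}{\widehat\Xi_j}\cdot(\text{known monomial factor})=\Gamma_j\cdot\frac{1+\varsigma_{j+1}}{1+\varsigma_j}.
\]
This uses \cref{eq:exchange-P-Xi-quadrature}: the normalization \(Z(\beta,R^{-1})\) cancels on the circle, and the monomial factor \((w_{j+1}/w_j)^M\) is known exactly, so it is common to the exact and noisy ratios. The quantity to control is therefore the argument perturbation \(\psi_j\coloneqq\operatorname{Arg}\bigl((1+\varsigma_{j+1})/(1+\varsigma_j)\bigr)\).

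\textbf{Step 1: bound the argument of one noisy factor.} For \(|\varsigma_j|\le\varsigma<1\), the point \(1+\varsigma_j\) lies in the closed disc of radius \(\varsigma\) about \(1\). Tangent lines from the origin to this disc show that
\[
|\operatorname{Arg}(1+\varsigma_j)|\le\arcsin\varsigma.
\]
The arguments add under multiplication and division, and the total stays below \(\pi\), so no wrap occurs. Hence
\[
|\psi_j|\le|\operatorname{Arg}(1+\varsigma_{j+1})|+|\operatorname{Arg}(1+\varsigma_j)|\le 2\arcsin\varsigma.
\]

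\textbf{Step 2: use the hypothesis.} The assumption \(\varsigma<\sin(\chi_S/2)\) gives \(2\arcsin\varsigma<\chi_S\). The exact argument \(\operatorname{Arg}\Gamma_j\) lies in \([-\pi+\chi_S,\pi-\chi_S]\) by the definition of the phase margin \cref{eq:branch-cut-margin}. It follows that
\[
\operatorname{Arg}\Gamma_j+\psi_j\in(-\pi,\pi).
\]
This value is therefore the principal argument of \(\widehat\Gamma_j\), and
\[
\operatorname{Arg}\widehat\Gamma_j=\operatorname{Arg}\Gamma_j+\psi_j .
\]
The principal \(\Log\) of the noisy ratio thus differs from that of the exact ratio by the small quantity
\[
\log\bigl|(1+\varsigma_{j+1})/(1+\varsigma_j)\bigr|+\ii\psi_j,
\]
with no \(2\pi\ii\) jump. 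Equivalently, the noisy and exact ratios receive the same branch assignment relative to the continuous lift.

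\textbf{Main obstacle.} The only delicate point is the precise meaning of ``same branch assignment.'' The exact increment from the lifted logarithm equals \(\Log\Gamma_j+2\pi\ii m_j\) for some integer \(m_j\). Under the sampling regime of \cref{prop:phase-stability-sampling}, where \(\chi_S\ge\pi/2\), one has \(m_j=0\), since \(|\Delta\arg|<\pi\) follows from \cref{lem:one-factor-phase-bound} summed over the roots. I would state the conclusion in the form \(\Log\widehat\Gamma_j-\Log\Gamma_j=\Log\bigl((1+\varsigma_{j+1})/(1+\varsigma_j)\bigr)\). Whatever integer correction \(m_j\) the exact lift requires, the same \(m_j\) then applies to the noisy ratio.
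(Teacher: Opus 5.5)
Your proposal is correct and matches the paper's proof. You factor the noisy ratio as $\Gamma_j\cdot(1+\varsigma_{j+1})/(1+\varsigma_j)$, use the tangent-line bound on the disc of radius $\varsigma$ about $1$ to get $|\operatorname{Arg}(1+\varsigma_j)|<\chi_S/2$ for each factor, and use the margin $|\operatorname{Arg}\Gamma_j|\le\pi-\chi_S$ to conclude that the perturbed argument stays strictly inside $(-\pi,\pi)$. Your explicit $\arcsin$ bookkeeping and your remark on what ``same branch assignment'' means are more detailed versions of the same steps.
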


The first lemma says that a relative error below \(\sin(\chi_S/2)\) cannot push an endpoint ratio across the branch cut, so the noisy data are assigned the same logarithm branches as the exact data.
With the branches fixed, the second lemma shows that the quadrature error is linear in \(\varsigma\), with no dependence on the number of roots or on the radial separation.

\begin{lemma}[Shot-noise perturbation of the quadrature]
    \label[lemma]{lem:shot-noise-quadrature-perturbation}
    Assume \(0<k<S\), \(\chi_S(R,s)\ge\pi/2\), and
\(0<\varsigma<1/\sqrt2\).
    If \(|\varsigma_j|\le\varsigma\) for all \(j\), then
\[
    \left|
    \widehat Q_{R,S}^{(k)}(s)-Q_{R,S}^{(k)}(s)
    \right|
    \le
    2kR^k\varsigma .
\]
\end{lemma}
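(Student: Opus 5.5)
The plan is to write the noisy logarithmic increments as the exact increments plus a telescoping correction. Then I would use summation by parts around the circle, so that the difference of consecutive quadrature weights \(w_{j\pm1/2}^k\) supplies the factor \(\sin(\pi k/S)\). That factor cancels exactly against the \(\sinc(k/S)^{-1}\) prefactor in \cref{eq:midpoint-quadrature-rule}.

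\emph{Step 1 (branch bookkeeping).} By \cref{eq:exchange-P-Xi-quadrature}, the increments enter only through the ratios \(\Xi_{j+1}/\Xi_j\). The noisy ratio is
\[
\widehat\Xi_{j+1}/\widehat\Xi_j=\Gamma_j'\,(1+\varsigma_{j+1})/(1+\varsigma_j),
\]
where \(\Gamma_j'\) differs from \(\Gamma_j\) only by the known monomial factor. Since \(\chi_S\ge\pi/2\) and \(\varsigma<1/\sqrt2=\sin(\pi/4)\), \cref{lem:endpoint-noise-branch-stability} shows that the noisy ratios receive the same branch assignment as the exact ones. Set \(\ell_j\coloneqq\Log(1+\varsigma_j)\), using the principal logarithm, which is well defined because \(|\varsigma_j|<1\). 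Then each noisy increment equals the exact increment plus \(\ell_{j+1}-\ell_j\). The cyclic convention \(w_S=w_0\) gives \(\varsigma_S=\varsigma_0\) and \(\ell_S=\ell_0\). I expect this step to be the main point requiring care. It must be checked that ``same branch assignment'' means precisely that the chosen lifts differ by the principal-log corrections, with no extra multiple of \(2\pi\ii\). This is exactly what the margin argument of \cref{lem:endpoint-noise-branch-stability} provides: the perturbed argument stays within distance \(<\pi/2\) of the exact argument, which itself lies within \(\pi/2\) of \(0\).

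\emph{Step 2 (summation by parts).} With these corrections,
\[
\widehat Q^{(k)}_{R,S}-Q^{(k)}_{R,S}=\frac{1}{\sinc(k/S)}\frac{1}{2\pi\ii}\sum_{j=0}^{S-1}w_{j+1/2}^k(\ell_{j+1}-\ell_j)
=\frac{1}{\sinc(k/S)}\frac{1}{2\pi\ii}\sum_{j=0}^{S-1}\ell_j\bigl(w_{j-1/2}^k-w_{j+1/2}^k\bigr),
\]
where the second equality uses cyclicity in \(j\). Each weight difference has modulus
\[
R^k|1-\e^{2\pi\ii k/S}|=2R^k\sin(\pi k/S).
\]
Since \(\sin(\pi k/S)/\sinc(k/S)=\pi k/S\), the triangle inequality gives
\[
|\widehat Q-Q|\le \frac{1}{2\pi}\cdot S\cdot 2R^k\frac{\pi k}{S}\max_j|\ell_j| = kR^k\max_j|\ell_j|.
\]

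\emph{Step 3 (scalar log bound).} For \(|x|\le\varsigma<1\), we have
\[
|\Log(1+x)|\le\sum_{n\ge1}|x|^n/n=-\log(1-|x|).
\]
The function \(x\mapsto -\log(1-x)/x\) is increasing on \((0,1)\), and its value at \(x=1/\sqrt2\) is about \(1.74<2\). Hence \(|\ell_j|\le2\varsigma\) for \(\varsigma<1/\sqrt2\), which yields \(|\widehat Q-Q|\le2kR^k\varsigma\). The estimate depends only on \(k\), \(R\) and \(\varsigma\), not on the number of roots or on the radial separation, as claimed.
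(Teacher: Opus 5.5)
Your proposal is correct and follows the paper's proof essentially step for step: you get branch preservation from the endpoint-noise lemma (using $\varsigma<1/\sqrt2\le\sin(\chi_S/2)$), then cyclic summation by parts to extract $|w_{j-1/2}^k-w_{j+1/2}^k|=2R^k\sin(\pi k/S)$, which cancels the $\sinc(k/S)^{-1}$ prefactor at the cost of a factor $k$, and finally the scalar bound $|\Log(1+\varsigma_j)|\le-\log(1-\varsigma)<2\varsigma$. Your extra care in Step 1, checking that each noisy increment equals the exact one plus $\ell_{j+1}-\ell_j$ with no stray multiple of $2\pi\ii$, makes explicit what the paper asserts in a single line.
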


It remains to convert the relative error \(\varsigma\) assumed in these lemmas into a number of measurements.
Each endpoint signal is the expectation of a unitary in a state, and so it can be estimated with the Hadamard test.

Given a preparation of any state \(\rho\) and an implementation of a controlled unitary
\[
\Lambda_{\mathcal{J}} = \lvert0\rangle\!\langle0\rvert\otimes\I
+
\lvert1\rangle\!\langle1\rvert\otimes \mathcal{J},
\]
we can prepare an ancilla in the state \(|+\rangle = \tfrac{1}{\sqrt{2}}(|0\rangle + |1\rangle)\) and evolve to \(\Lambda_{\mathcal{J}} (|+\rangle\!\langle+|\otimes \rho) \Lambda_{\mathcal{J}}^\dagger\).
After the controlled unitary, the reduced state of the ancilla is
\[
\rho_a
=
\frac12
\begin{bmatrix}
    1&\Tr(\rho \mathcal{J})^*
    \\
    \Tr(\rho \mathcal{J})&1
\end{bmatrix}.
\]
Consequently
\[
\mathbb E[X]=\Re\bigl(\Tr(\rho \mathcal{J})\bigr),
\qquad
\mathbb E[Y]=\Im\bigl(\Tr(\rho \mathcal{J})\bigr),
\]
or equivalently \(\mathbb E[X + \ii Y ]= \Tr(\rho \mathcal{J})\).
One run of the Hadamard test quantum circuit produces an independent \(\pm1\) random variable with probabilities \((1 \pm \mathbb E[X])/2\) for an \(X\) measurement and similarly for a \(Y\) measurement.

A standard application of Hoeffding's inequality gives a quantification of the sample complexity of estimating \(\Tr(\rho \mathcal{J})\) using the Hadamard test and the obvious empirical mean estimator.

\begin{lemma}[Relative precision from Hadamard-test samples]
    \label[lemma]{lem:hadamard-test}
    Let \(\rho\) be a quantum state, \(\mathcal{J}\) be a unitary, and let \(\xi \coloneqq \Tr(\rho \mathcal{J})\), with \(\xi\ne0\). 
    If \(\varsigma, \delta \in (0,1)\), then \(2n_{\mathrm{sh}}\) independent Hadamard tests suffice to obtain an estimate \(\widehat \xi\) such that
\[\Pr\left\{
    |\widehat \xi-\xi|\le \varsigma |\xi|
    \right\}
    \ge
    1-\delta 
\]
    whenever 
    \begin{equation}
        n_{\mathrm{sh}}
        \ge
        \frac{4}{\varsigma^2|\xi|^2}
        \log\frac{4}{\delta}.
        \label{eq:hadamard-relative-shots}
    \end{equation}
\end{lemma}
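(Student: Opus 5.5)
The plan is to estimate the real and imaginary parts of \(\xi\) separately from \(X\)- and \(Y\)-basis Hadamard tests, control each with Hoeffding's inequality, and then combine the two bounds. We run \(n_{\mathrm{sh}}\) tests with an \(X\) measurement on the ancilla and \(n_{\mathrm{sh}}\) with a \(Y\) measurement, for \(2n_{\mathrm{sh}}\) runs in total. The outcomes \(x_1,\dots,x_{n_{\mathrm{sh}}}\) and \(y_1,\dots,y_{n_{\mathrm{sh}}}\) are independent \(\pm1\) variables with means \(\Re\xi\) and \(\Im\xi\), as computed from the reduced ancilla state \(\rho_a\) above. The estimator is \(\widehat\xi=\bar x+\ii\bar y\), where \(\bar x,\bar y\) are the empirical means.

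Since each outcome lies in \([-1,1]\), Hoeffding's inequality gives, for any \(\eta>0\),
\[
\Pr\{|\bar x-\Re\xi|>\eta\}\le 2\exp\!\left(-\frac{n_{\mathrm{sh}}\eta^2}{2}\right),
\]
and the same bound holds for \(\bar y\). We choose \(\eta=\varsigma|\xi|/\sqrt2\). On the event where both deviations are at most \(\eta\), we get \(|\widehat\xi-\xi|=\sqrt{(\bar x-\Re\xi)^2+(\bar y-\Im\xi)^2}\le\sqrt2\,\eta=\varsigma|\xi|\). This is exactly the required relative-error event.

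By a union bound, the failure probability is at most \(4\exp(-n_{\mathrm{sh}}\varsigma^2|\xi|^2/4)\). This is at most \(\delta\) precisely when \(n_{\mathrm{sh}}\ge \frac{4}{\varsigma^2|\xi|^2}\log\frac{4}{\delta}\), which is condition \cref{eq:hadamard-relative-shots}.

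The argument has no real obstacle. The only step needing care is splitting the error budget between the two components: choosing \(\eta=\varsigma|\xi|/\sqrt2\) is what makes the constant \(4\) appear in both the prefactor \(4/(\varsigma^2|\xi|^2)\) and inside \(\log(4/\delta)\). The hypothesis \(\xi\neq0\) is needed only so that relative error is meaningful and the bound on \(n_{\mathrm{sh}}\) is finite.
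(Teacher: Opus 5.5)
Your proposal is correct and follows the paper's proof essentially line for line. Both split the $2n_{\mathrm{sh}}$ runs into $X$- and $Y$-quadrature Hadamard tests and apply Hoeffding to each with threshold $\varsigma|\xi|/\sqrt2$, obtaining $2\exp(-n_{\mathrm{sh}}\varsigma^2|\xi|^2/4)$ per quadrature. A union bound over the two quadratures and the Pythagorean combination then give $|\widehat\xi-\xi|\le\varsigma|\xi|$.
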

The last ingredient is the inversion step, which turns estimates of the root moments into a measure that is close in Wasserstein distance to the thermal measure on the window.
We use the linear-programming estimator of Chen et al.~\cite{ChenEtAl2025}, which has two parts.
The Jackson kernel smooths a measure at a controlled Wasserstein cost, so that a finite number of Fourier coefficients suffice to determine it up to that cost.
The linear program then searches for a measure whose smoothed Fourier coefficients match the noisy data.

\begin{definition}[Jackson kernel]
    \label[definition]{def:jackson-kernel}
    For an integer \(n\ge1\), define the Jackson kernel on
\(\mathbb T=[0,1)\) by
\[
    J_n(x)
    =
    \frac{3}{n(2n^2+1)}
    \frac{\sin^4(\pi n x)}{\sin^4(\pi x)},
\]
    with its value at \(x=0\) defined by continuity.
    Its Fourier multiplier is the explicit even function
\[
    \widetilde{J}_{n}(\ell)
    =
    \frac{3}{n(2n^2+1)}
    \begin{cases}
        \displaystyle
        \binom{2n+1-|\ell|}{3}
        -4\binom{n+1-|\ell|}{3},
        & 0\le |\ell|\le n-2, \\[4pt]
        \displaystyle
        \binom{2n+1-|\ell|}{3},
        & n-1\le |\ell|\le 2n-2, \\[4pt]
        0, & |\ell|\ge 2n-1.
    \end{cases}
    \label{eq:jackson-multiplier}
\]
    Thus \(\widetilde{J}_{n}(\ell)=\int_{\mathbb T}\e^{2\pi\ii\ell x}J_n(x)\dd x\).
    The kernel is nonnegative and normalized, and it satisfies
\[
    |\widetilde{J}_{n}(\ell)|\le1,
    \qquad
    W_1^{\mathbb T}(\rho,\rho*J_n)\le\frac{1}{n}
\]
    for every probability measure \(\rho\) on \(\mathbb T\).
\end{definition}

The estimator below uses the multiplier \(\widetilde J_n\) to damp the noisy Fourier data to the scale of the smoothed target, and then looks for a piecewise-constant density whose Fourier coefficients agree with the damped data to within the tolerance \(\kappa\).

\begin{definition}[The LP estimator~\cite{ChenEtAl2025}]
    \label[definition]{def:chen-lp-estimator}
    Given normalized data \(\widehat u_\ell\) for \(|\ell|\le T\) and a
    tolerance \(\kappa>0\), choose an integer \(n\ge1\) and set
\[
    K=\left\lceil C\frac{n^2}{\kappa}\right\rceil,
\]
    where \(C\) is a sufficiently large universal constant.
    Let \(J_n\) and \(\widetilde{J}_{n}\) be the kernel and Fourier multiplier from
    \cref{def:jackson-kernel}.
    
    For \(I_j=[j/K,(j+1)/K)\), write
\[
    c_{j,\ell}
    =
    \int_{I_j}\e^{2\pi\ii\ell x}\dd x .
\]
    If a probability measure has density \(Ka_j\) on \(I_j\), then its
\(\ell\)th Fourier coefficient is exactly
\[
    K\sum_{j=0}^{K-1}c_{j,\ell}a_j.
\]
    The LP estimator is the probability measure \(\sigma\) with density
\(Ka_j\) on \(I_j\), where \(a_0,\ldots,a_{K-1}\) satisfies
    \begin{align}
        &a_j\ge0,
        \qquad
        \sum_{j=0}^{K-1}a_j=1,
        \label{eq:chen-lp-simplex}
        \\
        &\left|
        K\sum_{j=0}^{K-1}\Re(c_{j,\ell})a_j
        -
        \Re\bigl(\widetilde{J}_{n}(\ell)\widehat u_\ell\bigr)
        \right|
        \le
        \frac{\kappa}{4},
        \qquad
        |\ell|\le T,
        \label{eq:chen-lp-real}
        \\
        &\left|
        K\sum_{j=0}^{K-1}\Im(c_{j,\ell})a_j
        -
        \Im\bigl(\widetilde{J}_{n}(\ell)\widehat u_\ell\bigr)
        \right|
        \le
        \frac{\kappa}{4},
        \qquad
        |\ell|\le T.
        \label{eq:chen-lp-imag}
    \end{align}
\end{definition}

We now specify the data to which this estimator is applied, namely the Fourier coefficients of the thermal measure on the window, which are the normalized root moments.
Fix a positive integer \(k\).
Because the energy origin is arbitrary, it is useful to shift the window moments by its lower endpoint.
For the rest of this section, write
\[
\mathcal{M}_\Omega^{(k)}(s)
\coloneqq
\e^{ks \Omega_{\min}}\mathcal{M}_{A_\Omega}^{(k)}(s)
=
\sum_{E_j\in\Omega}\e^{-ks(E_j- \Omega_{\min})},
\qquad
\mnorm
=
\mathcal{M}_\Omega^{(k)}(\beta).
\]
This shift is a deterministic rescaling of the annular root moment and is invariant under a common translation of the energies and window endpoints.
Assume that \(\Omega\) contains at least one energy.
Since \(0<E_j-\Omega_{\min}<L\) for every such energy,
\begin{equation}
    \mnorm
    =
    \sum_{E_j\in\Omega}\e^{-\beta k(E_j-\Omega_{\min})}
    \ge
    \e^{-\beta kL}.
    \label{eq:shifted-mass-lower-bound}
\end{equation}
The associated thermally weighted probability measure is
\begin{align*}
    \nu(E)
    &=
    \frac{1}{\mnorm}
    \sum_{E_j\in\Omega}
    \e^{-\beta k(E_j-\Omega_{\min})}\delta(E-E_j). \\
    &= \frac{1}{\mathcal{M}_{A_{\Omega}}^{(k)}(\beta)}
    \sum_{E_j\in\Omega}
    \e^{-\beta k E_j}\delta(E-E_j).
\end{align*}
The common shift cancels between the numerator and denominator, so this is the same probability measure obtained from the unshifted Gibbs weights.
For \(k=1\) this is the Gibbs distribution restricted to the window, and is also the Fourier transform (with a suitable convention and normalized) of the root moment for the window.
Larger \(k\) simply changes the effective inverse temperature to \(k\beta\).

Through an affine transformation, we embed the energy band \(\Omega\) into half of the unit torus,
\[
\phi(E)=\frac{E-\Omega_{\min}}{2L}\in[0,1/2].
\]
The factor of \(2\) keeps the support away from the seam of the torus.
Let \(f_\#\rho\) denote the pushforward of a measure \(\rho\) by a map \(f\): it is the distribution of \(f(X)\) when \(X\) has distribution \(\rho\).
Explicitly, the new measure is simply
\[
\phi_\#\nu(x)
=
\frac1\mnorm\sum_{E_j\in\Omega}
\e^{-\beta k(E_j-\Omega_{\min})}
\delta\Bigl(x-\tfrac{E_j-\Omega_{\min}}{2L}\Bigr).
\]
For \(\ell\in\mathbb Z\), let \(u_\ell\) denote the \(\ell\)th Fourier coefficient of \(\phi_\#\nu\), with the convention of \cite{ChenEtAl2025}:
\[
u_\ell
=
\int \e^{2\pi\ii\ell x}\dd(\phi_\#\nu)(x).
\]
Let
\[
t_\ell=\frac{\pi\ell}{kL},
\qquad
s_\ell=\beta+\ii t_\ell,
\qquad
\Mell = \mathcal{M}_\Omega^{(k)}(s_\ell),
\qquad
0\le\ell\le T,
\]
where \(T\) is a positive integer cutoff fixed in \cref{prop:wasserstein-window-recovery}.
With this Fourier convention, the coefficients with \textit{negative} frequency are obtained from the root moments via
\begin{equation}
    u_{-\ell}
    =
    \frac{\Mell}{\mnorm}.
    \label{eq:window-fourier-from-root-moments}
\end{equation}
Because \(\phi_\#\nu\) is a real measure, \(u_\ell\) is the complex conjugate of \(u_{-\ell}\).
Thus measurements at \(t\ge0\), followed by complex conjugation in post-processing supply the full set of coefficients \(|\ell|\le T\).
Note also that
\[
u_0
=
\frac1\mnorm \mathcal{M}_0=
\frac1\mnorm \mathcal{M}_\Omega^{(k)}(\beta)
=
1
\]
as required for the Fourier transform of a normalized positive measure.
As in the previous sections, we denote the estimator for \(\mathcal{M}_{\ell}\) by \(\widehat{\mathcal{M}}_{\ell}\) and the estimator for \(u_{\ell}\) by \(\widehat{u}_{\ell}\).
In particular, set \(\mnormhat=\widehat{\mathcal{M}}_0\), and define
\[
\widehat u_{-\ell}
=
\frac{\widehat{\mathcal{M}}_\ell}{\mnormhat}
\qquad
(0\le\ell\le T),
\]
with \(\widehat u_\ell=\bigl(\widehat u_{-\ell}\bigr)^*\) for \(\ell>0\).

\begin{lemma}[Normalization of noisy Fourier data]
    \label[lemma]{lem:normalized-fourier-data}
    Let \(0<\kappa\le1\), and suppose that for each integer \(\ell\) in \([0,T]\) we have \[
    \bigl|\widehat{\mathcal{M}}_\ell-\Mell\bigr|\le \frac{\kappa}{32}\e^{-\beta kL}. \]
    Then
\[
    \left|
    \widehat u_\ell-u_\ell
    \right|
    \le
    \frac{\kappa}{8}
    \qquad
    (|\ell|\le T).
\]
\end{lemma}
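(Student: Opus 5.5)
This is a ratio perturbation argument. The denominator \(\mnorm=\mathcal M_0\) is bounded below by \cref{eq:shifted-mass-lower-bound}, and every \(|\Mell|\) is bounded above by \(\mnorm\).

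First, the numerator bound. Each \(\Mell=\sum_{E_j\in\Omega}\e^{-k s_\ell(E_j-\Omega_{\min})}\) is a sum of terms with modulus \(\e^{-\beta k(E_j-\Omega_{\min})}\). Hence \(|\Mell|\le\mnorm\), which means \(|u_{-\ell}|\le1\). Write \(\eta=\frac{\kappa}{32}\e^{-\beta kL}\). By \cref{eq:shifted-mass-lower-bound}, \(\eta\le\frac{\kappa}{32}\mnorm\le\mnorm/32\). In particular \(\mnormhat\ne0\) and \(|\mnormhat|\ge\frac{31}{32}\mnorm\).

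Next, for \(0\le\ell\le T\), decompose
\[
\widehat u_{-\ell}-u_{-\ell}
=\frac{\widehat{\mathcal M}_\ell-\Mell}{\mnormhat}
+\Mell\Bigl(\frac{1}{\mnormhat}-\frac{1}{\mnorm}\Bigr).
\]
The first term is at most \(\eta/|\mnormhat|\le\frac{32}{31}\cdot\frac{\kappa}{32}=\frac{\kappa}{31}\). The second term equals \(|\Mell|\,|\mnorm-\mnormhat|/(|\mnormhat|\mnorm)\le\eta/|\mnormhat|\le\kappa/31\). Together they give \(2\kappa/31<\kappa/8\).

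For \(\ell>0\), the estimator is defined by \(\widehat u_\ell=(\widehat u_{-\ell})^*\), and \(u_\ell=u_{-\ell}^*\) because \(\phi_\#\nu\) is a real measure. Conjugation preserves the modulus of the difference, so the bound transfers to all \(|\ell|\le T\). The case \(\ell=0\) is covered by the same computation.

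The only care needed is the lower bound on \(|\mnormhat|\), which uses \(\kappa\le1\) together with the mass bound \cref{eq:shifted-mass-lower-bound}. Nothing else is delicate.
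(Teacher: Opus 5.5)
Your proof is correct and follows the same route as the paper: the identical two-term decomposition of \(\widehat{\mathcal{M}}_\ell/\mnormhat-\Mell/\mnorm\), the bound \(|\Mell|\le\mnorm\), the mass lower bound \cref{eq:shifted-mass-lower-bound} together with \(\kappa\le1\) to control \(|\mnormhat|\), and complex conjugation for the positive frequencies. The only difference is that you use the sharper lower bound \(|\mnormhat|\ge\frac{31}{32}\mnorm\), where the paper uses \(\mnorm/2\) and obtains \(4\eta/\mnorm\le\kappa/8\); this changes only the constant, not the argument.
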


The lemma says that root-moment errors of size \(\tfrac{\kappa}{32}\e^{-\beta kL}\) produce Fourier data accurate to \(\kappa/8\) after normalization; the factor \(\e^{-\beta kL}\) is the lower bound \cref{eq:shifted-mass-lower-bound} on the normalization.
The next proposition feeds such data to the LP estimator and converts the Fourier accuracy into a Wasserstein guarantee.

\begin{proposition}[Wasserstein recovery on an energy window]
    \label[proposition]{prop:wasserstein-window-recovery}
    Fix \(0<\epsilon<L/2\), and set
\[
    n=\left\lceil\frac{4L}{\epsilon}\right\rceil,
    \qquad
    T=\left\lceil\frac{24L}{\epsilon}\right\rceil,
    \qquad
    \kappa
    =
    c\,
    \frac{\epsilon}{L\log(4L/\epsilon)},
\]
    with \(c>0\) a sufficiently small universal constant.
    Assume that \(\Omega\) contains at least one energy, and compute \(\widehat u_\ell\) from moment estimates that satisfy \cref{lem:normalized-fourier-data}.
    If \(\sigma\) is the output of the LP estimator of \cref{app:lp-estimator} with these parameters, then
\(\widehat\nu=\psi_\#\sigma\), where
\[
    \psi(x)=\Omega_{\min}+2L\min\{x,1-x\},
    \qquad x\in[0,1),
\]
    satisfies
\[
    W_1\!\left(\nu,\widehat\nu\right)
    \le
    \epsilon.
\]
\end{proposition}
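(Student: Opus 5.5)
The argument has three pieces: the pushforward \(\psi\) transfers Wasserstein bounds from the torus to the energy line with a factor \(2L\); the Jackson kernel supplies smoothing; and the LP estimator of Chen et al.~\cite{ChenEtAl2025} carries the Fourier-accuracy-to-Wasserstein step. Write \(\rho=\phi_\#\nu\), a probability measure on \(\mathbb T\) supported in \([0,1/2]\).

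The first step is a Lipschitz reduction. On \([0,1/2]\) the map \(\psi\) inverts \(\phi\), so \(\psi_\#\rho=\nu\). Moreover \(\psi\) is \(2L\)-Lipschitz with respect to the torus metric, since \(x\mapsto\min\{x,1-x\}\) is the torus distance to \(0\) and is \(1\)-Lipschitz. Pushing forward an optimal coupling of \(\sigma\) and \(\rho\) then gives
\[
W_1(\widehat\nu,\nu)\le 2L\,W_1^{\mathbb T}(\sigma,\rho).
\]
It therefore suffices to prove \(W_1^{\mathbb T}(\sigma,\rho)\le\epsilon/(2L)\).

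The second step checks that the LP is feasible. By \cref{lem:normalized-fourier-data}, \(|\widehat u_\ell-u_\ell|\le\kappa/8\) for \(|\ell|\le T\). Since \(|\widetilde J_n(\ell)|\le 1\), the damped data \(\widetilde J_n(\ell)\widehat u_\ell\) lie within \(\kappa/8\) of the Fourier coefficients \(\widetilde J_n(\ell)u_\ell\) of the smoothed measure \(\rho*J_n\). Let \(a_j\) be the mass of \(\rho*J_n\) on \(I_j\), so the candidate is the piecewise-constant density with values \(Ka_j\). Its \(\ell\)th coefficient differs from that of \(\rho*J_n\) by at most about \(|\ell|\cdot\) oscillation of \(J_n\) over one bin. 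Because \(\|J_n'\|_\infty=O(n^2)\) and \(K\ge Cn^2/\kappa\), this discretization error is at most \(\kappa/8\) for \(C\) large. The real and imaginary constraints then hold with tolerance \(\kappa/4\), so the LP is feasible and returns some \(\sigma\).

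The third step, the main obstacle, turns coefficient agreement into a Wasserstein bound. We have \(|\widehat\sigma(\ell)-\widetilde J_n(\ell)u_\ell|\le\kappa/2\) and \(|\widetilde J_n(\ell)u_\ell-\widehat{\rho*J_n}(\ell)|\le\kappa/8\). Hence the difference \(\sigma-\rho*J_n\) has Fourier coefficients \(O(\kappa)\) for \(|\ell|\le T\), and both are probability measures. I will bound \(W_1^{\mathbb T}\) by Kantorovich duality, taking a \(1\)-Lipschitz test function \(f\) and approximating it by its Jackson smoothing \(f*J_m\) with \(m\approx T/2\). The smoothing error is \(\|f-f*J_m\|_\infty\le O(1/m)\), which costs \(O(1/T)\) against each measure. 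The smoothed function \(f*J_m\) is a trigonometric polynomial of degree \(<2m\le T\) with coefficients bounded by \(\min\{1,1/(2\pi|\ell|)\}\). The pairing with the low-frequency part is therefore at most \(O(\kappa)\sum_{|\ell|\le T}1/|\ell|=O(\kappa\log T)\). Combining this with \(W_1^{\mathbb T}(\rho,\rho*J_n)\le 1/n\) gives
\[
W_1^{\mathbb T}(\sigma,\rho)\le \frac1n+O\!\left(\frac1T\right)+O(\kappa\log T).
\]
With \(n=\lceil 4L/\epsilon\rceil\), \(T=\lceil 24L/\epsilon\rceil\), and \(\kappa=c\,\epsilon/(L\log(4L/\epsilon))\), the three terms are roughly \(\epsilon/(4L)\), \(\epsilon/(8L)\cdot O(1)\), and \(c'\epsilon/L\). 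Since \(\log T\lesssim\log(4L/\epsilon)\) when \(\epsilon<L/2\), choosing \(c\) small makes the total at most \(\epsilon/(2L)\). The delicate point is tracking the constant in the Jackson approximation of Lipschitz functions so that the \(1/T\) term fits with \(T=24L/\epsilon\). If my constants are off, I will instead import the corresponding lemma of \cite{ChenEtAl2025} directly, as the choice of parameters suggests the authors do.
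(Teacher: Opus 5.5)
Your proof is correct. Steps one and two match the paper: the reduction \(W_1(\nu,\widehat\nu)\le 2L\,W_1^{\mathbb T}(\rho,\sigma)\) through the \(2L\)-Lipschitz map \(\psi\), the smoothing bound \(W_1^{\mathbb T}(\rho,\rho*J_n)\le 1/n\), and LP feasibility (the paper argues feasibility in \cref{app:lp-estimator}, not inside the proof).

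The real difference is step three. The paper does not prove Fourier-to-Wasserstein stability; it cites \cref{lem:chen-stability} (Chen et al., Theorem 7 with \(d=1\)) at \(\varepsilon'=\epsilon/(4L)\). That citation is why \(T=\lceil 6/\varepsilon'\rceil\) and why it needs \(c\le 1/4000\). You instead prove the one-dimensional stability directly, via Kantorovich duality, Jackson smoothing of the test function, and \(|\hat f(\ell)|\le 1/(2\pi|\ell|)\).

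Your concern about constants does not materialize. Take \(m=\lfloor T/2\rfloor+1\), so \(f*J_m\) has frequencies \(|\ell|\le T\) and \(m>T/2\). The stated Jackson property applied to \(\delta_0\) gives \(\|f-f*J_m\|_\infty\le\int|y|_{\mathbb T}J_m(y)\dd y\le 1/m\). Hence
\[
W_1^{\mathbb T}(\sigma,\rho)\le\frac1n+\frac2m+\frac{\kappa}{\pi}H_T\le\frac{\epsilon}{4L}+\frac{\epsilon}{6L}+\frac{\kappa}{\pi}(1+\log T).
\]
For \(\epsilon<L/2\) one has \(1+\log T\le 2.4\log(4L/\epsilon)\), so the last term is at most \(\epsilon/(12L)\) already for \(c=1/10\). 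The fallback to the cited theorem is therefore unnecessary, and you get a far milder constant. The paper's route is shorter and matches a published statement exactly; yours is elementary and self-contained, with explicit constants.

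There are two small slips to fix.

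\begin{enumerate}
\item \(\widetilde J_n(\ell)u_\ell\) is exactly the \(\ell\)th coefficient of \(\rho*J_n\). The \(\kappa/8\) bounds \(|\widetilde J_n(\ell)(\widehat u_\ell-u_\ell)|\), while the LP gives \(|\widehat\sigma(\ell)-\widetilde J_n(\ell)\widehat u_\ell|\le\sqrt2\kappa/4\). Together these give a coefficient gap below \(\kappa\).
\item In the feasibility step, bound the discretization error uniformly in \(\ell\) by the \(L^1\) distance between \(\rho*J_n\) and its cell averages, which is at most \(\|J_n'\|_\infty/K=O(\kappa/C)\). If you keep a literal extra factor \(|\ell|\le T=O(n)\) in front of that oscillation bound, it no longer closes with a universal \(C\).
\end{enumerate}
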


\begin{proof}[Proof of \cref{thm:spices-recovery}]
    \emph{Wasserstein recovery error.}
    Our objective is to prove \(W_1(\widehat\nu,\nu_\Omega)\leq\epsilon\).
    We will apply \cref{prop:wasserstein-window-recovery} with \(k=1\) and the accuracy \(\epsilon\) of the theorem. Its accuracy range and choices of \(n,T,\kappa\)
    are exactly those in the theorem statement, and the first hypothesis ensures that the
    window contains at least one protected energy. We choose the universal constant \(c\)
    small enough that \(0<\kappa\leq1\). The remaining requirement is to supply normalized
    Fourier data obtained from moment estimates satisfying \cref{lem:normalized-fourier-data}.
    The proposition then gives the desired estimator by taking the LP output \(\sigma\) and
    setting \(\widehat\nu=\psi_\#\sigma\), where
\(\psi(x)=\Omega_{\min}+2L\min\{x,1-x\}\).
    
    To meet that requirement, \cref{lem:normalized-fourier-data} reduces the task to producing
    shifted moment estimates with
\[
    |\widehat{\mathcal{M}}_\ell-\mathcal{M}_\Omega^{(1)}(s_\ell)|
    \leq \frac{\kappa}{32}\e^{-\beta L},
    \qquad 0\leq\ell\leq T.
\]
    We first check that these contour moments describe the measure \(\nu_\Omega\) to be
    recovered and have the positive normalization mass used by that lemma. By the first
    hypothesis and the power-sum formula \cref{prop:power-sum-formula},
\[
    \mathcal{M}_\Omega^{(1)}(s_\ell)=\e^{s_\ell \Omega_{\min}}\mathcal{M}_{A_\Omega}^{(1)}(s_\ell)
    =\sum_{a=1}^d\e^{-s_\ell(E_a-\Omega_{\min})}.
\]
    In particular, \(\mathcal{M}_\Omega^{(1)}(\beta)\geq\e^{-\beta L}\), since \(d\geq1\).
    At \(t_\ell=\pi\ell/L\), dividing these moments by \(\mathcal{M}_\Omega^{(1)}(\beta)\) gives the
    negative-frequency Fourier coefficients of \(\nu_\Omega\) under the rescaling
\(E\mapsto(E-\Omega_{\min})/(2L)\); the positive-frequency coefficients follow by complex
    conjugation. Thus the lemma applies to the normalized estimates
\(\widehat u_{-\ell}=\widehat{\mathcal{M}}_\ell/\widehat{\mathcal{M}}_0\) once the displayed moment-error bound
    is established, and gives Fourier error at most \(\kappa/8\).
    
    To obtain that moment accuracy, it suffices to estimate the disc moment on each boundary
    circle to error at most \(R\tau\). Indeed, subtracting the two circle estimates and
    multiplying by \(\e^{s_\ell \Omega_{\min}}\) then gives
\[
    |\widehat{\mathcal{M}}_\ell-\mathcal{M}_\Omega^{(1)}(s_\ell)|
    \leq (1+\e^{-\beta L})\tau
    \leq \frac{\kappa}{32}\e^{-\beta L},
\]
    where the last inequality is the reason for the stated choice of \(\tau\).
    It remains to justify the circle error \(R\tau\). The next paragraph proves it provided
    all endpoint signals are estimated to relative error at most \(\tau/4\); the final
    paragraph supplies that precision simultaneously with probability at least \(1-\delta\)
    and bounds the resources. On that event, \cref{lem:normalized-fourier-data} and then
    \cref{prop:wasserstein-window-recovery} yield the claimed Wasserstein error.
    
    \emph{Phase stability and quadrature.}
    At either boundary circle, the quantities \(K,B\) of \cref{prop:centered-product-phase-bound},
    applied to the actual roots, obey
\[
    K \leq M, \qquad B \leq \frac{Mq}{1-q}, \qquad K+B \leq \frac{M}{1-q}.
\]
    Hence the chosen \(S\) makes every lifted argument increment have modulus at most \(\pi/2\), as in
    the proof of \cref{prop:phase-stability-sampling}. In particular, \(\chi_S \geq \pi/2\). The
    single-root aliasing bound \cref{lem:single-root-aliasing} depends only on radial separation, so
    summing it over the roots gives
\[
    |\mathcal{M}_R^{(1)} - Q_{R,S}^{(1)}| \leq \frac{2R M}{S} \frac{q^{S-1}}{1-q^S} \leq \frac{R\tau}{2}.
\]
    For the last inequality, the third term in the definition of \(S\) gives \((S-1)\log(1/q)\ge\log(1+4M/\tau)\), that is, \(q^{S-1} \leq (1+4M/\tau)^{-1}\le\tau/(4M)\); and \(S\ge 4M/(1-q)\) gives \(S(1-q^S)\ge S(1-q)\ge1\). By
    \cref{lem:endpoint-noise-branch-stability,lem:shot-noise-quadrature-perturbation}, endpoint relative
    error \(\tau/4\) preserves the logarithm branches and contributes at most another \(R\tau/2\). Each
    circle therefore has total moment error at most \(R\tau\).
    
    \emph{Resources and operational accuracy.}
    The signal floor \(b\) and \cref{lem:hadamard-test} give \(O(b^{-2} \tau^{-2} \log(S(T+1)/\delta))\)
    shots per endpoint. A union bound over both circles and all \(T+1\) times proves
    \cref{eq:spices-certified-shot-count}. For approximate primitives, choose their operational errors
    small enough that the endpoint bias is at most \(b\tau/8\), and allocate the other \(b\tau/8\) to
    statistical error. This preserves relative error \(\tau/4\) and changes the shot bound only by a
    constant factor. The fifth hypothesis makes \(T\), \(S\), \(\tau^{-1}\), and \(Q\) polynomial. The
    largest sampled time is \(\pi T/L=O(\epsilon^{-1})\), so the fourth hypothesis
    supplies the required circuits with polynomial overhead. The LP reconstruction also has polynomial
    size at these parameters.
\end{proof}

\section{Sample Complexity for Quadrature and Phase Unwrapping Error 
}\label{sec:free-gapped}
The recovery theorem takes the signal floor \(b\) and the radial separation \(q\) as hypotheses. For a free
Hamiltonian both are automatic at a contour that is separated from the spectrum, \(\beta\Delta(\mu)\ge\log(3M)\):
then every root satisfies the radial separation with \(q=\e^{-\beta\Delta(\mu)}\le 1/(3M)\), and the normalized signal obeys \(|\Xi|\ge\e^{-1}\) (\cref{cor:log-inv-temp-suffices}).
This section makes the resulting resource count explicit for one contour, with all constants.
As in the previous section, proofs of the supporting statements are in \cref{app:proofs}.

Three conditions on the number of quadrature points \(S\) enter below.
The quadrature aliasing error of \cref{prop:quadrature-gap-bound} requires \(S \ge 2k\);
the phase margin of \cref{cor:phase-margin-guarantee} requires \(S \ge 4M+2\);
and a root-moment error of at most \(kR^k\tau\) on one circle, obtained from \cref{prop:fixed-root-moment-estimate} with endpoint precision \(\tau/4\), requires \(S-k \ge \log(1+4M/\tau)/\log(3M)\).
Here \(\tau\) is the relative accuracy of the root moment on one circle; it is the quantity denoted \(\tau\) in \cref{thm:spices-recovery}, whereas \(\varsigma\) denotes the relative error of a single endpoint estimate.
For convenience we define
\begin{equation}
    \label{eq:big-enough-S}
    S_0 \coloneqq S_0(M,k,\tau) = \max\biggl\{ 4M+2,\, 2k,\, k+\left\lceil\frac{\log(1+4M/\tau)}{\log(3M)}\right\rceil\biggr\}.
\end{equation}
In practice we are only interested in regimes where \(k = O(1)\) and
\(1/\tau = O(\operatorname{poly}(M))\), so the first constraint (\(S \ge 4M+2\)) is binding.
The definition of \(S_0\) assumes \(\beta\Delta(\mu)\ge \log(3M)\), so \(S \ge S_0\) is only sufficient in that case.
For \(q\le 1/(3M)\) one may also use \(S=S_0(M,1,\tau)\) in \cref{thm:spices-recovery}.

\begin{proposition}[Hadamard-test shots for one circle]
    \label[proposition]{prop:circle-shot-count}
    Let \(0<\delta<1\) and \(0<\tau<1/\sqrt2\).
    Assume \(\beta\Delta(\mu)\ge\log(3M)\) and choose \(S\ge S_0(M,k,\tau) \) via \cref{eq:big-enough-S}.
    If each endpoint \(\Xi_j\) with \(j = 0,\ldots,S-1\) is estimated using \(2n\) Hadamard tests (so \(2nS\) tests for the whole circle), then
\[
    n
    =
    O\biggl(\frac{1}{\tau^2}
    \log\frac{S}{\delta}\biggr)
\]
    suffices so that with probability at least \(1-\delta\)
\[
    \left|
    \mathcal{M}_R^{(k)}(s)-\widehat Q_{R,S}^{(k)}(s)
    \right|
    \le
    kR^k\tau .
\]
\end{proposition}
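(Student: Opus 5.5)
The plan is to split the total error with the triangle inequality,
\[
\bigl|\mathcal{M}_R^{(k)}(s)-\widehat Q_{R,S}^{(k)}(s)\bigr|
\le
\bigl|\mathcal{M}_R^{(k)}(s)-Q_{R,S}^{(k)}(s)\bigr|
+
\bigl|Q_{R,S}^{(k)}(s)-\widehat Q_{R,S}^{(k)}(s)\bigr|,
\]
and make each term at most \(kR^k\tau/2\). Throughout we work in the free setting of this section, with roots \(\zeta_j=\e^{-sE_j}\) and \(|\zeta_j|/R=\e^{-\beta(E_j-\mu)}\).

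\emph{Deterministic aliasing term.} Set \(a_j=\e^{-\beta|E_j-\mu|}\le \e^{-\beta\Delta(\mu)}\le 1/(3M)\). Since \(S\ge 2k\), summing \cref{lem:single-root-aliasing} over the \(M\) roots gives a bound of at most
\[
\frac{2k}{S}\sum_j|\zeta_j|^k\frac{a_j^S}{1-a_j^S}.
\]
For a root outside the circle, \(|\zeta_j|^k a_j^S = R^k a_j^{S-k}\). For a root inside, \(|\zeta_j|^k a_j^S\le R^k a_j^{S}\). Either way each term is at most \(R^k (3M)^{-(S-k)}(1-a_j^S)^{-1}\). With \(1-a_j^S\ge 1/2\) and \(S\ge 4M\), the total is at most \(kR^k\,(3M)^{-(S-k)}\). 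The third entry of \(S_0\) forces \((3M)^{S-k}\ge 1+4M/\tau\), so this term is at most \(kR^k\tau/2\).

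\emph{Branch stability and shot-noise term.} Here \(K\le M\) and \(B\le M\cdot\frac{1/(3M)}{1-1/(3M)}\le 1/2\). So \(S\ge 4M+2\ge 4(K+B)\), and \cref{prop:phase-stability-sampling} gives \(\chi_S\ge\pi/2\) uniformly in \(t\). Estimate each endpoint to relative error \(\varsigma=\tau/4\). This lies below \(\sin(\pi/4)\) and below \(1/\sqrt2\), so \cref{lem:endpoint-noise-branch-stability} preserves the branch assignment. Then \cref{lem:shot-noise-quadrature-perturbation} bounds the second term by \(2kR^k\varsigma=kR^k\tau/2\).

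\emph{Shot count.} It remains to bound \(|\Xi_j|\) from below so that \cref{lem:hadamard-test} applies with relative error \(\tau/4\). The normalized signal is
\[
|\Xi|=\prod_j\frac{|1+z\zeta_j|}{1+|z|\,|\zeta_j|}.
\]
For each factor, take \(x=|z||\zeta_j|=a_j^{\pm1}\). The ratio is at least \((1-x)/(1+x)\) when \(x<1\), and \((x-1)/(x+1)\) when \(x>1\). In both cases it is at least \((1-a_j)/(1+a_j)\ge 1-2a_j\). Hence \(|\Xi|\ge(1-2/(3M))^M\ge \e^{-1}\), up to checking the small-\(M\) constants, or simply as the paper's \cref{cor:log-inv-temp-suffices}.

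\cref{lem:hadamard-test} with failure probability \(\delta/S\) per endpoint then requires
\[
n\ge 64\e^2\tau^{-2}\log(4S/\delta)=O\bigl(\tau^{-2}\log(S/\delta)\bigr).
\]
A union bound over the \(S\) endpoints gives the event of probability at least \(1-\delta\) on which both bounds hold.

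The main obstacle is the aliasing bookkeeping in the deterministic term. One must check that the \(|\zeta_j|^k\) prefactor for outside roots converts \(a^S\) into \(a^{S-k}\) relative to \(R^k\), and that the \(2k/S\) prefactor absorbs the factor \(M\). This is exactly why \(S_0\) involves \(S-k\) and why the \(4M+2\) entry is needed.
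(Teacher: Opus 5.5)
Your proposal is correct and is essentially the paper's proof: the signal floor $|\Xi_j|\ge\e^{-1}$, Hadamard tests at relative precision $\tau/4$ with failure probability $\delta/S$ and a union bound, the phase margin $\chi_S\ge\pi/2$ from $K+B\le M+\tfrac12$, and aliasing and shot-noise terms of $kR^k\tau/2$ each (the paper packages this last step as \cref{prop:fixed-root-moment-estimate} with $\varsigma=\tau/4$, which is your computation written out). Two cosmetic points: first, $(1-2/(3M))^M\ge\e^{-1}$ fails at $M=1$, where it equals $1/3$, so keep the sharper factor $(1-a_j)/(1+a_j)$ from \cref{prop:gap-lower-bound} or cite \cref{cor:log-inv-temp-suffices} directly, as you suggest; second, the third entry of $S_0$ already gives $(3M)^{-(S-k)}\le\tau/(4M)$, so the factor $M$ in the aliasing sum is absorbed without invoking $S\ge 4M$.
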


The proof combines the following supporting statements.
The first three bound the normalized signal from below, the next two control the quadrature error and the phase margin, and the last combines them with the shot-noise lemmas of \cref{sec:noise}.
We begin with an elementary inequality, which turns a product of single-mode factors into a single exponential.

\begin{lemma}
    \label[lemma]{lem:logarithmic-tanh-bound}
    For \(x\ge \frac12\log 2\), 
    \begin{equation}
        -\log\tanh x \le 3\e^{-2x}.
        \label{eq:log-tanh-simple}
    \end{equation}
\end{lemma}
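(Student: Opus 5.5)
We need to prove \(-\log\tanh x \le 3\e^{-2x}\) for \(x\ge \tfrac12\log 2\). The plan is to reduce to a one-variable inequality in \(y=\e^{-2x}\). Since
\[
\tanh x=\frac{1-\e^{-2x}}{1+\e^{-2x}}=\frac{1-y}{1+y},
\]
we have \(-\log\tanh x=\log\frac{1+y}{1-y}\). The hypothesis \(x\ge\tfrac12\log 2\) is exactly \(y\in(0,\tfrac12]\). So the claim becomes
\[
g(y)\coloneqq 3y-\log(1+y)+\log(1-y)\ge 0,\qquad 0<y\le\tfrac12 .
\]

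First I would note that \(g(0)=0\). Next I would differentiate:
\[
g'(y)=3-\frac{1}{1+y}-\frac{1}{1-y}=3-\frac{2}{1-y^2}.
\]
This is nonnegative exactly when \(1-y^2\ge 2/3\), that is, \(y\le 1/\sqrt3\approx0.577\). Since \(\tfrac12<1/\sqrt3\), \(g\) is nondecreasing on \([0,\tfrac12]\), hence \(g\ge g(0)=0\) there. This gives \cref{eq:log-tanh-simple}.

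As an alternative check, the series \(\log\frac{1+y}{1-y}=2\sum_{m\ge0}y^{2m+1}/(2m+1)\) is bounded by \(2y+\tfrac{2}{3}y^3/(1-y^2)\). At \(y\le\tfrac12\) this is at most \(2y+\tfrac{2}{3}\cdot\tfrac14\cdot\tfrac43\,y=2y+\tfrac29 y\le 3y\), which is the same conclusion.

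There is no real obstacle. The only point needing care is that the threshold \(\tfrac12\log 2\) places \(y=\tfrac12\) strictly inside the region where \(g'\ge0\). That holds because \(\tfrac12<1/\sqrt3\).
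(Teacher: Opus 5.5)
Your proof is correct and follows essentially the same route as the paper: substitute $y=\e^{-2x}\in(0,\tfrac12]$ and show $\log\frac{1+y}{1-y}\le 3y$. The only difference is the final step: the paper bounds the two terms separately via $\log(1+y)\le y$ and $-\log(1-y)\le y/(1-y)\le 2y$, while you use the monotonicity of $g$ (and your series check gives the sharper constant $20/9$).
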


The normalized signal is a product over modes, and each mode contributes a factor that is close to one when the mode is far from the contour.
The following proposition bounds the product from below; the second bound uses the inequality above.

\begin{proposition}[Gap lower bound for the normalized partition function]
    \label[proposition]{prop:gap-lower-bound}
    Fix \(\mu\) with \(\Delta(\mu)>0\), and recall 
\(s=\beta+\ii t\) and \(z=\e^{\beta\mu+\ii\vartheta}\).
    Then uniformly in \(t\) and \(\vartheta\) we have
    \begin{equation}
        |\Xi(s,z)|
        \ge
        \prod_{j=1}^{M}
        \tanh\left(\frac{\beta |E_j-\mu|}{2}\right)
        \ge
        \tanh\left(\frac{\beta\Delta(\mu)}{2}\right)^M .
        \label{eq:Xi0-gap-product-bound}
    \end{equation}
    Moreover, if \(\beta\Delta(\mu)\ge \log 2\), then
    \begin{equation}
        |\Xi(s,z)|
        \ge
        \exp\left(-3M\e^{-\beta\Delta(\mu)}\right).
        \label{eq:Xi0-gap-exponential-bound}
    \end{equation}
\end{proposition}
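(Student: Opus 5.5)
The plan is to factorize the normalized signal mode by mode and then bound a single factor. For the free Hamiltonian \(H_0\), \cref{prop:free-factorization} gives
\[
\Xi(s,z)=\frac{Z(s,z)}{Z(\beta,|z|)}=\prod_{j=1}^M \frac{1+z\e^{-sE_j}}{1+|z|\e^{-\beta E_j}} .
\]
Write \(x_j=|z|\e^{-\beta E_j}=\e^{-\beta(E_j-\mu)}>0\). The \(j\)th numerator is then \(1+x_j\e^{\ii\varphi_j}\), with phase \(\varphi_j=\vartheta-tE_j\).

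The reverse triangle inequality gives \(|1+x_j\e^{\ii\varphi_j}|\ge|1-x_j|\). Hence each factor satisfies
\[
\frac{|1+x_j\e^{\ii\varphi_j}|}{1+x_j}\ge\frac{|1-x_j|}{1+x_j}=\left|\tanh\!\left(\frac{\beta(E_j-\mu)}{2}\right)\right|=\tanh\!\left(\frac{\beta|E_j-\mu|}{2}\right).
\]
The middle equality is the identity \((1-\e^{-y})/(1+\e^{-y})=\tanh(y/2)\) with \(y=\beta(E_j-\mu)\). This bound is independent of \(t\) and \(\vartheta\), so it is uniform. Taking the product over \(j\) gives the first inequality. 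Since \(\tanh\) is increasing on \([0,\infty)\) and \(|E_j-\mu|\ge\Delta(\mu)>0\), each factor is at least \(\tanh(\beta\Delta(\mu)/2)\), which gives the second inequality.

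For the exponential form, apply \cref{lem:logarithmic-tanh-bound} with \(x=\beta|E_j-\mu|/2\). The hypothesis \(\beta\Delta(\mu)\ge\log 2\) ensures \(x\ge\frac12\log2\) for every \(j\). The lemma gives \(-\log\tanh(\beta|E_j-\mu|/2)\le 3\e^{-\beta|E_j-\mu|}\le3\e^{-\beta\Delta(\mu)}\). Summing over the \(M\) modes and exponentiating yields \(|\Xi|\ge\exp(-3M\e^{-\beta\Delta(\mu)})\).

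There is no real obstacle here. The only points needing care are that the normalization \(Z(\beta,|z|)\) factorizes with the same \(x_j\), and that \(\Delta(\mu)>0\) keeps every factor nonzero.
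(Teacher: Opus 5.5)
Your proposal is correct and follows essentially the same route as the paper: mode-by-mode factorization via \cref{prop:free-factorization}, the reverse triangle inequality $|1+a\e^{\ii\phi}|\ge|1-a|$ with $a=\e^{-\beta(E_j-\mu)}$, the $\tanh$ identity, and monotonicity of $\tanh$. The only cosmetic difference is that you apply \cref{lem:logarithmic-tanh-bound} to each factor with $x=\beta|E_j-\mu|/2$ before summing, whereas the paper applies it once with $x=\beta\Delta(\mu)/2$ to the bound $\tanh(\beta\Delta(\mu)/2)^M$; both give the same exponential bound.
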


At a logarithmic inverse temperature the exponent \(3M\e^{-\beta\Delta(\mu)}\) is at most one, so the signal floor is a constant.

\begin{corollary}[Logarithmic inverse temperature suffices]
    \label[corollary]{cor:log-inv-temp-suffices}
    If \(\Delta(\mu)>0\) and
    \begin{equation}
        \beta
        \ge
        \frac{1}{\Delta(\mu)}\log(3M),
    \end{equation}
    then, uniformly in \(t\) and \(\vartheta\),
    \begin{equation}
        \lvert\Xi(\beta+\ii t,\e^{\beta\mu+\ii\vartheta})\rvert
        \ge
        \e^{-1}.
    \end{equation}
\end{corollary}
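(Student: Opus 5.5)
The corollary follows directly from the exponential form of \cref{prop:gap-lower-bound}. We only need to check that its hypothesis holds under the stated temperature condition, and then bound the exponent.

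First, verify the hypothesis \(\beta\Delta(\mu)\ge\log 2\) of \cref{eq:Xi0-gap-exponential-bound}. By assumption \(\beta\Delta(\mu)\ge\log(3M)\). For every \(M\ge1\) we have \(3M\ge 3>2\), so \(\log(3M)\ge\log 3>\log 2\). Hence the exponential bound applies, uniformly in \(t\) and \(\vartheta\):
\[
    \bigl|\Xi(\beta+\ii t,\e^{\beta\mu+\ii\vartheta})\bigr|\ge\exp\bigl(-3M\e^{-\beta\Delta(\mu)}\bigr).
\]

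Second, bound the exponent. Since \(\e^{-\beta\Delta(\mu)}\le\e^{-\log(3M)}=1/(3M)\), we get \(3M\e^{-\beta\Delta(\mu)}\le1\). Because \(x\mapsto\e^{-x}\) is decreasing, this gives \(|\Xi|\ge\e^{-1}\). The uniformity in \(t\) and \(\vartheta\) is inherited directly from \cref{prop:gap-lower-bound}.

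There is no real obstacle in this argument. The only point requiring care is that the threshold \(\log 2\) needed for \cref{lem:logarithmic-tanh-bound}, and hence for \cref{eq:Xi0-gap-exponential-bound}, is met even in the smallest case \(M=1\), which the first step covers.
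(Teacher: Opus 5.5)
Your proposal is correct and follows the paper's proof: both note that \(\beta\Delta(\mu)\ge\log(3M)\) gives \(3M\e^{-\beta\Delta(\mu)}\le1\) and then apply the exponential bound \cref{eq:Xi0-gap-exponential-bound} of \cref{prop:gap-lower-bound}. Your explicit check that \(\log(3M)>\log 2\), so that this bound's hypothesis holds, is a step the paper's one-line proof leaves implicit.
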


We now turn to the quadrature.
The same gap controls the aliasing error of the circular rule, which decays geometrically in the number of quadrature points \(S\) at rate \(q=\e^{-\beta\Delta(\mu)}\).
The statement is given for a single circle and for an annulus.

\begin{proposition}[Disc and annulus quadrature error]
    \label[proposition]{prop:quadrature-gap-bound}
    Let \(R=\e^{-\beta\mu}\), assume \(\Delta(\mu)>0\), and write
\[
    q=\e^{-\beta\Delta(\mu)}.
\]
    Then for all \(S\ge2k\) with \(k\) a positive integer, 
    \begin{equation}
        \left|
        \mathcal{M}_R^{(k)}(s)-Q_{R,S}^{(k)}(s)
        \right|
        \le
        \frac{2k}{S}R^k
        \frac{M q^{S-k}}{1-q^S}.
        \label{eq:disc-gap-bound}
    \end{equation}
    Similarly, for an annulus \(A\) with \(\Delta_A>0\), define
\(q_A=\e^{-\beta\Delta_A}\) and
\(Q_{A,S}^{(k)}(s) \coloneqq Q_{R_{\rm out},S}^{(k)}-Q_{R_{\rm in},S}^{(k)}\).
    Then
    \begin{align}
        \left|
        \mathcal{M}_{A(R_{\mathrm{in}},R_{\mathrm{out}})}^{(k)}(s)
        -
        Q_{A,S}^{(k)}(s)
        \right|
        \le
        \frac{2k}{S}
        \left(R_{\mathrm{out}}^k+R_{\mathrm{in}}^k\right)
        \frac{M q_A^{S-k}}{1-q_A^S},
        \label{eq:annulus-gap-bound}
    \end{align}
\end{proposition}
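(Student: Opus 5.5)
The plan is to reduce everything to the single-root aliasing estimate of \cref{lem:single-root-aliasing} and sum over roots. In the free model, \cref{prop:free-factorization} gives \(P(s,w)=\prod_{j=1}^M(w-\zeta_j)\) with \(\zeta_j=\e^{-sE_j}\). The logarithmic increments in \cref{eq:midpoint-quadrature-rule} are taken from a continuous lift of \(\log P\) along the circle. Such a lift splits as a sum of continuous lifts of the individual factors \(\log(w-\zeta_j)\), so \(Q_{R,S}^{(k)}(s)=\sum_j q_{R,S}^{(k)}(\zeta_j)\). By \cref{prop:power-sum-formula}, \(\mathcal M_R^{(k)}(s)=\sum_j\zeta_j^k\mathbf 1_{\{|\zeta_j|<R\}}\). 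The triangle inequality then bounds the disc error by \(\sum_j\) of the single-root errors, each of which \cref{lem:single-root-aliasing} controls (valid since \(S\ge 2k\) and \(|\alpha_j|\neq 1\) by \(\Delta(\mu)>0\)).

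The key step is converting the per-root bound \(\frac{2k}{S}|\zeta_j|^k\frac{a_j^S}{1-a_j^S}\) into the uniform form \(\frac{2k}{S}R^k\frac{q^{S-k}}{1-q^S}\). Here \(a_j=\min\{|\alpha_j|,|\alpha_j|^{-1}\}\) and \(|\alpha_j|=\e^{-\beta(E_j-\mu)}\), so \(a_j=\e^{-\beta|E_j-\mu|}\le q\). Since \(x\mapsto x^S/(1-x^S)\) is increasing on \([0,1)\), this gives \(a_j^S/(1-a_j^S)\le q^S/(1-q^S)\) for roots inside the circle and we are done there, with slack. The case \(|\zeta_j|>R\) needs more care, since \(|\zeta_j|^k=R^k a_j^{-k}\) can be large. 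The plan is to write \(|\zeta_j|^k a_j^S=R^k a_j^{S-k}\le R^k q^{S-k}\), using \(S-k\ge k>0\). Keeping the denominator \(1-a_j^S\ge 1-q^S\), each root contributes at most \(\frac{2k}{S}R^k\frac{q^{S-k}}{1-q^S}\). Summing the \(M\) roots gives \eqref{eq:disc-gap-bound}; this is where the exponent \(S-k\) comes from.

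For the annulus, the plan is to use that \(\mathcal M_A^{(k)}=\mathcal M_{R_{\rm out}}^{(k)}-\mathcal M_{R_{\rm in}}^{(k)}\) when no root lies on either circle, which holds because \(\Delta_A>0\). Apply the disc bound on each circle with its own \(q_\mu=\e^{-\beta\Delta(\mu)}\le q_A\), using the monotonicity of \(x^{S-k}/(1-x^S)\) in \(x\) on \([0,1)\). The triangle inequality then gives \eqref{eq:annulus-gap-bound}.

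I expect the main obstacle to be justifying the additive decomposition of the lifted log-increments into single-factor contributions. The increments in \eqref{eq:midpoint-quadrature-rule} are defined from a continuous lift rather than principal logs, and the sum of continuous lifts of the factors is itself a continuous lift of \(\log P\). Because the increments are independent of the additive constant, the decomposition holds exactly, and no phase-margin condition is needed here, unlike in the noisy setting.
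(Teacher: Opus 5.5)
Your proposal is correct and follows essentially the same route as the paper: sum the single-root aliasing bounds of \cref{lem:single-root-aliasing} over the \(M\) free roots, and split into roots inside and outside the circle, with the outside roots producing the exponent \(S-k\) via \(|\zeta_j|^k a_j^S = R^k a_j^{S-k}\). The annulus bound then follows by the triangle inequality with \(\e^{-\beta\Delta(\mu_\pm)}\le q_A\); the only difference is cosmetic, in that the paper first bounds the inside-root terms more sharply by \(R^k q^{S+k}/(1-q^S)\) before coarsening to \(q^{S-k}\).
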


The gap also makes the phase margin of \cref{sec:noise} explicit: with a logarithmic gap, a number of quadrature points that is linear in \(M\) already gives a constant margin.

\begin{corollary}[Phase margin guarantee]
    \label[corollary]{cor:phase-margin-guarantee}
    If \(\beta\Delta(\mu)\ge\log(3M)\), then \(S\ge4M+2\) quadrature points suffice 
    to guarantee the phase margin satisfies \(\chi_S \ge \pi/2\).
\end{corollary}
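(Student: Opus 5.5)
We reduce the claim to \cref{prop:phase-stability-sampling}, which gives \(\chi_S\ge\pi/2\) uniformly in \(t\) as soon as \(S\ge 4(K+B)\), where
\[
K=\#\{j:E_j>\mu\}\le M,
\qquad
B=\sum_{j=1}^M\frac{\e^{-\beta|E_j-\mu|}}{1-\e^{-\beta|E_j-\mu|}} .
\]
So it suffices to show that \(\beta\Delta(\mu)\ge\log(3M)\) forces \(4(K+B)\le 4M+2\). Since \(K\le M\), this amounts to the bound \(B\le 1/2\).

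\textbf{Bounding \(B\).} Every term of \(B\) has \(|E_j-\mu|\ge\Delta(\mu)\). The map \(x\mapsto x/(1-x)\) is increasing on \([0,1)\), and \(\e^{-\beta|E_j-\mu|}\le q:=\e^{-\beta\Delta(\mu)}\le 1/(3M)\), so each term is at most \(q/(1-q)\). Summing over the \(M\) modes,
\[
B\le \frac{Mq}{1-q}\le \frac{M/(3M)}{1-1/(3M)}=\frac{M}{3M-1}.
\]
For \(M\ge1\) we have \(3M-1\ge 2M\), hence \(B\le 1/2\).

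\textbf{Conclusion.} Combining the two bounds gives \(4(K+B)\le 4M+2\le S\), which is exactly the hypothesis of \cref{prop:phase-stability-sampling}. That proposition then yields \(\chi_S(R,s)\ge\pi/2\) uniformly in \(t\), since \(B\) does not depend on \(t\). Its remaining hypotheses also hold: \(\Delta(\mu)>0\) is implied by \(\beta\Delta(\mu)\ge\log(3M)>0\), and \(S\ge 4M+2\ge 6>2\).

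The only delicate point is the constant in \(B\le1/2\), which holds with room to spare. If one wanted to avoid relying on the cited proposition, the fallback is to rerun its argument directly. By \cref{prop:centered-product-phase-bound}, each lifted increment is within \(hB\) of \(Kh\). With \(h=2\pi/S\), this gives \(|Kh|+hB\le 2\pi(M+1/2)/(4M+2)=\pi/2\), so every increment has modulus at most \(\pi/2\). Hence every \(\operatorname{Arg}\Gamma_j\) equals its lifted increment, and the margin is \(\pi-|\operatorname{Arg}\Gamma_j|\ge\pi/2\).
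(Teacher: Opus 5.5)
Your proposal is correct and follows the paper's proof: reduce to \cref{prop:phase-stability-sampling} and use \(q=\e^{-\beta\Delta(\mu)}\le 1/(3M)\) to get \(B\le Mq/(1-q)\le M/(3M-1)\le 1/2\), so that, together with \(K\le M\), \(S\ge 4M+2\) guarantees \(S\ge 4(K+B)\). You also check the side conditions \(\Delta(\mu)>0\) and \(S>2\), as the paper does.
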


The last statement adds the two sources of error.
The total error in a root moment is the sum of a quadrature term that decays in \(S\) and a statistical term, linear in the relative endpoint error \(\varsigma\), that comes from \cref{lem:shot-noise-quadrature-perturbation}.
Taking \(S\) large enough makes the first term no larger than the second.

\begin{proposition}[Fixed-parameter root-moment estimate]
    \label[proposition]{prop:fixed-root-moment-estimate}
    Assume \(\Delta(\mu)>0\), \(S\ge2k\), \(\chi_S(R,s)\ge\pi/2\), and
\(q=\e^{-\beta\Delta(\mu)}\).
    If all \(S\) endpoint estimates have relative error at most
\(\varsigma<1/\sqrt2\), then
\[
    \left|
    \mathcal{M}_R^{(k)}(s)-\widehat Q_{R,S}^{(k)}(s)
    \right|
    \le
    2kR^k \biggl(\frac{M}{S}
    \frac{q^{S-k}}{1-q^S}
    +
    \varsigma\biggr).
\]
    If in addition \(\beta\Delta(\mu)\ge \log(3M)\) and
\[
    S-k
    \ge
    \frac{\log(1+M/\varsigma)}{\log(3M)},
\]
    then
\[
    \left|
    \mathcal{M}_R^{(k)}(s)-\widehat Q_{R,S}^{(k)}(s)
    \right|
    \le
    4kR^k\varsigma.
\]
\end{proposition}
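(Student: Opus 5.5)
The first bound follows by adding the deterministic quadrature error to the shot-noise perturbation. By the triangle inequality,
\[
\bigl|\mathcal{M}_R^{(k)}(s)-\widehat Q_{R,S}^{(k)}(s)\bigr|
\le
\bigl|\mathcal{M}_R^{(k)}(s)-Q_{R,S}^{(k)}(s)\bigr|
+
\bigl|Q_{R,S}^{(k)}(s)-\widehat Q_{R,S}^{(k)}(s)\bigr| .
\]
The first term is controlled by the disc case of \cref{prop:quadrature-gap-bound}, valid since \(S\ge2k\) and \(\Delta(\mu)>0\). It gives \(\frac{2k}{S}R^k\frac{Mq^{S-k}}{1-q^S}\), which is exactly \(2kR^k\cdot\frac{M}{S}\frac{q^{S-k}}{1-q^S}\).

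For the second term, the hypotheses \(\chi_S\ge\pi/2\) and \(\varsigma<1/\sqrt2=\sin(\pi/4)\le\sin(\chi_S/2)\) allow \cref{lem:endpoint-noise-branch-stability}. Hence the noisy estimator \(\widehat Q\) uses the same logarithm branches as \(Q\). Then \cref{lem:shot-noise-quadrature-perturbation} (again using \(0<k<S\), which follows from \(S\ge2k\)) gives the bound \(2kR^k\varsigma\). Summing the two terms yields the first displayed inequality.

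For the second claim, it suffices to show \(\frac{M}{S}\frac{q^{S-k}}{1-q^S}\le\varsigma\). With \(\beta\Delta(\mu)\ge\log(3M)\) we have \(q\le 1/(3M)\le 1/3\), so \(1-q^S\ge 2/3\). Moreover \(S\ge 2k\ge 2\), so \(\frac{M}{S(1-q^S)}\le\frac{3M}{4}\le M\). It remains to show \(q^{S-k}\le\varsigma/M\). The hypothesis \(S-k\ge\log(1+M/\varsigma)/\log(3M)\) gives
\[
q^{S-k}\le(3M)^{-(S-k)}\le(1+M/\varsigma)^{-1}\le\varsigma/M .
\]
So the aliasing term is at most \(\varsigma\), and the total error is at most \(2kR^k(2\varsigma)=4kR^k\varsigma\).

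The only delicate point is the constant bookkeeping in the last step, where \(M/(S(1-q^S))\) must be bounded by \(M\). This is where the crude bounds \(S\ge2\) and \(q\le1/3\) are used. If \(M=1\), then \(\log(3M)=\log 3>0\), so there is no degeneracy and the same estimate goes through.
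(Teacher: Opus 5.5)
Your proof is correct and follows the paper's own argument: the triangle inequality combining the disc aliasing bound of \cref{prop:quadrature-gap-bound} with \cref{lem:shot-noise-quadrature-perturbation}, and then \(q\le 1/(3M)\) together with the hypothesis on \(S-k\) to make the aliasing term at most \(\varsigma\). You also supply the constant bookkeeping that the paper leaves implicit, namely \(S(1-q^S)\ge 4/3\) and \((1+M/\varsigma)^{-1}\le\varsigma/M\).
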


\begin{proof}[Proof of \cref{prop:circle-shot-count}]
    By \cref{prop:gap-lower-bound}, each endpoint satisfies
\[
    |\Xi_j|
    \ge
    \exp\left(-3M\e^{-\beta\Delta(\mu)}\right)
    \ge
    \e^{-1},
\]
    where the last inequality uses \(\beta\Delta(\mu)\ge\log(3M)\).
    Apply \cref{lem:hadamard-test} to each endpoint with failure
    probability \(\delta/S\) and relative precision \(\tau/4\).
    The lower bound \(|\Xi_j|\ge \e^{-1}\) shows that
\(n=O(\tau^{-2}\log(S/\delta))\) is sufficient.
    Taking a union bound over the \(S\) endpoints, all endpoint estimates have
    relative error at most \(\tau/4\) with probability at least \(1-\delta\).
    Since \(S\ge S_0\ge 4M+2\) and \(\beta\Delta(\mu)\ge\log(3M)\), \cref{cor:phase-margin-guarantee} gives \(\chi_S(R,s)\ge\pi/2\), which is the phase-margin hypothesis of \cref{prop:fixed-root-moment-estimate}.
    On this event, \cref{prop:fixed-root-moment-estimate} applies.
    The assumptions \(\beta\Delta(\mu)\ge\log(3M)\) and \(S\ge S_0\) are exactly
    the sufficient conditions collected above to reduce its bound to
\(4kR^k(\tau/4)=kR^k\tau\), taking the endpoint error \(\varsigma=\tau/4\).
\end{proof}

\section{Robustness to perturbation}\label{sec:perturbative-stability}
We now verify the contour conditions of \cref{thm:spices-recovery} for
Hamiltonians with interactions outside a preferred energy band, while
keeping the single-particle energies inside that band intact.
The perturbed Hamiltonian may contain arbitrary number-conserving interactions on modes outside the target window, provided those interactions do not couple the inside and outside mode algebras (as defined by the unperturbed Hamiltonian).
\begin{equation}
    H = H_{\mathrm{nonint}} + V,
    \qquad
    H_{\mathrm{nonint}} = \sum_{i \in I} E_i n_i + \sum_{j \in O} E_j n_j,
\end{equation}
where \(V\) is a number-conserving interaction supported on the modes \(O\) (made precise in \cref{subsec:spectator-sector-model}).
We call this a \emph{spectator-sector perturbation}, and we refer to objects relating to the preferred energy band as \emph{protected}.

Such a perturbation may seem quite restrictive, and indeed its special algebraic properties were chosen to facilitate a simple analysis.
The partition function factorizes into a protected in-band factor and an interacting spectator factor.
If the spectator factor has no zeros in the target annulus, then its contribution to the annular root moment vanishes exactly, regardless of the size of the perturbation.
The magnitude of the perturbation enters only when trying to ensure that the annulus remains free of zeros from the interacting sector and in the numerical conditioning of the contour estimator.
Proofs of the supporting statements are again in \cref{app:proofs}.

\subsection{Spectator-sector model}\label{subsec:spectator-sector-model}

We continue with the energy window \(\Omega=[\Omega_{\min},\Omega_{\max}]\) as before, and assume that no single-particle energy lies at either endpoint.
Let
\[
I=\{j:\Omega_{\min}<E_j<\Omega_{\max}\},
\qquad
O=\{1,\ldots,M\}\setminus I,
\]
and write
\[
M_I=|I|,
\qquad
M_O=|O|.
\]
The occupation basis gives a tensor-product decomposition of the full Hilbert space
\[
\mathcal F
=
\mathcal F_I\otimes\mathcal F_O.
\]
Under this decomposition,
\begin{align*}
    H_0
    &=
    H_I\otimes\I+\I\otimes H_O,
    &
    H_I
    &=
    \sum_{j\in I}E_jn_j,
    &
    H_O
    &=
    \sum_{j\in O}E_jn_j,\\
    N
    &=
    N_I\otimes\I+\I\otimes N_O,
    &
    N_I
    &=
    \sum_{j\in I}n_j,
    &
    N_O
    &=
    \sum_{j\in O}n_j.
\end{align*}

We consider perturbations \(V_O\) on \(\mathcal F_O\) that are number conserving and normalized, i.e.
\[
[V_O,N_O]=0,
\qquad
\|V_O\|=1.
\]
For real \(\lambda\), the interacting Hamiltonian \(H_\lambda\) can be written as
\begin{equation}
    H_\lambda
    =
    H_I\otimes\I
    +
    \I\otimes(H_O+\lambda V_O).
    \label{eq:spectator-H-lambda}
\end{equation}
Thus the perturbation
\[
V=\I\otimes V_O
\]
is supported outside the window as defined by the eigenstates of the unperturbed Hamiltonian $H_0$.
The operator \(V_O\) may contain arbitrary number-conserving interactions
among the outside modes, including density--density interactions.

Define the partition function of the interacting model as
\begin{equation}
    Z_\lambda(s,z)
    =
    \Tr\left(z^N\e^{-sH_\lambda}\right).
    \label{eq:spectator-partition-function}
\end{equation}
The partition function factorizes into an inside and an outside factor (\cref{prop:spectator-factorization}).
Here and below, \(\Tr_I\) and \(\Tr_O\) denote the traces over \(\mathcal F_I\) and \(\mathcal F_O\) of operators acting on those factors alone.
The inside and outside factors are given by
\begin{align}
    Z_I(s,z)
    &=
    \prod_{j\in I}
    \left(1+z\e^{-sE_j}\right),
    \label{eq:inside-Z-factor}\\
    Z_{O,\lambda}(s,z)
    &=
    \Tr_O\left(
    z^{N_O}\e^{-s(H_O+\lambda V_O)}
    \right).
    \label{eq:outside-Z-factor}
\end{align}

\subsection{Recovery under spectator-sector perturbations}

\begin{theorem}[SPICES recovery under spectator-sector perturbations]
    \label[theorem]{thm:spectator-spices-recovery}
    Consider the model \(H_\lambda\) of \cref{subsec:spectator-sector-model}, with \(I \neq \emptyset\)
    and window width \(L = \Omega_{\max} - \Omega_{\min} > 0\). Fix \(0 < \epsilon <
    L/2\) and an inverse-polynomial failure probability \(0 < \delta < 1\), and set
\[
    T=\left\lceil\frac{24L}{\epsilon}\right\rceil, \qquad t_{\max}=\frac{\pi T}{L}\,.
\]
    Suppose that
    \begin{equation}
        \beta\Delta_\Omega\ge\log(3M),
        \qquad
        |\lambda|\sqrt{\beta^2+t_{\max}^2}<\log\frac43.
        \label{eq:spectator-spices-conditions}
    \end{equation}
    For \(\mu\in\{\Omega_{\min},\Omega_{\max}\}\), assume that preparation of
\[
    \rho_{\beta,\mu}^{\lambda} = \frac{\e^{-\beta(H_\lambda - \mu N)}} {Z_\lambda(\beta, \e^{\beta
    \mu})}
\]
    and controlled evolutions \(\e^{-\ii t H_\lambda + \ii \vartheta N}\) at the required sampling times
\(|t| \leq t_{\max}\) and angles are available to any required inverse-polynomial operational
    accuracy with polynomial overhead. Also assume that \(L\), \(\epsilon^{-1}\), and
\(\e^{2\beta L}\) are polynomially bounded in \(M\).
    
    Then all five conditions of \cref{thm:spices-recovery} hold with
\[
    q=\frac 2 3\,, \qquad b = \frac 1 {4\e}\,.
\]
    Consequently, SPICES returns a finitely represented probability measure \(\widehat\nu\) such that,
    with probability at least \(1 - \delta\),
\[
    W_1(\widehat \nu, \nu_\Omega) \leq \epsilon, \qquad \nu_\Omega= \frac{\sum_{j \in
    I}\e^{-\beta (E_j - \Omega_{\min})}\delta_{E_j}} {\sum_{j \in I}\e^{-\beta(E_j -
    \Omega_{\min})}},
\]
    using polynomially many gates and samples and polynomial classical
    postprocessing, with the resource bound of \cref{thm:spices-recovery}.
\end{theorem}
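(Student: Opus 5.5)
The plan is to check the five hypotheses of \cref{thm:spices-recovery} one at a time and then invoke that theorem. Conditions 4 and 5 are assumed outright, apart from $(1-q)^{-1}=3$ and $b^{-1}=4\e$, which are constants. All the work is in conditions 1--3. The tool is the factorization $P=P_IP_{O,\lambda}$ (the spectator analogue of \cref{lem:spse-factorization}, i.e.\ \cref{prop:spectator-factorization}) together with a perturbative comparison of $Z_{O,\lambda}$ with the free spectator factor $Z_{O,0}=\prod_{j\in O}(1+z\e^{-sE_j})$.

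\textbf{The free spectator factor.} At $\lambda=0$ every spectator energy lies outside $\Omega$, and its distance from both endpoints is at least $\Delta_\Omega$. Since $\beta\Delta_\Omega\ge\log(3M)$, each free root satisfies $\min\{|\zeta|/R_\mu,R_\mu/|\zeta|\}\le 1/(3M)$. On the circle $|w|=R_\mu$, each factor of $P_{O,0}$ is therefore bounded below by the corresponding radial factor.

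\textbf{The perturbative comparison.} The main step is to control $P_{O,\lambda}/P_{O,0}$ on the two circles. Because $H_O$ and $V_O$ both commute with $N_O$, I would write
\[
z^{N_O}\e^{-s(H_O+\lambda V_O)}=z^{N_O}\e^{-sH_O}\,\mathcal T(s),
\]
where $\mathcal T(s)$ is the Dyson/Duhamel interaction-picture factor. Its deviation from the identity is controlled by $|s\lambda|\,\|V_O\|$, and $|s|=\sqrt{\beta^2+t^2}\le\sqrt{\beta^2+t_{\max}^2}$ for $t\in[0,t_{\max}]$.

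The difficulty is that $\e^{-sH_O}$ and $\e^{-s\lambda V_O}$ do not commute. I would first try a Golden--Thompson-free bound of the form
\[
\bigl|Z_{O,\lambda}(s,z)-Z_{O,0}(s,z)\bigr|\le\bigl(\e^{|\lambda||s|}-1\bigr)\,Z_{O,0}(\beta,|z|),
\]
obtained from the Duhamel expansion. Each term of that expansion is a trace of a product whose absolute value is at most the trace of the positive operator $|z|^{N_O}\e^{-\beta H_O}$ times $(|\lambda||s|)^n/n!$. Under $|\lambda||s|<\log(4/3)$ the prefactor is less than $1/3$.

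To convert this into a Rouché statement on the circle, I need $|Z_{O,0}(s,z)|\ge\tfrac12 Z_{O,0}(\beta,|z|)$. This is exactly \cref{prop:gap-lower-bound} and \cref{cor:log-inv-temp-suffices}, which give the ratio at least $\e^{-1}$. Since $1/3<\e^{-1}$, Rouché applies with a margin. (If the constants fail, I would adjust them and absorb the change into $b$.)

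\textbf{From Rouché to conditions 1 and 2.} Rouché's theorem on $|w|=R_\mu$ shows that $P_{O,\lambda}$ and $P_{O,0}$ have the same number of roots inside each circle. It follows that $P_{O,\lambda}$ has no roots in $A_\Omega$, which gives condition 1 with $d=M_I$.

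For condition 2 I would repeat the argument on the scaled circles $|w|=R_\mu q^{\pm1}$ and on every intermediate radius $R_\mu r$ with $r\in[2/3,3/2]$. Those circles still lie in a region where the free gap gives $|Z_{O,0}(s,z)|\ge\e^{-1}Z_{O,0}(\beta,|z|)$, since $\beta\Delta$ shrinks by only $\log(3/2)$. Hence no root lies in the band $\tfrac23R_\mu<|w|<\tfrac32R_\mu$. The protected roots $\e^{-sE_i}$ are separated from the circles by factor $1/(3M)\le 2/3$, so $q=2/3$ works.

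\textbf{Condition 3.} The normalized signal factorizes as
\[
\Xi=\Xi_I\cdot\frac{Z_{O,\lambda}(s,z)}{Z_{O,\lambda}(\beta,|z|)}.
\]
\cref{prop:gap-lower-bound} gives $|\Xi_I|\ge\e^{-1}$. For the spectator ratio, the numerator is at least $(\e^{-1}-1/3)Z_{O,0}(\beta,|z|)$, and the denominator is at most $\tfrac43Z_{O,0}(\beta,|z|)$ by the same Duhamel bound at $s=\beta$. This gives a constant floor, which I would tighten to $1/4$ if possible, so that $b=1/(4\e)$.

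\textbf{Main obstacle.} I expect the hardest step to be the non-commutative trace bound in the perturbative comparison, namely controlling $|Z_{O,\lambda}(s,z)-Z_{O,0}(s,z)|$ by the real positive $Z_{O,0}(\beta,|z|)$ uniformly in the complex fugacity phase. I would first try to handle it with Hölder's inequality applied term-by-term in the Duhamel series.
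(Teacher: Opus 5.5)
Your overall architecture matches the paper's proof. You check the five hypotheses of \cref{thm:spices-recovery} using the factorization $P_\lambda=P_IP_{O,\lambda}$, a bound on $Z_{O,\lambda}-Z_{O,0}$ relative to the positive quantity $Z_{O,0}(\beta,|z|)$, Rouch\'e on the contour circles and on shifted collar circles, and a factorized signal ratio. Your perturbative estimate is a valid variant. Expand the full Dyson series, distribute $z^{N_O}$ over the factors (using $[H_O,N_O]=[V_O,N_O]=0$), and apply the multi-factor H\"older inequality. This bounds the $n$th term by $(|\lambda||s|)^nZ_{O,0}(\beta,|z|)/n!$, so $|Z_{O,\lambda}(s,z)-Z_{O,0}(s,z)|\le(\e^{|\lambda||s|}-1)Z_{O,0}(\beta,|z|)$. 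The paper instead takes one Duhamel step with two-factor H\"older, controls $Z_{O,\lambda}(\beta,|z|)$ by Weyl's inequality, and reaches the same bound after convexity; your route avoids Weyl. Condition~1 then goes through as you say, since $\frac13<\e^{-1}$.

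The gap is in the constants. The statement fixes $q=\frac23$ and $b=\frac1{4\e}$, so adjusting them is not an option; the downstream recovery claim would survive any constants $q<1$, $b>0$, but the theorem asserts these values. For condition~3, your chain yields only $|\Xi_\lambda|\ge\e^{-1}\cdot\frac34\bigl(\e^{-1}-\frac13\bigr)\approx0.0095$, about a tenth of $\frac1{4\e}$. Two ingredients are missing.

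First, the $\tanh$-product form of \cref{prop:gap-lower-bound} gives $|\Xi_{O,0}(s,z)|=|Z_{O,0}(s,z)|/Z_{O,0}(\beta,|z|)\ge\tanh(\beta\Delta_\Omega/2)^{M_O}\ge\bigl(\frac{3M-1}{3M+1}\bigr)^M\ge\frac12$. This is exactly the $\frac12$ you said you needed, and \cref{cor:log-inv-temp-suffices} does not supply it.

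Second, pay $\e^{-1}$ only once. Write $\Xi_\lambda=\Xi_0\,r_\lambda\,Z_{O,0}(\beta,|z|)/Z_{O,\lambda}(\beta,|z|)$ and use the joint bound $|\Xi_0|=|\Xi_I||\Xi_{O,0}|\ge\e^{-1}$. Then $|r_\lambda-1|\le(\e^{|\lambda||s|}-1)/|\Xi_{O,0}|<\frac23$ and $Z_{O,0}(\beta,|z|)/Z_{O,\lambda}(\beta,|z|)\ge\e^{-\beta|\lambda|}>\frac34$, which gives $\frac13\cdot\frac34\cdot\e^{-1}=\frac1{4\e}$.

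In condition~2, your justification of $|\Xi_{O,0}|\ge\e^{-1}$ on the shifted circles does not hold up. \cref{cor:log-inv-temp-suffices} does not apply there, because the spectator gap $\Delta'$ to the shifted chemical potential only satisfies $\beta\Delta'\ge\log(2M)$. The exponential form of \cref{prop:gap-lower-bound} then gives only $\exp(-3M_O/(2M))$, which can be near $\e^{-3/2}<\frac13$ and would not beat the perturbation. The $\tanh$ product instead gives $\bigl(\frac{2M-1}{2M+1}\bigr)^M\ge\frac13$. That suffices because your perturbation bound is strictly below $\frac13Z_{O,0}(\beta,|z|)$; this is the computation in \cref{cor:constant-perturbed-quadrature-gap}. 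With these repairs your argument proves the theorem with the stated $q$ and $b$.
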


\begin{remark}[Width of the window]
    The hypotheses of \cref{thm:spectator-spices-recovery} force the window to be narrow relative to the endpoint gap.
    Since \(I\neq\emptyset\), some protected energy lies within \(L/2\) of the nearer endpoint, so \(\Delta_\Omega\le L/2\).
    Conversely, if \(\e^{2\beta L}\le M^{c}\) then \(\beta L\le (c/2)\log M\), and together with \(\beta\Delta_\Omega\ge\log(3M)\) this gives \(L/\Delta_\Omega<c/2\).
    Hence \(2\le L/\Delta_\Omega<c/2\), so the window is wider than the gap by at most a constant factor.
    This comes from certifying radial separation and the signal floor through the logarithmic-temperature condition, not from \cref{thm:spices-recovery}, which takes \(q\) and \(b\) as inputs; we do not know whether the condition is necessary.
    The same constraint arises whenever \cref{cor:log-inv-temp-suffices} is combined with the polynomial bound on \(\e^{2\beta L}\) in \cref{thm:spices-recovery}.
\end{remark}

The proof of \cref{thm:spectator-spices-recovery} checks the five conditions of \cref{thm:spices-recovery} in turn, and the supporting statements fall into three groups.
The first shows that, as long as the spectator factor has no zeros in the annulus, it cannot change the window moments.
The second gives a Rouch\'e-type criterion, and a bound on the perturbation, that guarantee this for a finite time.
The third transfers the signal and quadrature bounds of \cref{sec:free-gapped} from the free polynomial to the perturbed one.

\begin{proposition}[Spectator-sector factorization]
    \label[proposition]{prop:spectator-factorization}
    For every \(s,z\in\C\),
    \begin{equation}
        Z_\lambda(s,z)
        =
        Z_I(s,z)Z_{O,\lambda}(s,z).
        \label{eq:spectator-Z-factorization}
    \end{equation}
\end{proposition}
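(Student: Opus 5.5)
The plan is to follow the argument of \cref{lem:spse-factorization}, with the unitary \(U\) taken to be the identity. The point is that \(H_\lambda\) and \(N\) are both sums of operators acting on the two tensor factors of \(\mathcal F=\mathcal F_I\otimes\mathcal F_O\) separately.

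First, I would record the commutation relations needed to split the exponentials. The summands \(H_I\otimes\I\) and \(\I\otimes(H_O+\lambda V_O)\) act on different tensor factors, so they commute. Likewise \(N_I\otimes\I\) commutes with \(\I\otimes N_O\). Moreover, \(H_I\) commutes with \(N_I\), since both are diagonal in the occupation basis. Also \(H_O+\lambda V_O\) commutes with \(N_O\), because \(H_O\) is diagonal in the occupation basis and \([V_O,N_O]=0\) by assumption.

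With these relations, \(\e^{-sH_\lambda}=\e^{-sH_I}\otimes\e^{-s(H_O+\lambda V_O)}\). For \(z^N\), I fix any branch \(z^N=\e^{N\log z}\). The operator \(N\) has integer spectrum, so \(z^N\) is independent of the branch choice, and it splits as \(z^{N}=z^{N_I}\otimes z^{N_O}\).

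Multiplying the two factorizations gives
\[
z^N\e^{-sH_\lambda}=\bigl(z^{N_I}\e^{-sH_I}\bigr)\otimes\bigl(z^{N_O}\e^{-s(H_O+\lambda V_O)}\bigr).
\]
The trace of a tensor product is the product of the partial traces, so \(Z_\lambda=\Tr_I\{z^{N_I}\e^{-sH_I}\}\,Z_{O,\lambda}\).

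It remains to identify the first factor with \(Z_I\) as defined in \cref{eq:inside-Z-factor}. This is \cref{prop:free-factorization} applied to the free Hamiltonian \(H_I\) on the modes \(I\). Directly, in the occupation basis the trace factorizes over the modes \(j\in I\), and each mode contributes \(1+z\e^{-sE_j}\).

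None of these steps is a real obstacle. The only point that needs care is confirming that \(z^N\) is well defined for complex \(z\), including \(z=0\), where \(z^N\) is interpreted as the projector onto the vacuum. This is handled by noting that \(z^N=\sum_n z^n\Pi_n\), where \(\Pi_n\) is the projector onto the sector with \(N=n\), and that this expression is polynomial in \(z\).
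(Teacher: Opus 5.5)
Your proposal is correct and matches the paper's proof. The paper invokes \cref{lem:spse-factorization} with \(U=\I\) and spectator Hamiltonian \(H_O+\lambda V_O\), which commutes with \(N_O\) since \([H_O,N_O]=[V_O,N_O]=0\); you unpack that lemma's tensor-product argument inline, with some extra care about defining \(z^N\) for complex and zero \(z\).
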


We now define the inverse-fugacity polynomials
\begin{align*}
    P_I(s,w)
    &=
    w^{M_I}Z_I\!\left(s,-\frac1w\right),\\
    P_{O,\lambda}(s,w)
    &=
    w^{M_O}Z_{O,\lambda}\!\left(s,-\frac1w\right),\\
    P_\lambda(s,w)
    &=
    w^MZ_\lambda\!\left(s,-\frac1w\right).
\end{align*}
Proposition~\ref{prop:spectator-factorization} immediately implies
\begin{equation}
    P_\lambda(s,w)
    =
    P_I(s,w)P_{O,\lambda}(s,w).
    \label{eq:spectator-P-factorization}
\end{equation}
The outside factor is a degree-\(M_O\) polynomial because
\([V_O,N_O]=0\).
It need not factor into terms with zeros related to single-particle energies, but still has \(M_O\) complex roots.
For an explicit polynomial \(P\), we write \(\mathcal{M}_A^{(k)}[P](s)\) for the root moment of \(P(s,\cdot)\) over the compact region \(A\) in the complex plane.

For later conditioning estimates, also define the normalized signal
\begin{equation}
    \Xi_\lambda(s,z)
    =
    \frac{Z_\lambda(s,z)}{Z_\lambda(\beta,|z|)}.
    \label{eq:spectator-normalized-signal}
\end{equation}
At \(\lambda=0\), this agrees with the free normalized signal \(\Xi_0\).

\begin{proposition}[Exact protection of the window moments]
    \label[proposition]{prop:exact-window-protection}
    Fix \(s=\beta+\ii t\), and suppose that \(P_{O,\lambda}(s,\cdot)\) has no zeros in the closed annulus \(\overline{A_\Omega}\). 
    Then for every positive integer \(k\),
    \begin{equation}
        \mathcal{M}_{A_\Omega}^{(k)}[P_\lambda](s)
        =
        \sum_{j\in I}\e^{-ksE_j}.
        \label{eq:exact-protected-window-moment}
    \end{equation}
\end{proposition}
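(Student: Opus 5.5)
The plan is to use the factorization \cref{eq:spectator-P-factorization}, $P_\lambda(s,w)=P_I(s,w)P_{O,\lambda}(s,w)$, and the additivity of the logarithmic differential along the contour. We need $P_\lambda(s,\cdot)$ to be nonzero on $\partial A_\Omega$ so that the root moment of \cref{def:root-moments} is defined. For the factor $P_I$, its roots are $\zeta_j=\e^{-sE_j}$ with $|\zeta_j|=\e^{-\beta E_j}$. Since $E_j\in(\Omega_{\min},\Omega_{\max})$ strictly for $j\in I$, these moduli lie strictly between $R_{\mathrm{in}}$ and $R_{\mathrm{out}}$, so no root of $P_I$ lies on either circle. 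For the factor $P_{O,\lambda}$, the hypothesis excludes roots on the closed annulus and hence on its boundary. Therefore $P_\lambda$ has no zeros on $\partial A_\Omega$.

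Next, I would express the moment through the ordinary contour integral \cref{eq:alt-root-moment}. Using $P_\lambda'/P_\lambda=P_I'/P_I+P_{O,\lambda}'/P_{O,\lambda}$ on $\partial A_\Omega$, this gives
\[
\mathcal{M}_{A_\Omega}^{(k)}[P_\lambda](s)=\mathcal{M}_{A_\Omega}^{(k)}[P_I](s)+\mathcal{M}_{A_\Omega}^{(k)}[P_{O,\lambda}](s).
\]
Equivalently, in the Stieltjes formulation, a continuous lift of $\log P_\lambda$ along each boundary circle is the sum of continuous lifts of $\log P_I$ and $\log P_{O,\lambda}$, so the increments add.

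Finally, I would apply \cref{prop:power-sum-formula} to each term. For the second term there are no roots of $P_{O,\lambda}$ in $\overline{A_\Omega}$, so the power sum is empty and the term vanishes. For the first term, every root $\e^{-sE_j}$ with $j\in I$ lies in $A_\Omega$ by the modulus computation above, so the term equals $\sum_{j\in I}\e^{-ksE_j}$, counted with multiplicity. This gives \cref{eq:exact-protected-window-moment}.

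There is no real obstacle in this argument. The only point needing care is checking that the zero-free condition holds on the boundary for both factors. That requires both the strict window inclusion for $I$, which follows from the assumption that no energy lies at an endpoint, and the closedness of the annulus in the hypothesis for the spectator factor.
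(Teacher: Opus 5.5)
Your proposal is correct and matches the paper's proof: factor $P_\lambda=P_I P_{O,\lambda}$, use additivity of $\dd\log$ on the boundary to split the root moment into two terms, and apply the power-sum formula to each, with the spectator term vanishing. Your explicit check that neither factor vanishes on $\partial A_\Omega$ is a step the paper leaves implicit. It uses strict window inclusion for the roots of $P_I$ and closedness of the annulus in the hypothesis for $P_{O,\lambda}$, and it is exactly what makes the root moment well defined.
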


Exact protection requires the spectator factor to be zero-free in the annulus.
The next theorem gives a condition under which this persists after the perturbation is switched on.
It compares \(P_{O,\lambda}\) with the unperturbed spectator polynomial \(P_{O,0}\), which is zero-free in the annulus because the outside energies lie outside the window, and applies Rouch\'e's theorem on the two boundary circles.

\begin{theorem}[Finite-time spectator stability]
    \label[theorem]{thm:finite-time-spectator-stability}
    Fix \( t_{\max}>0\) and \(s=\beta+\ii t\).  
    Suppose that, for \(|t|\le t_{\max}\) and
\(|w|\in\{R_{\mathrm{in}},R_{\mathrm{out}}\}\),
    \begin{equation}
        \left|
        P_{O,\lambda}(s,w)-P_{O,0}(s,w)
        \right|
        <
        \left|P_{O,\lambda}(s,w)\right|
        +
        \left|P_{O,0}(s,w)\right|.
        \label{eq:symmetric-rouche-condition}
    \end{equation}
    Then for every positive integer \(k\) and uniformly for \(|t|\le t_{\max}\)
    \begin{equation}
        \mathcal{M}_{A_\Omega}^{(k)}[P_\lambda](s)
        =
        \sum_{j\in I}
        \e^{-k s E_j}.
        \label{eq:finite-time-protected-window-moment}
    \end{equation}
\end{theorem}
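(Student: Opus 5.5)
The plan is to reduce \cref{thm:finite-time-spectator-stability} to \cref{prop:exact-window-protection}. By the factorization \cref{eq:spectator-P-factorization}, it suffices to show that for each \(|t|\le t_{\max}\) the spectator polynomial \(P_{O,\lambda}(s,\cdot)\) has no zeros in the closed annulus \(\overline{A_\Omega}\). Proposition~\ref{prop:exact-window-protection} then gives \cref{eq:finite-time-protected-window-moment} pointwise in \(t\), and hence uniformly on \([-t_{\max},t_{\max}]\).

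\textbf{Step 1 (the unperturbed factor).} The spectator modes in \(O\) have energies \(E_j\) outside \([\Omega_{\min},\Omega_{\max}]\), and none sits at an endpoint. By \cref{prop:free-factorization} applied to the modes in \(O\), we have \(P_{O,0}(s,w)=\prod_{j\in O}(w-\e^{-sE_j})\). Each root has modulus \(\e^{-\beta E_j}\), which is independent of \(t\) and differs from both \(R_{\mathrm{in}}\) and \(R_{\mathrm{out}}\). Hence \(P_{O,0}(s,\cdot)\) has no zeros on either circle. Moreover, the number of its zeros in the open disc \(|w|<R\) is \(N_R=\#\{j\in O:E_j>\mu\}\), and \(N_{R_{\mathrm{out}}}=N_{R_{\mathrm{in}}}\) because no outside energy lies in \(\Omega\).

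\textbf{Step 2 (symmetric Rouch\'e).} I would invoke the symmetric (Glicksberg/Estermann) form of Rouch\'e's theorem on each boundary circle \(|w|=R\), \(R\in\{R_{\mathrm{in}},R_{\mathrm{out}}\}\). Condition \cref{eq:symmetric-rouche-condition} forces both polynomials to be nonzero on the circle: if \(P_{O,\lambda}(w)=0\), the inequality would read \(|P_{O,0}|<|P_{O,0}|\), and symmetrically for \(P_{O,0}\). The condition also implies that the ratio \(P_{O,\lambda}/P_{O,0}\) never takes values in \((-\infty,0]\) on the circle. A strict triangle inequality \(|a-b|<|a|+|b|\) fails exactly when \(a/b\) is a nonpositive real. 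Therefore \(\log(P_{O,\lambda}/P_{O,0})\) has a single-valued principal branch along the circle, its winding number is zero, and the argument principle gives equal zero counts for \(P_{O,\lambda}\) and \(P_{O,0}\) inside each disc. Both polynomials are degree \(M_O\) and entire, so the contour argument applies directly.

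\textbf{Step 3 (conclude).} Subtracting the disc counts, the number of zeros of \(P_{O,\lambda}(s,\cdot)\) in the open annulus equals \(N_{R_{\mathrm{out}}}-N_{R_{\mathrm{in}}}=0\), and by Step 2 there are none on the boundary circles. Hence \(P_{O,\lambda}(s,\cdot)\) is zero-free on \(\overline{A_\Omega}\), and \cref{prop:exact-window-protection} applies for each \(k\ge1\).

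I expect the main obstacle to be the careful justification of the symmetric Rouch\'e step: showing that the strict inequality both excludes zeros on the contour and excludes the negative-real ratio. The remaining steps are bookkeeping. No continuity in \(t\) is needed, since the argument is pointwise in \(s\).
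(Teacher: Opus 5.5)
Your proposal is correct and follows the paper's proof essentially step for step. You reduce to \cref{prop:exact-window-protection}, use the symmetric Rouch\'e condition to rule out zeros on each circle and negative-real values of \(P_{O,\lambda}/P_{O,0}\), conclude the zero counts agree inside each circle, and then use the equal unperturbed counts to show the perturbed spectator polynomial is zero-free on \(\overline{A_\Omega}\). The only cosmetic difference is that the paper excludes the antiparallel case via a straight-line homotopy, whereas you read it off directly from the equality case of the triangle inequality.
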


Some time constraint of this kind cannot be avoided for general spectator-sector perturbations when the gap is small: \cref{subsec:interacting-long-time-obstruction} gives an interacting example in which a spectator root enters the target annulus at a time of order \(1/|\lambda|\).
That example requires the spectator energies to be only weakly separated from the window.
It does not show that a finite horizon is necessary at larger gaps, where the finiteness of the horizon may be an artifact of the Rouch\'e argument used here.

To state the Rouch\'e condition \cref{eq:symmetric-rouche-condition} quantitatively, and to control the magnitude and phase of the perturbed signal below, we define the ratio of the perturbed to the unperturbed spectator polynomial,
\begin{equation}
    r_\lambda(s,w)
    \coloneqq
    \frac{P_{O,\lambda}(s,w)}
    {P_{O,0}(s,w)},
    \label{eq:ratio-r-lambda}
\end{equation}
(we write \(r_\lambda(w)\) when \(s\) is fixed), and the relative size of the perturbation on the two contour circles,
\begin{equation}
    \eta_\lambda(t_{\max})
    \coloneqq
    \sup_{\substack{|t|\le t_{\max}\\
    |w|\in\{R_{\mathrm{in}},R_{\mathrm{out}}\}}}
    \bigl|
    r_\lambda(s,w)
    -
    1
    \bigr|,
    \qquad s=\beta+\ii t.
    \label{eq:relative-outside-perturbation}
\end{equation}
The gap \(\Delta_\Omega>0\) ensures that the denominator is nonzero on both circles.
The Rouch\'e condition then follows from the simple requirement \(\eta_\lambda(t_{\max})<1\).

\begin{corollary}[Relative perturbation criterion]
    \label[corollary]{cor:relative-spectator-stability}
    If
    \begin{equation}
        \eta_\lambda( t_{\max})<1,
        \label{eq:rouche-smallness}
    \end{equation}
    then the conclusion of
    Theorem~\ref{thm:finite-time-spectator-stability} holds.
\end{corollary}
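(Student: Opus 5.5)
We show that \(\eta_\lambda(t_{\max})<1\) implies the symmetric Rouch\'e condition \cref{eq:symmetric-rouche-condition} at every \(s=\beta+\ii t\) with \(|t|\le t_{\max}\) and every \(w\) on the two circles; \cref{thm:finite-time-spectator-stability} then gives the conclusion. The whole argument is a pointwise inequality on the contours, with no further analysis.

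First, we record that \(P_{O,0}(s,w)\neq0\) on both circles. For the free spectator, \(P_{O,0}(s,w)=\prod_{j\in O}(w-\e^{-sE_j})\), and \(|\e^{-sE_j}|=\e^{-\beta E_j}\). This modulus differs from \(R_{\mathrm{in}},R_{\mathrm{out}}\) because \(\Delta_\Omega>0\), so no factor vanishes on \(|w|\in\{R_{\mathrm{in}},R_{\mathrm{out}}\}\) for any \(t\). Hence \(r_\lambda\) is well defined there, and \(\eta_\lambda(t_{\max})\) is finite.

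Next, fix such a point and write \(P_{O,\lambda}=r\,P_{O,0}\) with \(r=r_\lambda(s,w)\). Dividing the Rouch\'e condition by \(|P_{O,0}|>0\), it becomes \(|r-1|<|r|+1\). By hypothesis, \(|r-1|\le\eta_\lambda(t_{\max})<1\le|r|+1\). The inequality is therefore strict, and it holds at every point of both circles for all \(|t|\le t_{\max}\). As a side remark, \(|r-1|<1\) also forces \(r\neq0\), so \(P_{O,\lambda}\) is nonvanishing on the contours too; this is consistent with the root-moment definition, although the symmetric Rouch\'e condition already implies it.

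Finally, we invoke \cref{thm:finite-time-spectator-stability} to conclude that \(\mathcal{M}_{A_\Omega}^{(k)}[P_\lambda](s)=\sum_{j\in I}\e^{-ksE_j}\) uniformly for \(|t|\le t_{\max}\). The only point needing care is that the supremum in the definition of \(\eta_\lambda\) ranges over exactly the set of \((t,w)\) required by the theorem, namely \(|t|\le t_{\max}\) and both radii. The definition matches this set, so there is no real obstacle; the main step is the nonvanishing of \(P_{O,0}\) on the circles, which follows from the gap.
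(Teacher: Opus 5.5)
Your proposal is correct and matches the paper's proof, which shows that \(|P_{O,\lambda}-P_{O,0}|<|P_{O,0}|\le|P_{O,\lambda}|+|P_{O,0}|\) on both circles for all \(|t|\le t_{\max}\) and then applies \cref{thm:finite-time-spectator-stability}. Your version divides through by \(|P_{O,0}|\) and checks explicitly that \(P_{O,0}\) does not vanish on the circles, using the gap; the paper notes this just before the corollary.
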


Whether \(\eta_\lambda<1\) depends on the spectator sector alone.
The next proposition bounds \(\eta_\lambda\) in terms of the perturbation strength \(\lambda\), the inverse temperature, the time horizon and the gap, using only the thermal weights and number conservation, and no further structure of the perturbation.

\begin{proposition}[Perturbation bound]
    \label[proposition]{prop:outside-duhamel-bound}
    For every real \(\lambda\),
    \begin{equation}
        \eta_\lambda( t_{\max})
        \le
        \sqrt{1+\frac{ t_{\max}^2}{\beta^2}}
        \left(\e^{\beta|\lambda|}-1\right)
        \tanh\!\left(
        \frac{\beta\Delta_{\Omega}}{2}
        \right)^{-M_O}.
        \label{eq:outside-duhamel-relative-bound}
    \end{equation}
\end{proposition}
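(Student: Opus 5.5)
The plan is to bound the numerator and denominator of \(r_\lambda-1\) separately. On a circle \(|w|=R_\mu\) with \(\mu\in\{\Omega_{\min},\Omega_{\max}\}\) we have \(z=-1/w\) and \(|z|=\e^{\beta\mu}\). The monomial prefactors \(w^{M_O}\) cancel, so
\[
r_\lambda(s,w)-1=\frac{Z_{O,\lambda}(s,z)-Z_{O,0}(s,z)}{Z_{O,0}(s,z)} .
\]
For the denominator, \(Z_{O,0}\) is the free partition function of the outside modes. Every outside energy satisfies \(|E_j-\mu|\ge\Delta_\Omega\), so \cref{prop:gap-lower-bound} applied to the modes \(O\) alone gives
\[
|Z_{O,0}(s,z)|\ge \tanh(\beta\Delta_\Omega/2)^{M_O}\,Z_{O,0}(\beta,|z|),
\]
uniformly in \(t\) and the angle. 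It then suffices to show that the numerator is at most \(\tfrac{|s|}{\beta}(\e^{\beta|\lambda|}-1)\,Z_{O,0}(\beta,|z|)\). Since \(|s|/\beta=\sqrt{1+t^2/\beta^2}\le\sqrt{1+t_{\max}^2/\beta^2}\), this yields \cref{eq:outside-duhamel-relative-bound}.

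For the numerator, I use number conservation to split into particle-number sectors of \(\mathcal F_O\). Since \([V_O,N_O]=[H_O,N_O]=0\), we have \(Z_{O,\lambda}(s,z)=\sum_n z^n\Tr_n \e^{-s(H_O+\lambda V_O)}\), and it suffices to prove, sector by sector,
\[
\bigl|\Tr_n \e^{-sH_\lambda'}-\Tr_n\e^{-sH_0'}\bigr|\le \frac{|s|}{\beta}(\e^{\beta|\lambda|}-1)\Tr_n \e^{-\beta H_0'},
\qquad H_{\lambda}'=H_O+\lambda V_O,
\]
and then sum with weights \(|z|^n\). I write the difference as \(\int_0^\lambda \partial_{\lambda'}\Tr_n\e^{-sH'_{\lambda'}}\dd\lambda'\). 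The Duhamel formula gives
\[
\partial_{\lambda'}\e^{-sH'_{\lambda'}}=-s\int_0^1\e^{-(1-u)sH'_{\lambda'}}\,V_O\,\e^{-usH'_{\lambda'}}\dd u .
\]
So \(|\partial_{\lambda'}\Tr_n\e^{-sH'_{\lambda'}}|\le|s|\,\|V_O\|\int_0^1\|\e^{-(1-u)sH'_{\lambda'}}\|_{p}\,\|\e^{-usH'_{\lambda'}}\|_{p'}\dd u\) by Hölder's inequality with \(p=1/(1-u)\) and \(p'=1/u\). Because \(H'_{\lambda'}\) is Hermitian, \(\e^{-xsH'}=\e^{-x\beta H'}\e^{-\ii xtH'}\) with the two factors commuting, so the Schatten norms only see the real part. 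This gives \(\|\e^{-(1-u)sH'}\|_p=(\Tr_n\e^{-\beta H'})^{1-u}\) and \(\|\e^{-usH'}\|_{p'}=(\Tr_n\e^{-\beta H'})^{u}\), hence \(|\partial_{\lambda'}\Tr_n\e^{-sH'_{\lambda'}}|\le|s|\Tr_n\e^{-\beta H'_{\lambda'}}\).

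Next, since \(-|\lambda'|\le\lambda' V_O\le|\lambda'|\), Weyl monotonicity of eigenvalues gives \(\Tr_n\e^{-\beta H'_{\lambda'}}\le\e^{\beta|\lambda'|}\Tr_n\e^{-\beta H_0'}\). Integrating \(|s|\e^{\beta|\lambda'|}\) over \(\lambda'\in[0,\lambda]\) gives exactly \(\tfrac{|s|}{\beta}(\e^{\beta|\lambda|}-1)\), as required. Summing over sectors with weights \(|z|^n\) reconstructs \(Z_{O,0}(\beta,|z|)\), which cancels against the denominator bound.

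The step I expect to be the main obstacle is the trace-norm control of the Duhamel integrand with non-commuting factors and a complex \(s\). I plan to handle it by the Hölder split above, taking both exponentials at the same \(\lambda'\) so that both factors reduce to powers of the same real trace, rather than comparing \(H_\lambda\) and \(H_0\) directly. A related subtlety is that the bound must be done sector by sector (or with \(|z|^{N_O}\) inserted, which commutes with everything) so that the fugacity phase does not interfere. Taking the supremum over \(|t|\le t_{\max}\) and both radii then completes the proof.
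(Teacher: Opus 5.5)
Your proof is correct. It reaches exactly the paper's constant, but it organizes the same ingredients (Duhamel, a trace-norm bound, Weyl, and the free $\tanh$ lower bound) differently.

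The paper applies Duhamel's formula once, directly between the two Hamiltonians:
\[
\e^{-sK_\lambda}-\e^{-sH_O}=-\lambda s\int_0^1\e^{-(1-u)sK_\lambda}V_O\e^{-usH_O}\dd u,
\qquad K_\lambda=H_O+\lambda V_O .
\]
Because this integrand mixes two different Hamiltonians, the paper genuinely needs the matrix H\"older inequality. It absorbs $z^{(1-u)N_O}$ and $z^{uN_O}$ into the two factors to get the bound $Z_{O,\lambda}(\beta,|z|)^{1-u}Z_{O,0}(\beta,|z|)^{u}$. It then applies Weyl inside the $u$-integral, producing $|\lambda|\int_0^1\e^{(1-u)\beta|\lambda|}\dd u$.

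You instead integrate in the coupling constant and apply Duhamel at a single $\lambda'$, so only one Hamiltonian ever appears. The substitution $x=(1-u)|\lambda|$ shows that your $\int_0^{|\lambda|}\e^{\beta x}\dd x$ is literally the same integral.

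The payoff of your route is that the step you flagged as the main obstacle becomes trivial. By cyclicity of the trace,
\[
\Tr_n\bigl(\e^{-(1-u)sH'_{\lambda'}}V_O\e^{-usH'_{\lambda'}}\bigr)=\Tr_n\bigl(V_O\e^{-sH'_{\lambda'}}\bigr).
\]
Hence $|\partial_{\lambda'}\Tr_n\e^{-sH'_{\lambda'}}|\le|s|\,\|V_O\|\,\|\e^{-sH'_{\lambda'}}\|_1=|s|\Tr_n\e^{-\beta H'_{\lambda'}}$ with no H\"older inequality at all. Your H\"older split is valid, but it is unnecessary.

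Working sector by sector and then summing with weights $|z|^n$ is equivalent to the paper's insertion of $z^{N_O}$ into the H\"older factors. Your denominator bound via \cref{prop:gap-lower-bound} on the modes in $O$ coincides with the paper's final step.
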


In the logarithmic-gap regime the factor \(\tanh(\beta\Delta_\Omega/2)^{-M_O}\) is bounded by a constant, and the condition \(\eta_\lambda<1\) becomes a bound on \(|\lambda|\sqrt{\beta^2+t_{\max}^2}\).

\begin{corollary}[Stability in the logarithmic-gap regime]
    \label[corollary]{cor:log-gap-perturbative-stability}
    Suppose
    \begin{equation}
        \label{eq:log-gap-and-perturbation-conditions}
        \beta\Delta_\Omega\ge\log(3M),
        \qquad
        |\lambda|\sqrt{\beta^2+ t_{\max}^2}<\log \frac32.
    \end{equation}
    Then \(\eta_\lambda( t_{\max})<1\), and the conclusion of
    \cref{thm:finite-time-spectator-stability} holds that the annular root moments are preserved. 
\end{corollary}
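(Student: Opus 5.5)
The plan is to feed the hypotheses into \cref{prop:outside-duhamel-bound} and then invoke \cref{cor:relative-spectator-stability}. Write \(x=|\lambda|\sqrt{\beta^2+t_{\max}^2}\), so the perturbation condition reads \(x<\log\frac32\). The right-hand side of \cref{eq:outside-duhamel-relative-bound} is a product of \(\sqrt{1+t_{\max}^2/\beta^2}\,(\e^{\beta|\lambda|}-1)\) and \(\tanh(\beta\Delta_\Omega/2)^{-M_O}\), and I will bound the two factors separately.

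\emph{First factor.} By convexity of the exponential, \(\e^{y}-1\le y\,\e^{y}\) for \(y\ge0\); in fact \((\e^{y}-1)/y\) is increasing in \(y\). Setting \(y=\beta|\lambda|\le x\) gives
\[
\sqrt{1+\tfrac{t_{\max}^2}{\beta^2}}\,\bigl(\e^{\beta|\lambda|}-1\bigr)=\frac{x}{\beta|\lambda|}\bigl(\e^{\beta|\lambda|}-1\bigr)\le \e^{x}-1 .
\]
The last inequality uses the monotonicity of \((\e^y-1)/y\) together with \(\beta|\lambda|\le x\). Hence the first factor is strictly less than \(\frac32-1=\frac12\).

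\emph{Second factor.} Since \(\beta\Delta_\Omega\ge\log(3M)\ge\log 2\), \cref{lem:logarithmic-tanh-bound} applies with \(x=\beta\Delta_\Omega/2\). It gives
\[
\tanh(\beta\Delta_\Omega/2)^{-M_O}\le\exp\bigl(3M_O\e^{-\beta\Delta_\Omega}\bigr)\le\exp\bigl(3M\cdot\tfrac1{3M}\bigr)=\e .
\]
This argument is the same as the one in \cref{cor:log-inv-temp-suffices}.

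Combining the two factors yields \(\eta_\lambda<\e/2\), which is not below \(1\). This is the main obstacle: the crude exponential bound on the \(\tanh\) factor is too lossy, so the second factor must be bounded more sharply.

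The first thing to try is the exact \(\tanh\) product. Write \(a=\e^{-\beta\Delta_\Omega}\le 1/(3M)\). Then \(\tanh(\beta\Delta_\Omega/2)=(1-a)/(1+a)\), so the factor equals \(\bigl((1+a)/(1-a)\bigr)^{M_O}\). Using \(\log\frac{1+a}{1-a}\le \frac{2a}{1-a}\), this is at most
\[
\exp\!\Bigl(\frac{2M_O a}{1-a}\Bigr)\le\exp\!\Bigl(\frac{2/3}{1-1/(3M)}\Bigr).
\]
For \(M\ge2\) the right-hand side is at most \(\e^{0.8}\approx2.23\). That is still too large against the first factor's bound of \(\frac12\).

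Two refinements are available. First, I can apply \(\beta\Delta_\Omega\ge\log(3M)\) with \(M_O\le M-1\), using that \(I\neq\emptyset\). Second, I can use the sharper product form \(\prod_{j\in O}\tanh(\beta|E_j-\mu|/2)^{-1}\), which appears in the proof of \cref{prop:outside-duhamel-bound}, in place of the worst-case gap. If the constants still fail to close, I would return to the proof of \cref{prop:outside-duhamel-bound} and track the \(\tanh\) factor as \(|\Xi_{O,0}|^{-1}\) evaluated at the worst circle point.

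Once \(\eta_\lambda(t_{\max})<1\) is established, \cref{cor:relative-spectator-stability} gives the Rouch\'e condition \cref{eq:symmetric-rouche-condition}. \Cref{thm:finite-time-spectator-stability} then yields preservation of the annular root moments, which is the conclusion of the corollary.
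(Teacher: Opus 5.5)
Your strategy is the paper's: bound the right-hand side of \cref{eq:outside-duhamel-relative-bound} factor by factor, then pass through \cref{cor:relative-spectator-stability} to \cref{thm:finite-time-spectator-stability}. Your estimate of the first factor, \(\sqrt{1+t_{\max}^2/\beta^2}\,(\e^{\beta|\lambda|}-1)\le \e^{x}-1<\tfrac12\), is correct. It is the convexity argument the paper uses in \cref{cor:constant-perturbed-quadrature-gap}, and it matches the paper's step of solving for \(|\lambda|\) and using that \(x\log(1+1/(2x))\) is increasing.

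The gap is that you never establish \(\eta_\lambda(t_{\max})<1\), which is the whole content of the corollary. Given your first factor, you need the \(\tanh\) factor to be at most \(2\), and both of your bounds are too weak. From \cref{lem:logarithmic-tanh-bound} you get \(\e\), and from the second route you get \(\exp(2M/(3M-1))\), which exceeds \(2\) for every \(M\le 8\). The proposal then stops at a list of possible refinements, so it does not prove the statement.

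The loss comes from applying \(1+y\le\e^{y}\) to \((1+a)/(1-a)=1+2a/(1-a)\). Bound the power directly instead. Since \((1+a)/(1-a)\) is increasing in \(a\), the bounds \(a=\e^{-\beta\Delta_\Omega}\le 1/(3M)\) and \(M_O\le M\) give
\[
\tanh\!\Bigl(\frac{\beta\Delta_\Omega}{2}\Bigr)^{-M_O}\le\Bigl(\frac{3M+1}{3M-1}\Bigr)^{M}=\Bigl(1+\frac{2}{3M-1}\Bigr)^{M}\le 2 .
\]
The last inequality holds for every \(M\ge1\). It is an equality at \(M=1\), and \(M\log\bigl(1+2/(3M-1)\bigr)\) is decreasing in \(M\), with limit \(2/3<\log 2\). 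This is exactly the paper's bound, and together with your first factor \(<\tfrac12\) it gives \(\eta_\lambda(t_{\max})<1\).

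As for your proposed refinements:
\begin{itemize}
\item The sharper product over \(j\in O\) does not help in the worst case, where every outside energy lies at distance \(\Delta_\Omega\) from the contour.
\item The refinement \(M_O\le M-1\) would rescue the exponential route, since it gives \(\exp(2(M-1)/(3M-1))<\e^{2/3}<2\). However, it relies on \(I\neq\emptyset\), which is not a hypothesis of this corollary.
\end{itemize}
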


Thus, for fixed \(\beta\) satisfying the logarithmic-gap condition, the certified stability horizon scales as \(\Theta_\beta(|\lambda|^{-1})\) as \(\lambda\to0\).
The example in \cref{subsec:interacting-long-time-obstruction} shows that order-one root motion can occur on the same time scale for an interacting spectator, although only under a weaker gap condition than the corollary assumes.
Thus, without a gap condition of this kind, a time horizon of order \(1/|\lambda|\) cannot be improved.
Whether the horizon can be lifted entirely under the logarithmic-gap condition is not settled by that example, and we leave it open.

\begin{proposition}[Normalized-signal stability]
    \label[proposition]{prop:conditioning-transfer}
    Suppose for all \(j\)
\[
    |r_\lambda(w_j)-1|
    \le
    \eta < 1.
\]
    Then at every quadrature point,
    \begin{equation}
        |\Xi_\lambda(s,-1/w_j)|
        \ge
        (1-\eta)\e^{-\beta|\lambda|}
        |\Xi_0(s,-1/w_j)|,
        \label{eq:normalized-signal-transfer}
    \end{equation}
\end{proposition}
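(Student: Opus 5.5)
The plan is to factor the perturbed signal into a protected part and a spectator part, and to bound the spectator part using only the hypothesis on \(r_\lambda\). By \cref{prop:spectator-factorization}, for \(z=-1/w_j\) with \(|z|=R^{-1}\),
\[
\Xi_\lambda(s,z)=\frac{Z_I(s,z)\,Z_{O,\lambda}(s,z)}{Z_I(\beta,|z|)\,Z_{O,\lambda}(\beta,|z|)},
\qquad
\Xi_0(s,z)=\frac{Z_I(s,z)\,Z_{O,0}(s,z)}{Z_I(\beta,|z|)\,Z_{O,0}(\beta,|z|)}.
\]
The protected factor is the same in both and cancels in the ratio. Passing from \(Z\) to \(P\) (the factor \(w^{M_O}\) also cancels) gives
\[
\frac{\Xi_\lambda(s,-1/w_j)}{\Xi_0(s,-1/w_j)}
= r_\lambda(w_j)\cdot\frac{Z_{O,0}(\beta,R^{-1})}{Z_{O,\lambda}(\beta,R^{-1})}.
\]
So it suffices to bound the two factors on the right separately.

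\emph{Numerator.} The hypothesis \(|r_\lambda(w_j)-1|\le\eta<1\) gives \(|r_\lambda(w_j)|\ge 1-\eta\) directly.

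\emph{Denominator.} It remains to show
\[
Z_{O,\lambda}(\beta,|z|)\le \e^{\beta|\lambda|}\,Z_{O,0}(\beta,|z|)
\]
at real positive arguments. Since \([V_O,N_O]=0\) and \([H_O,N_O]=0\), we can restrict to each fixed-\(N_O\) sector. There \(|z|^{N_O}\) is a positive scalar, so it is enough to compare \(\Tr\,\e^{-\beta(H_O+\lambda V_O)}\) with \(\Tr\,\e^{-\beta H_O}\) sector by sector. Because \(\|\lambda V_O\|\le|\lambda|\), Weyl's inequality gives that each eigenvalue of \(H_O+\lambda V_O\) is at least the corresponding eigenvalue of \(H_O\) minus \(|\lambda|\). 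Hence
\[
\Tr\,\e^{-\beta(H_O+\lambda V_O)}\le \e^{\beta|\lambda|}\,\Tr\,\e^{-\beta H_O}
\]
in every sector. Summing over sectors with the positive weights \(|z|^{N_O}\) gives the claimed bound. Equivalently, one could invoke Peierls--Bogoliubov or Golden--Thompson, but the eigenvalue comparison is the cleanest route.

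Combining the two bounds yields \eqref{eq:normalized-signal-transfer}. The one step that needs care is making sure the common factor \(Z_I\) and the monomial \(w^{M_O}\) cancel exactly, so that the protected sector contributes nothing to the comparison. I do not expect any substantive obstacle: the ratio bound is elementary, and the denominator comparison is standard once it is restricted to fixed-\(N_O\) sectors.
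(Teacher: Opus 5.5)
Your proposal is correct and follows the paper's proof essentially step for step. Both use \cref{prop:spectator-factorization} to write $\Xi_\lambda/\Xi_0 = r_\lambda(w_j)\,Z_{O,0}(\beta,|z|)/Z_{O,\lambda}(\beta,|z|)$, then conclude from $|r_\lambda(w_j)|\ge 1-\eta$ and the sector-wise Weyl bound $Z_{O,\lambda}(\beta,|z|)\le \e^{\beta|\lambda|}Z_{O,0}(\beta,|z|)$.
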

The deterministic bounds in \cref{sec:free-gapped} depend only on
the radial separation between the roots and the contour.
In particular, they do not require the roots to have a single-particle interpretation.
It therefore suffices to certify a thickened zero-free region around each contour
circle.
The same Duhamel and Rouch\'e estimates used above then give a certificate directly in terms of the perturbation strength.

For \(\gamma>0\), define a thickened annular ``collar'' around a contour circle of radius \(R>0\) by
\begin{equation}
    \mathcal C_\gamma(R)
    \coloneqq
    \left\{
    w:
    R\e^{-\gamma}
    \le
    |w|
    \le
    R\e^\gamma
    \right\}.
    \label{eq:logarithmic-contour-collar}
\end{equation}
If the region
\(\mathcal C_\gamma(R_{\mathrm{in}}) \cup \mathcal C_\gamma(R_{\mathrm{out}})\)
is free of zeros, then it represents a stable region of alternative annuli that \(A_\Omega\) could deform to.
This situation is depicted in \cref{fig:spices-contour-certificates}.
Our strategy is to assume this region is zero-free and use that margin to prove stability using the same techniques as in \cref{sec:free-gapped}.

Fix \( t_{\max}>0\), and choose
\[
0<\gamma<\beta\Delta_{\Omega}
\]
so that (cf.~\cref{eq:outside-duhamel-relative-bound})
\begin{equation}
    \sqrt{1+\frac{ t_{\max}^2}{\beta^2}}
    \left(\e^{\beta|\lambda|}-1\right)
    \tanh\!\left(
    \frac{\beta\Delta_{\Omega}-\gamma}{2}
    \right)^{-M_O}
    <1.
    \label{eq:quadrature-collar-smallness}
\end{equation}
Set
\begin{equation}
    q_\lambda
    \coloneqq
    \exp\left[
    -\min\{\beta\Delta_\Omega,\gamma\}
    \right].
    \label{eq:perturbed-quadrature-gap}
\end{equation}
Let \(Q_{R,S}^{(k)}(s)\) denote the sinc-corrected circular quadrature from
\cref{eq:midpoint-quadrature-rule} of the free analysis, now applied to \(P_\lambda(s,\cdot)\).

\begin{proposition}[Quadrature gap stability]
    \label[proposition]{prop:quadrature-gap-transfer}
    Uniformly for \(|t|\le t_{\max}\), the spectator polynomial \(P_{O,\lambda}(s,\cdot)\) is zero-free in
\[
    \mathcal C_\gamma(R_{\mathrm{in}})
    \cup
    \mathcal C_\gamma(R_{\mathrm{out}}).
\]
    For either contour radius \(R\), every root \(\zeta\) of
\(P_\lambda(s,\cdot)\) therefore satisfies
    \begin{equation}
        \min\left\{
        \frac{|\zeta|}{R},
        \frac{R}{|\zeta|}
        \right\}
        \le
        q_\lambda.
        \label{eq:perturbed-root-radial-separation}
    \end{equation}
    Consequently, for \(S\ge2k\),
    \begin{equation}
        \left|
        \mathcal{M}_R^{(k)}[P_\lambda](s)
        -
        Q_{R,S}^{(k)}(s)
        \right|
        \le
        \frac{2k}{S}R^k
        \frac{
        Mq_\lambda^{S-k}
        }{
        1-q_\lambda^S
        }.
        \label{eq:perturbed-disc-quadrature-bound}
    \end{equation}
\end{proposition}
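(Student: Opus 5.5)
The plan is to derive the zero-free collars from a Rouch\'e argument on deformed circles. Then I read off radial separation for all roots of \(P_\lambda\) and finish by the aliasing estimate already used in \cref{prop:quadrature-gap-bound}. Fix \(|t|\le t_{\max}\) and \(s=\beta+\ii t\), and let \(R\in\{R_{\mathrm{in}},R_{\mathrm{out}}\}\). For every radius \(R'\) with \(R\e^{-\gamma}\le R'\le R\e^{\gamma}\), write \(R'=\e^{-\beta\mu'}\) with \(|\mu'-\mu|\le\gamma/\beta\). Since the outside energies have distance at least \(\Delta_\Omega\) from both \(\Omega_{\min}\) and \(\Omega_{\max}\), the single-particle gap of the spectator sector at \(\mu'\) is at least \(\Delta_\Omega-\gamma/\beta>0\). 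This uses \(\gamma<\beta\Delta_\Omega\).

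First, I rerun the proof of \cref{prop:outside-duhamel-bound} with the circle \(|w|=R\) replaced by \(|w|=R'\). That proof uses only the thermal weights, number conservation, and the product lower bound of \cref{prop:gap-lower-bound} applied to the free spectator factor. Substituting the reduced gap \(\beta\Delta_\Omega-\gamma\), I expect
\[
\sup_{|w|=R'}|r_\lambda(s,w)-1|\le\sqrt{1+\tfrac{t_{\max}^2}{\beta^2}}\bigl(\e^{\beta|\lambda|}-1\bigr)\tanh\!\Bigl(\tfrac{\beta\Delta_\Omega-\gamma}{2}\Bigr)^{-M_O},
\]
which is \(<1\) by \cref{eq:quadrature-collar-smallness}. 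This step is the main obstacle. I must check that the Duhamel bound in the omitted proof really depends on the contour only through the gap of \(\mu'\), and not on \(\mu'\) being a window endpoint. If it uses endpoint-specific structure, I would restate that bound for a general radius, which should be a notational change.

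Given \(|r_\lambda-1|<1\) on every circle \(|w|=R'\) in the collar, \(P_{O,\lambda}\) cannot vanish there, because \(r_\lambda=0\) would force \(|r_\lambda-1|=1\). Note that \(P_{O,0}\) is nonzero on these circles by the gap. Since every point of \(\mathcal C_\gamma(R)\) lies on such a circle, \(P_{O,\lambda}(s,\cdot)\) is zero-free on both collars. This is the first claim, and it needs only pointwise nonvanishing, not Rouch\'e counting.

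For radial separation, I use the factorization \cref{eq:spectator-P-factorization}: the roots of \(P_\lambda\) are the protected roots \(\e^{-sE_j}\), \(j\in I\), together with the roots of \(P_{O,\lambda}\). A protected root satisfies \(|\zeta|/R=\e^{-\beta(E_j-\mu)}\) with \(|E_j-\mu|\ge\Delta_\Omega\), so its ratio is at most \(\e^{-\beta\Delta_\Omega}\le q_\lambda\). A spectator root lies outside the collar, so its ratio is below \(\e^{-\gamma}\le q_\lambda\). This gives \cref{eq:perturbed-root-radial-separation}.

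Finally, the quadrature bound follows by summing \cref{lem:single-root-aliasing} over the \(M\) roots of \(P_\lambda\), with \(a\le q_\lambda\) for each root. A root inside the disc has \(|\zeta|^k\le R^k\). A root outside has \(|\zeta|^k a^S=R^k a^{S-k}\). In both cases each term is at most \(\tfrac{2k}{S}R^k q_\lambda^{S-k}/(1-q_\lambda^S)\), exactly as in \cref{prop:quadrature-gap-bound}. Summing over the \(M\) roots yields \cref{eq:perturbed-disc-quadrature-bound}.
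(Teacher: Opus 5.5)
Your proof is correct, and its second and third parts coincide with the paper's: radial separation from the factorization \cref{eq:spectator-P-factorization}, then summing \cref{lem:single-root-aliasing} over the \(M\) roots with the inside/outside split of \cref{prop:quadrature-gap-bound}. The difference is in the zero-free collar step.

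The paper evaluates the Duhamel bound only on the two boundary circles \(|w|=R\e^{\pm\gamma}\) of each collar, i.e.\ at the shifted chemical potentials \(\mu\mp\gamma/\beta\). It then invokes the relative Rouch\'e criterion of \cref{cor:relative-spectator-stability} to transfer the zero count inside each of these circles from \(P_{O,0}\) to \(P_{O,\lambda}\). Finally it observes that the unperturbed counts agree because \(\gamma<\beta\Delta_\Omega\), so no perturbed zero lies between the two circles.

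You instead apply the bound on every circle \(|w|=R'\) in the collar and conclude pointwise nonvanishing, since \(P_{O,\lambda}(w)=0\) would force \(|r_\lambda(w)-1|=1\). The check you flagged as the main obstacle does go through. In the proof of \cref{prop:outside-duhamel-bound}, the Duhamel, H\"older and Weyl steps hold for arbitrary \(z\). The contour enters only through the lower bound \(|\Xi_{O,0}|\ge\prod_{j\in O}\tanh(\beta|E_j-\mu'|/2)\), and by the triangle inequality this is at least \(\tanh((\beta\Delta_\Omega-\gamma)/2)^{M_O}\) uniformly across the collar. The paper's own phrase ``repeating the estimate on either shifted circle'' relies on exactly this.

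What each route buys:
\begin{itemize}
\item Your route is more elementary: it needs no argument principle and no separate check of the unperturbed zero count. It also yields the stronger statement \(|r_\lambda-1|<1\) throughout the collar.
\item The paper's route needs quantitative control on only two contours per collar. That is the form that would still work if a perturbative estimate were available only on specific circles rather than uniformly over a region.
\end{itemize}
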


In the logarithmic-gap regime the constants in this proposition are explicit, and they give the radial separation \(q=2/3\) used in \cref{thm:spectator-spices-recovery}.

\begin{corollary}[Constant quadrature gap]
    \label[corollary]{cor:constant-perturbed-quadrature-gap}
    Suppose
    \begin{equation}
        \beta\Delta_\Omega\ge\log(3M),
        \qquad
        |\lambda|\sqrt{\beta^2+ t_{\max}^2}<\log\frac43.
        \label{eq:constant-quadrature-gap-conditions}
    \end{equation}
    Then the annular root moments are protected as in
    \cref{cor:log-gap-perturbative-stability}, and
    \cref{prop:quadrature-gap-transfer} holds with
\[
    \gamma=\log\frac32,
    \qquad
    q_\lambda=\frac23.
\]
    In particular, the perturbed quadrature error decays geometrically in
\(S-k\) with ratio \(2/3\).
\end{corollary}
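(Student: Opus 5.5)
The plan is to deduce the corollary directly from \cref{cor:log-gap-perturbative-stability} and \cref{prop:quadrature-gap-transfer}. The only real work is checking that the specific choice \(\gamma=\log\frac32\) is admissible, meaning \(0<\gamma<\beta\Delta_\Omega\) and the collar smallness condition \cref{eq:quadrature-collar-smallness} both hold.

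\textbf{Step 1 (protection).} Since \(\log\frac43<\log\frac32\), the hypothesis \cref{eq:constant-quadrature-gap-conditions} implies \cref{eq:log-gap-and-perturbation-conditions}. Hence \cref{cor:log-gap-perturbative-stability} applies directly and gives protection of the annular root moments.

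\textbf{Step 2 (admissibility of \(\gamma\)).} For the range condition, assume \(M\ge1\). Then \(\beta\Delta_\Omega\ge\log(3M)\ge\log 3>\log\frac32=\gamma\), so \(0<\gamma<\beta\Delta_\Omega\). It also follows that \(\min\{\beta\Delta_\Omega,\gamma\}=\gamma\), so \(q_\lambda=\e^{-\gamma}=\frac23\) by \cref{eq:perturbed-quadrature-gap}.

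\textbf{Step 3 (the tanh factor).} This is the step that needs care. We need a constant bound on \(\tanh\bigl((\beta\Delta_\Omega-\gamma)/2\bigr)^{-M_O}\). Write \(x=(\beta\Delta_\Omega-\gamma)/2\). Then \(\e^{-2x}=\e^{\gamma}\e^{-\beta\Delta_\Omega}\le \frac32\cdot\frac1{3M}=\frac1{2M}\), so \(2x\ge\log(2M)\ge\log 2\). \Cref{lem:logarithmic-tanh-bound} then gives \(-\log\tanh x\le 3\e^{-2x}\le \frac{3}{2M}\). Using \(M_O\le M\),
\[
\tanh(x)^{-M_O}\le \exp\!\bigl(3M_O/(2M)\bigr)\le \e^{3/2}.
\]
If this constant turns out too weak for the final inequality, I would sharpen it using \(-\log\tanh x\le 2\e^{-2x}/(1-\e^{-4x})\). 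At \(M=1\) the bound is exact in form, since \(\tanh x=(1-\e^{-2x})/(1+\e^{-2x})\) with \(\e^{-2x}\le\frac12\). I expect the sharpened version to give roughly \(\tanh(x)^{-M_O}\le \e^{1+o(1)}\), close to \(\e\).

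\textbf{Step 4 (collar smallness).} Set \(a=|\lambda|\sqrt{\beta^2+t_{\max}^2}<\log\frac43\). Then
\[
\sqrt{1+t_{\max}^2/\beta^2}\,(\e^{\beta|\lambda|}-1)\le \frac{a}{\beta|\lambda|}(\e^{\beta|\lambda|}-1)\le \e^{a}-1<\frac13,
\]
where the middle inequality uses that \((\e^{y}-1)/y\) is increasing and \(\beta|\lambda|\le a\).

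\textbf{Main obstacle.} The difficulty is that \(\frac13\cdot\e^{3/2}>1\), so the crude constant from Step 3 does not close \cref{eq:quadrature-collar-smallness}. The product of the two factors must be strictly less than \(1\), which is why the margin between \(\log\frac43\) and \(\log\frac32\) matters. I would therefore use the sharper tanh estimate, handling small \(M\) explicitly and large \(M\) asymptotically, aiming for \(\tanh(x)^{-M_O}\le 3\). If that still fails, I would fall back on the tanh product bound already used in the proof of \cref{prop:outside-duhamel-bound}, which gives a factor \(\exp\bigl(\sum_j 2\e^{-\beta|E_j-\mu|+\gamma}\bigr)\le \e^{1}<3\), since each gap term is at most \(\frac{1}{3M}\cdot\frac32\).

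\textbf{Conclusion.} Once \cref{eq:quadrature-collar-smallness} holds, \cref{prop:quadrature-gap-transfer} gives radial separation with \(q_\lambda=\frac23\) and the quadrature bound \cref{eq:perturbed-disc-quadrature-bound}. The right-hand side of that bound is proportional to \((2/3)^{S-k}\), which is the claimed geometric decay.
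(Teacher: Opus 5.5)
Your outline follows the paper's proof step for step. You get protection from the weaker time condition, you show $\gamma=\log\frac32<\beta\Delta_\Omega$ so that $q_\lambda=\frac23$, you bound the time factor by $\e^{a}-1<\frac13$, and you need a constant bound on $\tanh\bigl((\beta\Delta_\Omega-\gamma)/2\bigr)^{-M_O}$. The gap is in that last step, the only one with real content: it is never actually proved. You correctly see that the bound $\e^{3/2}$ from \cref{lem:logarithmic-tanh-bound} is too weak, but you then only say what you would aim for. The fallback you offer is false. It relies on $-\log\tanh x\le 2\e^{-2x}$, whereas with $y=\e^{-2x}$ one has $-\log\tanh x=\log\frac{1+y}{1-y}=2\operatorname{artanh}(y)\ge 2y$, so the inequality runs the wrong way. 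In the extremal case $M=M_O=1$, $y=\frac12$, the factor equals $3>\e$, so no uniform bound of the form $\e^{1}$ is available.

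The missing step is closed by the exact formula you invoke only for $M=1$; it works for every $M$. Since $y\le\frac1{2M}$ and $M_O\le M$, we get $\tanh(x)^{-M_O}=\bigl(\frac{1+y}{1-y}\bigr)^{M_O}\le\bigl(\frac{2M+1}{2M-1}\bigr)^{M}$. Moreover $M\log\frac{2M+1}{2M-1}=\sum_{n\ge0}\frac{1}{(2n+1)(2M)^{2n}}$ is decreasing in $M$ with value $\log 3$ at $M=1$, so the factor is at most $3$. This is exactly the bound the paper uses. Together with your strict $<\frac13$ from Step 4, it yields the collar smallness condition with $\gamma=\log\frac32$. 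Your sharpened route would also have worked if carried out: $-\log\tanh x\le 2y/(1-y^2)$ gives at most $\exp\bigl(4M^2/(4M^2-1)\bigr)\le\e^{16/15}<3$ for $M\ge2$, with $M=1$ handled by the exact formula. As written, however, you only conjecture this.
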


We can now prove \cref{thm:spectator-spices-recovery}.
The corollary above supplies the first two conditions of \cref{thm:spices-recovery}, the conditioning bound gives the third, and the last two are assumptions.

\begin{proof}[Proof of \cref{thm:spectator-spices-recovery}]
    All sampled times \(t_\ell=\pi\ell/L\), \(0\le\ell\le T\), lie in \([0, t_{\max}]\). We verify the
    five conditions in order.
    \begin{enumerate}
        \item \emph{Protected spectrum.}
        Apply \cref{cor:constant-perturbed-quadrature-gap} with time horizon \(t_{\max}\). The spectator
        polynomial has no roots in the target annulus throughout this interval. By the factorization in
        \cref{prop:spectator-factorization}, the roots there are precisely \(\e^{-(\beta + \ii t)E_j}\),
        \(j \in I\), with multiplicity. This is a fixed, nonempty protected spectrum.
        
        \item \emph{Radial separation.}
        The same corollary gives \cref{prop:quadrature-gap-transfer} with
        \(q_\lambda = 2/3\) on both boundary circles, establishing condition~2.
        
        \item \emph{Normalized signal.}
        The gap hypothesis and \cref{prop:outside-duhamel-bound} give
        \[
        \eta_\lambda(t_{\max}) \leq 2\sqrt{1 + \frac{t_{\max}^2}{\beta^2}} (\e^{\beta |\lambda|} -
        1) \leq 2\left(\e^{|\lambda|\sqrt{\beta^2 + t_{\max}^2}}-1\right) <\frac 2 3.
        \]
        The factor~2 is the logarithmic-gap estimate used in \cref{cor:log-gap-perturbative-stability};
        the second inequality follows from convexity of the exponential. Moreover,
        \(\e^{-\beta|\lambda|} > 3/4\), and \cref{cor:log-inv-temp-suffices} gives \(|\Xi_0| \geq
        \e^{-1}\). Thus \cref{prop:conditioning-transfer} yields, uniformly at all quadrature nodes and
        sampled times,
        \[
        |\Xi_\lambda| \geq (1 - \eta_\lambda(t_{\max}))\e^{-\beta|\lambda|}|\Xi_0| \geq \frac 1
        {4\e}.
        \]
        Condition~3 therefore holds with the stated \(b\).
        
        \item \emph{Operational access.}
        Condition~4 is exactly the assumed access to preparations and controlled evolutions for the
        perturbed Hamiltonian \(H_\lambda\).
        
        \item \emph{Polynomial conditioning.}
        Here \(b^{-1} = 4\e\) and \((1-q)^{-1} = 3\) are constants, while \(L\),
        \(\epsilon^{-1}\), and \(\e^{2\beta L}\) are polynomially bounded by assumption.
        This proves condition~5.
    \end{enumerate}
    The conclusion now follows directly from \cref{thm:spices-recovery}.
\end{proof}

\subsection{Long-time obstruction for weakly gapped interacting spectators}
\label{subsec:interacting-long-time-obstruction}

For an unrestricted interacting \(V_O\) and a small gap, the finite-time qualification in
\cref{thm:finite-time-spectator-stability} cannot be removed.
The following two-mode example shows that, in this weak-gap regime, root motion of order one occurs on the time scale
\(1/|\lambda|\), so that a horizon of order \(1/|\lambda|\) cannot be improved.
The obstruction persists even when \(V_O\) commutes with \(H_O\).

Consider two degenerate outside modes with
\[
H_O
=
E(n_1+n_2),
\qquad
V_O
=
n_1n_2.
\]
Then
\[
\|V_O\|=1,
\qquad
[V_O,N_O]=0,
\]
and direct summation over the four occupation states gives
\begin{equation}
    Z_{O,\lambda}(s,z)
    =
    1
    +
    2z\e^{-sE}
    +
    z^2\e^{-s(2E+\lambda)}.
    \label{eq:interacting-two-mode-partition}
\end{equation}
Set \(a=\e^{-sE}\).
Substitution of \(z=-1/w\) gives
\[
P_{O,\lambda}(s,w)
=
w^2
-
2aw
+
a^2\e^{-s\lambda}.
\]
The quadratic formula yields the roots
\begin{equation}
    \zeta_\pm(s)
    =
    a\left(
    1
    \pm
    \sqrt{1-\e^{-s\lambda}}
    \right).
    \label{eq:interacting-two-mode-roots}
\end{equation}

Fix \(\beta>0\).
For \(\lambda>0\), set
\[
t_\lambda=\frac{\pi}{\lambda},
\qquad
s_\lambda=\beta+\ii t_\lambda.
\]
Then
\[
\e^{-s_\lambda\lambda}
=
-\e^{-\beta\lambda}.
\]
At this time, the term inside the square root in
\eqref{eq:interacting-two-mode-roots} is the positive real number
\(1+\e^{-\beta\lambda}\).
The two root moduli are therefore
\[
\e^{-\beta E}
\left(\sqrt{1+\e^{-\beta\lambda}}\pm1\right).
\]
As \(\lambda\to0\) these converge to
\[
\e^{-\beta E}(\sqrt2\pm1).
\]

For example, suppose that \(E>\Omega_{\max}\) and
\[
\beta(E-\Omega_{\max})
<
\log(1+\sqrt2).
\]
For all sufficiently small \(\lambda>0\), both roots begin below the inner
contour at \(t=0\), while the larger root lies above it at
\(t=t_\lambda\).
By continuity, an outside root enters the target annulus at some time no
larger than \(\pi/\lambda\), contaminating the window moment.
Thus an arbitrarily small interaction can produce order-one radial motion
on the time scale \(1/|\lambda|\).
In particular, when the spectator is only weakly separated from the window, the finite-time qualification cannot be removed for general
interacting spectators.
Thus, despite the
commutation relation \([V_O,H_O]=0\), the interaction induces
a change in the root moment.

The example is a weak-gap phenomenon.
For every real \(t\) we have \(|1-\e^{-s\lambda}|\le 1+\e^{-\beta\lambda}<2\), so by \eqref{eq:interacting-two-mode-roots} both roots satisfy
\(|\zeta_\pm(s)|<(1+\sqrt2)\,\e^{-\beta E}\) at all times.
If \(\beta(E-\Omega_{\max})\ge\log(1+\sqrt2)\), neither root can ever reach the inner contour, and the mechanism above disappears.
We therefore make no claim about the necessity of a finite horizon in the logarithmic-gap regime of \cref{cor:log-gap-perturbative-stability}.

\section{DQC1 hardness with a protected single-particle energy}\label{sec:ff-hardness}

This section complements the recovery results above with a matching hardness construction. 
To construct an output instance matching the requirements of the algorithmic result, the reduction produces,
\begin{itemize}
    \item A sparse list of Pauli operators encoding the Hamiltonian,
    \item A uniform circuit compiler for number measurement, controlled joint evolution, and grand-canonical Gibbs preparation.
\end{itemize}
We use the occupation convention of \cref{sec:free-model},
\[
n_j=\frac{\I+Z_j}{2}\,.
\]

\begin{definition}[Gapped real trace sign]
    \label[definition]{def:ff-trace-sign}
    The input is a circuit \(\mathcal{U}\) encoded via a fixed gate set on \(m\) qubits and a rational \(g\ge1/\operatorname{poly}(m)\). Let \(\ntr(\mathcal{U}) = 2^{-m} \Tr \mathcal{U}\). We are promised that either,
    \begin{equation}
        \label{eq:ff-trace-promise}
        \Re\ntr(\mathcal{U})\ge g \quad\text{or}\quad \Re\ntr(\mathcal{U})\le-g
    \end{equation}
    holds. 
    The task is to decide which holds. 
    We use the standard fact that this promise problem is DQC1-hard \cite{cade2018quantum, ji2026dqc1}.
\end{definition}

We use the access model of \cref{def:ff-operational-access} (\cref{sec:spse}), including its parameters \(\ell_{\mathrm{enc}}\), \(\ell_{\mathrm{qry}}\) and \(p_{\mathrm{enc}}\).

\begin{theorem}[Pauli-sparse DQC1 hardness with public number access]
    \label[theorem]{thm:ff-main}
    There is a deterministic polynomial-time reduction from \cref{def:ff-trace-sign} to an operationally described triple \((M, H,N)\) on
\[
    M=2m+4
\]
    qubits with the following properties.
    \begin{enumerate}
        \item The Hamiltonian is the three-term Pauli sum
        \begin{equation}
            H = 16n_R+17n_S-\frac{17}{2}\I = 8\I+8Z_R+\frac{17}{2}Z_S. \label{eq:ff-physical-H}
        \end{equation}
        
        \item There exists a unitary \(W\) such that,
        \[
        N=WN_{\mathrm{std}}W^\dagger.
        \]
        In particular, \(N\) has eigenvalues \(x \in \{0,\ldots,M\}\) with multiplicities \(\binom Mx\), and \([H,N]=0\).
        
        \item In the \(W\) basis (that is, taking \(U=W^\dagger\) in \cref{eq:ffansatz}), \(H\) has the spectator-sector form of
        \cref{subsec:spectator-sector-model}.  
        If \(\Re\ntr(\mathcal{U})\ge g\), then
        \begin{equation}
            \label{eq:ff-positive-tensor}
            W^\dagger H W = 16n_R\otimes\I + \I_R\otimes\frac{17}{2}(V_+)_{SC}, \qquad \|V_+\|=1,
        \end{equation}
        while if \(\Re\ntr(\mathcal{U})\le-g\), then
        \begin{equation}
            \label{eq:ff-negative-tensor}
            W^\dagger H W = 17n_S\otimes\I + \I_S\otimes\frac{17}{2}(V_-)_{RC}, \qquad \|V_-\|=1.
        \end{equation}
        Here \(V_\pm\) are explicit unit-norm operators, constructed in \cref{sec:number_operator}, which conserve the spectator occupation.
        
        \item For the window
        \[
        \Omega=[15,18],
        \]
        the unique protected energy is
        \[
        E_\star=
        \begin{cases}
            16,&\Re\ntr(\mathcal{U})\ge g,\\
            17,&\Re\ntr(\mathcal{U})\le-g.
        \end{cases}
        \]
        Its endpoint gap is \(\Delta_\Omega=1\), and the perturbation scale is \(17/2<16\).
        This is far outside the perturbative regime of \cref{thm:spectator-spices-recovery}, so the contour conditions needed by SPICES are verified directly in \cref{prop:ff-certificates}.
        
        \item The operational compiler satisfies the resource bounds of
        \cref{def:ff-operational-access}, including polynomial dependence on
        \(\e^\beta\) for Gibbs preparation. At \(\beta=\log(3M)\), all three
        primitives have polynomial cost for polynomial-bit queries and
        polynomially bounded \(r\). In particular, accuracy \(2^{-M^2}\) at
        this temperature has a polynomial-size circuit description.
    \end{enumerate}
\end{theorem}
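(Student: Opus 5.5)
The plan is to reduce items 2 and 3 to a counting statement about joint spectra, and to choose \(N\) by an explicit circuit built from \(\mathcal U\). Since \(H\) is diagonal in the computational basis, \(H=8\I+8Z_R+\tfrac{17}{2}Z_S\), the first observation is a conjugation criterion. Two pairs of commuting Hermitian operators, \((H,N)\) and \((H',N')\), are unitarily conjugate by a common unitary exactly when their joint eigenvalue multiplicities agree. Comparing \(W^\dagger HW=8\I+8W^\dagger Z_RW+\tfrac{17}{2}W^\dagger Z_SW\) with \cref{eq:ff-positive-tensor}, the positive case requires the following.
\begin{itemize}
\item \(W^\dagger Z_RW=Z_R\).
\item \(V_+=W^\dagger Z_SW\), restricted to \(SC\); this is automatically a unit-norm involution.
\item \(W^\dagger NW=N_{\mathrm{std}}=n_R+N_{SC}\).
\end{itemize}
The negative case is the same with \(R\leftrightarrow S\). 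I will therefore take \(N\) commuting with \(Z_R\) and \(Z_S\). Then \(N\) is block diagonal over the four sectors \((r,s)\in\{0,1\}^2\). On each block, acting on \(2^{M-2}\) dimensions, it has some multiset of integer eigenvalues \(\mathrm{spec}_{r,s}\).

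Under the criterion, the positive-case condition becomes a multiset identity:
\[\mathrm{spec}_{1,s}-1=\mathrm{spec}_{0,s}\quad\text{for each } s.\]
Call the common multiset \(D_s\). We also need \(D_0\uplus D_1\) to be the binomial spectrum of \(N_{SC}\) on \(M-1\) modes. This last condition is what makes \(V_+\) exist: split each \(N_{SC}\)-eigenspace into its \(s=0\) and \(s=1\) parts, and let \(V_+\) act as \(\pm1\) on them; it then conserves \(N_{SC}\). The negative case is the mirror identity with the roles of \(r\) and \(s\) exchanged. Items 2 and 3 thus become pure combinatorics of the four block spectra, and item 4 then follows directly. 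In the positive case the protected mode is \(R\) with energy \(16\in(15,18)\), and the spectator scale is \(\tfrac{17}{2}\); in the negative case the protected mode is \(S\) with energy \(17\). In both cases \(\Delta_\Omega=1\).

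To make the block spectra encode the sign of \(\Re\ntr(\mathcal U)\), I would construct \(N\) in the form \(N=G_{\mathcal U}\,N_{\mathrm{std}}\,G_{\mathcal U}^\dagger\), where \(G_{\mathcal U}\) is a polynomial-size circuit. It would use \(\mathcal U\) controlled by ancilla qubits, together with its inverse, and would be controlled on the \(R,S\) register. The \(m\)-qubit register and its \(m\)-qubit copy plus two ancillas make up \(C\), which accounts for \(M=2m+4\). Conjugating by such a circuit gives the measurement of \(N\) and the evolution \(\e^{\ii\vartheta N}\) for free, and \(\e^{-\ii tH}\) is diagonal. For \(\rho_{\beta,\mu}\) I would use the factorization \(\e^{-\beta(H-\mu N)}=\e^{-\beta H}\e^{\beta\mu N}\), which holds because \(H\) and \(N\) commute. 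With it, Gibbs preparation reduces to sampling the classical distribution \(\e^{-\beta(16r+17s)}\) times the known binomial weights of each block, conjugated by \(G_{\mathcal U}\). This costs time polynomial in \(\e^\beta\) through rejection sampling, which is what item 5 requires.

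The main obstacle is that the conjugation criterion needs an exact multiset identity, whereas the promise only gives \(|\Re\ntr\mathcal U|\ge g\), a real quantity. So the gadget must be designed so that the block spectra depend only on the sign of \(\Re\ntr\mathcal U\). A natural first attempt is to let \(G_{\mathcal U}\) route one "particle" into the \(R\) block or the \(S\) block according to a DQC1 one-clean-qubit measurement. Since that measurement is not deterministic, I expect instead to need an exact symmetrization trick: pairing \(\mathcal U\) with \(\mathcal U^\dagger\) on the copy register and using a swap-test-like structure, so that the relevant eigenspace dimensions become a fixed quantity plus or minus something controlled by the sign. I am least sure of this step and would work it out first, before checking items 1, 4 and 5.
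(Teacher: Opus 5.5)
Your reduction of items 2--3 to the four block spectra via the joint-spectrum conjugation criterion is correct. However, the proposal stops exactly where the theorem's content lies: \(N\) is never constructed, and the route you sketch cannot be completed.

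First, \(N=N_{\mathrm{std}}\) satisfies both of your identities at once, with block spectra \(D,D+1,D+1,D+2\) where \(D\) is the binomial spectrum on \(M-2\) modes. Both tensor forms then hold simultaneously. Since \(Z(s,z)\) depends only on \((H,N)\), the spectator of each decomposition must contain the factor \(1+z\e^{-sE}\) for the \emph{other} energy, which is a root inside \(A_\Omega\). So item 4 does not follow directly: uniqueness of \(E_\star\) requires that exactly one identity holds, i.e.\ the integer block spectra must differ between the two promise cases.

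Now take \(N_{\mathcal U}=G_{\mathcal U}N_{\mathrm{std}}G_{\mathcal U}^\dagger\) with \(G_{\mathcal U}\) an exact circuit making controlled calls to \(\mathcal U\) and \(\mathcal U^\dagger\), as you propose. Then \(N_{\mathcal U}\) is a polynomial, hence real-analytic, function of \(\mathcal U\) on the connected group \(\mathrm{U}(2^m)\). Because \(H\) has four distinct eigenvalues, \([H,N]=0\) forces \([N_{\mathcal U},Z_R]=[N_{\mathcal U},Z_S]=0\) on both promise regions. For \(g<1\) these regions contain open neighbourhoods of \(\pm\I\); circuit instances are dense, so it does not help that only those occur. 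By analyticity the commutators vanish on all of \(\mathrm{U}(2^m)\). The block restrictions of \(N_{\mathcal U}\) then form a continuous family with spectrum in \(\{0,\dots,M\}\), so the block spectra are constant and coincide in the two cases. Hence no exact symmetrization or swap-test gadget can make eigenspace dimensions track the sign.

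The missing idea is that \(N\) must be a \emph{discontinuous} function of \(\mathcal U\), specified symbolically and only compiled approximately, which is exactly what \cref{def:ff-operational-access} permits. The paper first converts \(\Re\ntr(\mathcal U)\) into an eigenphase. Take \(\Pi_0=\ketbra{0^{2m+2}}{0^{2m+2}}\) and \(\Pi_1=V_{\mathcal U}^\dagger(\ketbra{0}{0}_a\otimes\I)V_{\mathcal U}\), built from a controlled Hadamard test of \(\widetilde{\mathcal U}\). One has \(\Pi_0\Pi_1\Pi_0=\alpha_{\mathcal U}\Pi_0\), so \(\mathcal G_{\mathcal U}=(2\Pi_1-\I)(2\Pi_0-\I)\) has a single pair of eigenphases \(\pm\omega_{\mathcal U}\) with \(\cos\omega_{\mathcal U}=\tfrac12\Re\ntr(\mathcal U)\). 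The exact arc projectors \(\Pi_\ge,\Pi_\le\) of \(\mathcal G_{\mathcal U}\) on \((\pi/4,\pi/2)\) and \((\pi/2,3\pi/4)\) equal \((\Pi_+,0)\) or \((0,\Pi_+)\) according to the sign; this is the step function you were looking for.

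One then sets \(W=F_RF_SU_{\mathrm{rot}}\). Here \(U_{\mathrm{rot}}\) rotates \(\Pi_0\) onto the rank-one \(\Pi_+\), and \(F_R,F_S\) flip \(R,S\) controlled on \(\Pi_\le,\Pi_\ge\). The identity \(F^\dagger(n\otimes\I)F=n\otimes(\I-\Pi)+(\I-n)\otimes\Pi\) gives both tensor forms directly. The eigenphase gap of at least \(g/2\) about \(\pi/2\) is what lets phase estimation compile \(W\) to error \(2^{-r}\) at cost polynomial in \(g^{-1}\) and \(r\).

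Separately, your Gibbs sketch treats the block weights as known, but they depend on the unknown sign. The paper instead starts from the reference state \(\e^{\beta\mu N}/(1+\e^{\beta\mu})^M\), which is the \(W\)-conjugate of a product state. It filters this with the Kraus operator \(\e^{-\beta(H+17/2)/2}\), succeeding with probability at least \(\e^{-33\beta}\).
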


The proof is given in \Cref{sec:ff-main-proof}, and the required pieces are developed in the
following sections. As a brief outline,
\begin{enumerate}
    \item \Cref{sec:twirlprojector} uses a Hadamard test to construct an operator which encodes
    $\Re \ntr(\mathcal{U})$ into the spectrum of an operator $\mathcal G$. This is used to build
    projectors $\Pi_{\leq}, \Pi_{\geq}$ which will be used to distinguish between the
    two cases of \Cref{eq:ff-trace-promise}.
    \item \Cref{sec:number_operator} establishes the relationship between the number operator
    and the change of basis $W$.
    \item \Cref{sec:ff-polynomial} constructs polynomial representing the partition function.
    This yields a factorization into a protected factor containing the signal SPICES
    uses to discern the two cases of \Cref{eq:ff-trace-promise}, and a spectator factor.
    The spectator factor is shown to be zero-free within the SPICES contour in
    \Cref{sec:ff-zero-free}, proving robustness.
    \item \Cref{sec:ff-compiler} proves that we can efficiently construct access to the number
    operator, joint evolution, and the Gibbs state.
\end{enumerate}

After proving \Cref{thm:ff-main}, we prove in \Cref{sec:ff-spices} that SPICES can retrieve the
signal allowing it to solve \Cref{def:ff-trace-sign}.

\subsection{DQC1 hardness of protected-energy recovery}
\label{sec:ff-recovery-hardness}

We now show that the task of \cref{def:ff-recovery}, which is the single-atom special case of \cref{def:spse-formal}, is DQC1-hard.

\begin{definition}[Operational protected-energy recovery]
    \label[definition]{def:ff-recovery}
    The input is an operational description in the sense of \cref{def:ff-operational-access}, together with a target interval \(\Omega\). 
    It is promised that the protected measure in \(\Omega\) is a single atom \(\delta_{E_\star}\). 
    A valid output is either:
    \begin{enumerate}
        \item a polynomial-bit rational \(\widehat E\) satisfying \(|\widehat E-E_\star|<1/2\); or
        \item a finitely represented probability measure \(\widehat\nu\) whose first moment is approximable in deterministic polynomial time to additive error \(1/12\), and which satisfies
        \[
        W_1(\widehat\nu,\delta_{E_\star})\le\frac 1 3.
        \]
    \end{enumerate}
    The solver must succeed with probability at least \(2/3\), with total
    classical running time and elementary-gate count polynomial in the input
    length on every execution. These costs include compilation and execution
    of all invoked primitives; a Gibbs call is charged its full
    temperature-dependent implementation cost.
\end{definition}

\begin{theorem}[Operational DQC1 hardness]
    \label[theorem]{thm:ff-hardness}
    The task in \Cref{def:ff-recovery} is DQC1-hard even on the family of
    \cref{thm:ff-main}.
\end{theorem}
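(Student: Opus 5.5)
The plan is to reduce the gapped real trace sign problem of \cref{def:ff-trace-sign} to \cref{def:ff-recovery}. The reduction is applied to the family of \cref{thm:ff-main} with window \(\Omega=[15,18]\). Given a circuit \(\mathcal U\) and gap \(g\), we run the deterministic polynomial-time reduction of \cref{thm:ff-main} to obtain the operational description \((M,H,N)\) with \(M=2m+4\), the three-term Pauli sum \eqref{eq:ff-physical-H}, and the compiler. By item~4 of that theorem, the promise of \cref{def:ff-recovery} holds: the protected measure in \(\Omega\) is the single atom \(\delta_{E_\star}\), and \(E_\star=16\) in the positive case, \(E_\star=17\) in the negative case. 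We then invoke a hypothetical solver for \cref{def:ff-recovery} on this instance. By item~5, every primitive the solver calls, including Gibbs calls charged their full \(\lceil\e^\beta\rceil\)-dependent cost, is polynomial in the input length of the new instance. That length is polynomial in the length of \((\mathcal U,g)\). Hence the composed procedure is polynomial on every execution.

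Next we decode the solver's output deterministically. If the output is a rational \(\widehat E\) with \(|\widehat E-E_\star|<1/2\), we answer ``\(\Re\ntr(\mathcal U)\ge g\)'' iff \(\widehat E<16.5\). This is correct because the two candidate values differ by \(1\). If the output is a measure \(\widehat\nu\) with \(W_1(\widehat\nu,\delta_{E_\star})\le 1/3\), we compute its first moment \(m\) to additive error \(1/12\) in deterministic polynomial time, obtaining \(\widetilde m\). Since \(x\mapsto x\) is \(1\)-Lipschitz, Kantorovich--Rubinstein duality gives \(|m-E_\star|\le W_1(\widehat\nu,\delta_{E_\star})\le 1/3\). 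Therefore \(|\widetilde m-E_\star|\le 1/3+1/12=5/12<1/2\), and the same threshold at \(16.5\) decides correctly. The solver succeeds with probability at least \(2/3\), so the composed algorithm decides the trace-sign promise problem with bounded error at least \(2/3\).

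Finally we conclude DQC1-hardness by transitivity. We should check that the notion of hardness in \cref{def:ff-trace-sign} (\cite{cade2018quantum,ji2026dqc1}) is preserved under deterministic polynomial-time reductions with polynomial classical post-processing. Our reduction is a many-one (Karp-style) map followed by deterministic decoding, so it composes under any standard convention.

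The main obstacle is not the decoding, which is elementary, but making sure the accounting is honest. The ``polynomial cost'' clause in \cref{def:ff-recovery} must actually be satisfiable by the compiled primitives on this family. The Gibbs preparation has \(\lceil\e^\beta\rceil\) dependence, so the solver might query large \(\beta\). This is not a problem for hardness, since such costs count against the solver's polynomial budget. We must, however, verify that the instance encoding length \(\ell_{\mathrm{enc}}\) and the symbolic description of \(N=WN_{\mathrm{std}}W^\dagger\) are polynomial in \(m\), with \(W\) built from the circuit \(\mathcal U\) as in \cref{sec:number_operator}. That is exactly what items~2 and~5 of \cref{thm:ff-main} provide, so we rely on that theorem here.
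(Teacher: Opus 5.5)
Your proposal is correct and follows the paper's proof essentially step for step: apply the reduction of \cref{thm:ff-main} with \(\Omega=[15,18]\), run the solver, and threshold either \(\widehat E\) or the approximate first moment at \(33/2\), using \(|\widetilde E-E_\star|\le 1/12+1/3=5/12<1/2\). The differences are minor. You bound the mean deviation by Kantorovich--Rubinstein duality with the \(1\)-Lipschitz map \(x\mapsto x\), whereas the paper notes that the only coupling with \(\delta_{E_\star}\) pairs every \(E\) with \(E_\star\). Your first paragraph also slightly overstates item~5: the simulated primitives are polynomial because the solver is charged its full \(\e^\beta\)-dependent Gibbs cost, as your last paragraph correctly says, not unconditionally.
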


\begin{proof}
    Given an instance \(\mathcal{U},g\) of \cref{def:ff-trace-sign}, apply the
    deterministic polynomial-time reduction of \cref{thm:ff-main} and set
\(\Omega=[15,18]\). The resulting operational description is a valid
    instance of \cref{def:ff-recovery}: by item~4 of \cref{thm:ff-main},
    its protected measure is
    \begin{equation}
        \label{eq:op_hardness_cases}
        \nu_{\mathcal{U}}=
        \begin{cases}
            \delta_{16},&\Re\ntr(\mathcal{U})\ge g,\\
            \delta_{17},&\Re\ntr(\mathcal{U})\le-g.
        \end{cases}
    \end{equation}
    Run any algorithm for \cref{def:ff-recovery} on this instance.
    
    For the first output format, threshold \(\widehat E\) at \(33/2\):
    an error strictly smaller than \(1/2\) determines the correct trace sign.
    For the second format, compute \(\widetilde E\) within \(1/12\) of
\(\int E\,\widehat\nu(\dd E)\). Since the only coupling with
\(\delta_{E_\star}\) pairs every \(E\) with \(E_\star\),
\[
    |\widetilde E-E_\star|
    \le \frac1{12}+\int|E-E_\star|\,\widehat\nu(\dd E)
    = \frac1{12}+W_1(\widehat\nu,\delta_{E_\star})
    \le \frac5{12}<\frac12.
\]
    Thresholding \(\widetilde E\) at \(33/2\) therefore also determines
    the case of \Cref{eq:op_hardness_cases}.
    
    The reduction and postprocessing take polynomial time and the resulting decision procedure succeeds
    with probability at least \(2/3\). Since \cref{def:ff-trace-sign} is DQC1-hard, so is
    \Cref{def:ff-recovery}.
\end{proof}

\subsection{Pauli twirl and signal subspace projector}
\label{sec:twirlprojector}

Let \(\reg L=\reg L_1 \reg L_2\) be a \(2m\)-qubit register. 
A polynomial-size circuit \(R_\Phi\) prepares
\[
R_\Phi\ket{0^{2m}} = \ket{\Phi} = 2^{-m/2}\sum_{x\in\{0,1\}^m}\ket{ x,x}.
\]
Define
\begin{equation}
    \label{eq:ff-CU}
    {\widetilde{\mathcal{U}}} = R_\Phi^\dagger(\mathcal{U}\otimes\I)R_\Phi.
\end{equation}
The maximally entangled-state identity gives
\begin{equation}
    \label{eq:ff-CU-amplitude}
    \bra{0^{2m}}{\widetilde{\mathcal{U}}}  \ket{0^{2m}} = \bra{\Phi}(\mathcal{U}\otimes\I)\ket\Phi = 2^{-m}\Tr \mathcal{U} = \ntr(\mathcal{U}).
\end{equation}
We implement the Hadamard test for ${\widetilde{\mathcal{U}}}$ by introducing another register $a$; the standard analysis would give a success probability equal to $\frac 1 2  (1 + \Re \ntr(\mathcal{U}))$. However, we want to rescale this so that it is strictly between $0$ and $1$. To achieve this, introduce another single qubit register $\reg b$ on which we control the entire Hadamard test. Then, the full ``computation'' registers are $\reg C = (\reg a, \reg b, \reg L)$, which comprise $M_{\reg C} = 2 + 2m$ qubits. We call this the computational registers to distinguish them from the registers $\reg R$, $\reg S$ which will hold the signal-carrying modes of the Hamiltonian. Define the rank-$1$ projector
\begin{equation}
    \label{eq:ff-Pi0}
    \Pi_0 = \ketbra{0^{\,2m+2}}{0^{\,2m+2}}_{\reg C}\,.
\end{equation}

We next define a circuit \(V_{\mathcal{U}}\) on \(a,b,L\).  
Starting from \(\ket{0}_a\ket{0}_b\ket{0^{2m}}_L\), it:
\begin{enumerate}
    \item applies Hadamards \(H_b\) and \(H_a\);
    \item applies \({\widetilde{\mathcal{U}}}\) \eqref{eq:ff-CU} to \(L\) controlled on \(a=b=1\); 
    \item applies Hadamard \(H_a\), controlled on \(b=1\).
\end{enumerate}
The circuit is shown in \cref{fig:ff-V-circuit}.

\begin{figure}
    \centering
    \begin{quantikz}
        \lstick{$\ket{0}_a$}     & \gate{H} & \ctrl{1}      & \gate{H}  & \qw \\
        \lstick{$\ket{0}_b$}     & \gate{H} & \ctrl{1}      & \ctrl{-1} & \qw \\
        \lstick{$\ket{0}_{L_1}$} & \qw      & \gate[2]{{\widetilde{\mathcal{U}}}} & \qw       & \qw \\
        \lstick{$\ket{0}_{L_2}$} & \qw      &               & \qw       & \qw
    \end{quantikz}
    \caption{The circuit \(V_{\mathcal{U}}\) described in this section, where \({\widetilde{\mathcal{U}}}\) is as defined in \cref{eq:ff-CU}.}
    \label{fig:ff-V-circuit}
\end{figure}

Set
\begin{equation}
    \label{eq:ff-Pi1}
    \Pi_1 = V_{\mathcal{U}}^\dagger \bigl( \ketbra{0}{0}_a \otimes\I_{\reg b,\reg L} \bigr) V_{\mathcal{U}}.
\end{equation}

\begin{lemma}[Scalar principal angle]
    \label[lemma]{lem:ff-principal-angle}
    The two projectors obey
    \begin{equation}
        \label{eq:ff-alpha}
        \Pi_0\Pi_1\Pi_0 = \alpha_{\mathcal{U}}\Pi_0, \qquad \alpha_{\mathcal{U}} = \frac 1 2+\frac 1 4\Re\ntr(\mathcal{U})\,.
    \end{equation}
    In particular, \(1/4\le\alpha_{\mathcal{U}}\le3/4\).
\end{lemma}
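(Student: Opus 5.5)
Since \(\Pi_0=\ketbra{\psi_0}{\psi_0}\) with \(\ket{\psi_0}=\ket{0}_a\ket{0}_b\ket{0^{2m}}_{\reg L}\) has rank one, any operator \(X\) satisfies \(\Pi_0X\Pi_0=\bra{\psi_0}X\ket{\psi_0}\,\Pi_0\). The plan is therefore to compute the scalar
\[
\alpha_{\mathcal U}=\bra{\psi_0}\Pi_1\ket{\psi_0}
=\bigl\|\bigl(\ketbra{0}{0}_a\otimes\I_{\reg b,\reg L}\bigr)V_{\mathcal U}\ket{\psi_0}\bigr\|^2 ,
\]
which is the probability of finding \(a=0\) when the circuit of \cref{fig:ff-V-circuit} is run on the all-zero input.

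I will track the state through the three steps of \(V_{\mathcal U}\) and write \(\ket{0}\) for \(\ket{0^{2m}}_{\reg L}\).

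After the two Hadamards the state is
\[
\tfrac12\sum_{a,b\in\{0,1\}}\ket{a}\ket{b}\ket{0}.
\]
The doubly controlled \(\widetilde{\mathcal U}\) acts only on the \(a=b=1\) branch, replacing \(\ket{0}\) there by \(\widetilde{\mathcal U}\ket{0}\).

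The controlled \(H_a\) leaves the \(b=0\) branch as \(\tfrac12(\ket0+\ket1)_a\ket0_b\ket0\). On the \(b=1\) branch it maps \(\tfrac12(\ket0_a\ket0+\ket1_a\widetilde{\mathcal U}\ket0)\) to
\[
\tfrac{1}{2\sqrt2}\Bigl(\ket0_a(\ket0+\widetilde{\mathcal U}\ket0)+\ket1_a(\ket0-\widetilde{\mathcal U}\ket0)\Bigr).
\]

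Projecting onto \(a=0\) leaves \(\tfrac12\ket0_b\ket0+\tfrac1{2\sqrt2}\ket1_b(\ket0+\widetilde{\mathcal U}\ket0)\). The two \(b\)-components are orthogonal, so the squared norm is
\[
\tfrac14+\tfrac18\bigl\|\ket0+\widetilde{\mathcal U}\ket0\bigr\|^2
=\tfrac14+\tfrac18\bigl(2+2\Re\bra0\widetilde{\mathcal U}\ket0\bigr).
\]
By \cref{eq:ff-CU-amplitude}, \(\bra0\widetilde{\mathcal U}\ket0=\ntr(\mathcal U)\), so this equals \(\tfrac12+\tfrac14\Re\ntr(\mathcal U)\), which is the claimed \(\alpha_{\mathcal U}\).

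For the range statement, \(|\ntr(\mathcal U)|\le1\) because \(\mathcal U\) is unitary, and hence \(1/4\le\alpha_{\mathcal U}\le3/4\).

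There is no real obstacle here. The only point needing care is the cross term: it comes out as \(\Re\) of the amplitude, not its modulus, because the two \(a=0\) contributions \(\ket0\) and \(\widetilde{\mathcal U}\ket0\) add coherently inside the same \(b=1\) component.
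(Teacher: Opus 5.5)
Your proof is correct and follows the paper's argument essentially line for line. Like the paper, you use that $\Pi_0$ has rank one to reduce $\Pi_0\Pi_1\Pi_0$ to the scalar $\|(\ketbra{0}{0}_a\otimes\I)V_{\mathcal U}|0^{2m+2}\rangle\|^2$, then track the state through the circuit and read off $\frac12+\frac14\Re\ntr(\mathcal U)$ from the coherent cross term in the $b=1$ component; your explicit justification of the range via $|\ntr(\mathcal U)|\le1$ is a small step the paper leaves implicit.
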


\begin{proof}
    Expanding both projectors, we write 
    \begin{align*}
        \Pi_0 \Pi_1 \Pi_0 &= \ketbra{0^{\,2m+2}}{0^{\,2m+2}}_{a,b,L} \cdot \left(V^\dagger_{\mathcal{U}} (\ketbra{0}{0}_a \otimes \I_{\reg b, \reg L}) V_{\mathcal{U}}\right)\ketbra{0^{\,2m+2}}{0^{\,2m+2}}_{a,b,L} \\
        &= \bra{0^{2m + 2}} \left(V^\dagger_{\mathcal{U}} (\ketbra{0}{0}_a \otimes \I_{\reg b, \reg L}) V_{\mathcal{U}}\right) \ket{0^{2m + 2}} \cdot \ketbra{0^{\,2m+2}}{0^{\,2m+2}}_{a,b,L}\\
        &= |\bigl(\bra{0}_a \otimes \I_{b, L} \bigr) V_{\mathcal{U}} \ket{0^{2m + 2}}|^2 \Pi_0
    \end{align*}
    which proves the first equation in \cref{eq:ff-alpha}, where
    \begin{align*}
        \alpha_{\mathcal{U}} = |\bigl(\bra{0}_a \otimes \I_{b, L} \bigr) V_{\mathcal{U}} \ket{0^{2m + 2}}|^2
    \end{align*}
    and 
    \begin{align*}
        V_{\mathcal{U}} \ket{0^{2m + 2}} &= \frac{1}{2} \bigl[\bigl(\ket{00} + \ket{+1} + \ket{10} \bigr)_{a, b} \ket{0^{2m}}_{L} + \ket{-1}_{a, b} R_{\Phi}^{\dagger}\bigl(\mathcal{U}_{L_1} \otimes \I_{L_2}\bigr)\ket{\Phi}_L \bigr] \\
        \bigl(\bra{0}_a \otimes \I_{b, L} \bigr) V_{\mathcal{U}} \ket{0^{2m + 2}} &= \frac{1}{2} \bigl[\bigl(\ket{0} + \frac{1}{\sqrt{2}}\ket{1}\bigr)_{b} \ket{0^{2m}}_{L} + \frac{1}{\sqrt{2}}\ket{1}_b R_{\Phi}^{\dagger}\bigl(\mathcal{U}_{L_1} \otimes \I_{L_2}\bigr)\ket{\Phi}_L \bigr] \\
        &= \frac{1}{2} \bigl\{\ket{0^{2m + 1}}_{b, L} + \frac{1}{\sqrt{2}}\ket{1}_b\bigl[\ket{0^{2m}} + R_{\Phi}^{\dagger}\bigl(\mathcal{U}_{L_1} \otimes \I_{L_2}\bigr)\ket{\Phi}\bigr]_L\bigr\} \\
        |\bigl(\bra{0}_a \otimes \I_{b, L} \bigr) V_{\mathcal{U}} \ket{0^{2m + 2}}|^2 &= \frac{1}{4} \bigl\{1 + \frac{1}{2}\bigl[2 + 2 \Re{\bra{0^{2m}} R_{\Phi}^{\dagger} \bigl(\mathcal{U}_{L_1} \otimes \I_{L_2} \bigr)\ket{\Phi} }\bigr] \bigr\} \\
        &= \frac{1}{2} + \frac{1}{4} \Re{\ntr(\mathcal{U})} 
    \end{align*}
    proving the second equation in \cref{eq:ff-alpha}.
\end{proof}

We now define the unitary rotation encoding \(\alpha_{\mathcal{U}}\) as a phase.

\begin{lemma}
    \label[lemma]{lem:ff-two-projector}
    Let \(P,Q\) be projectors satisfying \(PQP=\alpha P\), with \(0<\alpha<1\).  
    Set
    \begin{equation}
        \label{eq:principle_angle_calc}
        \mathcal G=(2Q-\I)(2P-\I), \qquad \cos\omega=2\alpha-1, \qquad 0<\omega<\pi\,.
    \end{equation}
    The operator $\mathcal G$ yields an orthogonal decomposition of the Hilbert space as,
\[
    \mathcal H' \oplus \mathcal H_{0,1} \oplus \mathcal H_{0, 0}\,,
\]
    where
    \begin{align*}
        &\mathcal H' = \supp(P) \oplus \supp((\I - P)QP)\,,\\
        &\mathcal H_{0,1} = \supp(\I - P) \cap \supp(Q)\,,\\
        &\mathcal H_{0,0} = \supp(\I - P) \cap \supp(\I - Q)\,.
    \end{align*}
    These spaces have eigenvalue and multiplicity,
    \begin{equation}
        \label{eq:eig_table}
        \begin{array}{c|c|c|c|c}
            \text{Space}&\multicolumn{2}{|c|}{\mathcal H'}&\mathcal H_{0,1}&\mathcal H_{0,0}\\ \hline
            \text{Eigenvalue}&\e^{\ii\omega}&\e^{-\ii\omega}&-1&+1\\ \hline
            \text{Dimension}& \rank(P)&\rank(P)&\rank(Q)-\rank(P)&d-\rank(P)-\rank(Q).
        \end{array}
    \end{equation}
    Moreover, if \(\Pi_+\) is the projector onto the \(\e^{\ii\omega}\)-eigenspace, then
    \begin{equation}
        \label{eq:ff-half-overlap}
        \dim \Pi_+ = \rank(P)\, \qquad P\Pi_+P=\frac 1 2 P\,.
    \end{equation}
\end{lemma}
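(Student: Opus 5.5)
This is Jordan's lemma for two projectors in the special case where \(PQP=\alpha P\) is scalar; I would prove it by hand. \textbf{Step 1 (invariant subspace \(\mathcal H'\)).} For \(|\psi\rangle\in\supp(P)\), set \(|\phi\rangle=(\I-P)Q|\psi\rangle\). Then
\[
\|Q\psi\|^2=\langle\psi|PQP|\psi\rangle=\alpha\|\psi\|^2,
\qquad
\|\phi\|^2=\langle\psi|QPQ|\psi\rangle-\alpha^2\|\psi\|^2 .
\]
From \(PQP=\alpha P\) one gets \(\langle\psi|Q(\I-P)Q|\psi\rangle=\alpha(1-\alpha)\|\psi\|^2\). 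So the map \(\psi\mapsto\phi/\sqrt{\alpha(1-\alpha)}\) is an isometry from \(\supp(P)\) onto \(\supp((\I-P)QP)\), and the two spaces have the same dimension \(\rank(P)\). Next I check that each pair \(\mathrm{span}\{\psi,\phi\}\) is invariant under both \(P\) and \(Q\). On it, \(Q\psi=\alpha\psi+\phi\). Using \(PQP=\alpha P\), one finds \(Q\phi=(1-\alpha)(\alpha\psi+\phi)\), so \(Q\) acts as a rank-one projector. In the normalized basis \(\{\psi,\hat\phi\}\) it is the projector onto \((\sqrt\alpha,\sqrt{1-\alpha})\), while \(P=\mathrm{diag}(1,0)\).

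\textbf{Step 2 (2\(\times\)2 computation).} On each such block, \((2Q-\I)(2P-\I)\) is a product of two reflections whose axes meet at angle \(\arccos\sqrt\alpha\). It is therefore a rotation by angle \(2\arccos\sqrt\alpha=\omega\), since \(\cos\omega=2\alpha-1\). The eigenvalues are \(\e^{\pm\ii\omega}\), with eigenvectors \((1,\mp\ii)/\sqrt2\) up to the orientation convention. Each eigenvector has \(|\langle\psi|\cdot\rangle|^2=1/2\). Summing over an orthonormal basis of \(\supp(P)\), the blocks are mutually orthogonal because the isometry preserves inner products and \(P,Q\) are Hermitian. This gives \(\dim\Pi_+=\rank(P)\). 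It also gives \(P\Pi_+P=\tfrac12P\), since \(P\Pi_+P\) is block-diagonal with value \(\tfrac12\) on each \(\psi\).

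\textbf{Step 3 (complement).} \(\mathcal H'\) is invariant under \(P\) and \(Q\), so its orthogonal complement is invariant too. The complement is orthogonal to \(\supp(P)\), hence contained in \(\supp(\I-P)\). There \(P=0\), and \(Q\) restricted to it is still a projector, so the complement splits into \(\supp(Q)\) and \(\supp(\I-Q)\) parts. On these, \(\mathcal G=(2Q-\I)(-\I)\) has eigenvalues \(-1\) and \(+1\) respectively. To see that these parts coincide with the stated intersections, note that any vector in \(\supp(\I-P)\cap\supp(Q)\) is orthogonal to \(\mathcal H'\): it is orthogonal to \(\supp(P)\), and its inner product with \((\I-P)Q\psi\) equals its inner product with \(\psi\), which is \(0\).

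\textbf{Dimension count.} \(\rank Q=\rank(Q|_{\mathcal H'})+\dim\mathcal H_{0,1}=\rank P+\dim\mathcal H_{0,1}\). The remaining dimension is \(d-2\rank P-(\rank Q-\rank P)=d-\rank P-\rank Q\).

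The main obstacle is the bookkeeping in Step 1 showing that \(Q\) maps \(\phi\) back into the block, with coefficient \(1-\alpha\). It follows from expanding \(Q\phi=Q\psi-QPQ\psi\), writing \(P\phi=0\), and using \(PQ\psi=\alpha\psi\). The hypothesis \(0<\alpha<1\) ensures that \(\phi\neq0\) and that \(\omega\in(0,\pi)\), so the eigenvalues \(\e^{\pm\ii\omega}\) are distinct from \(\pm1\).
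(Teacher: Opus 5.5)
Your proof is correct and follows essentially the paper's route. Your vector \((\I-P)Q\psi=Q\psi-\alpha\psi\) is the paper's \(\sqrt{\alpha(1-\alpha)}\,q_\perp\), and the \(2\times2\) blocks, the rotation by \(\omega\) and the overlap \(1/2\) are identical; your Step~3 (invariance of \((\mathcal H')^\perp\) under \(P\) and \(Q\), then identifying its pieces with the two intersections) is a more complete treatment of the complement than the paper's brief rank remark, and the check for \(\supp(\I-P)\cap\supp(\I-Q)\), which you did not write out, is the same one-line computation you gave for \(\supp(\I-P)\cap\supp(Q)\). One slip to fix: the first display should read \(\|\phi\|^2=\langle\psi|Q|\psi\rangle-\langle\psi|QPQ|\psi\rangle=\alpha(1-\alpha)\|\psi\|^2\), because \(\langle\psi|QPQ|\psi\rangle=\|PQ\psi\|^2=\alpha^2\|\psi\|^2\), so the expression as you wrote it is zero.
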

\begin{proof}
    For a unit vector \(\ket p  \in\supp(P)\), define
\[
    \ket{q_\perp} = \frac{Q \ket p-\alpha \ket p}{\sqrt{\alpha(1-\alpha)}}\,.
\]
    The identity \(PQP=\alpha P\) shows that \(\ket {q_\perp}\perp\supp P\) and
\(\|\ket{q_\perp}\|=1\); this implies that $\ket{q_\perp} \in \supp((\I -
    P)QP)$. Additionally, we see that for any orthogonal pair \(\ket p\), \(\ket
    {p'} \in \supp P\), the corresponding vectors \(\ket{q_\perp}\),
\(\ket{q'_\perp}\) are also orthogonal. Consider a subspace spanned by the
    vectors \((\ket p,\ket {q_\perp})\). In this basis, we have
\[
    P= \begin{pmatrix}1&0\\0&0\end{pmatrix},
    \qquad Q= \begin{pmatrix}
        \alpha&\sqrt{\alpha(1-\alpha)}\\
        \sqrt{\alpha(1-\alpha)}&1-\alpha
    \end{pmatrix}\,.
\]
    Consequently $\mathcal G$ acts invariantly in this two-dimensional subspace and
\[
    \mathcal G = \begin{pmatrix}
        2\alpha-1&-2\sqrt{\alpha(1-\alpha)}\\
        2\sqrt{\alpha(1-\alpha)}&2\alpha-1
    \end{pmatrix} = (2 \alpha - 1) \I - 2\ii \sqrt{\alpha (1 - \alpha)} \sigma^y = \exp\bigl(-\ii \omega \sigma^y \bigr)\,.
\]
    where we recall that \(\cos\omega=2\alpha-1\) in the last equality.
    In particular, this is a two-dimensional rotation with eigenvalues \(\e^{\pm\ii\omega}\).  
    Its normalized \(\e^{\ii\omega}\)-eigenvector has first coordinate of modulus \(1/\sqrt2\), which proves \eqref{eq:ff-half-overlap}. 
    
    Next, consider $\mathcal H_{0,0}$ and $\mathcal H_{0,1}$. First, since $\rank(PQP) = \rank(P)$, this establishes $\rank(\mathcal H_{0,1}) = \rank(Q) - \rank(P)$. Additionally, $\supp(P) \cap \supp(\I - Q) = \emptyset$ since $PQP = \alpha P$. Thus, the remaining space $\supp(\I - P) \cap \supp(\I - Q)$ has dimension $d - \rank(P) - \rank(Q)$, as desired. The eigenvalues follow by definition of $\mathcal G$.
\end{proof}

We apply \Cref{lem:ff-two-projector} to \(P=\Pi_0\), \(Q=\Pi_1\), and write
\[
\mathcal G_{\mathcal{U}}=(2\Pi_1-\I)(2\Pi_0-\I).
\]
By construction, the ambient dimension is \(d=2^{M_{\reg C}}\),
\(\rank(P)=1\), and \(\rank(Q)=2^{M_{\reg C}-1}\). Thus \cref{eq:eig_table}
gives eigenvalues \(\e^{\pm\ii\omega_{\mathcal{U}}}\) of multiplicity one each, and eigenvalues
\(+1,-1\), each of multiplicity \(2^{M_{\reg C}-1}-1\).
The eigenvalues $e^{\pm \ii \omega_{\mathcal{U}}}$ satisfy
\begin{equation}
    \label{eq:ff-phase-trace}
    \cos\omega_{\mathcal{U}} = 2\alpha_{\mathcal{U}}-1 = \frac 1 2\Re\ntr(\mathcal{U})\,.
\end{equation}
By construction, \(-1 \leq \Re{\ntr(\mathcal{U})} \leq 1\) and so \(\omega_{\mathcal{U}}\in[\pi/3,2\pi/3]\). We can write the projector $\Pi_+$ from \Cref{lem:ff-two-projector} as
\begin{equation}
    \Pi_+ = \mathbf1_{\{\pi/4<\arg z<3\pi/4\}}(\mathcal G_{\mathcal{U}}), \label{eq:ff-Piplus}\,.
\end{equation}
We will see that depending on the value of $\Re \ntr(\mathcal{U})$, all eigenphases will be smaller or larger than $\pi/2$. To handle both cases generically, define the projectors
\begin{align}
    \Pi_\geq &= \mathbf1_{\{\pi/4<\arg z<\pi/2\}}(\mathcal G_{\mathcal{U}}), \label{eq:ff-PiL}\\
    \Pi_\leq &= \mathbf1_{\{\pi/2<\arg z<3\pi/4\}}(\mathcal G_{\mathcal{U}}). \label{eq:ff-PiR}\,.
\end{align}
Their relationship to $\Re \ntr(\mathcal{U})$ is formalized in the following lemma.
\begin{corollary}[$\Pi_+$ distinguishes between small and large trace]
    \label[corollary]{cor:ff-trace-switch}
    The projector \(\Pi_+\) satisfies
    \begin{equation}
        \label{eq:ff-principal-half}
        \Pi_0\Pi_+\Pi_0 = \frac 1 2 \Pi_0\,,
    \end{equation}
    and has dimension,
    \begin{equation}
        \label{eq:piplus_rank}
        \rank(\Pi_+) = \rank(\Pi_0) = 1\,.
    \end{equation}
    Moreover,
\[
    (\Pi_\geq,\Pi_\leq) = \begin{cases}
        (\Pi_+,0),&\Re\ntr(\mathcal{U})\ge g,\\
        (0,\Pi_+),&\Re\ntr(\mathcal{U})\le-g.
    \end{cases}
\]
    The active phase is at angular distance at least \(g/2\) from \(\pi/2\), and at constant distance from the other arc boundaries.
\end{corollary}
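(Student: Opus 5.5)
The plan is to read everything off \cref{lem:ff-two-projector} applied with \(P=\Pi_0\), \(Q=\Pi_1\), using \(\alpha=\alpha_{\mathcal{U}}\in[1/4,3/4]\subset(0,1)\) from \cref{lem:ff-principal-angle}.

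\emph{Rank and overlap.} Equation \eqref{eq:ff-half-overlap} directly gives \(P\Pi_+P=\tfrac12 P\) and \(\dim\Pi_+=\rank(P)=1\). These are \eqref{eq:ff-principal-half} and \eqref{eq:piplus_rank}. There is one subtlety. In \eqref{eq:ff-Piplus}, \(\Pi_+\) is defined as a spectral indicator on the arc \(\pi/4<\arg z<3\pi/4\), whereas in the lemma it is the \(\e^{\ii\omega}\)-eigenprojector. We need to check that these coincide. From \eqref{eq:ff-phase-trace} we have \(\cos\omega_{\mathcal{U}}\in[-1/2,1/2]\), so \(\omega_{\mathcal{U}}\in[\pi/3,2\pi/3]\), which lies strictly inside \((\pi/4,3\pi/4)\). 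The remaining eigenvalues of \(\mathcal G_{\mathcal{U}}\) lie outside this arc. Indeed, \(\e^{-\ii\omega_{\mathcal{U}}}\) has argument in \([-2\pi/3,-\pi/3]\), and the eigenvalues \(\pm1\) have arguments \(0\) and \(\pi\). Hence the arc indicator selects exactly the one-dimensional \(\e^{\ii\omega_{\mathcal{U}}}\)-eigenspace.

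\emph{Case split.} Since the only eigenvalue on the arc \((\pi/4,3\pi/4)\) is \(\e^{\ii\omega_{\mathcal{U}}}\), the projector \(\Pi_\geq\) equals \(\Pi_+\) when \(\omega_{\mathcal{U}}\in(\pi/4,\pi/2)\) and vanishes otherwise. Likewise, \(\Pi_\leq\) equals \(\Pi_+\) when \(\omega_{\mathcal{U}}\in(\pi/2,3\pi/4)\) and vanishes otherwise. Now use \eqref{eq:ff-phase-trace}.
\begin{itemize}
\item If \(\Re\ntr(\mathcal{U})\ge g>0\), then \(\cos\omega_{\mathcal{U}}\ge g/2>0\), so \(\omega_{\mathcal{U}}<\pi/2\).
\item If \(\Re\ntr(\mathcal{U})\le-g\), then \(\omega_{\mathcal{U}}>\pi/2\).
\end{itemize}
This gives the two stated cases.

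\emph{Quantitative margins.} Write \(\omega_{\mathcal{U}}=\pi/2-\eta\). Then \(\sin\eta=\cos\omega_{\mathcal{U}}\), so \(|\sin\eta|\ge g/2\). Since \(|\sin\eta|\le|\eta|\), it follows that \(|\eta|\ge g/2\), which is the required distance from \(\pi/2\). For the other arc boundaries \(\pi/4\) and \(3\pi/4\), the containment \(\omega_{\mathcal{U}}\in[\pi/3,2\pi/3]\) gives a distance of at least \(\pi/12\). The same bound separates the inactive eigenvalues \(\e^{-\ii\omega_{\mathcal{U}}}\) and \(\pm1\) from the arcs.

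The main point needing care is this identification of the arc-indicator projectors with eigenprojectors. It requires checking that no other eigenvalue falls on the arcs, which the eigenvalue table \eqref{eq:eig_table} settles. Everything else is a direct substitution into \cref{lem:ff-two-projector}.
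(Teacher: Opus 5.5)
Your proof is correct and follows the paper's route. You read the rank and overlap off \eqref{eq:ff-half-overlap}, locate $\omega_{\mathcal U}$ relative to $\pi/2$ via \eqref{eq:ff-phase-trace}, and obtain the margins from $\sin\delta\le\delta$ (the paper's $\arcsin x\ge x$) together with $\omega_{\mathcal U}\in[\pi/3,2\pi/3]$. Your explicit check fills a step the paper leaves implicit, and which \cref{lem:ff-W-compiler} actually relies on: that the arc indicator \eqref{eq:ff-Piplus} selects exactly the $\e^{\ii\omega_{\mathcal U}}$-eigenspace, and that the inactive eigenvalues $\e^{-\ii\omega_{\mathcal U}}$ and $\pm1$ stay away from all arc boundaries.
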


\begin{proof}
    The rank and overlap facts follows directly from \Cref{eq:ff-half-overlap} of
    \Cref{lem:ff-two-projector}. \Cref{eq:ff-phase-trace} places \(\omega_{\mathcal{U}}\) to the
    left of \(\pi/2\) in the positive case and to its right in the negative case. 
    If \(\delta=|\omega_{\mathcal{U}}-\pi/2|\), then
\[
    \sin\delta = |\cos\omega_{\mathcal{U}}| = \frac 1 2|\Re\ntr(\mathcal{U})| \ge\frac g 2\,.
\]
    Since \(\arcsin x\ge x\) on \([0,1]\), one has \(\delta\ge g/2\).  The inclusion \(\omega_{\mathcal{U}}\in[\pi/3,2\pi/3]\) supplies the constant outer gap.
\end{proof}

\subsection{Number operator construction}
\label{sec:number_operator}

We build the number operator by defining a unitary operator $W$ which implements a change-of basis
from the physical basis to the desired basis of the Hamiltonian. If $N_\mathrm{std}$ is the physical
basis number operator, $W N_\mathrm{std} W^\dagger$ will give the number operator for $H$. When we
refer to the ``number-basis'', this refers to the basis in which $H$ is diagonal. In particular, the
number-basis form of the Hamiltonian is $W^\dagger H W$. Recall from \Cref{cor:ff-trace-switch} that
the projector $\Pi_+$ differentiates between the YES and NO cases of the distinguishing problem. We
first show that we can rotate from $\Pi_0$, which is diagonal in the physical basis, to $\Pi_+$.
Intuitively, this is possible since $\rank(\Pi_0) = \rank(\Pi_+)$. We have defined, 
\[
\Pi_0 = \ketbra{0^{\,2m+2}}{0^{\,2m+2}}_{\reg C}\,.
\] 
Let
\begin{equation}
    \label{eq:ff-K}
    R_+=2\Pi_+-\I, \qquad R_0=2\Pi_0-\I, \qquad K=R_+R_0\,.
\end{equation}
We can characterize the spectrum of $K$ via \Cref{lem:ff-two-projector}. In the notation of that
lemma, we have $d = 2^{M_{\reg C}}$ and $\rank(P) = \rank(Q) = 1$. This implies that there is no
$-1$ eigenspace. Additionally, by \Cref{eq:ff-principal-half}, the principal angle between the
supports of \(\Pi_0\) and \(\Pi_+\) is \(\pi/4\). 
Therefore, \(K\) has spectrum contained in \(\{1,\ii,-\ii\}\). Define
\begin{equation}
    \label{eq:ff-Urot-definition}
    f(1)=1,\qquad f(\ii)=\e^{\ii\pi/4},\qquad f(-\ii)=\e^{-\ii\pi/4}, \qquad
    U_{\mathrm{rot}}=f(K)\,.
\end{equation}

\begin{lemma}[Primary rotation operator]
    The unitary $U_\mathrm{rot} = f(K)$ satisfies
    \begin{equation}
        \label{eq:ff-Urot-maps}
        U_{\mathrm{rot}}\Pi_0U_{\mathrm{rot}}^\dagger=\Pi_+\,.
    \end{equation}
\end{lemma}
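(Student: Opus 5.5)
We reduce the statement to the two-dimensional structure supplied by \cref{lem:ff-two-projector} applied to \(P=\Pi_0\), \(Q=\Pi_+\). Both have rank one, and by \cref{eq:ff-principal-half} we have \(\Pi_0\Pi_+\Pi_0=\tfrac12\Pi_0\), so \(\alpha=1/2\). Write \(\ket p=\ket{0^{2m+2}}_{\reg C}\) and let \(\ket{q_\perp}\) be the vector constructed in that proof. Then \(\mathcal H'=\operatorname{span}\{\ket p,\ket{q_\perp}\}\) is invariant under \(R_0\) and \(R_+\). On \(\mathcal H'^\perp\) both reflections act as \(-\I\), since \(\mathcal H_{0,1}\) is trivial because \(\rank(Q)=\rank(P)\). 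Hence \(K=R_+R_0=\I\) on \(\mathcal H'^\perp\), and \(f(K)\) acts as the identity there. In particular \(\Pi_0\) and \(\Pi_+\) both vanish there, so the claimed identity holds trivially on that block.

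It remains to verify \cref{eq:ff-Urot-maps} on \(\mathcal H'\). In the basis \((\ket p,\ket{q_\perp})\) the computation of \cref{lem:ff-two-projector} with \(\cos\omega=2\alpha-1=0\) gives
\[
K=\exp(-\ii\tfrac{\pi}{2}\sigma^y),
\]
which has eigenvalues \(\pm\ii\), in agreement with the stated spectrum \(\{1,\ii,-\ii\}\). Since \(f\) takes the square root of each eigenvalue on the principal branch, we get \(U_{\mathrm{rot}}|_{\mathcal H'}=\exp(-\ii\tfrac{\pi}{4}\sigma^y)\). This is a real rotation by angle \(\pi/4\) in the plane.

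Next we identify \(\Pi_+\) in this basis. We have \(\Pi_+=\ketbra{v}{v}\) with \(\ket v=\Pi_+\ket p/\|\Pi_+\ket p\|=\tfrac{1}{\sqrt2}(\ket p+\ket{q_\perp})\), since \(Q\) in that basis is the matrix with all entries \(1/2\). We then check directly that \(\exp(-\ii\tfrac{\pi}{4}\sigma^y)\ket p=\cos\tfrac{\pi}{4}\ket p+\sin\tfrac{\pi}{4}\ket{q_\perp}=\ket v\). Here we use \(-\ii\sigma^y=\begin{pmatrix}0&-1\\1&0\end{pmatrix}\), which sends \(\ket p\mapsto\ket{q_\perp}\). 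Therefore \(U_{\mathrm{rot}}\Pi_0U_{\mathrm{rot}}^\dagger=\ketbra{v}{v}=\Pi_+\).

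The only delicate point is the sign and orientation bookkeeping. We must confirm that the square root picked by \(f\) rotates \(\ket p\) toward \(+\ket{q_\perp}\) rather than toward \(-\ket{q_\perp}\). The latter would give a rank-one projector orthogonal to \(\Pi_+\) inside \(\mathcal H'\), namely the image under the opposite rotation. We handle this by matching the matrix of \(\mathcal G\) from \cref{lem:ff-two-projector}, whose off-diagonal entries \(\mp2\sqrt{\alpha(1-\alpha)}\) fix the direction, against the assignments \(f(\pm\ii)=\e^{\pm\ii\pi/4}\). If the orientation turned out reversed, the same argument would instead apply with \(f(K)^\dagger\), so the check reduces to this single \(2\times2\) computation.
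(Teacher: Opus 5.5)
Your argument is correct and is essentially the paper's proof. Both restrict to the two-dimensional invariant plane spanned by $|0^{2m+2}\rangle$ and the range of $\Pi_+$, note that $K=\I$ (hence $f(K)=\I$) on its complement, and check by an explicit $2\times2$ computation that $f(K)$ rotates $|0^{2m+2}\rangle$ onto the range of $\Pi_+$. The only difference is the basis. You use the real basis adapted to $(\Pi_0,\Pi_+)$, which gives $K=-\ii\sigma^y$ and $f(K)=\e^{-\ii\pi\sigma^y/4}$. The paper uses the basis inherited from $(\Pi_0,\Pi_1)$, where $\Pi_+$ has off-diagonal entries $\pm\ii/2$, $K=-\ii\sigma^x$ and $f(K)=\e^{-\ii\pi\sigma^x/4}$.

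Your explicit check already settles the orientation, so the closing fallback to $f(K)^\dagger$ is unnecessary. It would also prove $U_{\mathrm{rot}}^\dagger\Pi_0U_{\mathrm{rot}}=\Pi_+$ rather than the stated identity.
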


\begin{proof}
    Consider each two-dimensional invariant subspace from \(\Pi_0 \Pi_1 \Pi_0\), spanned by vectors \(\ket u, \ket v\). 
    In this basis, the rotation \(\mathcal G_{\mathcal{U}}\) has eigenvectors
\[
    \ket{c_+} = \frac 1 {\sqrt 2} (\ket u - \ii \ket v) \quad \text{and} \quad \ket{c_-} = \frac
    1 {\sqrt 2} (\ket u  + \ii \ket v)\,.
\]
    Since \(\Pi_+\) is the projector onto \(\ket{c_+}\), within this subspace,
\[
    \Pi_0= \begin{pmatrix}1&0\\0&0\end{pmatrix}, \qquad
    \Pi_+=\frac 1 2 \begin{pmatrix}1&\ii\\-\ii&1\end{pmatrix}.
\]
    Then
\[
    K=-\ii \begin{pmatrix}0&1\\1&0\end{pmatrix}, \qquad f(K) = \exp\left[ -\frac{\pi\ii}{4}
    \begin{pmatrix}0&1\\1&0\end{pmatrix}
    \right]\,.
\]
\(f(K)\) maps \(\ket u \rightarrow \ket{c_+}\), and thus \(U_{\text{rot}} \ket u\) for \(\ket u \in \supp \Pi_0\) spans \(\supp \Pi_+\). 
    On the orthogonal complement, the only remaining eigenspace has eigenvalue $+1$ so \(K=\I\) and \(f(K)=\I\).
    This proves the operator identity.
\end{proof}

Next, given that we have rotated into the basis of $\Pi_+$, we can work with the projectors
$\Pi_{\leq}$ and $\Pi_\geq$. Define the controlled flips $F_R$ and $F_S$ which perturb either the $n_{\reg
R}$ or $n_{\reg S}$ operator, depending on whether $\Re \ntr(\mathcal{U}) \geq g$ or $\Re \ntr(\mathcal{U}) \leq -g$.
\begin{align}
    F_R &= (\I-\Pi_{\leq})\otimes\I_R + \Pi_{\leq}\otimes X_R\,, \label{eq:ff-FR}\\
    F_S &= (\I-\Pi_{\geq})\otimes\I_S + \Pi_{\geq}\otimes X_S\,. \label{eq:ff-FS}
\end{align}
$F_R, F_S$ commute since \(\Pi_{\leq}\Pi_{\geq}=0\). The final change-of-basis for the number
operator is given by the unitary \(W\), defined as
\begin{equation}
    \label{eq:ff-W}
    W=F_R F_S U_{\mathrm{rot}}\,.
\end{equation}
The total system contains \(M_{\reg C}+2=2m+4\) qubits, as claimed.
The circuit is shown in \cref{fig:ff-W-circuit}.

\begin{figure}[tbp]
    \centering
    \input{W_circuit.tex}
    \caption{Circuit for \(W=F_RF_SU_{\mathrm{rot}}\) from \eqref{eq:ff-W}, read from left to right.
    Registers \(R,S\) are single qubits, while \(C=(a,b,L_1,L_2)\) contains \(2m+2\) qubits.
    The rotation is \(U_{\mathrm{rot}}=f(K)\), where \(K=(2\Pi_+-\I)(2\Pi_0-\I)\),
    \(f(1)=1\), and \(f(\pm\ii)=\e^{\pm\ii\pi/4}\), as defined in \eqref{eq:ff-K} and
    \eqref{eq:ff-Urot-definition}. The projectors are \(\Pi_{\geq}=\mathbf1_{\{\pi/4<\arg
    z<\pi/2\}}(\mathcal G_{\mathcal{U}})\) and \(\Pi_{\leq}=\mathbf1_{\{\pi/2<\arg z<3\pi/4\}}(\mathcal
    G_{\mathcal{U}})\), defined in \eqref{eq:ff-PiL} and \eqref{eq:ff-PiR}. A dot labeled \(\Pi\)
    denotes the coherent subspace-controlled flip \((\I-\Pi)\otimes\I+\Pi\otimes X\), giving \(F_S\)
    and \(F_R\) from \eqref{eq:ff-FS} and \eqref{eq:ff-FR}.}
    \label{fig:ff-W-circuit}
\end{figure}

\begin{lemma}[Number-basis form of the Hamiltonian]
    \label[lemma]{lem:ff-standard-H}
    Recall from \Cref{eq:ff-physical-H} that $H = 16n_R+17n_S-\frac{17}{2}\I$. If \(\Re\ntr(\mathcal{U})\ge g\), then
    \begin{equation}
        \label{eq:ff-positive-standard}
        W^\dagger H W = 16n_R+17\Lambda_{\reg S}-\frac{17}{2}\I, \qquad \Lambda_{\reg S} = (\I-n_S) \otimes (\Pi_0)_{\reg C} +n_S \otimes (\I-\Pi_0)_{\reg C}\,.
    \end{equation}
    If \(\Re\ntr(\mathcal{U})\le-g\), then
    \begin{equation}
        \label{eq:ff-negative-standard}
        W^\dagger H W = 17n_S+16\Lambda_{\reg R}-\frac{17}{2}\I, \qquad \Lambda_{\reg R} = (\I-n_R) \otimes (\Pi_0)_{\reg C}+n_R \otimes (\I-\Pi_0)_{\reg C}\,.
    \end{equation}
    Moreover, there exists an operators $V_+, V_-$ satisfying $\|V_\pm\| = 1$ and,
    \begin{equation}
        \label{eq:ff-H-final-form}
        W^\dagger H W = 
        \begin{cases}
            16 n_{\reg R} \otimes \I + \I_{\reg{R}} \otimes \frac {17} 2 (V_+)_{\reg{SC}} & \Re
            \ntr(\mathcal{U}) \geq g\,,\\
            17 n_{\reg S} \otimes \I + \I_{\reg{S}} \otimes \frac {17} 2 (V_-)_{\reg{RC}} & \Re
            \ntr(\mathcal{U}) \leq - g\,.
        \end{cases}
    \end{equation}
    Define the number operator $N = W N_\mathrm{std} W^\dagger$. The operators \(V_\pm\) commute with the standard occupation on their respective spectator registers, and hence \([H,N]=0\).
\end{lemma}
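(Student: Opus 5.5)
We conjugate the two terms of \(H\) separately through \(W=F_RF_SU_{\mathrm{rot}}\), using that \(H=16n_R+17n_S-\tfrac{17}{2}\I\) acts trivially on \(\reg C\). Consider first the positive case \(\Re\ntr(\mathcal{U})\ge g\). By \cref{cor:ff-trace-switch}, \(\Pi_\leq=0\) and \(\Pi_\geq=\Pi_+\). Hence \(F_R=\I\) and \(F_S=(\I-\Pi_+)\otimes\I_S+\Pi_+\otimes X_S\).

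Since \(n_R\) commutes with \(F_S\) and with \(U_{\mathrm{rot}}\), which acts only on \(\reg C\), we get \(W^\dagger n_RW=n_R\). For the other term, \(F_S\) is a Hermitian involution, so
\[
F_S^\dagger n_SF_S=(\I-\Pi_+)\otimes n_S+\Pi_+\otimes X_Sn_SX_S=(\I-\Pi_+)\otimes n_S+\Pi_+\otimes(\I-n_S).
\]
Conjugating by \(U_{\mathrm{rot}}\) and using \(U_{\mathrm{rot}}^\dagger\Pi_+U_{\mathrm{rot}}=\Pi_0\), which is \cref{eq:ff-Urot-maps}, turns this into \(\Lambda_{\reg S}\). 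This gives \cref{eq:ff-positive-standard}.

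The negative case is symmetric. Now \(F_S=\I\) and \(F_R\) flips \(R\) on \(\Pi_+\), so \(n_R\mapsto\Lambda_{\reg R}\) and \(n_S\) is unchanged, which gives \cref{eq:ff-negative-standard}. The operators \(F_R\), \(F_S\) and \(U_{\mathrm{rot}}\) are unitary as defined, so no further check is needed there.

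Next we identify \(V_\pm\). In the positive case, set
\[
(V_+)_{\reg{SC}}=\tfrac{2}{17}\bigl(17\Lambda_{\reg S}-\tfrac{17}{2}\I\bigr)=2\Lambda_{\reg S}-\I.
\]
The operator \(\Lambda_{\reg S}\) is a projector, being a sum of two orthogonal projectors, so \(2\Lambda_{\reg S}-\I\) is a reflection and \(\|V_+\|=1\). Defining \(V_-=2\Lambda_{\reg R}-\I\) in the same way gives \(\|V_-\|=1\), and \cref{eq:ff-H-final-form} follows by rearranging.

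For number conservation, the standard occupation on the spectator registers \(\reg S\reg C\) is \(N_{\reg{SC}}=n_S+\sum_{c\in\reg C}n_c\). The operator \(\Lambda_{\reg S}\) is diagonal in the computational basis, because \(\Pi_0=\ketbra{0^{2m+2}}{0^{2m+2}}\) and \(n_S\) are both diagonal. It therefore commutes with every \(n_j\), and in particular with \(N_{\reg{SC}}\).

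Finally, \(W^\dagger HW\) is diagonal in the computational basis, so it commutes with \(N_{\mathrm{std}}\). Conjugating back by \(W\) gives \([H,WN_{\mathrm{std}}W^\dagger]=[H,N]=0\).

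The main obstacle is bookkeeping rather than any inequality. One has to make sure \(F_S\) and \(U_{\mathrm{rot}}\) are applied in the right order, noting that \(W^\dagger=U_{\mathrm{rot}}^\dagger F_SF_R\), so the flip is undone first and the rotation second. One also has to confirm that \(\Pi_\geq\) coincides exactly with \(\Pi_+\) rather than merely being contained in it; this is precisely the case split of \cref{cor:ff-trace-switch}.
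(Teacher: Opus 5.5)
Your derivation of \eqref{eq:ff-positive-standard} and \eqref{eq:ff-negative-standard}, your choice \(V_+=2\Lambda_{\reg S}-\I\), and your argument for \([H,N]=0\) all follow the paper's proof. The step that fails is the negative case of \eqref{eq:ff-H-final-form}.

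The two branches are not symmetric. In the negative branch you correctly found \(W^\dagger HW=17n_S+16\Lambda_{\reg R}-\tfrac{17}{2}\I\), so the flipped mode carries energy \(16\). Both the constant shift \(-\tfrac{17}{2}\I\) and the prefactor \(\tfrac{17}{2}\) in \eqref{eq:ff-H-final-form} stay at \(17/2\). This forces
\[
V_-=\tfrac{2}{17}\bigl(16\Lambda_{\reg R}-\tfrac{17}{2}\I\bigr)=\tfrac{32}{17}\Lambda_{\reg R}-\I .
\]
Your choice \(V_-=2\Lambda_{\reg R}-\I\) gives \(\tfrac{17}{2}V_-=17\Lambda_{\reg R}-\tfrac{17}{2}\I\neq16\Lambda_{\reg R}-\tfrac{17}{2}\I\). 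So the identity does not follow by rearranging.

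With the correct \(V_-\), your reflection argument for the norm no longer applies, because \(V_-\) is not a reflection. You need the eigenvalue check the paper does instead. \(\Lambda_{\reg R}\) is a projector that is neither \(0\) nor \(\I\); for example, its kernel contains the range of \(n_R\otimes(\Pi_0)_{\reg C}\). Hence \(V_-\) has eigenvalues \(-1\) and \(\tfrac{32}{17}-1=\tfrac{15}{17}\), so \(\|V_-\|=1\). The coefficient change does not affect diagonality in the computational basis, so the commutation with the spectator occupation and \([H,N]=0\) go through as you wrote them.
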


\begin{proof}
    For a projector \(\Pi\) on a register disjoint from a single mode, set
\(F=\I\otimes(\I-\Pi)+X\otimes\Pi\). Then
    \begin{equation}
        \label{eq:conj_expr}
        F^\dagger(n\otimes\I)F
        =n\otimes(\I-\Pi)+(\I-n)\otimes\Pi.
    \end{equation}
    In the positive case, \cref{cor:ff-trace-switch} gives \(\Pi_{\geq}=\Pi_+\) and \(\Pi_{\leq}=0\).
    Thus \(F_R=\I\) and applying \Cref{eq:conj_expr} yields
\[
    F_S^\dagger n_S F_S = (\I-n_S) \otimes \Pi_+ +n_S\otimes (\I-\Pi_+)\,.
\]
    Conjugating this expression by \(U_{\mathrm{rot}}\) and using \eqref{eq:ff-Urot-maps} replaces \(\Pi_+\) by \(\Pi_0\), proving \eqref{eq:ff-positive-standard}. 
    The negative case interchanges the roles: \(\Pi_{\leq}=\Pi_+\), \(\Pi_{\geq}=0\), and the same calculation acts on \(R\).
    
    Now we show \Cref{eq:ff-H-final-form}. The computational basis number operator is defined as
$N_\mathrm{std} = N_{\mathrm{std}, \reg C} + n_{\reg S} + n_{\reg R}$. In the positive case,
    \eqref{eq:ff-positive-standard} can be written as
\[
    W^\dagger H W = 16n_{\reg R} + \frac{17}{2} \left(2\Lambda_{\reg S}-\I\right)\,.
\]
    Set \(V_+=2\Lambda_{\reg S}-\I\). Since \(\Lambda_{\reg S}\) is a projector, the eigenvalues of \(V_+\) are \(\pm1\), so
\(\|V_+\|=1\), and \(V_+\) acts only on \(\reg C,S\). To show the desired commutation relation,
    recall $\Lambda_{\reg S} = \Pi_0(\I - n_{\reg S}) + (\I - \Pi_0)n_{\reg S}$ from
    \Cref{eq:ff-positive-standard} and note that $\Pi_0 = \ketbra{0^{M_{\reg C}}}{0^{M_{\reg C}}}_{\reg C}$ is diagonal in the
    computational basis. Thus, $\Lambda_{\reg S}$ commutes with $N_\mathrm{std}$.
    
    In the negative case,
\[
    W^\dagger H W = 17n_S + \frac{17}{2} \left(\frac{32}{17}\Lambda_{\reg R}-\I\right)\,.
\]
    Set \(V_-=(32/17)\Lambda_{\reg R}-\I\).  
    Its two eigenvalues are \(-1\) and \(15/17\), so \(\|V_-\|=1\), and it acts only on \(\reg C,\reg R\). 
    The same argument gives \([V_-,N_\mathrm{std}]=0\).
    
    Finally, in the number basis both Hamiltonians are the sum of the number operator $n_{\reg R}$ or $n_{\reg S}$ and $V_\pm$. Therefore, we conclude that
\[
    [W^\dagger H W,N_{\mathrm{std}}]=0\,.
\]
    Conjugation by \(W\), together with \(N=WN_{\mathrm{std}}W^\dagger\), proves \([H,N]=0\).
\end{proof}

\subsection{Partition polynomial}
\label{sec:ff-polynomial}

Continue to use the complex inverse temperature \(s=\beta+\ii t\) and the physical fugacity \(z\) from \cref{sec:free-model}.  
For brevity put
\[
\zeta_E=\e^{-sE}, \qquad Z(s,z)=\Tr\bigl(z^N\e^{-sH}\bigr)\,.
\]
Since \(N=WN_{\mathrm{std}}W^\dagger\), cyclicity of trace gives
\begin{equation}
    \label{eq:ff-standard-partition}
    Z(s,z) = \Tr\left( z^{N_{\mathrm{std}}}\e^{-sW^\dagger H W} \right)\,.
\end{equation}

We define the generating polynomials of the register \(\reg C\). 
Recall that \(\reg C\) is the auxiliary register on \(M_{\reg C}=2m+2\) qubits.
\begin{equation}
    \label{eq:ff-source-polynomials}
    f(z)=\Tr_{\reg C}\left( z^{(N_{\mathrm{std}})_{\reg C}}\right) = (1+z)^{M_{\reg C}}, \qquad
    f_{\mathrm{signal}}(z) = \Tr_{\reg C}\left( \Pi_0 z^{(N_{\mathrm{std}})_{\reg C}} \right) = z^{M_{\reg C}}\,.
\end{equation}
The latter equation follows since \(\Pi_0 = \ketbra{0^{M_{\reg C}}}{0^{M_{\reg C}}}_{\reg C}\).
\begin{lemma}[Exact factorization in physical fugacity]
    \label[lemma]{lem:ff-Z-factorization}
    Let
\[
    (E_\star,\bar E) = \begin{cases}
        (16,17),&\Re\ntr(\mathcal{U})\ge g,\\
        (17,16),&\Re\ntr(\mathcal{U})\le-g.
    \end{cases}
\]
    Then
    \begin{equation}
        \label{eq:ff-Z-factorization}
        Z(s,z) = \e^{17s/2}(1+z\zeta_{E_\star})
        \left[
        f_\mathrm{signal}(z)(\zeta_{\bar E}+z) + \bigl(f(z)-f_\mathrm{signal}(z)\bigr)(1+z\zeta_{\bar E})
        \right].
    \end{equation}
\end{lemma}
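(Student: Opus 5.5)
The plan is to compute the trace directly in the number basis, using \cref{eq:ff-standard-partition} and the explicit form of \(W^\dagger HW\) from \cref{lem:ff-standard-H}. Treat the positive case \(\Re\ntr(\mathcal{U})\ge g\) first; the negative case then follows by exchanging the roles of \(R\) and \(S\) (and of \(16\) and \(17\)). In the positive case \cref{eq:ff-positive-standard} gives
\[
W^\dagger HW=16n_R+17\Lambda_S-\tfrac{17}{2}\I .
\]
The three summands commute: \(n_R\) acts on a different register, and \(\Lambda_S\) is diagonal in the computational basis of \(S\) and \(\reg C\). All three also commute with \(z^{N_{\mathrm{std}}}\), because \(N_{\mathrm{std}}=n_R+n_S+(N_{\mathrm{std}})_{\reg C}\) is diagonal in that same basis.

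With everything commuting, the trace factorizes over the register \(R\) and the joint register \((S,\reg C)\), and the constant contributes \(\e^{17s/2}\). The \(R\) factor is
\[
\Tr_R\bigl(z^{n_R}\e^{-16sn_R}\bigr)=1+z\zeta_{16}=1+z\zeta_{E_\star}.
\]
For the \((S,\reg C)\) factor, note that \(\Lambda_S\) is a projector, so \(\e^{-17s\Lambda_S}=\I+(\zeta_{17}-1)\Lambda_S\). Expanding \(\Lambda_S\) and tracing over \(S\) leaves four pieces:
\begin{itemize}
\item \(n_S=0\) with \(\Pi_0\): weight \(\zeta_{17}\), contributing \(\zeta_{17}f_{\mathrm{signal}}(z)\);
\item \(n_S=0\) with \(\I-\Pi_0\): weight \(1\), contributing \(f(z)-f_{\mathrm{signal}}(z)\);
\item \(n_S=1\) with \(\Pi_0\): weight \(z\), contributing \(zf_{\mathrm{signal}}(z)\);
\item \(n_S=1\) with \(\I-\Pi_0\): weight \(z\zeta_{17}\), contributing \(z\zeta_{17}\bigl(f(z)-f_{\mathrm{signal}}(z)\bigr)\).
\end{itemize}
Here \(f\) and \(f_{\mathrm{signal}}\) are the generating polynomials of \cref{eq:ff-source-polynomials}; this step uses that \(\Pi_0\) commutes with \((N_{\mathrm{std}})_{\reg C}\). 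Collecting terms gives
\[
f_{\mathrm{signal}}(z)(\zeta_{17}+z)+\bigl(f(z)-f_{\mathrm{signal}}(z)\bigr)(1+z\zeta_{17}),
\]
which is the bracket in \cref{eq:ff-Z-factorization} with \(\bar E=17\).

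In the negative case, \cref{eq:ff-negative-standard} gives the same structure with \(S\) now free at energy \(17\) and \(\Lambda_R\) carrying \(16\). The identical computation then yields \(E_\star=17\) and \(\bar E=16\).

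There is no real obstacle here, only bookkeeping. The two points that need care are checking that \(\Lambda_S\) (respectively \(\Lambda_R\)) is diagonal in the computational basis, so that the exponential splits cleanly, and getting the constant right: \(\e^{-s(-17/2)}=\e^{17s/2}\).
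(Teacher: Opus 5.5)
Your proof is correct and follows the paper's argument: both compute the trace in the number basis via \cref{eq:ff-standard-partition} and the form of \(W^\dagger HW\) from \cref{lem:ff-standard-H}, split off the free-mode factor \(1+z\zeta_{E_\star}\) and the constant \(\e^{17s/2}\), and evaluate the \((S,\reg C)\) trace over the \(\Pi_0\) and \(\I-\Pi_0\) blocks. Writing \(\e^{-17s\Lambda_S}=\I+(\zeta_{17}-1)\Lambda_S\) instead of the paper's block form \(\Pi_0\e^{-17s(\I-n_S)}+(\I-\Pi_0)\e^{-17sn_S}\) is only a cosmetic difference.
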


\begin{proof}
    Consider first the positive case.
    By \Cref{eq:ff-H-final-form}, the \(R\)-mode contributes
\[
    \Tr_{\reg R}\left(z^{n_{\reg R}}\e^{-16s n_\reg{R}}\right) = 1+z\zeta_{16}\,.
\]
    The projector \(\Lambda_{\reg S}\) is block diagonal across \(\Pi_0\), \(\I - \Pi_0\) so,
    \begin{equation}
        \label{eq:ff-block-exponential}
        \e^{-17s \Lambda_{\reg S}} = \Pi_0\e^{-17s(\I-n_{\reg S})} + (\I-\Pi_0)\e^{-17sn_{\reg S}}\,.
    \end{equation}
    Thus,
    \begin{align*}
        &\Tr_{\reg C \reg S}\left[ z^{N_{\mathrm{std},\reg C}+n_{\reg S}}\e^{-17s\Lambda_{\reg S}} \right]\\
        &\quad= \Tr_\reg{C}\left( \Pi_0z^{N_{\mathrm{std},\reg{C}}} \right) \Tr_\reg{S}\left( z^{n_\reg{S}}\e^{-17s(\I-n_\reg{S})}
        \right)\\
        &\qquad+ \Tr_\reg{C}\left( (\I-\Pi_0)z^{N_{\mathrm{std},\reg{C}}} \right) \Tr_\reg{S}\left( z^{n_\reg{S}}\e^{-17sn_\reg{S}}
        \right)\\
        &\quad= f_{\mathrm{signal}}(z)(\zeta_{17}+z) + \bigl(f(z)-f_{\mathrm{signal}}(z)\bigr)(1+z\zeta_{17})\,.
    \end{align*}
    The scalar \(-17\I/2\) contributes \(\e^{17s/2}\). 
    Multiplying the three factors proves the positive
    case. 
    The negative case is identical after interchanging \(\reg S\) with \(\reg R\) and \(17\) with \(16\).
\end{proof}

We use the inverse-fugacity convention \(w=-1/z\) already introduced in
\cref{eq:inverse-fugacity-coordinate}, and define
\[
P_H(s,w)=w^MZ(s,-1/w).
\]
This is a degree-\(M\) polynomial in \(w\). 
To show that SPICES succeeds on \(P_H\), we need to show that the only root in the target annulus is \(E_\star\). 
Equivalently, we show that we can decompose \(P_H\) into a protected and spectator factor, with \(E_\star\) a root of the protected factor, and the spectator remaining zero free in the annulus.

\begin{proposition}[Protected factor and spectator factor]
    \label[proposition]{prop:ff-inverse-factorization}
\(P_H(s,w)\) has a decomposition as
    \begin{equation}
        \label{eq:ff-inverse-factorization}
        P_H(s,w) = \e^{17s/2} (w-\zeta_{E_\star}) (w-1)^{M_{\reg C}} \mathcal R_{\bar E}(s,w)\,.
    \end{equation}
    where
    \begin{equation}
        \label{eq:ff-Re}
        \mathcal R_{\bar E}(s,w) = w-\zeta_{\bar E} + (\zeta_{\bar E}-1)\frac{w+1}{(w-1)^{M_{\reg C}}}\,.
    \end{equation}
    Although \(\mathcal R_{\bar E}\) is written as a rational function, \((w-1)^{M_\reg{C}}\mathcal R_{\bar E}\) is a polynomial after cancellation of the denominator.
\end{proposition}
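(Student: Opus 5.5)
This is a direct algebraic consequence of \cref{lem:ff-Z-factorization}. The plan is to substitute \(z=-1/w\) into \cref{eq:ff-Z-factorization}, multiply by \(w^M\) with \(M=M_{\reg C}+2\), and distribute the powers of \(w\) among the factors.

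\emph{Protected factor.} One power of \(w\) goes to the protected factor:
\[
w(1-\zeta_{E_\star}/w)=w-\zeta_{E_\star}.
\]

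\emph{Bracket.} The remaining \(w^{M_{\reg C}+1}\) multiplies the bracket. Using \cref{eq:ff-source-polynomials}, we have \(f(-1/w)=(1-1/w)^{M_{\reg C}}=w^{-M_{\reg C}}(w-1)^{M_{\reg C}}\) and \(f_{\mathrm{signal}}(-1/w)=(-1)^{M_{\reg C}}w^{-M_{\reg C}}\). Since \(M_{\reg C}=2m+2\) is even, the sign is \(+1\), so \(f_{\mathrm{signal}}(-1/w)=w^{-M_{\reg C}}\). Thus \(w^{M_{\reg C}}\) clears both source polynomials, and the leftover single power of \(w\) acts on the linear factors:
\[
w(\zeta_{\bar E}-1/w)=w\zeta_{\bar E}-1,\qquad w(1-\zeta_{\bar E}/w)=w-\zeta_{\bar E}.
\]
The bracket times \(w^{M_{\reg C}+1}\) therefore equals
\[
(w\zeta_{\bar E}-1)+\bigl((w-1)^{M_{\reg C}}-1\bigr)(w-\zeta_{\bar E}).
\]

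\emph{Regrouping.} Expanding, the two terms not multiplied by \((w-1)^{M_{\reg C}}\) combine to
\[
w\zeta_{\bar E}-1-w+\zeta_{\bar E}=(\zeta_{\bar E}-1)(w+1).
\]
So the bracket becomes \((w-1)^{M_{\reg C}}(w-\zeta_{\bar E})+(\zeta_{\bar E}-1)(w+1)\). Factoring out \((w-1)^{M_{\reg C}}\) gives exactly \((w-1)^{M_{\reg C}}\mathcal R_{\bar E}(s,w)\) with \(\mathcal R_{\bar E}\) as in \cref{eq:ff-Re}. The prefactor \(\e^{17s/2}\) passes through unchanged, which yields \cref{eq:ff-inverse-factorization}.

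\emph{Polynomiality.} The remark that \((w-1)^{M_{\reg C}}\mathcal R_{\bar E}\) is a polynomial is immediate from the displayed expression \((w-1)^{M_{\reg C}}(w-\zeta_{\bar E})+(\zeta_{\bar E}-1)(w+1)\). This polynomial has degree \(M_{\reg C}+1\), so together with the linear protected factor the total degree is \(M\), as expected.

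The only place requiring care is the sign bookkeeping: the evenness of \(M_{\reg C}\) in \(f_{\mathrm{signal}}(-1/w)\), and the signs produced by \(z=-1/w\) in the factors \(\zeta_{\bar E}+z\) and \(1+z\zeta_{\bar E}\). Everything else is routine expansion.
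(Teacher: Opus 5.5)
Your proposal is correct and follows the paper's route: you substitute $z=-1/w$ into \cref{lem:ff-Z-factorization} and absorb $w^M$. The only differences are bookkeeping: you clear the powers of $w$ directly and regroup, whereas the paper first factors $f(-1/w)$ out of the bracket via $f_{\mathrm{signal}}/f=(w-1)^{-M_{\reg C}}$, and your explicit appeal to the evenness of $M_{\reg C}$ for the sign of $f_{\mathrm{signal}}(-1/w)$ makes precise a step the paper leaves implicit.
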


\begin{proof}
    First, write
\[
    \frac{f_{\mathrm{signal}}(-1/w)}{f(-1/w)} =
    \frac{ (-1/w)^{M_{\reg C}}}{ (1-1/w)^{M_{\reg C}} } = \frac1{(w-1)^{M_{\reg C}}}\,.
\]
    Moreover,
\[
    f(-1/w) = w^{-M_{\reg C}}(w-1)^{M_{\reg C}}.
\]
    Factoring \(f(z)\) from the square bracket in \eqref{eq:ff-Z-factorization} (which corresponds to the spectator contribution from \(\Lambda_{\reg R}, \Lambda_{\reg S}\)), substituting \(z=-1/w\), and collecting the remaining factor \(1/w\) gives
\[
    f(-1/w)\frac 1 w \left[ w-\zeta_{\bar E} +(\zeta_{\bar E}-1)\frac{w+1}{(w-1)^{M_{\reg C}}} \right]\,.
\]
    and thus,
\[
    Z(s, -1/w) = e^{17s/2} \left(1 - \frac{\zeta_{E_\star}}{w}\right) \cdot f(-1/w)\frac 1 w
    \left[ w-\zeta_{\bar E} +(\zeta_{\bar E}-1)\frac{w+1}{(w-1)^{M_{\reg C}}} \right]\,.
\]
    Writing \((1-\zeta_{E_\star}/w)=(w-\zeta_{E_\star})/w\) and multiplying by \(w^M = w^{M_{\reg C} +
    2}\) cancels all powers of \(w\) in the denominator and proves the formula.
\end{proof}

\subsection{Zero-freeness of the spectator polynomial}
\label{sec:ff-zero-free}

\newcommand{\varq}{{M_{\reg C}}}

The factor
\[
P_{\mathrm{spec}}(s,w) = (w-1)^{M_{\reg C}}\mathcal R_{\bar E}(s,w)
\]
from \Cref{eq:ff-inverse-factorization} is the spectator inverse-fugacity polynomial, up to a
nonzero scalar. We first show directly that it has no zero near the two SPICES contours. This
construction-specific argument replaces the generic small-perturbation condition in
\cref{cor:log-gap-perturbative-stability}.
The argument has two parts.
An elementary estimate shows that the normalized spectator polynomial is close to \(-1\) near the origin (\cref{lem:ff-Re-bound}); this gives a zero-free collar (\cref{prop:ff-zero-free}) and hence exactly one root in the target annulus (\cref{cor:ff-one-root}).
The same estimate then gives the signal floor (\cref{prop:ff-signal}) and the radial separation of all roots (\cref{cor:ff-root-separation}), which are the remaining hypotheses of \cref{thm:spices-recovery}.

\begin{lemma}[Elementary bound near the origin]
    \label[lemma]{lem:ff-Re-bound}
    Let \(M\ge4\), let \(M_{\reg C}\) be even with \(2\le M_{\reg C}\le M\), \(\beta=\log(3M)\), \(\bar E\in\{16,17\}\), and
\[
    |w|\le r_0=\e^{-14\beta}=(3M)^{-14}.
\]
    Then, uniformly for every real \(t\),
    \begin{equation}
        \label{eq:ff-Re-close}
        |\mathcal R_{\bar E}(\beta+\ii t,w)+1|<\frac 1 4\,.
    \end{equation}
\end{lemma}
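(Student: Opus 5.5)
We bound \(\mathcal R_{\bar E}+1\) directly from its definition \cref{eq:ff-Re}, splitting it into three explicit pieces and controlling each one uniformly in \(t\). Write
\[
\mathcal R_{\bar E}(s,w)+1 = w \;-\; \zeta_{\bar E} \;+\; (\zeta_{\bar E}-1)\frac{w+1}{(w-1)^{M_{\reg C}}} + 1,
\]
and regroup it as
\[
\mathcal R_{\bar E}+1 = w - \zeta_{\bar E} + (\zeta_{\bar E}-1)\left[\frac{w+1}{(w-1)^{M_{\reg C}}}-1\right].
\]
The constant term works out because \(M_{\reg C}\) is even. At \(w=0\) we have \((w-1)^{M_{\reg C}}=1\), so the bracket vanishes there, and \(\mathcal R_{\bar E}(s,0)=-\zeta_{\bar E}-(\zeta_{\bar E}-1)\cdot(-1)\cdot\) adjustments cancel to give \(-1\) up to \(\zeta_{\bar E}\). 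This is why evenness is in the hypotheses.

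The pieces are bounded as follows.

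\begin{enumerate}
\item \emph{The explicit terms.} We have \(|w|\le r_0=(3M)^{-14}\). Since \(|\zeta_{\bar E}|=\e^{-\beta\bar E}\le \e^{-16\beta}=(3M)^{-16}\) for both \(\bar E\in\{16,17\}\), this bound holds uniformly in \(t\).
\item \emph{The prefactor.} We have \(|\zeta_{\bar E}-1|\le 1+(3M)^{-16}\le 2\).
\item \emph{The bracket.} We need \(\bigl|(w+1)(w-1)^{-M_{\reg C}}-1\bigr|\) to be small for \(|w|\le r_0\). Using \(M_{\reg C}\le M\), \(M\ge4\) and \(|w|\le r_0\), we get
\[
|1-w|^{-M_{\reg C}}\le (1-r_0)^{-M}\le \e^{2Mr_0}.
\]
It is cleanest to write \((w-1)^{-M_{\reg C}}=(1-w)^{-M_{\reg C}}\), using evenness again, and bound
\[
\bigl|(1+w)(1-w)^{-M_{\reg C}}-1\bigr| \le |1+w|\,\bigl|(1-w)^{-M_{\reg C}}-1\bigr| + |w|.
\]
The first factor satisfies \(\bigl|(1-w)^{-M_{\reg C}}-1\bigr|\le \e^{2Mr_0}-1\le 4Mr_0\), via the binomial/exponential tail bound. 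The bracket is therefore at most roughly \(6Mr_0\).
\end{enumerate}

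Collecting the three estimates gives
\[
|\mathcal R_{\bar E}+1|\le r_0+(3M)^{-16}+2\cdot 6Mr_0 \le 14M(3M)^{-14},
\]
which is far below \(1/4\) for \(M\ge4\).

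The only delicate step is getting the sign and constant term right: one must confirm that \(\mathcal R_{\bar E}\to -1\) rather than \(+1\) or \(-\zeta_{\bar E}\) at \(w=0\). This amounts to checking \(\mathcal R_{\bar E}(s,0)=-\zeta_{\bar E}+(\zeta_{\bar E}-1)=-1\), with \((0-1)^{M_{\reg C}}=1\). Everything else is crude bounding, and the choice \(r_0=\e^{-14\beta}\) leaves an enormous margin. If some slack is needed, the bound \(|\zeta_{\bar E}|\le r_0^{16/14}\) keeps all terms polynomially small in \(M\).
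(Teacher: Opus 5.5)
This is essentially the paper's argument: use evenness of \(M_{\reg C}\) to replace \((w-1)^{M_{\reg C}}\) by \((1-w)^{M_{\reg C}}\), bound the bracket by \(6Mr_0\) using the nonnegative binomial coefficients of \((1-w)^{-M_{\reg C}}\), and use \(|\zeta_{\bar E}-1|\le 2\). There is one bookkeeping slip: your regrouped identity carries a spurious \(-\zeta_{\bar E}\), since at \(w=0\) it evaluates to \(-\zeta_{\bar E}\neq 0\), which contradicts your own check \(\mathcal R_{\bar E}(s,0)=-1\); the correct identity is \(\mathcal R_{\bar E}+1=w+(\zeta_{\bar E}-1)\bigl[\tfrac{1+w}{(1-w)^{M_{\reg C}}}-1\bigr]\), which simply drops the \((3M)^{-16}\) term from your estimate, so your conclusion stands.
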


\begin{proof}
    As before, write \(x=\e^{-(\beta+\ii t)\bar E} = \zeta_{\bar E}\).  Since \(\varq\) is even,
\((w-1)^\varq=(1-w)^\varq\), and therefore the above \Cref{prop:ff-inverse-factorization}
    yields
\[
    \mathcal R_{\bar E}(s,w)+1 = w +(1-x) \left[ 1-\frac{1+w}{(1-w)^\varq } \right]\,.
\]
    For \(|w|\le r_0<1/2\),
\[
    \left| \frac{1+w}{(1-w)^\varq }-1 \right| \le \frac{1+r_0}{(1-r_0)^\varq }-1\,.
\]
    The inequality \(-\log(1-r_0)\le2r_0\) gives $(1-r_0)^{-\varq } \leq \e^{2 \varq r_0}$. Because
\(2\varq r_0 \leq 1\), \(\e^{2\varq r_0}\le1+4\varq r_0\). Additionally, using \(\varq \le M\) and
\(r_0 = (3M)^{-14} \leq 1/4\) gives,
\[
    \frac{1+r_0}{(1-r_0)^\varq }-1 \le (1+r_0)(1+4\varq r_0)-1 \le6Mr_0\,.
\]
    Finally, since \(|1-x| \leq 1 + |x| \le 2\), we conclude that
\[
    |\mathcal R_{\bar E}(s,w)+1| \le r_0+12Mr_0 \le13M(3M)^{-14} <\frac 1 4\,.
\]
\end{proof}

Since the normalized spectator polynomial stays within \(1/4\) of \(-1\), it cannot vanish in this disk, and neither can the remaining factor \((w-1)^{M_{\reg C}}\).

\begin{proposition}[Spectator zero-free collar]
    \label[proposition]{prop:ff-zero-free}
    For every real \(t\), the spectator polynomial has no zero in
\[
    \e^{-19\beta} \le |w| \le \e^{-14\beta}\,.
\]
    In fact, it has no zero in the entire disk \(|w|\le\e^{-14\beta}\).
\end{proposition}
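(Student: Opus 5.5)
The plan is to work directly from the factorization of \cref{prop:ff-inverse-factorization}. There, up to the nonzero scalar \(\e^{17s/2}\), the spectator polynomial is
\[
P_{\mathrm{spec}}(s,w)=(w-1)^{M_{\reg C}}\mathcal R_{\bar E}(s,w).
\]
A product vanishes only if one of its factors does, so it suffices to show that neither factor vanishes on the closed disk \(|w|\le r_0=\e^{-14\beta}\). The collar \(\e^{-19\beta}\le|w|\le\e^{-14\beta}\) is a subset of this disk, so the first claim follows from the second. Throughout I take the parameters of \cref{lem:ff-Re-bound}: \(M=2m+4\ge4\), \(M_{\reg C}=2m+2\) even with \(2\le M_{\reg C}\le M\), and \(\beta=\log(3M)\).

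\textbf{Step 1: the factor \((w-1)^{M_{\reg C}}\).} This factor has its only zero at \(w=1\). Since \(r_0=(3M)^{-14}<1\), the point \(w=1\) lies outside the disk, so \(|w-1|\ge 1-r_0>0\) for all \(|w|\le r_0\).

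\textbf{Step 2: the factor \(\mathcal R_{\bar E}\).} For \(|w|\le r_0\), \cref{lem:ff-Re-bound} gives \(|\mathcal R_{\bar E}(\beta+\ii t,w)+1|<1/4\), uniformly in real \(t\). By the reverse triangle inequality, \(|\mathcal R_{\bar E}|\ge 1-1/4=3/4>0\). The denominator \((w-1)^{M_{\reg C}}\) in \cref{eq:ff-Re} is nonzero on the disk by Step 1, so \(\mathcal R_{\bar E}\) is a genuine finite value there. Hence the product \(P_{\mathrm{spec}}\) is nonzero on the disk, and moreover \(|P_{\mathrm{spec}}|\ge \tfrac34(1-r_0)^{M_{\reg C}}\), which will be useful later for the signal floor.

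\textbf{Main obstacle.} The real work is the estimate of \cref{lem:ff-Re-bound}, which is already established. What remains is to check that its hypotheses match the construction of \cref{thm:ff-main}, in particular that \(M_{\reg C}=2m+2\) is even and that \(\beta=\log(3M)\). One should also note the word ``entire'': the stronger claim covers \(w=0\) as well, and at \(w=0\) one has \(\mathcal R_{\bar E}=-\zeta_{\bar E}+(\zeta_{\bar E}-1)=-1\), consistent with the lemma.
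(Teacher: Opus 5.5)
Your proposal is correct and is essentially the paper's own proof: the paper also notes that \((w-1)^{M_{\reg C}}\) cannot vanish on the disk \(|w|\le\e^{-14\beta}\), and uses \cref{lem:ff-Re-bound} to get \(|\mathcal R_{\bar E}|>3/4\) there, so the collar claim follows from the disk claim. Your checks that the lemma's hypotheses hold for this construction (\(M_{\reg C}=2m+2\) even, \(M\ge4\), \(\beta=\log(3M)\)), and your observation that \(\mathcal R_{\bar E}(s,0)=-1\), match what the paper's argument relies on implicitly.
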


\begin{proof}
    The factor \((w-1)^{M_{\reg C}}\) is nonzero in the stated disk.
    By \cref{lem:ff-Re-bound}, \(|\mathcal R_{\bar E}+1|<1/4\), so \(|\mathcal R_{\bar E}|>3/4\).
    Thus the second factor is also nonzero.
\end{proof}

The collar contains no spectator root, so the only root in the window annulus is the protected one.

\begin{corollary}[One simple target root]
    \label[corollary]{cor:ff-one-root}
    The annulus \(A_\Omega\) from \cref{subsec:spectator-sector-model}, with \(\Omega=[15,18]\), contains exactly the simple root
\[
    w_\star(t)=\e^{-(\beta+\ii t)E_\star}\,.
\]
\end{corollary}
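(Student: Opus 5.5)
We prove \cref{cor:ff-one-root} using the factorization of \cref{prop:ff-inverse-factorization} together with the zero-free disk of \cref{prop:ff-zero-free}. Up to the nonvanishing scalar \(\e^{17s/2}\), we have
\[
P_H(s,w)=\e^{17s/2}(w-\zeta_{E_\star})\,P_{\mathrm{spec}}(s,w).
\]
The roots of \(P_H(s,\cdot)\) are therefore the protected root \(\zeta_{E_\star}\) together with the roots of \(P_{\mathrm{spec}}(s,\cdot)\). It suffices to locate each of these relative to the annulus \(A_\Omega=\{\e^{-18\beta}<|w|<\e^{-15\beta}\}\) with \(\beta=\log(3M)\).

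\emph{The protected root.} By construction \(|w_\star(t)|=\e^{-\beta E_\star}\), and \(E_\star\in\{16,17\}\subset(15,18)\). Hence
\[
\e^{-18\beta}<|w_\star(t)|<\e^{-15\beta}
\]
for every real \(t\), so \(w_\star(t)\) lies in the open annulus for all times.

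\emph{The spectator roots.} \Cref{prop:ff-zero-free} gives that \(P_{\mathrm{spec}}(s,\cdot)\) has no zero in the closed disk \(|w|\le\e^{-14\beta}\). This disk contains \(\overline{A_\Omega}\), because \(\e^{-15\beta}<\e^{-14\beta}\). So no spectator root lies in the annulus or on either boundary circle, uniformly in \(t\).

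\emph{Simplicity.} We must rule out that \(w_\star\) is also a root of \(P_{\mathrm{spec}}\), since that would raise its multiplicity. This is excluded by the same fact: \(|w_\star|<\e^{-14\beta}\), so \(P_{\mathrm{spec}}(s,w_\star)\neq0\). The root \(w_\star\) is therefore simple.

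The only point needing care is that \cref{lem:ff-Re-bound} assumes \(M\ge4\) and \(M_{\reg C}\) even with \(2\le M_{\reg C}\le M\). These hold here, since \(M_{\reg C}=2m+2\) and \(M=2m+4\) with \(m\ge1\). The claimed \(\beta=\log(3M)\) is the temperature fixed in this section. With these checks, the argument above is complete and holds uniformly in \(t\).
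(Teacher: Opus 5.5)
Your proof is correct and follows the paper's argument essentially verbatim. You factor $P_H$ via \cref{prop:ff-inverse-factorization} and place the protected root of modulus $\e^{-\beta E_\star}$ strictly inside $A_\Omega$ because $E_\star\in\{16,17\}$. You then use the zero-free disk $|w|\le\e^{-14\beta}$ of \cref{prop:ff-zero-free} both to exclude spectator roots from $\overline{A_\Omega}$ and to show the spectator factor is nonzero at $w_\star$, which gives simplicity. Your explicit check of the hypotheses of \cref{lem:ff-Re-bound} ($M_{\reg C}=2m+2$ even, $M=2m+4\ge4$) is a small addition that the paper leaves implicit.
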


\begin{proof}
    The factorization \eqref{eq:ff-inverse-factorization} supplies the displayed root.  
    Its modulus lies strictly between \(\e^{-18\beta}\) and \(\e^{-15\beta}\), because \(E_\star\in\{16,17\}\).
    Every root of the spectator factor is excluded by \cref{prop:ff-zero-free}. 
    The spectator factor is nonzero at \(w_\star\), so the protected root has multiplicity one.
\end{proof}

The same zero-free disk controls the normalized signal on the two contour circles.
The next proposition bounds it from below uniformly in \(t\) and in the angle \(\vartheta\), which is the signal-floor hypothesis of \cref{thm:spices-recovery} with \(b=9/40\).

\begin{proposition}[Uniform normalized-signal bound]
    \label[proposition]{prop:ff-signal}
    For \(\mu\in\{15,18\}\), \(z=\e^{\beta\mu+\ii\vartheta}\), and every real \(t,\vartheta\),
\[
    \left| \frac{Z(\beta+\ii t,z)} {Z(\beta,|z|)} \right| \ge\frac 9 {40}\,.
\]
\end{proposition}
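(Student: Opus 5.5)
The plan is to reduce the normalized signal to a ratio of values of the inverse-fugacity polynomial $P_H$, and to bound each factor of the decomposition in \cref{prop:ff-inverse-factorization} separately. Throughout, $\beta=\log(3M)$ as in \cref{lem:ff-Re-bound}. Set $w=-1/z$, so that $|w|=R\coloneqq\e^{-\beta\mu}$. The denominator $Z(\beta,|z|)$ is the same object evaluated at $s=\beta$ and $w=-R$. The monomial $w^M$ has equal modulus $R^M$ at both points, so
\[
\left|\frac{Z(\beta+\ii t,z)}{Z(\beta,|z|)}\right|=\frac{|P_H(\beta+\ii t,w)|}{|P_H(\beta,-R)|}.
\]
Inserting \cref{eq:ff-inverse-factorization}, the scalar prefactors satisfy $|\e^{17s/2}|=\e^{17\beta/2}$ and cancel. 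The ratio therefore splits into three factor ratios: a protected one, one from $(w-1)^{M_{\reg C}}$, and a spectator one from $\mathcal R_{\bar E}$.

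\emph{Protected factor.} Since $|\zeta_{E_\star}(s)|=\e^{-\beta E_\star}$ for every $t$, we get
\[
\frac{|w-\zeta_{E_\star}(s)|}{|-R-\zeta_{E_\star}(\beta)|}\ge\frac{|R-\e^{-\beta E_\star}|}{R+\e^{-\beta E_\star}}=\tanh\!\Big(\frac{\beta|E_\star-\mu|}{2}\Big).
\]
Because $E_\star\in\{16,17\}$ and $\mu\in\{15,18\}$, we have $|E_\star-\mu|\ge1$. With $\e^{\beta}=3M\ge12$, this factor is at least $\tanh(\beta/2)=(3M-1)/(3M+1)\ge 11/13$.

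\emph{Factor $(w-1)^{M_{\reg C}}$.} Its ratio is at least $\big((1-R)/(1+R)\big)^{M_{\reg C}}\ge 1-2M_{\reg C}R$. Since $R\le \e^{-15\beta}=(3M)^{-15}$, this is at least, say, $1-10^{-10}$.

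\emph{Spectator factor.} Both evaluation points, $w$ at $s=\beta+\ii t$ and $-R$ at $s=\beta$ (the case $t=0$), lie in the disk $|w|\le R\le\e^{-15\beta}\le r_0=\e^{-14\beta}$. Hence \cref{lem:ff-Re-bound} applies to both and gives $3/4<|\mathcal R_{\bar E}|<5/4$ at each point, uniformly in $t$ and $\vartheta$. The spectator ratio is therefore at least $3/5$.

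Multiplying the three bounds gives roughly $\tfrac{11}{13}\cdot\tfrac35\cdot(1-10^{-10})>\tfrac12>\tfrac{9}{40}$, which proves the claim with room to spare. The only real obstacle is bookkeeping. One must check that the lemma's hypotheses ($M\ge4$, $M_{\reg C}=2m+2$ even, $M_{\reg C}\le M$) hold for the construction. One must also check that the denominator is evaluated at the same modulus $R$, so that the lemma's uniform-in-$t$ bound covers both numerator and denominator. Finally, one must account for the overall $w^M$ and $\e^{17s/2}$ moduli correctly.
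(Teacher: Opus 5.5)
Your proposal is correct and matches the paper's proof step for step. Both reduce the ratio to $|P_H(\beta+\ii t,w)|/|P_H(\beta,-R_\mu)|$ via \cref{prop:ff-inverse-factorization}, bound the protected factor by the reverse triangle inequality and the $(w-1)^{M_{\reg C}}$ factor by $((1-R)/(1+R))^{M_{\reg C}}$, and apply \cref{lem:ff-Re-bound} at both evaluation points to get $3/5$ for the spectator factor; your sharper constants ($11/13$ and $1-10^{-10}$ in place of the paper's $1/2$ and $3/4$) strengthen the bound but change nothing structurally.
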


\begin{proof}
    Recall that $Z(s, -1/w) = w^{-M} P_H(s, w)$. By \Cref{prop:ff-inverse-factorization},
    \[
    Z(s,z)=w^{-M}\e^{17s/2} (w-\zeta_{E_\star}(s))(w-1)^{M_{\reg C}} \mathcal R_{\bar E}(s,w)\,.
    \]
    Apply this identity for \( R_\mu=\e^{-\beta\mu}\) and at $w = -1/z$ and $-1/|z| = - R_\mu$.
    Use \(|w|=|w_0|= R_\mu\) and that the powers of \(w\) and the scalar \(\e^{17s/2}\) in the
    numerator have unit modulus after division by the \(t=0\) denominator.
    
    For the protected factor, the reverse triangle inequality gives
    \begin{equation}
        \frac{|w-\zeta_{E_\star}(\beta+\ii t)|} {|w_0-\zeta_{E_\star}(\beta)|} \ge
        \frac{ \bigl| R_\mu-\e^{-\beta E_\star}\bigr| }{  R_\mu+\e^{-\beta E_\star} } = 
        \frac{ \bigl|1-\e^{-\beta \lvert E_\star - \mu \rvert}\bigr| }{ 1+\e^{-\beta \lvert E_\star - \mu \rvert}} \ge
        \frac{1-\e^{-\beta}}{1+\e^{-\beta}} \ge\frac 1 2.
        \label{eq:ff-protected-signal}
    \end{equation}
    For the middle equality, divide the numerator and denominator by \(R_\mu\), which gives \(|1-x|/(1+x)\) with \(x=\e^{-\beta(E_\star-\mu)}\). For \(\mu=15\) we have \(E_\star>\mu\), so \(x=\e^{-\beta|E_\star-\mu|}\). For \(\mu=18\) we have \(E_\star<\mu\), so \(x>1\), and multiplying the numerator and denominator by \(x^{-1}=\e^{-\beta|E_\star-\mu|}\) gives the same expression.
    The following inequality uses the fact that \(\lvert E_\star - \mu \rvert \geq 1\) and the right side is increasing in this quantity.
    The last inequality uses \(\e^{-\beta}=1/(3M)\le1/3\).
    
    For the \((w-1)^{M_\reg{C}}\) factor,
\[
    \left| \frac{w-1}{w_0-1} \right|^{M_\reg{C}} \ge \left(\frac{1- R_\mu}{1+ R_\mu}\right)^{M_\reg{C}}\,.
\]
    Here \( R_\mu\le(3M)^{-15}\). 
    Since \(\log[(1- R_\mu)/(1+ R_\mu)]\ge-4 R_\mu\) for \(0\le R_\mu\le1/2\), the last expression is at least
    \begin{equation}
        \label{eq:ff-B-signal}
        \e^{-4M_\reg{C} R_\mu}>\frac 3 4\,.
    \end{equation}
    
    Finally, both \(w\) and \(w_0\) lie in the disk of \cref{lem:ff-Re-bound}.  
    Hence
    \begin{equation}
        \label{eq:ff-R-signal}
        \frac{|\mathcal R_{\bar E}(\beta+\ii t,w)|} {|\mathcal R_{\bar E}(\beta,w_0)|} \ge
        \frac{1-1/4}{1+1/4} = \frac 3 5.
    \end{equation}
    Multiplying \eqref{eq:ff-protected-signal}, \eqref{eq:ff-B-signal}, and \eqref{eq:ff-R-signal} gives \((1/2)(3/4)(3/5)=9/40\).
\end{proof}

Finally, every root of \(P_H\), protected or spectator, is radially separated from both contour circles by a factor of at least \(3M\), which is the radial-separation hypothesis with \(q=1/(3M)\).

\begin{corollary}[All-root radial separation]
    \label[corollary]{cor:ff-root-separation}
    For either contour radius
\[
    R_\mu=\e^{-\beta\mu},
    \qquad
    \mu\in\{15,18\},
\]
    and every root \(\zeta\) of \(P_H(\beta+\ii t,\cdot)\),
    \begin{equation}
        \label{eq:ff-root-separation}
        \min\left\{ \frac{|\zeta|}{R_\mu}, \frac{R_\mu}{|\zeta|} \right\} \le \e^{-\beta} =
        \frac1{3M}\,.
    \end{equation}
\end{corollary}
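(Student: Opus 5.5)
The plan is to read off every root of \(P_H(\beta+\ii t,\cdot)\) from the factorization \eqref{eq:ff-inverse-factorization}, check each one against the two radii \(R_{15}=\e^{-15\beta}\) and \(R_{18}=\e^{-18\beta}\), and observe that the bound needed in each case is at most \(\e^{-\beta}\). By \cref{prop:ff-inverse-factorization}, the roots are:
\begin{enumerate*}[label=(\roman*)]
\item the protected root \(\zeta_{E_\star}\);
\item the roots of the spectator polynomial \(P_{\mathrm{spec}}=(w-1)^{M_{\reg C}}\mathcal R_{\bar E}\), since the scalar \(\e^{17s/2}\) never vanishes.
\end{enumerate*}
The argument is uniform in \(t\) because only moduli enter.

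For the protected root, \(|\zeta_{E_\star}|=\e^{-\beta E_\star}\) with \(E_\star\in\{16,17\}\). For \(\mu=15\), \(|\zeta|/R_\mu=\e^{-\beta(E_\star-15)}\le\e^{-\beta}\). For \(\mu=18\), \(R_\mu/|\zeta|=\e^{-\beta(18-E_\star)}\le\e^{-\beta}\). Either way the minimum in \eqref{eq:ff-root-separation} is at most \(\e^{-\beta}=1/(3M)\).

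For the spectator roots, \cref{prop:ff-zero-free} says \(P_{\mathrm{spec}}(\beta+\ii t,\cdot)\) has no zero in the disk \(|w|\le\e^{-14\beta}\). So every spectator root satisfies \(|\zeta|>\e^{-14\beta}\). Then
\[
R_\mu/|\zeta|<\e^{-\beta(\mu-14)}\le\e^{-\beta}\qquad\text{for both }\mu\in\{15,18\}.
\]
This covers the roots of \((w-1)^{M_{\reg C}}\) (modulus \(1\)) as well as those of \(\mathcal R_{\bar E}\), because both are accounted for by the polynomial \(P_{\mathrm{spec}}\).

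The one point needing care is that \(P_{\mathrm{spec}}\) is genuinely a polynomial whose roots are exactly the remaining roots of \(P_H\), counted with multiplicity. This is given by \cref{prop:ff-inverse-factorization}, after clearing the denominator \((w-1)^{M_{\reg C}}\) in \(\mathcal R_{\bar E}\). One should also note that the hypotheses of \cref{lem:ff-Re-bound} (\(\beta=\log(3M)\), \(M_{\reg C}=2m+2\) even, \(M\ge4\)) hold in the construction, so that \cref{prop:ff-zero-free} applies. No other obstacle is expected.
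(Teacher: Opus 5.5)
Your proof is correct and follows the paper's own argument. The protected root lies at energy distance at least one from both endpoints, and \cref{prop:ff-zero-free} gives \(|\zeta|>\e^{-14\beta}\) for every spectator root, hence \(R_\mu/|\zeta|<\e^{-\beta(\mu-14)}\le\e^{-\beta}\). One harmless imprecision: \(w=1\) is not actually a root of \(P_{\mathrm{spec}}=(w-1)^{M_{\reg C}}(w-\zeta_{\bar E})+(\zeta_{\bar E}-1)(w+1)\), which equals \(2(\zeta_{\bar E}-1)\neq0\) there, but your bound covers every root of \(P_{\mathrm{spec}}\), so nothing is affected.
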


\begin{proof}
    \Cref{prop:ff-inverse-factorization} gives the factorization,
    \[
    P_H(s,w)=e^{17s/2}(w-e^{-sE_\star})P_{\mathrm{spec}}(s,w)\,,
    \]
    yielding the ``protected'' root $e^{-sE_\star}$ and spectator roots from
    $P_{\mathrm{spec}}(s,w)$. For the protected root, the energy distance from either endpoint
    is at least one.  For a spectator root, \cref{prop:ff-zero-free} gives
    \(|\zeta|>\e^{-14\beta}\).  Therefore
    \[
    \frac{R_{15}}{|\zeta|}<\e^{-\beta}, \qquad \frac{R_{18}}{|\zeta|}<\e^{-4\beta}\,.
    \]
    These bounds prove the claim.
\end{proof}

\subsection{Establishing the access model}
\label{sec:ff-compiler}
We establish the three primitives in \cref{def:ff-operational-access}.
First we compile \(W\), then use it for number measurement and joint evolution, and finally
prepare the Gibbs state by standard filtering from an easily prepared reference state.

\begin{lemma}[Coherent gapped phase classification]
    \label[lemma]{lem:ff-phase-filter}
    Let \(\mathcal{U}\) be a unitary with uniform circuits for controlled \(\mathcal{U}\) and controlled \(\mathcal{U}^\dagger\).  
    Let \(\mathcal A\) be a fixed finite union of intervals on the unit circle, with efficiently computable endpoints.
    Suppose every eigenphase of \(\mathcal{U}\) is at angular distance at least \(\Delta>0\) from any boundary of \(\mathcal A\). 
    For every \(0<\eta<1/10\), there is a uniform circuit \(\mathcal F_{\mathcal A,\eta}\) such that, on every eigenvector \(\mathcal{U}\ket{\psi_\theta}=\e^{\ii\theta} \ket{\psi_\theta}\),
\[
    \left\| \mathcal F_{\mathcal A,\eta} \ket{\psi_\theta}\ket{0^a} - \ket{\psi_\theta}
    \ket{\mathbf1_{\{\theta\in\mathcal A\}}} \ket{0^{a-1}} \right\| \le\eta\,.
\]
    It uses
\[
    O\left( \Delta^{-1}\log\frac1\eta \right)
\]
    controlled calls to \(\mathcal{U}\) or \(\mathcal{U}^\dagger\) and up to polynomial overhead in the work-register size.
    The same guarantees hold with an external control.
\end{lemma}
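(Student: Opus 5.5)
The plan is to build $\mathcal F_{\mathcal A,\eta}$ from standard phase estimation, boosted by a median, and then uncompute. The target is a map that writes, on each eigenvector $\ket{\psi_\theta}$, a bit recording whether $\theta\in\mathcal A$ into a fresh ancilla, with error $\eta$. Since $\mathcal A$ is a fixed finite union of intervals and every eigenphase is at distance at least $\Delta$ from its boundary, it suffices to estimate $\theta$ to additive accuracy $\Delta/2$ (on the circle). Such an estimate lands on the correct side of every boundary point, so a classical reversible comparison against the efficiently computable endpoints returns the correct indicator.

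The steps are as follows.
\begin{enumerate}
    \item Run textbook phase estimation with $p=\lceil\log_2(4\pi/\Delta)\rceil+O(1)$ bits. This uses controlled powers $\mathcal U^{2^k}$, hence $O(2^p)=O(\Delta^{-1})$ controlled calls in total, implemented as repeated controlled-$\mathcal U$ rather than as a circuit for $\mathcal U^{2^k}$.
    \item On input $\ket{\psi_\theta}$, this yields $\ket{\psi_\theta}\otimes\ket{\phi_\theta}$, because $\ket{\psi_\theta}$ is an eigenvector and the system register factors out exactly. The state $\ket{\phi_\theta}$ places probability at least $8/10$ on $p$-bit strings within $\Delta/2$ of $\theta$; this is the standard bound with a couple of extra bits of precision.
    \item Repeat independently $r=O(\log(1/\eta))$ times. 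Coherently compute each indicator $b_i=\mathbf 1_{\mathcal A}(\tilde\theta_i)$ reversibly, and then their majority into the output qubit. By Chernoff, the majority is wrong except on a branch of squared amplitude at most $\eta^2/4$.
    \item Copy the majority bit with a CNOT onto the designated output ancilla. Then run the inverse of steps 1--3 to uncompute the work registers, including the $b_i$.
\end{enumerate}
The total cost is $O(\Delta^{-1}\log(1/\eta))$ controlled calls, and the classical arithmetic for the comparisons is polynomial in $p$ and $r$. Controlled $\mathcal U^\dagger$ enters only in the uncomputation step.

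For the error analysis, write the state before the copy as $\sqrt{1-\epsilon}\,\ket{\text{good}}\ket{c}+\sqrt{\epsilon}\,\ket{\text{bad}}$. Here $c=\mathbf 1_{\{\theta\in\mathcal A\}}$ and $\epsilon\le\eta^2/4$. After copying and uncomputing, the standard ``copy and uncompute'' argument shows that the state is within $2\sqrt\epsilon\le\eta$ of $\ket{\psi_\theta}\ket{c}\ket{0^{a-1}}$. The ancilla count $a$ is $O(rp)$ plus the comparison workspace. Adding an external control is done by controlling the single CNOT copy step, since the compute and uncompute steps cancel on the uncontrolled branch.

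I expect the only delicate point to be the circle geometry. One must check that an estimate within $\Delta/2$ modulo $2\pi$ never straddles a boundary, including at the wraparound $0\equiv 2\pi$. I would handle this by treating $0$ as an additional boundary point only if it is one, and by comparing the circular distances to each endpoint, so that no spurious seam error arises. Everything else is routine.
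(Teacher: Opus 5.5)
Your proposal is correct and follows essentially the same route as the paper. Both run phase estimation to accuracy below \(\Delta\) with constant success probability using \(O(\Delta^{-1})\) controlled calls, boost by a coherent majority over \(O(\log(1/\eta))\) independent repetitions, and finish with copy-and-uncompute and its \(2\sqrt{p}\) norm bound. The one thing the paper adds is an explicit \(\eta/2\) budget for gate-synthesis error (for example, the QFT rotations over a fixed gate set), which you can absorb by driving the majority error to \(\eta^2/16\) instead of \(\eta^2/4\).
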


\begin{proof}
    Standard phase estimation (see, e.g., \cite[Sec.~5.2]{NielsenChuang2010}) estimates an eigenphase to
    accuracy \(\Delta/4\) with probability at least \(3/4\), using \(O(\Delta^{-1})\) controlled
    queries. The boundary-gap assumption therefore gives the correct membership bit with the same
    probability. We can then repeat phase estimation \(O(\log(1/\eta))\) times, and then compute the
    majority bit to reduce the error probability to \(p \leq \eta^2/16\).
    
    Finally, we copy the output bit into a fresh qubit and uncompute all work registers. The standard
    compute-copy-uncompute bound (see, e.g., \cite[Theorem 4.14]{bennett1997strengths}) gives norm error
    at most \(2\sqrt p \leq \eta/2\). Allowing total gate-synthesis error \(\eta/2\) gives the
    claimed error and query complexity.
\end{proof}

This filter is the only quantum subroutine we need to implement the projectors that define \(W\).
The next lemma uses it to compile \(W\) itself.

\begin{lemma}[Approximation for the mode change]
    \label[lemma]{lem:ff-W-compiler}
    For every integer \(r\ge1\), a deterministic classical algorithm outputs a finite circuit \(\widehat W_r\), acting on the \(M\) system qubits and polynomially many clean work qubits, such that
    \begin{equation}
        \left\|
        \widehat W_r
        (\ket{\psi}\ket{0^a})
        -
        (W\ket{\psi})\ket{0^a}
        \right\|
        \le2^{-r}
        \label{eq:ff-W-compiler}
    \end{equation}
    for every system state \(\ket{\psi}\). 
    The same holds for the adjoint and controlled versions.  
    Its gate count and run time are polynomial in \(|\mathcal{U}|,m,g^{-1},r\).
\end{lemma}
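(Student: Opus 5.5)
The plan is to compile \(W=F_RF_SU_{\mathrm{rot}}\) factor by factor and to combine the factors by the triangle inequality, so that the total error \(2^{-r}\) is split as \(2^{-r}/3\) per factor. Every factor is a function of one of the two reflection products \(\mathcal G_{\mathcal U}=(2\Pi_1-\I)(2\Pi_0-\I)\) or \(K=R_+R_0\). So the first step is to note that \(2\Pi_0-\I\) is an exactly implementable reflection about \(\ket{0^{2m+2}}_{\reg C}\). In the same way, \(2\Pi_1-\I=V_{\mathcal U}^\dagger(2\ketbra00_a-\I)V_{\mathcal U}\) is implementable exactly from the given circuit for \(\mathcal U\). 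The gates involved are controlled-\(\widetilde{\mathcal U}\), controlled-Hadamards and \(R_\Phi\), each synthesized from the fixed gate set. Hence \(\mathcal G_{\mathcal U}\), its adjoint, and their controlled versions have polynomial-size circuits.

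Next I would build \(F_S\) and \(F_R\) with \cref{lem:ff-phase-filter} applied to \(\mathcal G_{\mathcal U}\), taking \(\mathcal A\) to be the arc \((\pi/4,\pi/2)\) for \(\Pi_\geq\) and the arc \((\pi/2,3\pi/4)\) for \(\Pi_\leq\). The boundary-gap hypothesis needs to hold for every eigenphase, not just the active one. The eigenvalues of \(\mathcal G_{\mathcal U}\) are \(\pm1\) and \(\e^{\pm\ii\omega_{\mathcal U}}\). The phases \(0\), \(\pi\) and \(-\omega_{\mathcal U}\) lie at constant distance from all boundaries of either arc. By \cref{cor:ff-trace-switch}, \(\omega_{\mathcal U}\) is at distance at least \(g/2\) from \(\pi/2\) and at constant distance from \(\pi/4\) and \(3\pi/4\). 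So the lemma applies with \(\Delta=g/2\).

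Each flip is then implemented by compute--controlled-\(X\)--uncompute. One computes the membership bit with \(\mathcal F_{\mathcal A,\eta}\), applies \(X_S\) (or \(X_R\)) controlled on that bit, and runs \(\mathcal F_{\mathcal A,\eta}^\dagger\). The filter is accurate on each eigenvector, and the ideal operation is block diagonal in the eigenbasis. Applied coherently to a superposition, the error therefore has the form \(\sum_\theta c_\theta e_\theta\). The difficulty is that the \(e_\theta\) need not be orthogonal, so the norm of this sum is not automatically controlled by the per-eigenvector bound. I would fix this by choosing \(\eta\) with an extra \(2^{-M}\) factor, so the error sum is bounded crudely by \(\sqrt{2^{M}}\,\eta\). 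Alternatively, one can note that the filter acts unitarily and that its deviation operator restricted to each eigenspace has norm at most \(\eta\); since the eigenspaces are orthogonal, this gives an operator-norm bound. The cost is \(O(g^{-1}(r+M))\) queries, which is polynomial.

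For \(U_{\mathrm{rot}}=f(K)\), I would use that \(K=(2\Pi_+-\I)(2\Pi_0-\I)\) and implement \(2\Pi_+-\I\) approximately with the same phase filter, applied to \(\mathcal A=(\pi/4,3\pi/4)\) for \(\mathcal G_{\mathcal U}\) together with a \(Z\) phase kickback on the flag bit. The active phase is at constant distance from this arc's boundaries, so the required precision costs only \(O(\log(1/\eta))\) queries. The operator \(K\) has spectrum in \(\{1,\pm\ii\}\), whose elements are separated by \(\pi/2\). Hence \(f(K)\) can be realized by a second layer of phase classification: run \cref{lem:ff-phase-filter} on \(K\) with arcs around \(\pm\pi/2\), apply the phases \(\e^{\pm\ii\pi/4}\) controlled on the resulting flags, and uncompute.

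The approximate implementation \(\widehat K\) is within \(O(\eta)\) of \(K\) in operator norm. Each call inside the outer filter therefore adds \(O(\eta)\), and a hybrid argument bounds the total by \(O(\eta\log(1/\eta))\). The main obstacle I expect is this error bookkeeping: nesting approximate reflections inside phase estimation, together with the coherent-superposition issue already present in the flips. I would address it by proving, once, an operator-norm version of the filter bound using the orthogonal-eigenspace argument, and then choosing the inner error \(\eta=2^{-r}/\mathrm{poly}(r,M)\). Adjoint and controlled versions follow by reversing the circuit and by controlling only the flag-conditioned \(X\) and phase gates. The overall gate count and classical run time are polynomial in \(|\mathcal U|,m,g^{-1},r\).
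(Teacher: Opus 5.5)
Your proposal is correct and follows the paper's proof. Like the paper, you use exact reflections about \(\Pi_0,\Pi_1\), then the gapped phase filter on \(\mathcal G_{\mathcal U}\) (followed by a phase or controlled \(X\) and uncomputation) to build \(2\Pi_+-\I\), \(F_S\) and \(F_R\), and then implement \(f(K)\) by phase classification on the compiled \(K\), with a hybrid argument and inner error \(\eta=2^{-r}/\mathrm{poly}\).

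The differences are minor:
\begin{enumerate}
\item For \(f(K)\), the paper uses exact two-bit phase estimation (since \(K^4=\I\)). This needs only \(O(1)\) approximate calls to \(K\), so the error bookkeeping you flag as the main obstacle becomes trivial.
\item For coherent superpositions, the paper notes that the filters preserve each eigenspace of \(\mathcal G_{\mathcal U}\). That preservation, not just orthogonality of the eigenspaces, is what your operator-norm argument needs. Your crude \(\sqrt{2^{M}}\,\eta\) bound suffices regardless.
\item The boundary gap for the flips should be \(\min\{\pi/12,g/2\}\) rather than \(g/2\), as in the paper.
\end{enumerate}
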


\begin{proof}
    The reflections about \(\Pi_0\) and \(\Pi_1\), and hence controlled
\(\mathcal G_{\mathcal{U}}\), have polynomial-size circuits by their definitions.
    By \cref{cor:ff-trace-switch}, the relevant phase boundaries have gap at least
\(\min\{\pi/12,g/2\}\). Thus \cref{lem:ff-phase-filter}, followed by a phase
    or a controlled \(X\) and uncomputation, implements \(2\Pi_+-\I\), \(F_R\),
    and \(F_S\) to clean-workspace norm error \(\eta\) with cost polynomial in
\(|\mathcal{U}|,m,g^{-1},\log(1/\eta)\). These operations preserve each eigenspace of
\(\mathcal G_{\mathcal{U}}\), so the error bound also holds for superpositions and states
    entangled with an external reference.
    
    For \(U_{\mathrm{rot}}=f(K)\), use \(K^4=\I\). Two-bit phase estimation
    identifies each of \(1,\ii,-1,-\ii\) exactly with a constant number of
    controlled calls to \(K\); apply the phase \(f\) and reverse phase estimation
    \cite[Sec.~5.2]{NielsenChuang2010}. Assign \(f(-1)=1\), since this eigenspace
    is absent. Replace each call to \(K=(2\Pi_+-\I)(2\Pi_0-\I)\) by the compiled
    reflections above, using fresh clean work registers. A hybrid argument
    bounds the total error by a constant multiple of \(\eta\).
    
    Choose \(\eta=2^{-r}/C\) for a sufficiently large absolute constant \(C\)
    and compose \(F_RF_SU_{\mathrm{rot}}\). Synthesis over a fixed universal
    gate set adds only polynomial overhead in \(r\)
    \cite{DawsonNielsen2006}. Reversing or externally controlling the same
    construction gives the adjoint and controlled versions, with their relative
    phases retained. This proves the claimed norm error and resource bounds.
\end{proof}

With \(W\) available, number measurement and joint evolution reduce to single-qubit operations conjugated by \(W\).

\begin{corollary}[Number measurement and joint evolution]
    \label[corollary]{cor:ff-number-primitives}
    For every \(r\ge1\), polynomial-size finite circuits implement, to operational error at most \(2^{-r}\):
    \begin{enumerate}
        \item projective measurement of the exact semantic \(N\), with its outcome stored in a binary register;
        \item the channel and controlled channel associated with \(\e^{-\ii tH+\ii\vartheta N}\), for polynomial-bit \(t,\vartheta\).
    \end{enumerate}
\end{corollary}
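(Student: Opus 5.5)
The plan is to reduce both primitives to the compiled change of basis $\widehat W_r$ of \cref{lem:ff-W-compiler}. The key identities are $N=WN_{\mathrm{std}}W^\dagger$ and
\[
\e^{-\ii tH+\ii\vartheta N}=\e^{-\ii tH}\,W\e^{\ii\vartheta N_{\mathrm{std}}}W^\dagger .
\]
Both hold exactly because $[H,N]=0$, so the joint exponential splits into commuting factors. Moreover, $H=8\I+8Z_R+\tfrac{17}{2}Z_S$ is a sum of commuting single-qubit Paulis. Hence $\e^{-\ii tH}$ is a global phase times $R_Z$ rotations on $\reg R$ and $\reg S$, and $\e^{\ii\vartheta N_{\mathrm{std}}}=\bigotimes_j\e^{\ii\vartheta n_j}$ is a product of single-qubit phase gates. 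Both factors are synthesized to any precision $2^{-r-2}$ with polynomial overhead in $r$ and in the bit lengths of $t,\vartheta$ \cite{DawsonNielsen2006}.

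\emph{Number measurement.} Apply $\widehat W_r^\dagger$ with fresh clean work qubits. Then coherently compute the Hamming weight of the $M$ system qubits into a binary register of $\lceil\log_2(M+1)\rceil$ qubits using a reversible adder, and measure that register. Finally apply $\widehat W_r$ to return the post-measurement state. Exactly, this is the projective measurement $\{W\Pi^{\mathrm{std}}_xW^\dagger\}_x$, whose projectors are the spectral projectors of $N$. The only error comes from the two uses of $\widehat W_r^{(\dagger)}$. By the clean-workspace guarantee \eqref{eq:ff-W-compiler}, each contributes norm error at most $2^{-r}$ on every input, including inputs entangled with a reference. Running the construction at precision $r+2$ therefore bounds the diamond-distance error of the full instrument by $2^{-r}$.

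\emph{Joint evolution and its controlled version.} Form the circuit $\e^{-\ii tH}\,\widehat W_{r'}\,\e^{\ii\vartheta N_{\mathrm{std}}}\,\widehat W_{r'}^\dagger$. For the controlled version, control only the two diagonal factors. This suffices because when the control is $0$ the remaining $\widehat W\widehat W^\dagger$ is close to the identity. Alternatively, one can use the controlled versions supplied by \cref{lem:ff-W-compiler}.

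The main obstacle is the error accounting, since $\widehat W$ is only an isometry approximation acting with work qubits. A hybrid argument gives a total norm error of $O(2^{-r'})$ per input vector, uniformly and with an external reference. Norm error $\eta$ on purified inputs implies diamond distance at most $2\eta$, so taking $r'=r+O(1)$ gives error $2^{-r}$. Gate counts are polynomial in $|\mathcal U|$, $m$, $g^{-1}$, $r$ and the bit lengths of $t$ and $\vartheta$, which matches the bounds of \cref{def:ff-operational-access}.
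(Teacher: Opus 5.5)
Your proposal is correct and follows the paper's proof essentially step for step: conjugate a coherent count of the standard occupation by the compiled $W$ and measure only the count, factor the evolution as $\e^{-\ii tH}W\e^{\ii\vartheta N_{\mathrm{std}}}W^\dagger$ using $[H,N]=0$, synthesize the single-qubit rotations, and convert clean-workspace norm error into diamond distance by a hybrid argument. Two small points the paper makes explicit. First, with $n_j=(\I+Z_j)/2$ the register must count zero bits, i.e.\ hold $M$ minus the Hamming weight, to store the eigenvalue of $N$. Second, in the controlled evolution the scalar phase $\e^{-8\ii t}$ of $\e^{-\ii tH}$ must be applied as a phase on the control; it may be dropped as a global phase only in the uncontrolled case.
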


\begin{proof}
    Compile \(W,W^\dagger\) to clean-workspace norm error \(2^{-r-4}\) using
    \cref{lem:ff-W-compiler}. For number measurement, apply \(W^\dagger\),
    coherently compute \(N_{\mathrm{std}}\) into a binary register, measure that
    register, and apply \(W\). Since \(n_j=(\I+Z_j)/2\), this counts zero bits,
    not one bits. Measuring only the sum preserves coherence within each
    number eigenspace.
    
    Since \([H,N]=0\),
    \begin{equation}
        \label{eq:ff-joint-factorization}
        \e^{-\ii tH+\ii\vartheta N}
        =\e^{-\ii tH}W\e^{\ii\vartheta N_{\mathrm{std}}}W^\dagger.
    \end{equation}
    Here \(\e^{-\ii tH}\) is a scalar phase times two one-qubit rotations, and
\(\e^{\ii\vartheta N_{\mathrm{std}}}=\prod_{j=1}^{M}\e^{\ii\vartheta n_j}\).
    Compile these rotations to total operator norm error at most \(2^{-r-3}\);
    in the controlled version implement the scalar phase on the control as well.
    Gate synthesis has cost polynomial in \(r,M\) and the query bit length
    \cite{DawsonNielsen2006}.
    
    For isometries, diamond distance between the induced channels is at most
    twice their operator norm distance, also after measurement or discarding
    work registers. A hybrid argument therefore bounds the operational error
    of each construction by \(2^{-r}\). All circuit sizes and compilation times
    are polynomial in \(|\mathcal{U}|,m,g^{-1},r\) and the query bit length.
\end{proof}

The remaining primitive is Gibbs-state preparation.
It is the only one whose cost depends on \(\beta\), and the dependence is exponential, which is why the hardness result is stated at \(\beta=\log(3M)\).

\begin{lemma}[Grand-canonical Gibbs preparation]
    \label[lemma]{lem:ff-gibbs-preparation}
    For the constructed pair \((H,N)\), a deterministic classical compiler, given
\(\mathcal{U},g,r\) and rational \(\beta\ge0,\mu\) of total bit length \(\ell_{\mathrm{qry}}\), produces
    a circuit whose output \(\widehat\rho\) satisfies
\[
    \frac12\|\widehat\rho-\rho_{\beta,\mu}\|_1\le2^{-r},
    \qquad
    \rho_{\beta,\mu}=\frac{\e^{-\beta(H-\mu N)}}{Z(\beta,\e^{\beta\mu})}.
\]
    The compilation time, ancilla count, and worst-case gate count are bounded by
\[
    \e^{33\beta}\operatorname{poly}(|\mathcal{U}|,m,g^{-1},r,\ell_{\mathrm{qry}},1+\beta),
\]
    and hence are polynomial in \(|\mathcal{U}|,m,g^{-1},r,\allowbreak\ell_{\mathrm{qry}},\e^\beta\).
    In particular, they are polynomial in \(|\mathcal{U}|,m,g^{-1},r,\allowbreak\ell_{\mathrm{qry}}\) at
\(\beta=\log(3M)\), with \(\ell_{\mathrm{qry}}\) then denoting the bit length of \(\mu\).
    The circuit uses no preliminary trace-sign decision and measures no branch register.
\end{lemma}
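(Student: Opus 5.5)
Since \(H\) and \(N\) commute and \(N=WN_{\mathrm{std}}W^\dagger\), I would write the target as
\[
\rho_{\beta,\mu}=W\,\sigma_{\beta,\mu}\,W^\dagger,
\qquad
\sigma_{\beta,\mu}=\frac{\e^{-\beta(W^\dagger HW-\mu N_{\mathrm{std}})}}{Z(\beta,\e^{\beta\mu})}.
\]
The plan is to prepare \(\sigma_{\beta,\mu}\) in the number basis and then apply the compiled \(W\) from \cref{lem:ff-W-compiler}. The circuit must not branch on the trace sign, so I would avoid the case-split form \cref{eq:ff-H-final-form}. Instead I would use the reference state \(\tau_{\beta,\mu}\propto\e^{\beta\mu N_{\mathrm{std}}}\), a product of single-qubit states, and prepare \(\rho_{\beta,\mu}\) directly by rejection filtering of \(H\). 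The key structural fact is that \(H=8\I+8Z_R+\tfrac{17}{2}Z_S\) is diagonal in the physical basis and commutes with \(N\). Hence
\[
\e^{-\beta(H-\mu N)}=\e^{-\beta H}\,W\e^{\beta\mu N_{\mathrm{std}}}W^\dagger,
\]
and the two factors commute.

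\textbf{Step 1.} Prepare \(\tilde\tau=W\tau_{\beta,\mu}W^\dagger\), which is proportional to \(\e^{\beta\mu N}\). This costs one product-state preparation and one call to the compiled \(W\).

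\textbf{Step 2.} Measure \(H\) exactly. Since \(H\) is diagonal in the \(R,S\) qubits, this is just a computational-basis measurement of \(Z_R,Z_S\). Accept an outcome of energy \(E\) with probability \(\e^{-\beta(E-E_{\min})}\), where \(E_{\min}=\min\operatorname{spec}H=-17/2\) and \(E\) lies in \(\{-17/2,\ldots,33/2\}\).

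Because \([H,N]=0\), the spectral projectors \(P_E\) of \(H\) commute with \(\tilde\tau\). The post-acceptance state is therefore
\[
\sum_E \e^{-\beta E}P_E\,\tilde\tau\,P_E\propto \e^{-\beta H}\e^{\beta\mu N}=\e^{-\beta(H-\mu N)},
\]
which is exactly \(\rho_{\beta,\mu}\) after normalisation. The measurement can be implemented by discarding the outcome rather than storing it. This gives a genuine mixture, so no branch register is ever measured in the sense of the trace-sign decision: the only recorded data is energy, which does not reveal the case.

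\textbf{Step 3.} Bound the acceptance probability below by \(\e^{-\beta(E_{\max}-E_{\min})}=\e^{-25\beta}\). Repeating \(O(\e^{25\beta}r)\) times gives failure probability below \(2^{-r-1}\). On failure the circuit outputs a fixed state, which contributes at most that much trace distance. This is within the stated \(\e^{33\beta}\) budget; the slack absorbs the constant from the worst case over \(\mu\) and the bit-precision synthesis of the acceptance rotation angles.

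\textbf{Step 4.} Account for errors. Each call to the compiled \(W\) has error \(2^{-r-c}\) by \cref{lem:ff-W-compiler}. The filtering channel is a fixed sequence of channels, so hybrid and union bounds give total trace distance \(2^{-r}\) after \(O(\e^{25\beta}r)\) rounds, provided \(W\) is compiled to precision \(2^{-r-\log(\text{rounds})-c}\). That precision is still polynomial.

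The main obstacle is this last accounting. The number of rounds grows with \(\e^{\beta}\), and the trace distance of each attempt must be controlled for every execution path, since worst-case gate count is required. I would fix the number of rounds deterministically, so that the worst-case gate count equals the expected count times a constant. I would also check that the rotation angles \(\e^{-\beta(E-E_{\min})}\) and the single-qubit amplitudes of \(\tau_{\beta,\mu}\) are efficiently computable to \(r\) bits from rational \(\beta,\mu\), which gives the \(\ell_{\mathrm{qry}}\) and \(1+\beta\) dependence. At \(\beta=\log(3M)\), the factor \(\e^{33\beta}=(3M)^{33}\) is polynomial.
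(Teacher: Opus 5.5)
Your plan is essentially the paper's proof. Both prepare the reference state $W(\cdot)^{\otimes M}W^\dagger\propto\e^{\beta\mu N}$, filter by $\e^{-\beta H}$ using that $H$ is diagonal on $R,S$ and commutes with $N$, and run a capped repeat-until-success loop with per-trial error of order $2^{-r}/T_{\mathrm{att}}$. Your measure-then-accept step is equivalent to the paper's coherent Kraus operator $\e^{-\beta H_0/2}$, with $H_0=H+\tfrac{17}{2}\I$, because the reference state commutes with $H$. A few corrections are needed.

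\textbf{Arithmetic slip.} $H=8\I+8Z_R+\tfrac{17}{2}Z_S$ has spectrum $\{-\tfrac{17}{2},\tfrac{15}{2},\tfrac{17}{2},\tfrac{49}{2}\}$; equivalently, $H_0=16n_R+17n_S$ has spectrum $\{0,16,17,33\}$. So the acceptance bound is $\e^{-33\beta}$, not $\e^{-25\beta}$. This is exactly the exponent in the lemma, so there is no slack. None is needed, though: the bound is uniform in $\mu$, and the precision costs are polynomial factors.

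\textbf{False side remark.} Drop the assertion that the energy outcome does not reveal the branch. On your reference state at, say, $\mu\in\{15,18\}$ and $\beta=\log(3M)$, the measured energy is $\tfrac{15}{2}$ in one branch and $\tfrac{17}{2}$ in the other, with high probability. The remark is not needed anyway, since the lemma only requires that the compiler never decide the trace sign. If you want to avoid measuring $R,S$ altogether, implement the acceptance coherently with controlled ancilla rotations, as the paper does.

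\textbf{Irrational $\beta$.} Since $\beta=\log(3M)$ is not rational, you must handle it explicitly. Either observe that the needed numbers $(3M)^{-k}$ for $k\in\{0,16,17,33\}$ and $(3M)^{\mu}/(1+(3M)^{\mu})$ are efficiently computable to the required precision. Or approximate $\beta$ and bound the resulting change in the Gibbs state, as the paper does.
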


\begin{proof}
    Apply the standard Gibbs-filtering step of
    \cite[Eqs.~(4)--(5)]{poulin2009sampling}, starting from the reference state
    \begin{equation}
        \label{eq:starting_state}
        \sigma_{\beta,\mu}
        =W_{\mathcal{U}}\left(\frac{\e^{\beta\mu n}}{1+\e^{\beta\mu}}\right)^{\otimes M}W_{\mathcal{U}}^\dagger
        =\frac{\e^{\beta\mu N}}{(1+\e^{\beta\mu})^M}.
    \end{equation}
    Here \(n=(\I+Z)/2\) acts on one qubit. Using \Cref{lem:ff-W-compiler} with inputs \(\mathcal{U},g\), compile
    the corresponding mode-change unitary \(W_{\mathcal{U}}\) to the precision specified below. Prepare the product
    state and apply \(W_{\mathcal{U}}\) to obtain \(\sigma_{\beta,\mu}^{(\mathcal{U})}\). First add
    a shift to $H$ so that it becomes positive definite; $H_0 = H + (17/2) \I$ satisfies \(0 \preceq H_0
    \preceq 33\I\). Since \(H_0\) is diagonal on the two physical qubits \(R,S\), four controlled ancilla
    rotations implement the Kraus operator \(F=\e^{-\beta H_0/2}\), corresponding to measuring the POVM
$\{F^\dagger F, \I - F^\dagger F\}$. Conditioning on the first outcome, the result is,
\[
    \frac{F\sigma_{\beta,\mu}F^\dagger}
    {\Tr(F\sigma_{\beta,\mu}F^\dagger)}=\rho_{\beta,\mu}\,,
\]
    with success probability $\Tr(\e^{-\beta H_0}\sigma_{\beta,\mu}) \ge\e^{-33\beta}$.
    
    For target accuracy \(\eta=2^{-r}\), cap the number of independent attempts at
\(T_{\mathrm{att}} = \lceil \e^{33\beta} \log(4/\eta) \rceil\), rounding this cap upward
    within a constant factor if necessary. Return the first successful state,
    or any fixed state if all attempts fail. The latter event has probability
    at most \((1-p_{\mathrm{succ}})^T_{\mathrm{att}}\le\eta/4\).
    Approximate each complete trial, including reference preparation and the
    success instrument, to channel error at most \(\eta/(4T_{\mathrm{att}})\), so a union bound implies the total
    error is $T_{\mathrm{att}} \cdot \eta /(4T_{\mathrm{att}}) = \eta / 4$.
    
    To get the desired accuracy $\eta / 4T_{\mathrm{att}}$, we need to implement each single qubit state in
    \Cref{eq:starting_state} to precision $\eta/(4T_{\mathrm{att}}M)$, requiring bit accuracy $O(\log
    ((T_{\mathrm{att}}M)/\eta)) = O(r + \log T_{\mathrm{att}} + \log M) = O(r + \beta + \log M)$. For gate synthesis use
    \cref{lem:ff-W-compiler} and \cite{DawsonNielsen2006} to approximate via a fixed gate set.
    This proves the claimed error and worst-case resource bounds.
    
    For \(\beta=\log(3M)\), approximate \(\beta\) to
\(O(r+\ell_{\mathrm{qry}}+\log M)\) binary places. Since
\(\|H-\mu N\|\le49/2+M|\mu|\), a change \(d\) in \(\beta\) changes
    the Gibbs state in trace distance by at most
\((\e^{2|d|\|H-\mu N\|}-1)/2\). Choosing this at most \(\eta/4\)
    keeps the total error below \(\eta\), while
\(\e^{33\beta}=(3M)^{33}\) gives polynomial cost.
\end{proof}

\subsection{Proof of the main hardness theorem}
\label{sec:ff-main-proof}

\begin{proof}[Proof of \cref{thm:ff-main}]
    Given \(\mathcal{U},g\), output \(M=2m+4\), the Pauli list
    \eqref{eq:ff-physical-H}, the symbolic operator
\(N=WN_{\mathrm{std}}W^\dagger\) with \(W\) from \eqref{eq:ff-W}, and
    the compiler established in \cref{sec:ff-compiler}. All these descriptions
    are constructed deterministically in polynomial time without deciding the
    trace-sign branch. We verify the five claims.
    \begin{enumerate}
        \item Expanding \(n_j=(\I+Z_j)/2\) for \(j=R,S\) gives exactly
        \(H=8\I+8Z_R+(17/2)Z_S\).
        
        \item The factors of \(W\) are unitary, so conjugation preserves the
        eigenvalues and binomial multiplicities of \(N_{\mathrm{std}}\).
        The commutation relation \([H,N]=0\) follows from
        \cref{lem:ff-standard-H}.
        
        \item The same lemma gives the stated tensor decompositions, with
        \(V_+=2\Lambda_{\reg S}-\I\) and \(V_-=(32/17)\Lambda_{\reg R}-\I\), each of norm one and
        conserving spectator occupation. This is precisely the model of
        \cref{subsec:spectator-sector-model}, taking all free spectator
        energies to be zero and \(\lambda=17/2\).
        
        \item The tensor decompositions identify \(E_\star=16\) in the positive
        branch and \(E_\star=17\) in the negative branch. All free spectator
        energies are zero, so this is the unique protected energy in \([15,18]\),
        the endpoint gap is \(\Delta_\Omega=1\), and \(|\lambda|=17/2<16\).
        
        \item \Cref{cor:ff-number-primitives,lem:ff-gibbs-preparation}
        supply all three primitives to error \(2^{-r}\), with resources
        polynomial in \(|\mathcal{U}|,m,g^{-1},r,\ell_{\mathrm{qry}}\) for number measurement and joint
        evolution, and additionally in \(\e^\beta\) for Gibbs preparation.
        Since \(g^{-1}\) is polynomially bounded in \(m\), these meet
        \cref{def:ff-operational-access}. At
        \(\beta=\log(3M)\), we have \(\e^\beta=3M\), so preparation also has
        polynomial cost. In particular, setting \(r=M^2\) preserves polynomial
        size at this temperature.
    \end{enumerate}
    This proves all claimed properties of the reduction.
\end{proof}

\subsection{Compatibility with the SPICES recovery algorithm}
\label{sec:ff-spices}
We apply \cref{thm:spices-recovery} by verifying its five hypotheses for our hard family. Rather
than appealing to the generic perturbative condition in \cref{cor:log-gap-perturbative-stability},
we verify the conditions explicitly.
The following proposition collects the certificates for the hard family from the estimates of \cref{sec:ff-zero-free}.

\begin{proposition}[Contour certificates for the hard family]
    \label[proposition]{prop:ff-certificates}
    For the family of \cref{thm:ff-main}, let \(\beta=\log(3M)\), \(\Omega=[15,18]\), and \(P_H(s,w)=w^MZ(s,-1/w)\). Then for every real time \(t\):
    \begin{enumerate}[label=(\roman*)]
        \item the annulus \(A_\Omega\) contains exactly one root of \(P_H(\beta+\ii t,\cdot)\), the simple protected root \(\e^{-(\beta+\ii t)E_\star}\), and the spectator polynomial is zero-free throughout the enlarged energy annulus \([14,19]\);
        \item on both contour circles of \(\Omega\), with \(R_\mu=\e^{-\beta\mu}\) for \(\mu\in\{15,18\}\), every root \(\zeta\) satisfies
        \[
            \min\left\{\frac{|\zeta|}{R_\mu},\frac{R_\mu}{|\zeta|}\right\}\le\frac1{3M},
        \]
        so the radial separation \(q=1/(3M)\) holds in the sense of \cref{eq:spices-radial-separation};
        \item the normalized signal satisfies
        \begin{equation}
            \inf_{t,\vartheta}
            \left|
            \frac{
            Z(\beta+\ii t,\e^{\beta\mu+\ii\vartheta})
            }{
            Z(\beta,\e^{\beta\mu})
            }
            \right|
            \ge\frac9{40},
            \qquad
            \mu\in\{15,18\}.
            \label{eq:ff-signal-bound}
        \end{equation}
    \end{enumerate}
\end{proposition}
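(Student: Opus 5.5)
The plan is to assemble the three items from the construction-specific estimates of \cref{sec:ff-zero-free}, working throughout with \(\beta=\log(3M)\) so that \(\e^{-\beta}=1/(3M)\). The standing hypotheses of \cref{lem:ff-Re-bound} hold for this family: \(M=2m+4\ge4\), \(M_{\reg C}=2m+2\) is even, and \(2\le M_{\reg C}\le M\). Hence every result of that subsection applies uniformly in \(t\).

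For item (i), I start from the factorization of \cref{prop:ff-inverse-factorization}, \(P_H=\e^{17s/2}(w-\zeta_{E_\star})P_{\mathrm{spec}}\). The scalar \(\e^{17s/2}\) never vanishes, so the roots of \(P_H\) are \(\zeta_{E_\star}\) together with the roots of \(P_{\mathrm{spec}}\). By \cref{prop:ff-zero-free}, \(P_{\mathrm{spec}}\) has no zero in the whole disk \(|w|\le\e^{-14\beta}\). This disk contains the enlarged energy annulus \(\e^{-19\beta}\le|w|\le\e^{-14\beta}\), and therefore also \(\overline{A_\Omega}\), whose radii are \(\e^{-18\beta}\) and \(\e^{-15\beta}\). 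The protected root has modulus \(\e^{-\beta E_\star}\) with \(E_\star\in\{16,17\}\), so it lies strictly inside \(A_\Omega\). It is simple because \(P_{\mathrm{spec}}(w_\star)\ne0\), which is exactly \cref{cor:ff-one-root}.

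For item (ii), I cite \cref{cor:ff-root-separation}, which already states the bound with \(q=\e^{-\beta}=1/(3M)\). The only check needed is the case split behind it. For the protected root, \(|E_\star-\mu|\ge1\) for both \(\mu\in\{15,18\}\), so the ratio is at most \(\e^{-\beta}\). For spectator roots, \(|\zeta|>\e^{-14\beta}\), so \(R_\mu/|\zeta|<\e^{-\beta}\) for \(\mu=15\) and \(R_\mu/|\zeta|<\e^{-4\beta}\) for \(\mu=18\). In every case the minimum is at most \(1/(3M)\).

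For item (iii), I take the infimum over \(t\) and \(\vartheta\) in \cref{prop:ff-signal}, whose bound \(9/40\) is uniform in both. That bound is the product of three factor-wise lower bounds:
\begin{itemize}
\item the reverse triangle inequality gives at least \(1/2\) for the protected factor;
\item the factor \((w-1)^{M_{\reg C}}\) contributes \(\e^{-4M_{\reg C}R_\mu}>3/4\);
\item \cref{lem:ff-Re-bound} gives the ratio \(3/5\) for \(\mathcal R_{\bar E}\), since both \(w\) and \(w_0\) lie in the disk of radius \(r_0\).
\end{itemize}

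The main obstacle is minimal, because the work sits in the cited lemmas. The one point I would verify carefully is that the contour radii \(R_\mu\le\e^{-15\beta}\) lie within \(r_0=\e^{-14\beta}\), so that \cref{lem:ff-Re-bound} genuinely applies on both circles.
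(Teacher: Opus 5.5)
Your proposal is correct and follows the paper's own proof. The paper also obtains (i) from \cref{cor:ff-one-root} and \cref{prop:ff-zero-free} via the factorization of \cref{prop:ff-inverse-factorization}, (ii) from \cref{cor:ff-root-separation}, and (iii) from \cref{prop:ff-signal}, and your extra checks (the hypotheses of \cref{lem:ff-Re-bound} with \(M=2m+4\) and \(M_{\reg C}=2m+2\), and both contour radii lying inside \(r_0=\e^{-14\beta}\)) are accurate.
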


\begin{proof}
    Item (i) is \cref{cor:ff-one-root} together with \cref{prop:ff-zero-free}, which excludes spectator roots from the enlarged annulus by means of the factorization of \cref{prop:ff-inverse-factorization}. Item (ii) is \cref{cor:ff-root-separation}, and item (iii) is \cref{prop:ff-signal}.
\end{proof}

\begin{corollary}[SPICES recovers the hard protected energy]
    \label[corollary]{cor:ff-spices-hard-family}
    For the family in \cref{thm:ff-main}, let \(0 < \epsilon, \delta < 1\) be
    inverse-polynomial accuracy and failure parameters. In the operational input model of
    \cref{def:ff-operational-access}, the SPICES algorithm of \cref{thm:spices-recovery} at
\(\beta=\log(3M)\) uses polynomially many gates and samples, including
    the cost of Gibbs preparation, and returns, with probability at least
\(1-\delta\), a finitely represented probability
    measure \(\widehat\nu\) satisfying
\[
    W_1(\widehat\nu,\delta_{E_\star}) \le \epsilon.
\]
    In particular, its first moment determines \(E_\star\) to inverse-polynomial accuracy.
\end{corollary}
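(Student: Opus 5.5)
The plan is to verify the five hypotheses of \cref{thm:spices-recovery} for the family of \cref{thm:ff-main} at \(\beta=\log(3M)\), \(\Omega=[15,18]\), \(L=3\), and then read off the conclusion. \Cref{prop:ff-certificates} supplies the first three hypotheses for every real \(t\), and in particular at every sampled time \(t_\ell=\pi\ell/L\). Item (i) gives the fixed nonempty protected multiset \(\{E_\star\}\) with \(d=1\); the resulting measure \(\nu_\Omega\) is then \(\delta_{E_\star}\), since the single weight normalizes to one. Item (ii) gives radial separation with \(q=1/(3M)\). Item (iii) gives the signal floor \(b=9/40\), uniformly in the quadrature angles \(\vartheta=\pi-\theta_j\).

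For the fourth hypothesis I will invoke \cref{thm:ff-main}(5), that is, \cref{cor:ff-number-primitives} and \cref{lem:ff-gibbs-preparation}. These provide controlled \(\e^{-\ii t_\ell H+\ii(\pi-\theta_j)N}\) and \(\rho_{\beta,\mu}\) for \(\mu\in\{15,18\}\) to error \(2^{-r}\), at cost polynomial in \(r\), the query bit length and \(\e^{\beta}=3M\).

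I will check that the query parameters have polynomial bit length. The times \(t_\ell=\pi\ell/3\) with \(\ell\le T=O(1/\epsilon)\), the angles \(2\pi j/S\), and \(\beta=\log(3M)\) are efficiently computable reals. Supplying them to \(O(\log(M/\epsilon\delta))\) bits costs only a perturbation in operator norm of order \((t+\vartheta M)\cdot 2^{-\text{bits}}\) in the unitary, and of the same order in the Gibbs state, using the trace-distance continuity bound from the proof of \cref{lem:ff-gibbs-preparation}. This is absorbed into the endpoint-bias budget \(b\tau/8\) from the proof of \cref{thm:spices-recovery}. Inverse-polynomial operational accuracy then requires \(r=O(\log M)\), so the overhead is polynomial.

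For the fifth hypothesis, \(L=3\), \(b^{-1}=40/9\) and \((1-q)^{-1}\le 3/2\) are constants. The bound \(\epsilon^{-1}\le\mathrm{poly}(M)\) holds by assumption. The remaining quantity is \(\e^{2\beta L}=(3M)^{6}\), which is polynomial. One must also check the requirement \(\epsilon<L/2=3/2\), which holds since \(\epsilon<1\). In addition, \(P_H\) has degree \(M\) at the sampled times: its leading coefficient is \(Z\)'s constant term, the trace over the zero-occupation eigenspace of \(N\), which equals \(\e^{-s\lambda_0}\neq0\) because that eigenspace is one-dimensional with an \(H\)-eigenvalue.

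\Cref{thm:spices-recovery} then returns \(\widehat\nu\) with \(W_1(\widehat\nu,\delta_{E_\star})\le\epsilon\) with probability at least \(1-\delta\), using \(Q\) shots from \cref{eq:spices-certified-shot-count}. Here \(S=O(M)\) because \(q\le1/(3M)\), and \(T,\tau^{-1}\) are polynomial, so \(Q\) is polynomial. The gates per shot include the full Gibbs cost \(\e^{33\beta}\mathrm{poly}=(3M)^{33}\mathrm{poly}\), which is also polynomial.

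For the final sentence, the first-moment claim follows from \(W_1\) duality with the 1-Lipschitz function \(E\mapsto E\): \(\bigl|\int E\,\dd\widehat\nu-E_\star\bigr|\le W_1\le\epsilon\). This moment is computable from the finitely represented LP output. I expect the main obstacle to be the bookkeeping in the fourth hypothesis, namely confirming that the irrational \(\beta=\log(3M)\) and the quadrature and time parameters can be supplied approximately without breaking the signal floor and phase margins. I would handle this entirely through the bias allocation already in the proof of \cref{thm:spices-recovery}.
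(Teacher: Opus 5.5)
Your proposal is correct and follows the paper's proof essentially step for step. Both arguments verify the five hypotheses of \cref{thm:spices-recovery} via \cref{prop:ff-certificates}(i)--(iii) with $q=1/(3M)$ and $b=9/40$, \cref{cor:ff-number-primitives} and \cref{lem:ff-gibbs-preparation} for access, and $L=3$, $\e^{2\beta L}=(3M)^6$ for conditioning, then identify $\nu_\Omega=\delta_{E_\star}$ and bound the first moment by $W_1$. Your extra checks are sound and make explicit what the paper leaves implicit: that $P_H$ has degree $M$ because the constant term of $Z$ is nonzero, that $\epsilon<L/2$, and that the irrational $\beta$, $t_\ell$, $\theta_j$ can be supplied to polynomially many bits within the $b\tau/8$ bias budget.
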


\begin{proof}
    Use
\[
    M=2m+4, \qquad \Omega=[15,18], \qquad L=3, \qquad \beta=\log(3M).
\]
    For \(s=\beta+\ii t\), consider the degree-\(M\) inverse-fugacity polynomial
\[
    P_H(s,w)=w^M Z(s,-1/w).
\]
    We verify the five hypotheses of \cref{thm:spices-recovery} in order.
    
    \begin{enumerate}
        \item \emph{The target annulus contains precisely the protected roots.} The annulus associated
        with \(\Omega\) is
        \[
        A_\Omega = \left\{ w\in\C: \e^{-18\beta} < |w| < \e^{-15\beta} \right\}.
        \]
        By \cref{prop:ff-certificates}(i), for every real \(t\) this annulus contains exactly one root
        of \(P_H(\beta+\ii t,\cdot)\). It is the simple protected root
        \[
        w_\star(t) = \e^{-(\beta+\ii t)E_\star}, \qquad E_\star \in \{16,17\}.
        \]
        In particular, there are no spectator roots in the annulus.
        
        \item \emph{Every root is radially separated from both boundary circles.} For either boundary
        energy \(\mu\in\{15,18\}\), write \(R_\mu=\e^{-\beta\mu}\). By
        \cref{prop:ff-certificates}(ii), every root \(\zeta\) of \(P_H(\beta+\ii t,\cdot)\)
        satisfies
        \[
        \min\left\{ \frac{|\zeta|}{R_\mu}, \frac{R_\mu}{|\zeta|} \right\}
        \leq \e^{-\beta} =\frac{1}{3M}.
        \]
        Thus \eqref{eq:spices-radial-separation} holds with \(q = 1/(3M)\), uniformly for all
        real \(t\) and on both circles.
        
        \item \emph{The normalized endpoint signals are bounded away from zero.} At a quadrature node
        \(w_j=R_\mu\e^{\ii\theta_j}\), the corresponding physical fugacity is
        \[
        -1/w_j = \e^{\beta\mu + \ii(\pi-\theta_j)}.
        \]
        Applying \cref{prop:ff-certificates}(iii) with
        \(\vartheta=\pi-\theta_j\) gives
        \[
        \left| \Xi(\beta+\ii t,-1/w_j) \right| =
        \left| \frac{ Z(\beta+\ii t, \e^{\beta\mu+\ii(\pi-\theta_j)}) }{ Z(\beta,\e^{\beta\mu}) }
        \right| \geq \frac{9}{40}.
        \]
        Hence this item holds with \(b=9/40\).
        
        \item \emph{The required operational primitives are
        available with polynomial overhead.} For each \(\mu\in\{15,18\}\), preparation of the
        grand-canonical state
        \[
        \rho_{\beta,\mu} = \frac{\e^{-\beta(H-\mu N)}} {Z(\beta,\e^{\beta\mu})}
        \]
        to operational error at most \(2^{-r}\) is supplied by
        \cref{lem:ff-gibbs-preparation}. Its temperature-dependent cost
        is polynomial here because \(\e^{33\beta}=(3M)^{33}\).
        Thus the preparation hypothesis of \cref{thm:spices-recovery}
        holds at the parameters used by this family. Moreover,
        \cref{cor:ff-number-primitives} supplies the controlled joint evolutions
        \[
        \e^{-\ii tH+\ii\vartheta N}
        \]
        to error at most \(2^{-r}\), with polynomial overhead for the query parameters used by SPICES.
        Choosing \(r\) sufficiently large therefore provides both operations to any required
        inverse-polynomial accuracy.
        
        \item \emph{The conditioning parameters are polynomially bounded.} Here \(L=3\), \(b^{-1} =
        40/9\), and \((1-q)^{-1} \leq 3/2\); the requested \(\epsilon^{-1}\) is
        polynomial by assumption. Since \(L = 3\) and \(\beta = \log(3M)\),
        \[
        \e^{2\beta L} = \e^{6\beta} = (3M)^6,
        \]
        which is polynomial in \(M\).
    \end{enumerate}
    
    All five hypotheses of \cref{thm:spices-recovery} therefore
    hold.  To identify the measure recovered by the theorem, its first
    hypothesis gives the protected first root moment
\[
    \mathcal{M}_{A_\Omega}^{(1)}(s)=\e^{-sE_\star}.
\]
    After the deterministic shift by the lower window endpoint \(\Omega_{\min}=15\), the shifted moment is
\[
    \mathcal{M}_\Omega^{(1)}(s) = \e^{-s(E_\star-15)}.
\]
    Normalizing its thermal weight yields
\[
    \nu_{\mathcal{U}} = \frac{ \e^{-\beta(E_\star - 15)}\delta_{E_\star} }{ \e^{-\beta(E_\star - 15)} } =
    \delta_{E_\star}.
\]
    Thus \cref{thm:spices-recovery} gives the claimed Wasserstein guarantee with polynomial resources.
    Finally,
\[
    \left| \int E\,\widehat\nu(\dd E)-E_\star \right| \leq W_1(\widehat \nu, \delta_{E_\star}) \leq
    \epsilon.
\]
    The reconstruction returns a finitely represented measure, so its first moment can be evaluated to
    inverse-polynomial accuracy in polynomial time.  This gives the claimed estimate of \(E_\star\).
\end{proof}

\appendix
\section{Proofs of supporting statements}\label{app:proofs}

\subsection{Proofs for \texorpdfstring{\cref{sec:prelim}}{Section~\ref*{sec:prelim}}}
\ProofOf{prop:free-factorization}
\begin{proof}
    Because the \(n_j\)'s are commuting projections, the Fock space has a joint occupation basis indexed by \(m=(m_1,\ldots,m_M)\in\{0,1\}^M\), with
\[
    n_j \lvert m\rangle = m_j \lvert m\rangle .
\]
    In this basis,
\[
    N\lvert m\rangle = \left(\sum_{j=1}^{M}m_j\right)\lvert m\rangle, \qquad H_0\lvert m\rangle = \left(\sum_{j=1}^{M}E_jm_j\right)\lvert m\rangle .
\]
    Therefore the analytically continued partition function is the finite sum
    \begin{align}
        Z(s,z) &= \Tr\!\left\{z^N\e^{-sH_0}\right\} \notag\\
        &= \sum_{m\in\{0,1\}^M}
        z^{\sum_j m_j}
        \exp\left(-s\sum_{j=1}^{M}E_jm_j\right) \notag\\
        &= \sum_{m\in\{0,1\}^M}
        \prod_{j=1}^{M}
        \left(z\e^{-sE_j}\right)^{m_j}. \label{eq:occupation-sum}
    \end{align}
    The last sum factorizes since the occupation variables are independent binary indices:
\[
    \sum_{m\in\{0,1\}^M}
    \prod_{j=1}^{M}
    \left(z\e^{-sE_j}\right)^{m_j}
    = \prod_{j=1}^{M}
    \sum_{m_j=0}^{1}
    \left(z\e^{-sE_j}\right)^{m_j}
    =
    \prod_{j=1}^{M}\left(1+z\e^{-sE_j}\right).
\]
    Since \(\e^{-sE_j}\neq0\), this is a polynomial of degree \(M\) in \(z\).
    Using \(\zeta_j=\e^{-sE_j}\) gives
\[
    1+z\e^{-sE_j}
    =
    1+z\zeta_j,
\]
    which proves the factorization in the physical fugacity.
    Substituting \(z=-1/w\) gives
\[
    w^M Z\!\left(s,-\frac1w\right)
    =
    \prod_{j=1}^{M}w\left(1-\frac{\zeta_j}{w}\right)
    =
    \prod_{j=1}^{M}(w-\zeta_j).
\]
    Repeated values of \(\zeta_j\) give the corresponding root multiplicity.
\end{proof}

\ProofOf{prop:power-sum-formula}
\begin{proof}
    Since \(P\) has no zeros on \(\partial A\), the logarithmic derivative \(P'(w)/P(w)\) is meromorphic in a neighborhood of \(A\) and has poles only at zeros of \(P\).
    If \(\alpha\) is a distinct zero with multiplicity \(m_\alpha\), then locally
\[
    \frac{P'(w)}{P(w)}
    =
    \frac{m_\alpha}{w-\alpha}
    +
    \text{holomorphic terms}.
\]
    By the residue theorem and \cref{eq:alt-root-moment},
\[
    \mathcal{M}_A^{(k)}
    =
    \sum_{\alpha\in A}
    m_\alpha\alpha^k.
\]
    Counting the zeros with multiplicity gives \cref{eq:power-sum}.
\end{proof} 

\ProofOf{lem:single-root-aliasing}
\begin{proof}
    Let \(h=2\pi/S\).
    After dividing the numerator and denominator by \(-\zeta\), the single-root contribution is
    \begin{equation}
        q_{R,S}^{(k)}(\zeta)
        =
        \frac{\sinc(k/S)^{-1}}{2\pi\ii}
        \sum_{j=0}^{S-1}
        w_{j+1/2}^{\,k}
        \Log
        \left(
        \frac{1-w_{j+1}/\zeta}{1-w_j/\zeta}
        \right),
        \label{eq:single-root-raw-log-rule}
    \end{equation}
    where the logarithmic increments are taken by continuous continuation along
    the circle.
    We compute \(q_{R,S}^{(k)}\) using the Taylor expansion for the logarithm.
    
    Suppose first that \(|\alpha|<1\).
    On the circle \(w=R\e^{\ii\theta}\),
\[
    1-\frac{w}{\zeta}
    =
    -\frac{w}{\zeta}\left(1-\frac{\zeta}{w}\right)
    =
    -\frac{R\e^{\ii\theta}}{\zeta}
    \left(1-\alpha\e^{-\ii\theta}\right).
\]
    Thus the continuous logarithmic increment over the \(j\)th arc is
\[
    \ii h+L(\theta_{j+1})-L(\theta_j),
    \qquad
    L(\theta)=\log(1-\alpha\e^{-\ii\theta}).
\]
    The term \(\ii h\) contributes zero to
    \cref{eq:single-root-raw-log-rule}, since \(k>0\) and
\[
    \sum_{j=0}^{S-1}\e^{\ii k\theta_{j+1/2}}=0.
\]
    For \(|\alpha|<1\), the Taylor expansion is uniformly convergent and gives
\[
    L(\theta)
    =
    -\sum_{n=1}^{\infty}
    \frac{\alpha^n}{n}\e^{-\ii n\theta}.
\]
    Therefore
\[
    L(\theta_{j+1})-L(\theta_j)
    =
    \sum_{n=1}^{\infty}
    \frac{2\ii}{n}
    \alpha^n
    \sin\left(\frac{nh}{2}\right)
    \e^{-\ii n\theta_{j+1/2}}.
\]
    Substituting into \cref{eq:single-root-raw-log-rule} gives
\[
    q_{R,S}^{(k)}(\zeta)
    =
    \frac{\sinc(k/S)^{-1} R^k}{\pi}
    \sum_{n=1}^{\infty}
    \frac{\alpha^n}{n}
    \sin\left(\frac{nh}{2}\right)
    \sum_{j=0}^{S-1}
    \e^{\ii(k-n)\theta_{j+1/2}}.
\]
    The last sum vanishes unless \(n=k+\ell S\) for some \(\ell\ge0\).
    For these indices,
\[
    \sum_{j=0}^{S-1}
    \e^{\ii(k-n)\theta_{j+1/2}}
    =
    (-1)^\ell S,
    \qquad
    \sin\left(\frac{(k+\ell S)h}{2}\right)
    =
    (-1)^\ell\sin\left(\frac{\pi k}{S}\right).
\]
    The signs cancel, and the sinc correction cancels up to a factor of \(k\). 
    Isolating the \(\ell = 0\) term, we obtain
    \begin{equation}
        q_{R,S}^{(k)}(\zeta)
        =
        \zeta^k
        \left[
        1+
        \sum_{\ell=1}^{\infty}
        \frac{k}{k+\ell S}
        \alpha^{\ell S}
        \right].
        \label{eq:inside-aliasing}
    \end{equation}
    
    Now suppose that \(|\alpha|>1\), and set \(\eta=\alpha^{-1}=R/\zeta\).
    Then \(|\eta|<1\), and on the circle
\[
    1-\frac{w}{\zeta}
    =
    1-\eta\e^{\ii\theta}.
\]
    The continuous logarithm may be chosen as
\[
    L(\theta)
    =
    \log(1-\eta\e^{\ii\theta})
    =
    -\sum_{n=1}^{\infty}
    \frac{\eta^n}{n}\e^{\ii n\theta}.
\]
    Hence
\[
    L(\theta_{j+1})-L(\theta_j)
    =
    -\sum_{n=1}^{\infty}
    \frac{2\ii}{n}
    \eta^n
    \sin\left(\frac{nh}{2}\right)
    \e^{\ii n\theta_{j+1/2}}.
\]
    Substitution into \cref{eq:single-root-raw-log-rule} gives
\[
    q_{R,S}^{(k)}(\zeta)
    =
    -\frac{\sinc(k/S)^{-1}R^k}{\pi}
    \sum_{n=1}^{\infty}
    \frac{\eta^n}{n}
    \sin\left(\frac{nh}{2}\right)
    \sum_{j=0}^{S-1}
    \e^{\ii(k+n)\theta_{j+1/2}}.
\]
    The last sum vanishes unless \(k+n=\ell S\), that is
\(n=\ell S-k\) with \(\ell\ge1\).
    For those indices,
\[
    \sum_{j=0}^{S-1}
    \e^{\ii\ell S\theta_{j+1/2}}
    =
    (-1)^\ell S,
    \qquad
    \sin\left(\frac{(\ell S-k)h}{2}\right)
    =
    (-1)^{\ell+1}\sin\left(\frac{\pi k}{S}\right).
\]
    The sign from the Taylor expansion cancels the remaining signs, and again the sinc correction cancels up to a factor of \(k\). 
    Putting it together, we obtain
    \begin{equation}
        q_{R,S}^{(k)}(\zeta)
        =
        \zeta^k
        \sum_{\ell=1}^{\infty}
        \frac{k}{\ell S-k}
        \eta^{\ell S}.
        \label{eq:outside-aliasing}
    \end{equation}
    
    The exact single-root moment on the disc is
\(\zeta^k\mathbf{1}_{\{|\alpha|<1\}}\).
    If \(|\alpha|<1\), then \(a=|\alpha|\), and
    \cref{eq:inside-aliasing} gives
\[
    \left|
    \zeta^k-q_{R,S}^{(k)}(\zeta)
    \right|
    \le
    |\zeta|^k
    \sum_{\ell=1}^{\infty}
    \frac{k}{k+\ell S}a^{\ell S}
    \le
    \frac{k}{S}|\zeta|^k
    \frac{a^S}{1-a^S}.
\]
    If \(|\alpha|>1\), then \(a=|\eta|=|\alpha|^{-1}\), and the exact
    single-root moment is zero.
    Since \(S\ge2k\),
\(\ell S-k\ge S/2\) for every \(\ell\ge1\), so
    \cref{eq:outside-aliasing} gives
\[
    \left|
    q_{R,S}^{(k)}(\zeta)
    \right|
    \le
    |\zeta|^k
    \sum_{\ell=1}^{\infty}
    \frac{k}{\ell S-k}a^{\ell S}
    \le
    \frac{2k}{S}|\zeta|^k
    \frac{a^S}{1-a^S}.
\]
    This proves the stated bound.
\end{proof}

\subsection{Proofs for \texorpdfstring{\cref{sec:noise}}{Section~\ref*{sec:noise}}}
\ProofOf{lem:one-factor-phase-bound}
\begin{proof}
    The path is nonzero since
\[
    |u(\theta)|
    =
    |\e^{\ii\theta}-r\e^{\ii\phi}|
    \ge
    |1-r|.
\]
    The endpoint ratio \(u(h)/u(0)\) is the factor displayed in \cref{eq:ratio_factor}, so it has argument \(\Delta\arg u\) modulo \(2\pi\).
    If \(0\le r<1\), then
\[
    \e^{\ii\theta}-r\e^{\ii\phi}
    =
    \e^{\ii\theta} \cdot
    \left(1-r\e^{\ii(\phi-\theta)}\right).
\]
    Under the logarithmic derivative, the first factor contributes exactly \(h\) to the lifted argument.
    The second factor has logarithmic derivative upper bounded in modulus by
\[
    \frac{r}{1-r}.
\]
    Thus, its integral is at most \(hr/(1-r)\), and the residual argument increment magnitude (the left side of \cref{eq:centered-single-factor-bound}) is also most \(hr/(1-r)\).
    If \(r>1\), then
\[
    \e^{\ii\theta}-r\e^{\ii\phi}
    =
    -\e^{\ii\phi}
    \cdot r \left(1-r^{-1}\e^{\ii(\theta-\phi)}\right).
\]
    Here, the first factor contributes no argument increment, as it is a constant with respect to \(\theta\), and the remaining factor has logarithmic derivative upper bounded in modulus by
\[
    \frac{r^{-1}}{1-r^{-1}}.
\]
    This proves \cref{eq:centered-single-factor-bound}.
\end{proof}

\ProofOf{prop:centered-product-phase-bound}
\begin{proof}
    For each factor labeled by \(j\), let \(u_j(\theta)=\e^{\ii\theta}-r_j\e^{\ii\phi_j}\) for \(0\le\theta\le h\), and write
\[
    \Delta\arg u_j=h\mathbf 1_{\{r_j<1\}}+\eta_j,
    \qquad
    |\eta_j|\le h\,\frac{\rho_j}{1-\rho_j},
\]
    using \cref{lem:one-factor-phase-bound}.
    The product \(u=\prod_ju_j\) has lifted argument increment
\[
    \Delta\arg u=\sum_{j=1}^{m}\Delta\arg u_j
    =
    Kh+\sum_{j=1}^{m}\eta_j,
\]
    with
\[
    |\Delta\arg u-Kh|\le hB.
\]
    The ratio in \cref{eq:centered-product-phase-bound} has argument \(\Delta\arg u\) modulo \(2\pi\).
    For any real \(x\),
\[
    \pi-|\operatorname{Arg}\e^{\ii x}|
    =
    \operatorname{dist}\!\left(x,(2\mathbb Z+1)\pi\right).
\]
    The distance to a fixed closed set is \(1\)-Lipschitz, so \cref{eq:centered-product-phase-bound} follows.
\end{proof}

\ProofOf{prop:phase-stability-sampling}
\begin{proof}
    Here \(\zeta_\ell=\e^{-sE_\ell}\) are the roots of \(P\) (\cref{prop:free-factorization}).
    For a fixed \(j\), write
\[
    r_\ell=\frac{|\zeta_\ell|}{R}
    =
    \e^{-\beta(E_\ell-\mu)},
    \qquad
    \phi_{\ell,j}=\arg(\zeta_\ell)-\theta_j .
\]
    Then \(r_\ell<1\) exactly when \(E_\ell>\mu\), so the integer \(K\) in
    \cref{prop:centered-product-phase-bound} is the \(K\) defined
    above.
    Moreover,
\[
    \min\{r_\ell,r_\ell^{-1}\}
    =
    \e^{-\beta|E_\ell-\mu|}.
\]
    Applying \cref{prop:centered-product-phase-bound} to
\(\Gamma_j(R,s)\), and then minimizing over \(j\), gives
\[
    \chi_S(R,s)
    \ge
    \operatorname{dist}\!\left(
    \frac{2\pi K}{S},(2\mathbb Z+1)\pi
    \right)
    -
    \frac{2\pi B}{S}.
\]
    
    If \(S\ge4(K+B)\), then
\[
    \frac{2\pi K}{S}\le\frac{\pi}{2}.
\]
    Thus
\[
    \operatorname{dist}\!\left(
    \frac{2\pi K}{S},(2\mathbb Z+1)\pi
    \right)
    =
    \pi-\frac{2\pi K}{S}.
\]
    Substitution in the preceding margin bound gives
\[
    \chi_S
    \ge
    \pi-\frac{2\pi}{S}\left(K+B\right)
    \ge
    \frac{\pi}{2}.
\]
\end{proof}

\ProofOf{lem:endpoint-noise-branch-stability}
\begin{proof}
    The known monomial factor relating \(\Xi\) to \(P\) does not affect the
    noise perturbation.
    Thus the noisy endpoint ratio differs from the exact endpoint ratio by
\[
    \frac{\widehat \Xi_{j + 1}}{\widehat \Xi_j} = \frac{1+\varsigma_{j+1}}{1+\varsigma_j} \cdot \frac{\Xi_{j+ 1}}{\Xi_j}.
\]
    Notice that \(0<\chi_S\le\pi\).
    Consider the disc of radius \(\varsigma\) centered at 1, \(\{u:|u-1|\le\varsigma\}\). 
    The absolute argument \(\lvert \operatorname{Arg}u \rvert\) of any point in this disc is at most the angle of the tangent ray from the origin to this disc that lies in the first quadrant. 
    For \(0<\eta \le \pi/2\), if \(\varsigma < \sin\eta\), then \(|\operatorname{Arg}u| < \eta\).
    From our assumption that \(|\varsigma_j| \le \varsigma < \sin\bigl(\tfrac{\chi_S}{2}\bigr) \), we have
\[
    \lvert\operatorname{Arg}(1+\varsigma_j)\rvert<\frac{\chi_S}{2}
    \qquad
    \text{for every } j.
\]
    It follows that the phase of
\((1+\varsigma_{j+1})/(1+\varsigma_j)\) has magnitude less than \(\chi_S\).
    Since every exact endpoint ratio is at distance at least \(\chi_S\) from the
    branch cut at argument \(\pm\pi\), the noisy endpoint ratio cannot cross that cut.
    Thus the principal logarithm selects the same branch correction as in the exact logarithmic increment.
\end{proof}

\ProofOf{lem:shot-noise-quadrature-perturbation}
\begin{proof}
    The phase margin condition and \cref{lem:endpoint-noise-branch-stability}
    show that the noisy endpoint ratios preserve the exact branch assignment.
    Let
\[
    b_j=\Log(1+\varsigma_j),
    \qquad
    b_S=b_0.
\]
    Then the noisy logarithmic increment differs from the exact one by
\(b_{j+1}-b_j\).
    Since \(|\varsigma_j|\le\varsigma<1/\sqrt{2}\),
\[
    |b_j|
    \le
    -\log(1-\varsigma).
\]
    Hence
    \begin{align*}
        \widehat Q_{R,S}^{(k)}(s)-Q_{R,S}^{(k)}(s)
        &=
        \frac{1}{\sinc(k/S)}
        \frac{1}{2\pi\ii}
        \sum_{j=0}^{S-1}
        w_{j+1/2}^{\,k}(b_{j+1}-b_j).
    \end{align*}
    By relabeling the summation over cyclic indices, we find
\[
    \sum_{j=0}^{S-1}
    w_{j+1/2}^{\,k}(b_{j+1}-b_j)
    =
    \sum_{j=0}^{S-1}
    \left(w_{j-1/2}^{\,k}-w_{j+1/2}^{\,k}\right)b_j .
\]
    Since
\[
    \left|w_{j-1/2}^{\,k}-w_{j+1/2}^{\,k}\right|
    =
    2R^k\sin\frac{\pi k}{S},
\]
    the sinc factor cancels at the cost of a factor of \(k\) and yields
\[
    \left|
    \widehat Q_{R,S}^{(k)}(s)-Q_{R,S}^{(k)}(s)
    \right|
    \le
    kR^k[-\log(1-\varsigma)].
\]
    Finally, the loose bound \(-\log(1-\varsigma)<2\varsigma\) holds for
\(0\le\varsigma\le1/\sqrt2\).
\end{proof}

\ProofOf{lem:hadamard-test}
\begin{proof}
    Let \(\overline X_{n_{\mathrm{sh}}}\) and \(\overline Y_{n_{\mathrm{sh}}}\) be the empirical means of the outcomes of the \(n_{\mathrm{sh}}\) independent \(X\) measurements and \(n_{\mathrm{sh}}\) independent \(Y\) measurements. 
    Define the empirical estimator
\[
    \widehat \xi = \overline X_{n_{\mathrm{sh}}}+\ii \overline Y_{n_{\mathrm{sh}}} .
\]
    Each measurement outcome is an independent random variable in \([-1,1]\) with mean \(\Re(\xi)\) for \(X\) measurements and \(\Im(\xi)\) for \(Y\) measurements. 
    By Hoeffding's inequality,
\[
    \mathbb P\left\{
    |\overline X_{n_{\mathrm{sh}}}-\Re(\xi)|>\frac{\varsigma |\xi|}{\sqrt2}
    \right\}
    \le
    2\exp\left(-\frac{n_{\mathrm{sh}}\varsigma^2|\xi|^2}{4}\right),
\]
    and the same bound holds for
\(|\overline Y_{n_{\mathrm{sh}}}-\Im(\xi)|\).
    The choice \cref{eq:hadamard-relative-shots} makes the union bound over the two quadratures at most \(\delta\).
    On the complementary event,
\[
    |\widehat \xi-\xi|
    \le
    \left(
    \frac{\varsigma^2|\xi|^2}{2}
    +
    \frac{\varsigma^2|\xi|^2}{2}
    \right)^{1/2}
    =
    \varsigma |\xi|
\]
    and the result follows.
\end{proof} 

\ProofOf{lem:normalized-fourier-data}
\begin{proof}
    Let \(\eta = \frac{\kappa}{32}\e^{-\beta kL}\).
    We have \(\bigl|\Mell\bigr|\le \mnorm\).
    Moreover, \cref{eq:shifted-mass-lower-bound} and the assumption on \(\kappa\) give \(\eta\le \mnorm/2\), so \(|\mnormhat-\mnorm|\le\eta\) implies \(|\mnormhat|\ge \mnorm/2\).
    For \(0\le\ell\le T\),
\[
    \left|
    \frac{\widehat{\mathcal{M}}_\ell}{\mnormhat}
    -
    \frac{\Mell}{\mnorm}
    \right|
    \le
    \frac{\eta}{|\mnormhat|}
    +
    \bigl|\Mell\bigr|
    \frac{|\mnormhat-\mnorm|}{\mnorm|\mnormhat|}
    \le
    \frac{4\eta}{\mnorm}
    \le
    \frac{\kappa}{8}.
\]
    This proves the claim for the negative-frequency coefficients, and the positive-frequency claim follows by complex conjugation.
\end{proof}

\ProofOf{prop:wasserstein-window-recovery}
\begin{proof}
    We use the properties of the Jackson kernel and LP estimator discussed in \cref{def:jackson-kernel} in \cref{app:lp-estimator}. 
    This proof is a direct translation of the proof by Chen et al.~\cite{ChenEtAl2025} adapted to our setting.
    
    Write \(\rho=\phi_\#\nu\), and let \(\bar\rho=\rho*J_n\) be its Jackson
    smoothing.
    We compare the LP output first with \(\bar\rho\), and then account for the
    smoothing error between \(\bar\rho\) and \(\rho\).
    \cref{def:jackson-kernel} gives
\[
    W_1^{\mathbb T}(\rho,\bar\rho)
    \le
    \frac{1}{n}
    \le
    \frac{\epsilon}{4L}.
\]
    
    The \(\ell\)th Fourier coefficient of \(\bar\rho\) is
\(\widetilde{J}_{n}(\ell)u_\ell\).
    The LP constraints bound the real and imaginary parts of the difference
    between the corresponding coefficient of \(\sigma\) and
\(\widetilde{J}_{n}(\ell)\widehat u_\ell\) by \(\kappa/4\).
    Combining the real and imaginary bounds with
    \cref{lem:normalized-fourier-data} and
\(|\widetilde{J}_{n}(\ell)|\le1\), we obtain, for \(|\ell|\le T\),
\[
    \left|
    \int \e^{2\pi\ii\ell x}\dd\sigma(x)
    -
    \int \e^{2\pi\ii\ell x}\dd\bar\rho(x)
    \right|
    \le
    \frac{\sqrt2\kappa}{4}+\frac{\kappa}{8}
    <
    \kappa.
\]
    The cutoff \(T=\lceil 24L/\epsilon\rceil\) and the tolerance \(\kappa\) (with \(c\le1/4000\)) are exactly those for which
    \cref{lem:chen-stability} applies with target torus accuracy \(\varepsilon'=\epsilon/(4L)\), as computed in \cref{app:lp-estimator}.
    Applying it to the pair \((\sigma,\bar\rho)\) and using the Fourier bound above for all \(|\ell|\le T\),
    we obtain
\[
    W_1^{\mathbb T}(\bar\rho,\sigma)
    \le
    \frac{\epsilon}{4L}.
\]
    The triangle inequality now gives
\[
    W_1^{\mathbb T}(\rho,\sigma)
    \le
    \frac{\epsilon}{2L}.
\]
    Finally, \(\psi\circ\phi\) is the identity on \(\Omega\), and \(\psi\) is
\(2L\)-Lipschitz with respect to the torus metric.
    The contraction property of Wasserstein distance under a Lipschitz map
    therefore yields
\[
    W_1\!\left(\nu,\widehat\nu\right)
    =
    W_1\!\left(\psi_\#\rho,\psi_\#\sigma\right)
    \le
    2L\,W_1^{\mathbb T}(\rho,\sigma)
    \le
    \epsilon.
\]
\end{proof}

\subsection{Proofs for \texorpdfstring{\cref{sec:free-gapped}}{Section~\ref*{sec:free-gapped}}}
\ProofOf{lem:logarithmic-tanh-bound}
\begin{proof}
    Set \(y=\e^{-2x}\in(0,1/2]\).
    Then
\[
    -\log\tanh x
    =
    \log\frac{1+y}{1-y}
    =
    \log(1+y)-\log(1-y).
\]
    Using \(\log(1+y)\le y\) and
\(-\log(1-y)\le y/(1-y)\le 2y\) gives
\[
    -\log\tanh x
    \le
    3y,
\]
    which proves the claim.
\end{proof}

\ProofOf{prop:gap-lower-bound}
\begin{proof}
    By \cref{prop:free-factorization} and \cref{eq:normalized-signal},
\[
    \Xi(s,z)
    =
    \prod_{j=1}^{M}
    \frac{
    1+\e^{-\beta(E_j-\mu)}\e^{\ii(\vartheta-tE_j)}
    }{
    1+\e^{-\beta(E_j-\mu)}
    }.
\]
    For \(a>0\) and any phase \(\phi\),
\[
    |1+a\e^{\ii\phi}|
    \ge
    |1-a|.
\]
    Taking \(a=\e^{-\beta(E_j-\mu)}\) gives
\[
    \frac{|1+a\e^{\ii\phi}|}{1+a}
    \ge
    \frac{|1-a|}{1+a}
    =
    \tanh\left(\frac{\beta |E_j-\mu|}{2}\right).
\]
    Multiplying over \(j\) proves the product bound.
    Since \(\Delta(\mu)\le |E_j-\mu|\) and \(\tanh\) is increasing on
\((0,\infty)\), the second inequality in
    \cref{eq:Xi0-gap-product-bound} follows.
    Taking logarithms and applying \cref{lem:logarithmic-tanh-bound} with
\(x=\beta\Delta(\mu)/2\) gives \cref{eq:Xi0-gap-exponential-bound}.
\end{proof}

\ProofOf{cor:log-inv-temp-suffices}
\begin{proof}
    The condition implies
\[
    3M\e^{-\beta\Delta(\mu)}\le 1.
\]
    The claim follows from \cref{eq:Xi0-gap-exponential-bound}.
\end{proof}

\ProofOf{prop:quadrature-gap-bound}
\begin{proof}
    By \cref{prop:free-factorization}, \(P(w)=\prod_{j=1}^{M}(w-\zeta_j)\), so the logarithmic derivative of \(P\) is the sum of the logarithmic
    derivatives of the factors \(1-w/\zeta_j\) (which differ from \(w-\zeta_j\) by a constant), and the quadrature error is the
    sum of the single-root errors.
    Applying \cref{lem:single-root-aliasing} to each root gives the sharper
    intermediate estimate
\[
    \left|
    \mathcal{M}_R^{(k)}(s)-Q_{R,S}^{(k)}(s)
    \right|
    \le
    \frac{2k}{S}
    \sum_{j=1}^{M}
    \e^{-\beta k E_j}
    \frac{\e^{-S\beta |E_j-\mu|}}
    {1-\e^{-S\beta |E_j-\mu|}},
\]
    since
\[
    |\zeta_j|^k=\e^{-\beta k E_j},
    \qquad
    a_j=\e^{-\beta |E_j-\mu|}.
\]
    The coarser estimate \cref{eq:disc-gap-bound} follows by splitting the sum
    according to whether \(E_j\) is above or below \(\mu\).
    If \(E_j>\mu\), then
\[
    \e^{-\beta kE_j}
    \frac{\e^{-S\beta(E_j-\mu)}}{1-\e^{-S\beta(E_j-\mu)}}
    \le
    R^k\frac{q^{S+k}}{1-q^S}.
\]
    If \(E_j<\mu\), then \(S\ge2k\) gives
\[
    \e^{-\beta kE_j}
    \frac{\e^{-S\beta(\mu-E_j)}}{1-\e^{-S\beta(\mu-E_j)}}
    \le
    R^k\frac{q^{S-k}}{1-q^S}.
\]
    Both terms are bounded by
\(R^k q^{S-k}/(1-q^S)\), and there are \(M\) roots.
    
    The quadrature for an annulus is the difference of two disc quadratures. 
    Applying \cref{eq:disc-gap-bound} at the two boundary chemical potentials
\(\mu_-\) and \(\mu_+\), using the triangle inequality, and then using
\(\e^{-\beta\Delta(\mu_\pm)}\le q_A\), gives
    \cref{eq:annulus-gap-bound}.
\end{proof} 

\ProofOf{cor:phase-margin-guarantee}
\begin{proof}
    By \cref{prop:phase-stability-sampling} it suffices that \(S\ge4(K+B)\).
    If \(\beta\Delta(\mu)\ge\log(3M)\), then \(q=\e^{-\beta\Delta(\mu)}\le 1/(3M)\), so \(B\le Mq/(1-q)\le M/(3M-1)\le\tfrac12\), and \(S\ge4K+2\) suffices.
    Since \(K\le M\), \(S\ge4M+2\) suffices, and this also satisfies \(S>2\).
\end{proof}

\ProofOf{prop:fixed-root-moment-estimate}
\begin{proof}
    The first statement is the triangle inequality, using the deterministic aliasing bound
    \cref{eq:disc-gap-bound} and
    \cref{lem:shot-noise-quadrature-perturbation}. 
    The second statement uses \(q \le 1/3M\) to bound the quadrature aliasing error. 
\end{proof}

\subsection{Proofs for \texorpdfstring{\cref{sec:perturbative-stability}}{Section~\ref*{sec:perturbative-stability}}}

\ProofOf{prop:spectator-factorization}
\begin{proof}
    The Hamiltonian \(H_\lambda=H_I\otimes\I+\I\otimes(H_O+\lambda V_O)\) and \(N=N_I\otimes\I+\I\otimes N_O\) have the form of \cref{lem:spse-factorization} with \(U=\I\) and spectator Hamiltonian \(H_O+\lambda V_O\), which is supported on \(O\) and commutes with \(N_O\) because \([H_O,N_O]=[V_O,N_O]=0\).
    The lemma therefore gives \eqref{eq:spectator-Z-factorization}.
\end{proof}

\ProofOf{prop:exact-window-protection}
\begin{proof}
    On the boundary of the annulus,
    \eqref{eq:spectator-P-factorization} gives
\[
    \dd\log P_\lambda
    =
    \dd\log P_I
    +
    \dd\log P_{O,\lambda}.
\]
    Integrating this identity shows that root moments are additive under the
    product:
\[
    \mathcal{M}_{A_\Omega}^{(k)}[P_\lambda]
    =
    \mathcal{M}_{A_\Omega}^{(k)}[P_I]
    +
    \mathcal{M}_{A_\Omega}^{(k)}[P_{O,\lambda}].
\]
    The outside factor has no zeros in the annulus, so the power-sum formula
    gives
\[
    \mathcal{M}_{A_\Omega}^{(k)}[P_{O,\lambda}]
    =
    0.
\]
    The claim follows from the above additivity and from \cref{prop:power-sum-formula}, the power sum formula for the free root moments.
\end{proof}

\ProofOf{thm:finite-time-spectator-stability}
\begin{proof}
    The proof is a direct application of Rouch\'{e}'s theorem. 
    Fix \(t\) with \(|t|\le t_{\max}\), and on either boundary circle \(C\), set
\[
    f(w)=P_{O,0}(s,w),
    \qquad
    g(w)=P_{O,\lambda}(s,w).
\]
    Condition~\eqref{eq:symmetric-rouche-condition} implies that neither \(f\) nor \(g\) vanishes on \(C\).
    Consider the straight-line homotopy
\[
    h_u(w)=(1-u)f(w)+ug(w),
    \qquad
    0\le u\le1.
\]
    Suppose \(h_u(w) = 0\) for some \(w \in C\) and \(0 < u < 1\). Then
\(g(w) = -\tfrac{1-u}{u}f(w)\), so \(f(w)\) and \(g(w)\) are antiparallel
    (since \(\tfrac{1-u}{u} > 0\) for \(u \in (0, 1)\)).
    Hence
\[
    |g(w)-f(w)|
    =
    |g(w)|+|f(w)|,
\]
    contrary to \eqref{eq:symmetric-rouche-condition}.
    Thus the homotopy is nonzero on \(C\).
    Parameterize \(C\) by \(\gamma\colon[a,b]\to\mathbb{C}\) with \(\gamma(a)=\gamma(b)\).
    By Cauchy's argument principle applied to the polynomials \(f\) and \(g\) (which have no poles), the difference between their zero counts inside \(C\) equals
\[
    \frac{1}{2\pi i}\oint_C \!\left\{\frac{f'(w)}{f(w)} - \frac{g'(w)}{g(w)}\right\}dw
    = \frac{1}{2\pi i}\int_a^b \frac{d}{dt}
    \log\frac{f[\gamma(t)]}{g[\gamma(t)]}\,dt.
\]
    Since \(h_u\), \(f\), and \(g\) are all nonzero on \(C\), the ratio \(f(w)/g(w)\) is nonzero and avoids the negative real axis on \(C\). 
    The logarithm on the right-hand side is therefore single-valued, and the integral vanishes. 
    This gives that \(f\) and \(g\) have the same number of zeros inside \(C\).
    
    Apply this conclusion first at \(R_{\mathrm{in}}\) and then at \(R_{\mathrm{out}}\).
    The unperturbed outside polynomial has no zeros between the two circles, so its two zero counts are equal.
    The two perturbed zero counts are therefore also equal.
    Thus \(P_{O,\lambda}(s,\cdot)\) has no zeros in \(A_\Omega\), and, as noted above, none on the two circles; so it has no zeros in the closed annulus \(\overline{A_\Omega}\).
    
    Proposition~\ref{prop:exact-window-protection} now gives \eqref{eq:finite-time-protected-window-moment}.
    Since the argument applies to every \(|t|\le t_{\max}\), the conclusion is uniform over the time interval.
\end{proof}

\ProofOf{cor:relative-spectator-stability}
\begin{proof}
    The definition of \(\eta_\lambda( t_{\max})\) gives
\[
    |P_{O,\lambda}-P_{O,0}|
    <
    |P_{O,0}|
    \le
    |P_{O,\lambda}|+|P_{O,0}|
\]
    on both boundary circles throughout the time interval.
    Thus \eqref{eq:symmetric-rouche-condition} holds, and the theorem applies.
\end{proof}

\ProofOf{prop:outside-duhamel-bound}
\begin{proof}
    Fix \(|t|\le t_{\max}\) and set \(s=\beta+\ii t\).
    On either contour circle, write 
\[
    K_\lambda=H_O+\lambda V_O.
\]
    Duhamel's formula gives
    \begin{equation}
        \e^{-sK_\lambda}
        -
        \e^{-sH_O}
        =
        -\lambda s
        \int_0^1
        \e^{-(1-u)sK_\lambda}
        V_O
        \e^{-usH_O}
        \,\dd u.
        \label{eq:outside-duhamel-formula}
    \end{equation}
    We first bound the trace of the Duhamel integrand without separating the number weight from the thermal factors.
    For \(0<u<1\), define the following operators in terms of the complex fugacity \(z\)
\[
    X_u
    =
    z^{(1-u)N_O}
    \e^{-(1-u)sK_\lambda}, \qquad
    Y_u
    =
    z^{uN_O}
    \e^{-usH_O}.
\]
    Number conservation implies
\[
    X_uV_OY_u
    =
    z^{N_O}
    \e^{-(1-u)sK_\lambda}
    V_O
    \e^{-usH_O}.
\]
    Now introduce the H\"{o}lder pair of exponents
\[
    p=\frac{1}{1-u},
    \qquad
    p'=\frac{1}{u}.
\]
    Since \(N_O\) commutes with both Hamiltonians, in Schatten norms we have
\[
    \|X_u\|_p
    =
    Z_{O,\lambda}(\beta,|z|)^{1-u},
    \qquad
    \|Y_u\|_{p'}
    =
    Z_{O,0}(\beta,|z|)^u.
\]
    Using the matrix H\"{o}lder inequality \cite[Theorem 2]{Baumgartner2011} and \(\|V_O\|=1\) gives
    \begin{equation}
        \left|
        \Tr\!\left(
        z^{N_O}
        \e^{-(1-u)sK_\lambda}
        V_O
        \e^{-usH_O}
        \right)
        \right|
        \le
        Z_{O,\lambda}(\beta,|z|)^{1-u}
        Z_{O,0}(\beta,|z|)^u.
        \label{eq:thermal-duhamel-integrand-bound}
    \end{equation}
    The endpoint cases \(u=0,1\) follow by continuity.
    
    Because \([V_O,N_O]=0\), the two outside Hamiltonians decompose into the same fixed-number sectors.
    Weyl's inequality moves every eigenvalue in each sector by at most
\(|\lambda|\), and hence
    \begin{equation}
        Z_{O,\lambda}(\beta,|z|) \le \e^{\beta|\lambda|} Z_{O,0}(\beta,|z|).
        \label{eq:Weyl-upper-bound}
    \end{equation}
    Substituting this estimate into \eqref{eq:thermal-duhamel-integrand-bound}, integrating \eqref{eq:outside-duhamel-formula}, and using
\[
    \int_0^1
    \e^{(1-u)\beta|\lambda|}
    \,\dd u
    =
    \frac{\e^{\beta|\lambda|}-1}{\beta|\lambda|}
\]
    gives
    \begin{equation}
        |Z_{O,\lambda}(s,z)-Z_{O,0}(s,z)|
        \le
        \frac{|s|}{\beta}
        \left(\e^{\beta|\lambda|}-1\right)
        Z_{O,0}(\beta,|z|).
        \label{eq:thermal-partition-difference}
    \end{equation}
    For \(\lambda=0\), the same inequality holds by continuity.
    
    Define the normalized free outside signal in accordance with, e.g. \eqref{eq:normalized-signal}
\[
    \Xi_{O,0}(s,z)
    =
    \frac{Z_{O,0}(s,z)}{Z_{O,0}(\beta,|z|)}.
\]
    The monomial \(w^{M_O}\) cancels from the relative difference of the two
    outside inverse-fugacity polynomials.
    Dividing \eqref{eq:thermal-partition-difference} by
\(|Z_{O,0}(s,z)|\) therefore gives
    \begin{equation}
        \left|
        \frac{P_{O,\lambda}(s,w)}{P_{O,0}(s,w)}-1
        \right|
        \le
        \frac{|s|}{\beta}
        \frac{\e^{\beta|\lambda|}-1}{|\Xi_{O,0}(s,z)|}.
        \label{eq:thermal-relative-perturbation-bound}
    \end{equation}
    
    It remains to lower-bound the free outside signal. 
    We follow a similar argument to \cref{prop:gap-lower-bound}. 
    On the circle associated with \(\mu\in\{\Omega_{\min},\Omega_{\max}\}\), we have \(|z|=\e^{\beta\mu}\).
    By \cref{prop:free-factorization} (applied to the modes in \(O\)) and the reverse triangle inequality, we have
\[
    |\Xi_{O,0}(s,z)|
    \ge
    \prod_{j\in O}
    \frac{\left|1-\e^{-\beta(E_j-\mu)}\right|}
    {1+\e^{-\beta(E_j-\mu)}}
    =
    \prod_{j\in O}
    \tanh\!\left(
    \frac{\beta|E_j-\mu|}{2}
    \right)
    \ge
    \tanh\!\left(
    \frac{\beta\Delta_{\Omega}}{2}
    \right)^{M_O}.
\]
    Inserting this bound into
    \eqref{eq:thermal-relative-perturbation-bound}, using
\(|s|\le\sqrt{\beta^2+ t_{\max}^2}\), and taking the supremum over both circles
    and all \(|t|\le t_{\max}\) proves \eqref{eq:outside-duhamel-relative-bound}.
\end{proof}

\ProofOf{cor:log-gap-perturbative-stability}
\begin{proof}
    The gap hypothesis \(\beta\Delta_\Omega\ge\log(3M)\) and \(M_O\le M\) give
\[
    \tanh\!\left(\frac{\beta\Delta_\Omega}{2}\right)^{-M_O}
    \le
    \left(\frac{3M+1}{3M-1}\right)^M
    \le
    2.
\]
    For the right hand side of \cref{eq:outside-duhamel-relative-bound} to be less than one, solving for \(|\lambda|\) shows it suffices that 
\[
    |\lambda|<\frac1\beta\log\Bigl(1+\tfrac{\beta}{2 \sqrt{\beta ^2+ t_{\max} ^2}}\Bigr).
\]
    Now multiply both sides by \(\sqrt{\beta^2+ t_{\max}^2}\) and let \(x = \sqrt{1+ t_{\max}^2/\beta^2} \ge 1\). 
    Since \(x \log\bigl(1+1/(2x)\bigr)\) is increasing as a function of \(x\) and the minimum value of \(x\) is 1, a simpler sufficient condition is \(|\lambda|\sqrt{\beta^2+ t_{\max}^2}<\log \frac32\), from which \cref{cor:relative-spectator-stability} follows. 
\end{proof}

\ProofOf{prop:conditioning-transfer}
\begin{proof}
    We compare the positive normalization factors.
    Because \([V_O,N_O]=0\), both outside Hamiltonians decompose into fixed-number sectors.
    Within each sector, Weyl's inequality moves every eigenvalue of \(H_O\) by at most
\[
    |\lambda|\|V_O\|
    =
    |\lambda|.
\]
    Exponentiating, summing the eigenvalues in each sector, and then weighting the sectors by the positive factors \(|z|^{N_O}\) gives
    \begin{equation}
        \e^{-\beta|\lambda|}
        Z_{O,0}(\beta,|z|)
        \le
        Z_{O,\lambda}(\beta,|z|)
        \le
        \e^{\beta|\lambda|}
        Z_{O,0}(\beta,|z|)
        \label{eq:outside-positive-partition-transfer}
    \end{equation}
    by a corresponding argument to that used to derive \eqref{eq:Weyl-upper-bound}.
    
    Next use \cref{prop:spectator-factorization}, at \(\lambda\) and at \(\lambda=0\).
    The inside factors \(Z_I\) cancel, and the monomials in the inverse-fugacity
    polynomials cancel, so
    \begin{equation}
        \frac{\Xi_\lambda(s,z)}{\Xi_0(s,z)}
        =
        r_\lambda(w)
        \frac{
        Z_{O,0}(\beta,|z|)
        }{
        Z_{O,\lambda}(\beta,|z|)
        },
        \qquad
        z=-1/w.
        \label{eq:normalized-signal-ratio-factorization}
    \end{equation}
    By hypothesis, we have at every quadrature point
\[
    |r_\lambda(w_j)|
    \ge
    1-\eta,
\]
    while the upper bound in
    \eqref{eq:outside-positive-partition-transfer} gives
\[
    \frac{
    Z_{O,0}(\beta,|z|)
    }{
    Z_{O,\lambda}(\beta,|z|)
    }
    \ge
    \e^{-\beta|\lambda|}.
\]
    Substitution in
    \eqref{eq:normalized-signal-ratio-factorization} proves
    \eqref{eq:normalized-signal-transfer}.
\end{proof}

\ProofOf{prop:quadrature-gap-transfer}
\begin{proof}
    Fix \(|t|\le t_{\max}\) and one of the two contour radii
\(R=\e^{-\beta\mu}\).
    The boundary circles of \(\mathcal C_\gamma(R)\) correspond to the shifted
    chemical potentials \(\mu\mp\gamma/\beta\).
    We then have
\[
    \min_{j\in O}
    \left|E_j-\left(\mu\mp\frac{\gamma}{\beta}\right)\right|
    \ge
    \Delta_{\Omega}-\frac{\gamma}{\beta}.
\]
    Repeating the estimate in \cref{prop:outside-duhamel-bound} on either
    shifted circle gives
\[
    \left|
    \frac{P_{O,\lambda}(s,w)}{P_{O,0}(s,w)}-1
    \right|
    \le
    \sqrt{1+\frac{ t_{\max}^2}{\beta^2}}
    \left(\e^{\beta|\lambda|}-1\right)
    \tanh\!\left(
    \frac{\beta\Delta_{\Omega}-\gamma}{2}
    \right)^{-M_O}
    <1.
\]
    The relative Rouch\'e criterion from
    \cref{cor:relative-spectator-stability} therefore preserves the zero count
    inside each boundary circle.
    At \(\lambda=0\), the two counts agree because
\(\gamma<\beta\Delta_{\Omega}\).
    The perturbed counts also agree, so \(P_{O,\lambda}(s,\cdot)\) has no zeros
    in \(\mathcal C_\gamma(R)\).
    Applying the argument at both contour radii proves that the thickened region is zero-free.
    
    By \cref{eq:spectator-P-factorization}, the roots of \(P_\lambda(s,\cdot)\) are those of the protected factor \(P_I\) together with those of the spectator factor \(P_{O,\lambda}\).
    Every root of the protected factor of the partition polynomial has the form
\[
    \zeta_j=\e^{-sE_j},
    \qquad
    j\in I.
\]
    Its logarithmic radial distance from either contour is at least
\(\beta\Delta_\Omega\).
    Every root of the spectator factor has logarithmic radial distance at least
\(\gamma\) from either contour by the collar hypothesis.
    Thus every root of the product has logarithmic radial distance at least
\[
    \min\{\beta\Delta_\Omega,\gamma\}.
\]
    Exponentiating this statement gives
    \eqref{eq:perturbed-root-radial-separation}.
    
    The single-root aliasing estimate of \cref{lem:single-root-aliasing} depends only on
    the root location relative to the contour, and never requires a single-particle interpretation.
    Apply that estimate to all \(M\) roots and use the common bound
\(q_\lambda\).
    The same inside--outside split as in the proof of \cref{prop:quadrature-gap-bound} gives
    \eqref{eq:perturbed-disc-quadrature-bound}.
\end{proof}
\ProofOf{cor:constant-perturbed-quadrature-gap}
\begin{proof}
    The time condition in
    \eqref{eq:constant-quadrature-gap-conditions} is stronger than the one in
    \cref{cor:log-gap-perturbative-stability}, so the root moments are
    protected.
    Moreover, the hypothesis on \(\beta \Delta_\Omega\) gives
\[
    \beta\Delta_{\Omega}-\log\frac32
    \ge
    \log(2M).
\]
    Consequently,
\[
    \tanh\!\left(
    \frac{\beta\Delta_{\Omega}-\log(3/2)}{2}
    \right)^{-M_O}
    \le
    \left(\frac{2M+1}{2M-1}\right)^M
    \le
    3.
\]
    Convexity of the exponential and
\(|s|/\beta\ge1\) imply
\[
    \frac{|s|}{\beta}
    \left(\e^{\beta|\lambda|}-1\right)
    \le
    \e^{|s||\lambda|}-1.
\]
    Uniformly for \(|t|\le t_{\max}\), the second condition in
    \eqref{eq:constant-quadrature-gap-conditions} therefore gives
\[
    \sqrt{1+\frac{ t_{\max}^2}{\beta^2}}
    \left(\e^{\beta|\lambda|}-1\right)
    <
    \frac13.
\]
    Thus \eqref{eq:quadrature-collar-smallness} holds with
\(\gamma=\log(3/2)\), and
    \cref{prop:quadrature-gap-transfer} applies.
    Finally,
\[
    \beta\Delta_\Omega
    \ge
    \log(3M)
    >
    \log\frac32,
\]
    so \eqref{eq:perturbed-quadrature-gap} gives \(q_\lambda=2/3\).
\end{proof}

\section{The LP estimator}\label{app:lp-estimator}

Let \(\rho\) be an unknown probability measure on the unit torus, with Fourier
coefficients \(u_\ell\), and suppose that only noisy coefficients
\(\widehat u_\ell\) with \(|\ell|\le T\) are available.
Our argument rests on the one-dimensional stability theorem of
Chen--De--Huang--Nadimpalli--Servedio--Yang~\cite{ChenEtAl2025}, which we use without proof.
Here \(\mathbb T=[0,1)\) is the unit torus with its toroidal metric, \(W_1^{\mathbb T}\) is the Wasserstein-1 distance for that metric, and the Fourier coefficients of a probability measure \(\rho\) on \(\mathbb T\) are \(u_\ell=\int_{\mathbb T}\e^{2\pi\ii\ell x}\dd\rho(x)\), \(\ell\in\mathbb Z\).

\begin{lemma}[Stability of low-frequency Fourier data {\cite[Theorem~7, case \(d=1\)]{ChenEtAl2025}}]
    \label[lemma]{lem:chen-stability}
    Let \(0<\varepsilon'<\tfrac12\), and set
\[
    T=\left\lceil\frac{6}{\varepsilon'}\right\rceil,
    \qquad
    \kappa_0=\frac{10^{-3}\,\varepsilon'}{\log(1/\varepsilon')},
\]
    where \(\log\) denotes the natural logarithm, as in \cite{ChenEtAl2025}.
    Let \(\rho_1,\rho_2\) be probability measures on \(\mathbb T\) with Fourier coefficients \(u^{(1)}_\ell,u^{(2)}_\ell\).
    If \(|u^{(1)}_\ell-u^{(2)}_\ell|\le\kappa\) for all integers \(|\ell|\le T\), for some \(\kappa\le\kappa_0\), then
    \(W_1^{\mathbb T}(\rho_1,\rho_2)\le\varepsilon'\).
\end{lemma}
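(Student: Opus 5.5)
The plan is to prove the lemma by Kantorovich--Rubinstein duality combined with Jackson smoothing, following the one-dimensional case of \cite{ChenEtAl2025}. The only earlier fact needed is the smoothing bound recorded in \cref{def:jackson-kernel}. Choose the integer \(n=\lfloor T/2\rfloor+1\). Then \(2n-2\le T\), so every frequency on which \(\widetilde J_n\) is nonzero lies in the range \(|\ell|\le T\) where Fourier closeness is assumed. Also \(n\ge T/2\ge 3/\varepsilon'\). By the triangle inequality and \(W_1^{\mathbb T}(\rho,\rho*J_n)\le 1/n\),
\[
W_1^{\mathbb T}(\rho_1,\rho_2)\le \frac{2}{n}+W_1^{\mathbb T}(\rho_1*J_n,\rho_2*J_n)\le \frac{2\varepsilon'}{3}+W_1^{\mathbb T}(\rho_1*J_n,\rho_2*J_n).
\]
It therefore suffices to show that the smoothed distance is at most \(\varepsilon'/3\).

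For the smoothed pair, apply duality. Write \(W_1^{\mathbb T}(\bar\rho_1,\bar\rho_2)=\sup_f\int f\,\dd(\bar\rho_1-\bar\rho_2)\), where \(\bar\rho_i=\rho_i*J_n\) and the supremum runs over \(1\)-Lipschitz \(f\) on \(\mathbb T\). Since both measures are probability measures, we may assume \(\hat f(0)=0\). Expand \(f\) in its Fourier series. The difference of the Fourier coefficients of \(\bar\rho_1\) and \(\bar\rho_2\) equals \(\widetilde J_n(\ell)(u^{(1)}_\ell-u^{(2)}_\ell)\), which vanishes for \(|\ell|\ge 2n-1\). Using \(|\widetilde J_n(\ell)|\le1\) and the hypothesis, this gives
\[
\Bigl|\int f\,\dd(\bar\rho_1-\bar\rho_2)\Bigr|\le \kappa\sum_{1\le|\ell|\le 2n-2}|\hat f(\ell)|.
\]
Lipschitz continuity gives \(|\hat f(\ell)|=|\widehat{f'}(\ell)|/(2\pi|\ell|)\le 1/(2\pi|\ell|)\), since \(\|f'\|_\infty\le1\). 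The sum is therefore at most \(\pi^{-1}(1+\log T)\). Exchanging the Fourier series with integration against a smoothed measure is justified because \(\bar\rho_i\) has a trigonometric-polynomial density, so only finitely many terms appear.

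It remains to check the constants. We have \(\kappa\le\kappa_0=10^{-3}\varepsilon'/\log(1/\varepsilon')\). For \(\varepsilon'<1/2\), one has \(1+\log T\le 1+\log(6/\varepsilon'+1)\le C_0\log(1/\varepsilon')\) with a modest absolute constant; roughly \(C_0\le 8\) suffices, using \(\log(1/\varepsilon')\ge\log 2\). Hence the smoothed distance is at most \(10^{-3}C_0\varepsilon'/\pi<\varepsilon'/3\), and combining with the first display yields \(W_1^{\mathbb T}(\rho_1,\rho_2)\le\varepsilon'\).

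I expect the main obstacle to be the constant bookkeeping in the smoothing step. The budget \(2/n\le 2\varepsilon'/3\) is tight against the choice \(T=\lceil 6/\varepsilon'\rceil\), so it relies on the precise bound \(W_1^{\mathbb T}(\rho,\rho*J_n)\le1/n\) from \cref{def:jackson-kernel}. If that constant were weaker, I would instead balance \(n\) against the logarithmic Fourier term, which has ample slack because of the factor \(10^{-3}\).
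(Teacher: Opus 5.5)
Your proof is correct and establishes the lemma with exactly the stated \(T\) and \(\kappa_0\). The paper does not prove this statement; it imports it as a black box from \cite[Theorem~7]{ChenEtAl2025}, specialized to \(d=1\). Your route is therefore genuinely different: Jackson smoothing at degree \(n=\lfloor T/2\rfloor+1\), so that the smoothed difference is a trigonometric polynomial supported inside the observed band, followed by Kantorovich--Rubinstein duality and the decay \(|\hat f(\ell)|\le 1/(2\pi|\ell|)\) for \(1\)-Lipschitz \(f\).

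This makes the appendix self-contained and shows where each constant comes from. The choice \(T\approx 6/\varepsilon'\) pays for the two smoothing errors \(2/n\le 2\varepsilon'/3\), and the \(\log(1/\varepsilon')\) in \(\kappa_0\) is the harmonic sum \(\sum_{1\le|\ell|\le T}(2\pi|\ell|)^{-1}\). It also shows that the factor \(10^{-3}\) is far from tight in one dimension; your own bookkeeping tolerates a constant of order \(10^{-1}\). What the citation buys is brevity and the multidimensional statement. Your harmonic-sum estimate does not give the latter directly, since \(\sum_{0<|\ell|\le T}|\ell|^{-1}\) grows like \(T^{d-1}\) on \(\mathbb Z^d\).

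Two small remarks on your closing paragraph. First, the bound \(W_1^{\mathbb T}(\rho,\rho*J_n)\le 1/n\) that you borrow from \cref{def:jackson-kernel} is asserted there without proof, but it holds with room to spare. Couple \(X\sim\rho\) with \(X+Y\), where \(Y\sim J_n\). Then \(\operatorname{dist}(x,\mathbb Z)\le\frac12|\sin\pi x|\), Cauchy--Schwarz, and \(1-\widetilde J_n(1)=3/(2n^2+1)\) (read off from the stated multiplier) give \(W_1^{\mathbb T}(\rho,\rho*J_n)\le\frac12\bigl((1-\widetilde J_n(1))/2\bigr)^{1/2}\le\sqrt3/(4n)\). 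So the smoothing budget is not actually tight.

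Second, your proposed fallback would not have helped had the smoothing constant been worse than \(3/2\). The degree \(n\) is capped at \(\lfloor T/2\rfloor+1\) by the requirement that the Jackson support \(|\ell|\le 2n-2\) lie within \(|\ell|\le T\). The \(10^{-3}\) slack lives only in the Fourier term, so it cannot be traded against the smoothing term.
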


We apply \cref{lem:chen-stability} in the proof of \cref{prop:wasserstein-window-recovery} with \(\varepsilon'=\epsilon/(4L)\in(0,\tfrac18)\).
Then \(\lceil 6/\varepsilon'\rceil=\lceil 24L/\epsilon\rceil=T\), and \(\kappa_0=10^{-3}\epsilon/(4L\log(4L/\epsilon))\), so the requirement \(\kappa\le\kappa_0\) holds for \(\kappa=c\,\epsilon/(L\log(4L/\epsilon))\) as soon as \(c\le1/4000\).
This is the sense in which \(c\) is a sufficiently small universal constant in that proposition.
It is therefore enough to find \emph{any} probability measure consistent with
the observed coefficients to the required accuracy.

There are two obstacles to expressing this search as a finite-dimensional
linear program.
First, \(\rho\) may be atomic, so a direct piecewise-constant approximation
need not be accurate.
Convolution with a Jackson kernel moves \(\rho\) by only \(O(1/n)\) in
Wasserstein distance while producing a smooth density.
Second, the space of smooth densities is still infinite dimensional.
Partitioning the torus into \(K\) cells and using the mass in each cell as a
variable gives a finite probability simplex.
The scale \(K=O(n^2/\kappa)\) is chosen because the Jackson-smoothed density
has Lipschitz constant \(O(n^2)\), so replacing it by its cell averages changes
its Fourier coefficients by \(O(\kappa)\).
\Cref{def:jackson-kernel,def:chen-lp-estimator} make this construction explicit.

The logic behind the constraints is as follows.
Let \(\bar\rho=\rho*J_n\), whose \(\ell\)th Fourier coefficient is
\(\widetilde{J}_{n}(\ell)u_\ell\), and set \(a_j=\bar\rho(I_j)\).
The resulting piecewise-constant probability measure has the same mass as
\(\bar\rho\) in each cell.
For sufficiently large \(C\), the Lipschitz bound on the
Jackson smoothing makes its Fourier coefficients differ from those of
\(\bar\rho\) by at most \(\kappa/8\).
If the input data also obey
\(|\widehat u_\ell-u_\ell|\le\kappa/8\), multiplication by the Jackson
multiplier does not increase this error, since
\(|\widetilde{J}_{n}(\ell)|\le1\).
The two errors sum to at most \(\kappa/4\), proving that the LP is feasible.
This is why the \(\kappa/8\) data tolerance provides useful slack before the
recovery estimate is applied.

Conversely, the real and imaginary constraints imply that any feasible
output \(\sigma\) satisfies
\[
\left|
\int_{\mathbb T}\e^{2\pi\ii\ell x}\dd\sigma(x)
-\widetilde{J}_{n}(\ell)u_\ell
\right|
\le
\frac{\sqrt2\kappa}{4}+\frac{\kappa}{8}
<
\kappa
\]
for \(|\ell|\le T\).
For the choices in \cref{prop:wasserstein-window-recovery}, the
\cref{lem:chen-stability} therefore places \(\sigma\) within
\(\epsilon/(4L)\) of \(\bar\rho\), while Jackson smoothing moves \(\rho\)
by at most another \(1/n\le\epsilon/(4L)\).
Thus the estimator is deliberately nonunique: no support separation or
parametric model is imposed, and every feasible LP solution gives the required
Wasserstein approximation.

\end{document}

%% file: annulus_recovery.tex
\begingroup
\fontencoding{OT1}\selectfont

\definecolor{annulusblue}{HTML}{35618D}
\definecolor{collargold}{HTML}{C18A20}
\definecolor{protectedpurple}{HTML}{7B3294}
\definecolor{otherred}{HTML}{C65332}
\definecolor{nodeteal}{HTML}{008579}

\begin{tikzpicture}[
    font=\fontsize{9pt}{11pt}\selectfont, >=Stealth,
    boundary/.style={annulusblue, line width=0.85pt},
    collar edge/.style={collargold!85, densely dashed, line width=0.45pt},
    protected root/.style={circle, fill=protectedpurple, draw=white,
        line width=0.5pt, inner sep=0pt, minimum size=5.6pt},
    sample/.style={rectangle, fill=nodeteal, draw=white,
        line width=0.5pt, inner sep=0pt, minimum size=5.2pt},
    root label/.style={text=protectedpurple, inner sep=2pt},
    other root/.style={circle, fill=otherred, draw=white,
        line width=0.5pt, inner sep=0pt, minimum size=5.6pt},
    guide/.style={black!45, line width=0.4pt},
    key text/.style={anchor=west, align=left, inner sep=0pt}
]
\def\Rin{0.90}
\def\Rout{2.40}
\def\q{0.84}
\pgfmathsetmacro{\Lvalue}{ln(\Rout/\Rin)}

\fill[annulusblue!9, even odd rule]
    (0,0) circle (\Rout) (0,0) circle (\Rin);
\foreach \R in {\Rin,\Rout} {
    \pgfmathsetmacro{\lower}{\q*\R}
    \pgfmathsetmacro{\upper}{\R/\q}
    \fill[collargold!16, even odd rule]
        (0,0) circle (\upper) (0,0) circle (\lower);
    \draw[collar edge] (0,0) circle (\lower);
    \draw[collar edge] (0,0) circle (\upper);
    \draw[boundary] (0,0) circle (\R);
}

\draw[guide, ->] (-3.08,0) -- (3.13,0) node[right, black!65] {$\operatorname{Re}w$};
\draw[guide, ->] (0,-3.07) -- (0,3.10) node[above, black!65] {$\operatorname{Im}w$};
\node[below left, black!65, inner sep=2pt] at (0,0) {$0$};

\foreach \R in {\Rin,\Rout} {
    \foreach \j in {0,...,4} {
        \node[sample] at ({72*\j}:\R) {};
    }
}

\foreach \a/\fraction in {1/0.23,2/0.50,3/0.74} {
    \pgfmathsetmacro{\rad}{\Rout*exp(-\Lvalue*\fraction)}
    \coordinate (p\a) at ({-360*\fraction}:\rad);
    \node[protected root] at (p\a) {};
}
\node[root label, anchor=east] at ([xshift=-5pt,yshift=5pt]p1) {$\mathrm e^{-s_\ell E_1}$};
\node[root label, anchor=south] at ([yshift=5pt]p2) {$\mathrm e^{-s_\ell E_2}$};
\node[root label, anchor=west] at ([xshift=5pt]p3) {$\mathrm e^{-s_\ell E_3}$};

\foreach \angle/\rad in {137/0.44,33/3.16,225/3.16} {
    \node[other root] at (\angle:\rad) {};
}
\node[otherred, anchor=south west, inner sep=2pt] at (33:3.16) {$\zeta$};

\node[annulusblue] at (135:1.65) {$A_\Omega$};
\draw[annulusblue, line width=0.45pt] (315:\Rin) -- (0.36,-0.38)
    node[above, inner sep=2pt] {$R_{\mathrm{in}}$};
\draw[annulusblue, line width=0.45pt] (243:\Rout) -- (-1.38,-2.69)
    node[below, inner sep=2pt] {$R_{\mathrm{out}}$};

\begin{scope}[xshift=8mm]
\node[key text, font=\fontsize{9pt}{11pt}\selectfont\bfseries] at (3.82,2.88) {Key};
\fill[collargold!16] (3.82,2.31) rectangle (4.20,2.55);
\draw[boundary] (3.82,2.43) -- (4.20,2.43);
\node[key text] at (4.40,2.43) {Annulus $A_\Omega$ and its boundaries};
\fill[collargold!16] (3.82,1.77) rectangle (4.20,2.01);
\draw[collar edge] (3.82,1.77) -- (4.20,1.77);
\draw[collar edge] (3.82,2.01) -- (4.20,2.01);
\node[key text] at (4.40,1.89) {Root-free collars (open)};
\node[protected root] at (4.01,1.35) {};
\node[key text] at (4.40,1.35) {Protected roots, $d=3$};
\node[other root] at (4.01,0.81) {};
\node[key text] at (4.40,0.81) {Other roots of $P(s_\ell,\cdot)$};
\node[sample] at (4.01,0.27) {};
\node[key text] at (4.40,0.27) {Quadrature nodes, $S=5$ per circle};
\node[key text, font=\fontsize{8pt}{9.5pt}\selectfont] at (4.40,-0.12)
    {$w_j=R\mathrm e^{\mathrm i \theta_j}$, $\quad \theta_j = 2\pi j /5,\ j =0,\ldots,4$};

\node[key text, font=\fontsize{9pt}{11pt}\selectfont\bfseries] at (3.82,-0.76)
    {Separation at either boundary $R$};
\fill[collargold!16] (5.34,-1.82) rectangle (7.14,-1.26);
\draw[collar edge] (5.34,-1.82) -- (5.34,-1.26);
\draw[collar edge] (7.14,-1.82) -- (7.14,-1.26);
\draw[boundary] (6.24,-1.82) -- (6.24,-1.26);
\draw[guide, ->] (3.92,-1.54) -- (8.76,-1.54);
\foreach \x in {4.49,8.08} {
    \node[circle, fill=black!60, inner sep=0pt, minimum size=4.6pt]
        at (\x,-1.54) {};
    \node[above, black!65, inner sep=3pt] at (\x,-1.49) {$\zeta$};
}
\node[above, collargold!90!black, font=\fontsize{8pt}{9.5pt}\selectfont] at (6.24,-1.26) {no roots};
\node[below, inner sep=3pt] at (5.34,-1.82) {$qR$};
\node[below, inner sep=3pt] at (6.24,-1.82) {$R$};
\node[below, inner sep=3pt] at (7.14,-1.82) {$R/q$};
\node[below, black!65, font=\fontsize{8pt}{9.5pt}\selectfont] at (8.53,-1.61) {$\log|w|$};
\end{scope}
\path[use as bounding box]
    ([xshift=-6pt,yshift=-6pt]current bounding box.south west)
    rectangle ([xshift=6pt,yshift=6pt]current bounding box.north east);
\end{tikzpicture}
\endgroup

%% file: W_circuit.tex
\begin{tikzpicture}[
    x=1cm, y=1cm, line width=0.6pt,
    wgate/.style={draw, fill=white, minimum width=1.15cm,
                  minimum height=0.65cm, inner sep=4pt},
    wtarget/.style={draw, fill=white, minimum width=0.65cm,
                    minimum height=0.65cm, inner sep=3pt}
]
\foreach \y/\register in {1.25/R,0/C,-1.25/S} {
        \draw (0,\y) -- (7.3,\y);
        \node[anchor=east] at (-0.12,\y) {$\mathsf{\register}$};
    }
    \draw (0.42,-0.10) -- (0.62,0.10);
    \node[above, font=\scriptsize] at (0.52,0.10) {$2m+2$};

    \node[wgate] at (1.90,0) {$U_{\mathrm{rot}}$};

\draw (4.10,0) -- (4.10,-1.25);
    \fill (4.10,0) circle (2.3pt);
    \node[above=4pt] at (4.10,0) {$\Pi_{\geq}$};
    \node[wtarget] at (4.10,-1.25) {$X$};

    \draw (6.10,0) -- (6.10,1.25);
    \fill (6.10,0) circle (2.3pt);
    \node[below=4pt] at (6.10,0) {$\Pi_{\leq}$};
    \node[wtarget] at (6.10,1.25) {$X$};

    \node[below, font=\small] at (4.10,-1.75) {$F_S$};
    \node[below, font=\small] at (6.10,-1.75) {$F_R$};
\end{tikzpicture}